\newtheorem{theorem}{Teorema}
\newtheorem{condition}{Condição}
\newtheorem{proof}{Prova}
\newtheorem{definition}{Definição}
\newtheorem{lemma}{Lema}
\newtheorem{proposition}{Proposição}
\newcommand{\1}{\mathbbm{1}}
\newcommand{\bra}{\langle}
\newcommand{\ket}{\rangle}
\definecolor{DarkGreen}{RGB}{000,70,000}
\begin{document}

\begin{titlepage}

  \begin{center} 

\includegraphics[scale=0.55]{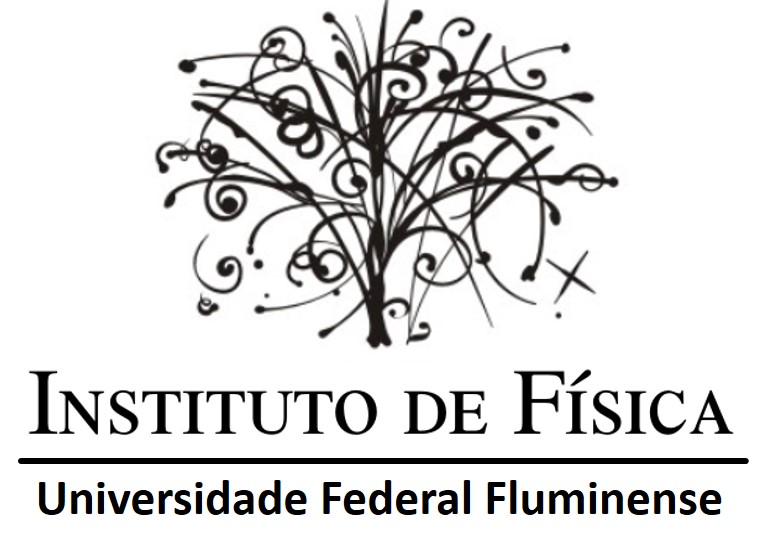}\\
    
\vspace{20mm}    
 
    \rule[0.5ex]{\linewidth}{2pt}\vspace*{-\baselineskip}\vspace*{3.2pt}
    \rule[0.5ex]{\linewidth}{1pt}\\
\textbf{\Huge{\textsc{Atalhos para Adiabaticidade e}}}

\vspace{2mm}

\textbf{\Huge{\textsc{Aplicações em Computação Quântica}}}
    \rule[0.5ex]{\linewidth}{1pt}\vspace*{-\baselineskip}\vspace*{3.2pt}
\rule[0.5ex]{\linewidth}{2pt}\\

\vspace{20mm}    

\large{\textsc{Por:}} \\

    \Large{\textsc{Alan Costa dos Santos}} \\
    
\vspace{6mm}    
    
    \large{\textsc{Orientado por:}} \\

    \Large{\textsc{Dr. Marcelo Silva Sarandy}} \\                 
           
\par\vfill

\small{\textsc{Niterói-RJ, Brasil, 2016}} \\
  \end{center}
\end{titlepage}

\newpage

\begin{center}

ALAN COSTA DOS SANTOS

\vfill

{\large ATALHOS PARA ADIABATICIDADE E APLICAÇÕES EM COMPUTAÇÃO QUÂNTICA}

\vspace{3.0cm}

\begin{flushright}
\begin{minipage}{0.5\textwidth}

Dissertação apresentada junto ao programa \linebreak
de Pós-Graduação do Instituto de Física da\linebreak
Universidade Federal Fluminense como parte \linebreak
dos requisitos básicos para obtenção do \linebreak
título de Mestre em Física.

\end{minipage}
\end{flushright}

\vspace{3.0cm}

Orientador: Prof. Dr. Marcelo Silva Sarandy

\vfill

Niterói-RJ\\2016

\end{center}

\newpage

\begin{flushright}
\begin{minipage}{0.6\textwidth}

\vspace{20.0cm}

Dedico essa conquista aos meus pais, Zé Almir e Toinha,
aos meus irmãos Ramon, Saymom e Herisson e à minha 
namorada, Dâmaris Frutuoso.

\textit{In memoriam} Francisco José de Morais, o grande Chicão.

\end{minipage}
\end{flushright}

\newpage

\begin{center}
\section*{\textsc{Agradecimentos}}
\end{center}

Eu agradeço primeiramente àqueles que são os principais responsáveis pela minha existência e pelo homem que hoje sou. Obrigado painho e mainha, por nunca desistirem de mim e, mesmo que distante, “puxarem minhas orelhas” para o que é certo. Obrigado painho por todos os dias batalhar e, por muitas vezes, arriscar a vida para nos dar o pão de cada dia. Obrigado mainha por todo o carinho e amor, jamais esquecerei que tenho a melhor mãe do mundo. Meus irmãos Almir Herisson, Saymon Costa e Ramon Costa também estão no topo dessa pirâmide.

Não posso deixar de agradecer à minha linda e carinhosa namorada, Dâmaris Frutuoso (my Penny). Amor, obrigado por ter sido, além de namorada, minha fiel amiga. Por corrigir muitos dos meus defeitos e por me proporcionar momentos inesquecíveis ao seu lado. A cada dia tenho mais certeza de sua importância em minha vida. Eu te amo.

Agradeço também aos familiares que acompanharam de perto toda minha caminhada, tia Solange, tia Ana e tia Raquel. Aos primos Hiandra, Nahuana, Eliézio Jr., Kaká e Aryadne. Obrigado aos meus padrinhos Paulo Santiago e Tia Bebé, por estarem sempre lembrando de mim com carinho e sempre demonstrarem tanto amor. Não posso deixar de mencionar aqueles que são os meus “pais do Rio”, tia Toinha e Eliézio. Obrigado por sempre segurarem minha mão e por não me deixar esquecer do meu lugar e da minha cultura, além do amor incondicional.
Agradeço aos meus grandes amigos Alisson, Pedro Costa (Pedrinho), Thiago, César Jr., Alberto Jhonatas, Rennan Teles e Bruno Siebra pelos anos de amizade e pelas inúmeras horas de descontração que sempre me fizeram tão bem. 

Aos meus ex-professores e mentores Francisco Augusto, Eduardo Filho e Wilson Hugo do departamento de física da Universidade Regional do Cariri (URCA), pelos ensinamentos, conversas e principalmente pela amizade. Ao professor Mickel Abreu de Ponte, que é o principal responsável por eu estar tão satisfeito com minha escolha de linha de pesquisa. Agradeço pelos seus esforços em orientar-me durante 1 ano, ensinando como ser um pesquisador, a investigar e, acima de tudo, como manter sempre a cabeça erguida diante dos mais difíceis obstáculos acadêmicos.

Aos amigos do IF-UFF que fiz nessa caminhada de dois anos, Magno Verly, Davor Lopes, Fernando Fabris, Alexsandro dos Santos, Raphael Fonseca (Jeca), André Oestereich e Marcel Nogueira (Paulista), pelas boas horas de descontração na copa, tomando o velho e bom cafezinho, e por serem meus companheiros de guerra durante o mestrado. Agradeço também ao grande Tiago Ribeiro, por ter aberto as postas de seu AP para receber-me durante todo esse tempo, pelas conversas sobre física e pelos ensinamentos que adquiri vendo seu jeito de ser. Ao grande Anderson Tomaz, pelos incentivos e descontrações nas conversas sobre física.
Não poderia deixar de agradecer ao “ex-físico”, hoje empresário, Raphael Silva e ao professor Ernesto Galvão por uma valiosa ajuda em um momento complicado em minha caminhada. Sempre carregarei comigo o sentimento de dívida com vocês. 

Por fim, o responsável por tudo isso que está acontecendo em minha carreira acadêmica. Ao Marcelo Sarandy, o meu obrigado de coração. Sei que não deve ter sido fácil orientar-me durante esse tempo, por isso sei do tamanho esforço que fez, além da enorme paciência, para tornar tudo isso possível. Agradeço pelas horas de conversa, dedicação e pelos valiosos conselhos profissionais e acadêmicos.

\newpage
\begin{flushleft}
“Eu sou de uma terra que o povo padece \\
Mas não esmorece, procura vencer, \\
Da terra querida, que a linda cabocla \\
Com riso na boca zomba no sofrer. \\
Não nego meu sangue, não nego meu nome. \\
Olho para fome e 'pregunto': o que há? \\
Eu sou brasileiro fio do Nordeste, \\
Sou Cabra da peste, sou do Ceará\\

\vspace{0.5cm}

Ceará valente que foi muito franco\\ 
Ao guerreiro branco Soares Moreno, \\
Terra estremecida, terra predileta \\
Do grande poeta ‘Juvená’ Galeno.\\
Sou dos ‘verde’ mares da cor da esperança, \\
Que ‘as água’ balança pra lá e pra cá. \\
Eu sou brasileiro fio do Nordeste, \\
Sou Cabra da peste, sou do Ceará.”\\

\vspace{0.7cm}

\textit{Patativa do Assaré}.

\end{flushleft}
\newpage

\begin{center}
\section*{\textsc{Resumo}}
\end{center}

Evolução adiabática é uma poderosa técnica em computação e informação quântica. No entanto, sua performance é limitada pelo teorema adiabático da mecânica quântica. Neste cenário, atalhos para adiabaticidade, tais como concebidos pela teoria superadiabática, constituem uma valiosa ferramenta para acelerar o comportamento quântico adiabático. Nesta dissertação nós introduzimos dois diferentes modelos capazes de realizar computação quântica superadiabática, onde nosso método é baseado no uso de atalhos para adiabaticidade via Hamiltonianos contra-diabáticos. O primeiro modelo mostrado aqui é baseado no uso do teleporte quântico superadiabático, introduzido nessa dissertação, como um primitivo para computação quântica. Dessa forma, nós fornecemos o Hamiltoniano contra-diabático para portas arbitrárias de $n$ q-bits. Além disso, nossa abordagem relaciona, por meio de uma simples transformação unitária, o Hamiltoniano contra-diabático para o teleporte de {\it portas} arbitrárias de $n$ q-bits com o Hamiltoniano contra-diabático usado para o teleporte de {\it estados} de $n$ q-bits. No segundo modelo nós usamos o conceito de evoluções superadiabáticas controladas para mostrar como implementar portas quânticas $n$-controladas arbitrárias. Notavelmente, essa tarefa pode ser realizada por um simples Hamiltoniano contra-diabático independente do tempo. Ambos os modelos podem ser usados para a implementação de diferentes conjuntos universais de portas quânticas. Nós mostramos que o uso do {\it quantum speed limit} (limite de velocidade quântica) sugere que o tempo de evolução superadiabática é compatível com intervalos tempos arbitrariamente pequenos, onde essa arbitrariedade está vinculada ao custo energético necessário para realizar a evolução superadiabática.

\newpage
\begin{center}
\section*{\textsc{Abstract}}
\end{center}

Adiabatic evolution is a powerful technique in quantum information and computation. However, its performance is limited by the adiabatic theorem of quantum mechanics. In this scenario, shortcuts to adiabaticity, such as provided by the superadiabatic theory, constitute a valuable tool to speed up the adiabatic quantum behavior. In this dissertation we introduce two different models to perform universal superadiabatic quantum computing, which are based on the use of shortcuts to adiabaticity by counter-diabatic Hamiltonians. The first model is based on the use of superadiabatic quantum teleportation, introduced in this dissertation, as a primitive to quantum computing. Thus, we provide the counter-diabatic driving for arbitrary $n$-qubit gates. In addition, our approach maps the counter-diabatic Hamiltonian for an arbitrary $n$-qubit {\it gate} teleportation into the implementation of a rotated counter-diabatic Hamiltonian for an $n$-qubit {\it state} teleportation. In the second model we use the concept of controlled superadiabatic evolutions to show how we can implement arbitrary $n$-controlled quantum gates. Remarkably, this task can be performed by simple time-independent counter-diabatic Hamiltonians. These two models can be used to design different sets of universal quantum gates. We show that the use of the quantum speed limit suggests that the superadiabatic time evolution is compatible with arbitrarily small time intervals, where this arbitrariness is constrained to the energetic cost necessary to perform the superadiabatic evolution.

\newpage

\tableofcontents

\newpage

\listoftables

\newpage

\listoffigures

\newpage

\section{Introdução}

Na primeira metade da década de 80 iniciava-se a elaboração dos fundamentos que sustentam
a pesquisa em computação quântica (CQ), graças aos trabalhos de Paul Benioff \cite{Benioff:80,Benioff:82}, Richard Feymman \cite{Feynman:82} e David Deutsch \cite{Deutsch:85}. A partir desses trabalhos, o sonho do computador qu\^{a}ntico tem sido
buscado devido a sua capacidade te\'{o}rica de resolver certas classes de problemas muito mais
r\'{a}pido que um computador cl\'{a}ssico. Exemplos como o algoritmo de Grover \cite{Grover:96,Grover:97} para busca de itens marcados em uma lista desordenada, o algoritmo de Deutsch-Jozsa \cite{Jozsa:92} para verificação de funções constantes ou balanceadas e o algoritmo de Shor \cite{Shor:94}, também conhecido como algoritmo quântico de fatoração, tem ilustrado de forma clara o potencial esperado de um computador quântico. Fenômenos característicos da mecânica quântica como emaranhamento e superposição, são os responsáveis por tal vantagem \cite{Nielsen:book}. Nesse cenário, modelos distintos que nos possibilitam realizar CQ vêm sendo propostos.

O modelo padrão de CQ é chamado de \textit{Modelo de Circuitos} \cite{Barenco:95}. A ideia do modelo de circuitos é representar o processo de computação através de uma sequência de portas lógicas quânticas (transformações unitárias em mecânica quântica). Um exemplo de porta quântica que tem um análogo em computação clássica é a porta NOT, cuja ação inverte o bit de entrada de $0 \rightarrow 1$ ou $1 \rightarrow 0$, que tem como análogo quântico a porta quântica representada pelo operador de Pauli $\sigma_x$, que ao atuar em um estado de spin$-\frac{1}{2}$ na direção $z$ inverte o estado de $\vert +\frac{1}{2} \ket \rightarrow \vert -\frac{1}{2} \ket$ e de $\vert -\frac{1}{2} \ket \rightarrow \vert +\frac{1}{2} \ket$. Em geral, em CQ, os estados de um sistema de dois níveis são representados pelos estados abstratos $\vert 0 \ket$ e $\vert 1 \ket$, que são os chamados \textit{estados da base computacional}. Fisicamente esses estados podem ser representados por estados ortogonais de qualquer sistema quântico de dois níveis de energia (estados de polarização vertical e horizontal de fótons, estados de spin do elétron, etc.). Por outro lado, existem portas exclusivas da CQ, como a porta Hadamard, onde sua atuação leva estados da base computacional em superposições desses estados e vice-versa. Uma característica comum entre CQ e computação clássica são os chamados \textit{conjuntos de portas universais para computação} \cite{Barenco:95}. Esses conjuntos são assim chamados devido ao fato de seus elementos poderem ser combinados para realizar qualquer porta lógica de um circuito.

Um outro esquema universal de CQ é a CQ adiabática \cite{Farhi:00}. Esse modelo nos permite idealizar, quando combinado com o modelo de circuitos, novos modelos híbridos de CQ, os quais serão o grande foco de discussão nessa dissertação. A ideia fundamental da CQ adiabática é fazer uso do teorema adiabático \cite{Born:28,Kato:50,Messia:Book} para realizar computação quântica analógica, através de um Hamiltoniano que evolui o sistema continuamente no tempo a partir de um estado fundamental instantâneo simples em um instante inicial até um estado fundamental instantâneo final que contem a solução do problema. Em particular, CQ adiabática é equivalente ao modelo de circuitos a menos de recursos polinomiais \cite{Aharonov:07,Lidar:07} e exibe propriedades de tolerância a erros sob decoerência \cite{Farhi:02,Lidar:08}. Realizações experimentais iniciais de CQ adiabática foram originalmente propostas em Ressonância Magnética Nuclear (RMN), onde algoritmos quânticos como o Algoritmo de Busca \cite{Long:01,Long:03} e o Algoritmo de Deutsch-Jozsa \cite{Chuang:98}, foram implementados. Mais recentemente, outras arquiteturas também têm sido usadas para implementar CQ adiabática como, por exemplo, íons armadilhados \cite{Richerme:13} e q-bits supercondutores \cite{Ploeg:07,Chancellor:13,Barends:15,Johnson:11,Boixo:14}. Um dos obstáculos encontrados em CQ adiabática é o tempo requerido para a evolução do sistema imposto pelas condições de validade do teorema adiabático \cite{Tong:07,Sarandy:04,Tong:10,Cao:13}. Isso implica que em modelos híbridos, onde usamos evoluções adiabáticas para simular circuitos quânticos, cada porta do circuito deve ser implementada em um tempo suficientemente grande (com relação a quantidades que dependem do \textit{gap} de energia relativo ao estado fundamental do Hamiltoniano que governa sistema). Tal fato é um problema quando a inevitável interação entre o computador quântico e o ambiente que o cerca é levado em consideração, devido efeitos que levam sistemas quânticos à decoerência antes de finalizada a computação \cite{Amin:09}. Uma maneira de tratar esse problema é através da introdução de \textit{atalhos para a adiabaticidade}.

A ideia de métodos que nos permitam imitar uma evolução adiabática foi primeiro proposto por M. V. Berry em 1987 \cite{Berry:87} e desde então outros trabalhos relevantes sobre o tema vem surgindo \cite{Berry:90,Berry:91,Ibanez:13}. A ideia central de atalhos para adiabaticidade é reproduzir exatamente a evolução adiabática, mas sem o vínculo temporal imposto pelas condições de validade do teorema adiabático. Um dos principais elementos para derivarmos um atalho é a determinação de um termo chamado \textit{Hamiltoniano contra-diabático}, ou simplesmente \textit{termo contra-diabático}, que foi introduzido por Demirplak e Rice \cite{Demirplak:03,Demirplak:05} e também estudado por Berry \cite{Berry:09}. Nesses atalhos para adiabaticidade o Hamiltoniano contra-diabático, por construção, deve ser somado ao Hamiltoniano original (adiabático) do problema para definir o que chamamos de \textit{Hamiltoniano superadiabático}, que por sua vez imita a evolução adiabática desejada. O atalho também pode ser realizado pelo método dos invariantes de Lewis-Riesenfeld \cite{Lewis:69}. O foco de interesse nessa dissertação é derivar atalhos para adiabaticidade via Hamiltonianos contra-diabáticos. Nessa dissertação nós derivamos tais atalhos para alguns modelos híbridos universais de CQ. Com isso, propõe-se um modelo de CQ universal superadiabática onde as evoluções propostas em CQ adiabática se mantém, mas o vínculo temporal imposto sobre tais evoluções é removido. 

Essa dissertação está estruturada como segue. A primeira parte é voltada para discussão de fundamentos do teorema adiab\'{a}tico e de alguns modelos híbridos universais de CQ que fazem uso do teorema adiabático para simular o funcionamento de qualquer circuito quântico. Iniciamos nosso estudo, no capítulo \ref{TeoremaAdiabatico}, revisando brevemente o que vem a ser o teorema adiabático e fazendo uma análise de suas condi\c{c}\~{o}es de validade. Em seguida, no capítulo \ref{CompViaTQ}, nós faremos a primeira aplicação do teorema adiabático em modelos híbridos de CQ universal, proposto por D. Bacon e S. Flammia \cite{Bacon:09}. Como tal aplicação necessita de conhecimentos prévios sobre teleporte quântico (TQ) de estados desconhecidos de um q-bit, nós revisaremos os resultados referente ao TQ na seção \ref{teleBennet} e como o TQ pode ser usado como um primitivo para CQ na seção \ref{PrimiCQ}. Dando continuidade, na seção \ref{CQATele} o foco principal é mostrar que o TQ adiabático pode ser usado como um primitivo para realizar CQ universal. Para este fim, aplica-se o teorema adiab\'{a}tico para mostrar como podemos teleportar adiabaticamente um estado desconhecido de um q-bit, na subseção \ref{TeleUmq-bit}, e como este resultado pode ser usado para implementar portas de um q-bit, discutido na subseção \ref{TelePorUmq-bit}. As subseções \ref{Tele2Portas} e \ref{TelenPortas}, são destinadas a generalizar os resultados do teleporte adiabático de portas de um q-bit para implementar portas de $n$ q-bits. Ainda com o objetivo de estudar modelos de CQ adiabática, n\'{o}s estudamos no capítulo \ref{secaoEAC} um outro modelo que nos permite realizar CQ universal usando o conceito de evoluções adiabáticas controladas (EAC), proposto por Itay Hen \cite{Itay:15}. As seções \ref{genEAC} e \ref{ComputaEAC} revisam os resultados obtidos em \cite{Itay:15}. Nós também generalizamos os resultados obtidos na Ref. \cite{Itay:15} e mostramos que EAC podem ser usadas para implementar qualquer rotação (porta) controlada por $n$ q-bits, como será discutido na seção \ref{subsecEAC}.

A segunda parte desta dissertação é dedicada a introduzir as contribuições originais dessa dissertação, as quais foram publicadas nas Refs.\cite{Scirep,PRA}. O capítulo \ref{atalhogenerico} carrega, na seção \ref{atalhoHCD}, uma revisão sobre atalhos para adiabaticidade via Hamiltonianos contra-diabáticos e, na seção \ref{ComplemEnerTem}, o estudo acerca da complementaridade energia-tempo em evoluções superadiabáticas. No capítulo \ref{SGT} mostramos como implementar CQ Superadiabática. Na seção \ref{SuperTele} nós derivamos um atalho para o TQ adiabático proposto na Ref. \cite{Bacon:09} e mostramos como o conhecimento do Hamiltoniano contra-diabático associado ao TQ Superadiabático do estado de um q-bit pode ser útil na extensão do protocolo para teleportar o estado de $n$ q-bits. Em seguinda, na seção \ref{SuperGateTele}, mostramos como realizar CQ universal com o TQ Superadiabático. Para isso usamos alguns novos teoremas demonstrados nos Apêndices desta dissertação e que se aplicam em evoluções superadiabáticas em geral. Por fim, a análise da complementaridade energia-tempo para o TQ Superadiabático é feito na seção \ref{CompSuperTele}. No capítulo \ref{ESCandUQC} nós propomos um modelo alternativo de CQ Superadiabática. Isso é possível derivando um atalho para EAC de forma genérica na seção \ref{ESCGene} e aplicando os resultados para mostrar como evoluções superadiabáticas controladas (ESC) podem ser usadas para realizar CQ universal. Na seção \ref{SCEandQC} discutimos acerca do principal resultado do capítulo \ref{ESCandUQC}, onde nós mostramos como implementar portas $n$-controladas de 1 q-bit usando ESC a partir de um Hamiltoniano contra-diabático \textit{independente} do tempo. Na seção \ref{CETESC} analisamos a performance de tal modelo, onde nós analisamos a complementaridade energia-tempo para o Hamiltoniano Superadiabático em ESC e mostramos sua dependência com um parâmetro $\theta_0$ que modula a probabilidade de sucesso da computação. Como uma derivação do modelo proposto, na seção \ref{CQP}, nós fazemos um breve estudo sobre QC probabilística via ESC, onde mostramos que, em média, energeticamente é mais vantajoso realizarmos QC probabilística e não a CQ deterministica.

Encerramos esta dissertação apresentando, no capítulo \ref{Conclu}, as conclusões obtidas ao final da execução do projeto que deu origem a este trabalho. Também discutimos brevemente sobre as possíveis extensões desse projeto mencionando as perspectivas futuras.

\newpage

\subsection{Nota\c{c}\~{a}o}

A base computacional $\left\{ \vert 0\ket ,\vert
1\ket \right\} $ \'{e} adotada aqui, na forma matricial, como%
\begin{equation}
\vert 0\ket =\left[ 
\begin{array}{c}
1 \\ 
0%
\end{array}%
\right] \text{ \ , \ }\vert 1\ket =\left[ 
\begin{array}{c}
0 \\ 
1%
\end{array}%
\right] \text{ \ \ \ .}
\end{equation}

Como em muitos momentos lidaremos com sistemas de part\'{\i}culas,
precisaremos adotar r\'{o}tulos para as mesmas de modo a n\~{a}o haver
nenhuma confus\~{a}o. O estado $\vert \Psi \ket $ de um
sistema de $n$ part\'{\i}culas, cujo estado individual \'{e} $\vert
\psi _{i}\ket $, \'{e} dado pelo produto tensorial%
\begin{equation}
\vert \Psi \ket =\vert \psi _{1}\ket _{1}\otimes
\vert \psi _{2}\ket _{2}\otimes \cdots \otimes \vert \psi
_{n}\ket _{n} \text{ \ \ \ ,}  \label{formal}
\end{equation}%
onde muitas vezes pode ser representado por $\vert \Psi \ket
=\vert \psi _{1}\ket _{1}\vert \psi _{2}\ket
_{2}\cdots \vert \psi _{n}\ket _{n}$ ou $\vert \Psi
\ket =\vert \psi _{1}\ket \vert \psi
_{2}\ket \cdots \vert \psi _{n}\ket $. A ultima nota%
\c{c}\~{a}o s\'{o} ser\'{a} adotada se dispormos o sistema na sequência $%
1,2,\cdots ,n$; nos casos onde nos referirmos apenas a parte do sistema n\'{o}%
s usaremos a nota\c{c}\~{a}o $\vert \Psi \ket
_{jnm}=\vert \psi _{j}\ket _{j}\vert \psi
_{n}\ket _{n}\vert \psi _{m}\ket _{m}$, por exemplo.
Quando for conveniente, usaremos sempre a nota\c{c}\~{a}o formal da Eq. (\ref{formal})
para representar o estado do sistema.

A nota\c{c}\~{a}o para produtos tensoriais entre operadores \'{e}
convencionada da mesma forma como feito para estados. Salvo casos
particulares onde a nota\c{c}\~{a}o formal entre produtos tensoriais de
operadores $A_{i}$,
\begin{equation}
A=A_{1}\otimes A_{2}\otimes \cdots \otimes A_{n} \text{ \ ,}
\end{equation}%
se aplica, n\'{o}s usaremos o produto acima sob a forma $A=A_{1}A_{2}\cdots
A_{n}$. Deve-se atentar para a diferen\c{c}a na representa\c{c}\~{a}o de
produtos tensoriais e produtos matriciais. Os produtos matriciais de
operadores $B_{k}$ atuando sobre um estado $\vert \psi
_{i}\ket $ serão representados por%
\begin{equation}
B=B_{1}\cdot B_{2}\cdot \cdots \cdot B_{n} \text{ \ \ \ ,}
\end{equation}%
ou equivalentemente por $B=B_{1}B_{2}\cdots B_{n}$. Para diferenciar essa
ultima nota\c{c}\~{a}o da nota\c{c}\~{a}o adotada para o produto tensorial,
sempre representaremos o produto matricial entre dois operadores, que atuam
sobre um mesmo estado $\vert \psi _{i}\ket $, entre par\^{e}%
nteses como $B_{1}\cdot B_{2}\cdot \cdots \cdot B_{n}=\left(
B_{1}B_{2}\cdots B_{n}\right) _{i}$. A nota\c{c}\~{a}o $B=B_{1}B_{2}\cdots
B_{n}$ ser\'{a} apenas usada em situa\c{c}\~{o}es onde n\~{a}o haver\'{a}
possibilidade de alguma interpreta\c{c}\~{a}o errada.

Outras nota\c{c}\~{o}es podem surgir quando necess\'{a}rias. Nesses casos
especiais para que n\~{a}o exista qualquer ambiguidade na nota\c{c}\~{a}o,
durante o desenvolvimento dessa disserta\c{c}\~{a}o n\'{o}s sempre
indicaremos as nota\c{c}\~{o}es que est\~{a}o sendo usadas. Quando n\~{a}o,
valer\~{a}o as nota\c{c}\~{o}es acima indicadas.

\subsection{Portas Elementares em Computa\c{c}\~{a}o Qu\^{a}ntica}

Assim como em computa\c{c}\~{a}o cl\'{a}ssica, n\'{o}s temos um conjunto de
portas elementares em CQ \cite{Nielsen:book}.
Dentro desse conjunto de portas elementares, n\'{o}s podemos identificar
subconjuntos de portas que podem ser usados para construir os chamados 
\textit{conjuntos universais de portas qu\^{a}nticas}. Por defini\c{c}\~{a}o
esses conjuntos s\~{a}o compostos por portas elementares que podem ser combinadas de tal maneira que nos permite simular o funcionamento de qualquer porta de um circuito quântico. Exemplos desses conjuntos s\~{a}o os conjuntos \{$CNOT$ + Rota\c{c}\~{o}es de 1 q-bit\}\cite{Barenco:95} e o conjunto \{Toffoli, Hadamard\}
\cite{Kitaev:97,Dorit:03}. Além disso, existem conjuntos de portas que permitem universalidade \textit{aproximada}, como o conjunto \{$CNOT$ + Hadamard + porta $\frac{\pi }{8}$\} \cite{Boykin:00}.

As portas mencionadas acima, e que tem import\^{a}ncia para essa disserta%
\c{c}\~{a}o, s\~{a}o as portas $CNOT$, Hadamard, porta $\frac{\pi }{8}$ e
Toffoli. Abaixo seguem suas representa\c{c}\~{o}es matriciais (escritas na
base computacional)%
\begin{eqnarray*}
CNOT &=&\left[ 
\begin{array}{cccc}
1 & 0 & 0 & 0 \\ 
0 & 1 & 0 & 0 \\ 
0 & 0 & 0 & 1 \\ 
0 & 0 & 1 & 0%
\end{array}%
\right] \text{ \ \ ,} \\
H &=&\frac{1}{\sqrt{2}}\left[ 
\begin{array}{cc}
1 & 1 \\ 
1 & -1%
\end{array}%
\right] \text{ \ \ ,} \\
\frac{\pi }{8} &=&\left[ 
\begin{array}{cc}
1 & 0 \\ 
0 & e^{i\pi /4}%
\end{array}%
\right] \text{ \ \ ,} \\
\text{Toffoli} &=&\left[ 
\begin{array}{cc}
\1_{6\times 6} & 0_{6\times 6} \\ 
0_{6\times 6} & 
\begin{array}{cc}
0 & 1 \\ 
1 & 0%
\end{array}%
\end{array}%
\right]  \text{ \ \ .}
\end{eqnarray*}%

Seguem ainda suas representações circuitais, as quais são descritas na Fig. \ref{Portas1}.

\begin{figure}[!htb]
\centering
\includegraphics[width=5cm]{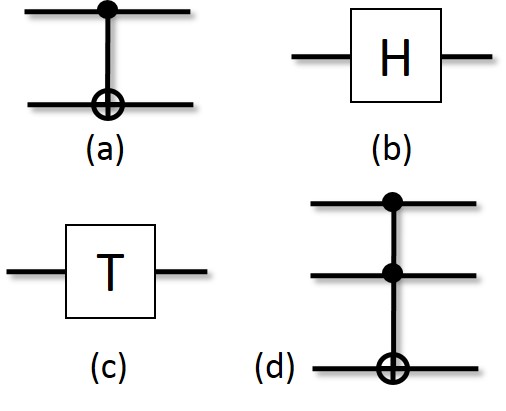}
\caption{(a) Porta CNOT onde o q-bit de controle deve ser inserido na parte superior e o alvo na parte infetior. (b) Porta Hadamard. (c) Porta $\pi/8$. (d) Porta Toffoli, onde os dois primeiros q-bits são os q-bits controle e o ultimo é o q-bit alvo.}
\label{Portas1}
\end{figure}

Al\'{e}m das portas acima, tamb\'{e}m temos as portas $X$, $Y$ e $Z$ que s%
\~{a}o representadas pelas matrizes de Pauli%
\begin{eqnarray}
X=\left[ 
\begin{array}{cc}
0 & 1 \\ 
1 & 0%
\end{array}%
\right] \text{ \ , \ }Y=\left[ 
\begin{array}{cc}
0 & -i \\ 
i & 0%
\end{array}%
\right] \text{ \ , \ }Z=\left[ 
\begin{array}{cc}
1 & 0 \\ 
0 & -1%
\end{array}%
\right] \text{ \ \ \ ,}
\end{eqnarray}
escritas na base computacional e onde suas formas circuitais s\~{a}o
representadas na Fig. \ref{Portas2}.

\begin{figure}[!htb]
\centering
\includegraphics[width=8cm]{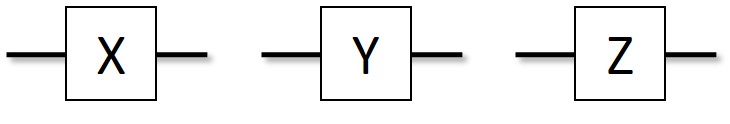}
\caption{Formas circuitais das portas elementares $X,Y$ e $Z$ que são representadas pelas matrizes de Pauli.}
\label{Portas2}
\end{figure}

Desta forma os estados da base computacional s\~{a}o autoestados de $Z$ de
modo que a rela\c{c}\~{a}o de autovalor%
\begin{equation}
Z\vert n\ket =\left( -1\right) ^{n}\vert n\ket \text{ \ \ \ ,}
\end{equation}
para $n=\left\{ 0,1\right\} $ \'{e} satisfeita.

\newpage

\part{Computa\c{c}\~{a}o Qu\^{a}ntica Adiab\'{a}tica}

Nesta primeira parte n\'{o}s estudaremos alguns
elementos do Teorema Adiab\'{a}tico, bem como suas condi\c{c}\~{o}es de
validade, que ser\~{a}o necess\'{a}rios para o desenvolvimento de nossa investigação. O capítulo \ref{TeoremaAdiabatico} é inteiramente dedicado a discussão sobre fundamentos da din\^{a}%
mica adiab\'{a}tica em sistemas qu\^{a}nticos fechados, bem como algumas condi\c{c}%
\~{o}es que devem ser obedecidas para que o teorema adiab\'{a}tico seja
obedecido.

No capítulo \ref{CompViaTQ} n\'{o}s discutimos sobre o TQ \cite{Bennett:93} e sua aplicação em CQ universal. Assim, nas seções \ref{teleBennet} e \ref{PrimiCQ} nós revisamos elementos relacionados ao TQ e como usá-lo como um primitivo para CQ universal \cite{Gottesman:99}. Em seguida, mostramos como realizar o TQ adiabaticamente, para isso discutiremos os principais resultados obtidos por D. Bacon e S. Flammia na Ref. \cite{Bacon:09}. Em seu trabalho, Bacon e Flammia mostraram que o TQ adiabático pode ser usado também como primitivo para CQ adiabática e para demonstrar isso eles fazem uso da definição de \textit{q-bits lógicos} e \textit{operadores lógicos}. Aqui nós executaremos a mesma tarefa que Bacon e Flammia, mas de uma forma diferente, usando as simetrias do Hamiltoniano adiabático. Assim como feito na Ref. \cite{Bacon:09}, mostraremos que podemos realizar CQ universal via TQ adiabático. A forma como tratamos o problema será de grande utilidade ao mostrarmos como generalizar os resultados do Bacon e Flammia para realizar o TQ adiabático de estados gerais de $n$ q-bits. Além disso, mostraremos como o uso de simetrias do Hamiltoniano facilita a forma como tratamos o TQ adiabático como um primitivo para CQ.

Ainda com o objetivo de realizar CQ universal adiabática, n\'{o}s estudamos no capítulo \ref{secaoEAC} um outro modelo que nos permite implementar portas qu\^{a}nticas adiabaticamente. Usando o conceito de evoluções adiabáticas controladas (EAC), Itay Hen propôs um modelo universal de CQ \cite{Itay:15}. Na seção \ref{ComputaEAC} nós mostramos como EAC pode ser usada para implementar portas de 1 q-bit e portas controladas por 1 q-bit. Na se\c{c}\~{a}o \ref{subsecEAC} n\'{o}s estendemos os resultados apresentados na se\c{c}\~{a}o \ref{ComputaEAC} e oferecemos um modelo que nos permite implementar portas controladas por $n$ q-bits usando EAC com o recurso mínimo de $1$ ancilla.

\newpage

\section{Teorema Adiabático} \label{TeoremaAdiabatico}

A todo tempo em dinâmica nos preocupamos em determinar como um sistema evolui quando sujeito a ação de campos e/ou forças externas. A ação desses campos e/ou forças externas é quem vai impor o nível dificuldade na hora de resolver as equações de movimento. Em mecânica quântica, quando deixamos um sistema qu\^{a}ntico evoluir segundo um Hamiltoniano $%
H\left( t\right) $, a sua din\^{a}mica \'{e} ditada pela equa\c{c}\~{a}o de
Schr\"{o}dinger%
\begin{equation}
i\hbar \frac{d}{dt}\vert \psi \left( t\right) \ket =H\left(
t\right) \vert \psi \left( t\right) \ket \text{ \ \ ,}
\label{SchrodingerEquation}
\end{equation}%
que em geral resulta em um sistema de equa\c{c}\~{o}es diferenciais
acopladas com coeficientes que podem, ou n\~{a}o, depender do tempo. Como o objetivo é encontrar a solução $\vert \psi \left( t\right) \ket $ para um dado Hamiltoniano $H\left(
t\right)$, isso
pode muitas vezes ser um problema. Afim de resolver o problema de encontrar
a solu\c{c}\~{a}o $\vert \psi \left( t\right) \ket $ da Eq. (\ref{SchrodingerEquation}), no caso mais geral poss\'{\i}vel, a s%
\'{e}rie de Dyson \'{e} a solu\c{c}\~{a}o \cite{Sakuray:book}. Em geral a evolução de um sistema quântico pode ser tão complicada quanto possamos imaginar, onde transições entre diferentes níveis de energia podem ocorrer. Porém, é possível controlar o sistema de modo que o mesmo evolua sempre em um nível de energia bem determinado fazendo uso do o teorema
adiab\'{a}tico e da no\c{c}\~{a}o de evolu\c{c}\~{o}es adiab\'{a}ticas.
Assim, deixe-nos definir o que vem a ser uma evolu\c{c}\~{a}o adiab\'{a}tica.

\begin{definition}
Seja um sistema quântico governado por um Hamiltoniano arbitrário  dependente do tempo. Nesse cenário, uma evolu\c{c}\~{a}o \'{e} dita adiab\'{a}tica quando os autoestados associados a níveis de energia distintos do
Hamiltoniano que governa o sistema evoluem independentes uns dos outros.
\end{definition}

Em outras palavras, em uma \textit{evolu\c{c}\~{a}o adiab\'{a}tica} se o
sistema \'{e} preparado inicialmente em um autoestado $\vert
E_{n}\left( t_{0}\right) \ket $ do Hamiltoniano $H\left(
t=t_{0}\right) $, ent\~{a}o em um tempo $t$ posterior o sistema terá evoluido
para o correspondente autoestado $\vert
E_{n}\left( t\right) \ket $ de $H\left( t\right) $, onde durante a
evolu\c{c}\~{a}o n\~{a}o h\'{a} nenhuma transi\c{c}\~{a}o entre n\'{\i}veis
de energia diferentes do n\'{\i}vel de energia associado a $\vert
E_{n}\left( t\right) \ket $ e $\vert E_{n}\left( t_{0}\right)
\ket $. Um ponto importante é que uma evolução adiabática não garante que teremos o sistema sempre evoluindo com a mesma energia ($\epsilon_{n}\left(t_{0}\right) = \epsilon_{n}\left( t_{1}\right) $), afinal o Hamiltoniano é dependente do tempo, mas o que se garante é que os subespaços compostos por autoestados associados a níveis de energia diferente evoluem independentemente, assim não há transições entre \textit{níveis} de energia durante a evolução.

A adiabaticidade n\~{a}o \'{e} um fen\^{o}meno que pode ser visto em
qualquer evolu\c{c}\~{a}o, mas existem casos onde podemos garantir
que a adiabaticidade \'{e} verificada. Certos disso, nos perguntamos: poderiam existir condi\c{c}\~{o}es sobre alguma entidade f\'{\i}sica do sistema de modo que a adiabaticidade seja garantida?

\subsection{Teorema Adiab\'{a}tico: Condição sobre o Hamiltoniano} \label{AdiabaticTheorem}

Nosso ponto de partida \'{e} a equa\c{c}\~{a}o de Schr\"{o}dinger (\ref%
{SchrodingerEquation}) para um Hamiltoniano dependente do tempo n\~{a}%
o-degenerado $H\left( t\right) $. Sem perda de generalidade deixe-nos 
expandir o estado evolu\'{\i}do $\vert \psi \left( t\right)
\ket $, que seja solu\c{c}\~{a}o da Eq. (\ref{SchrodingerEquation}), na base de
autoestados instant\^{a}neos de $H\left( t\right) $ como segue%
\begin{equation}
\vert \psi \left( t\right) \ket =\sum\limits_{n}c_{n}\left(
t\right) e^{-\frac{i}{\hbar }\int_{0}^{t}d\tau \varepsilon _{n}\left( \tau
\right) }\vert E_{n}\left( t\right) \ket \text{ \ \ ,} \label{EvolvedState}
\end{equation}%
onde $\varepsilon _{n}\left( t\right) $ e $\vert E_{n}\left( t\right)
\ket $ satisfazem%
\begin{equation}
H\left( t\right) \vert E_{n}\left( t\right) \ket =\varepsilon
_{n}\left( t\right) \vert E_{n}\left( t\right) \ket \text{ \ \ ,}
\label{EigenestateEquation}
\end{equation}%
com $\varepsilon _{n}\left( t\right) $ compondo o espectro n\~{a}%
o-degenerado de $H\left( t\right) $. A forma como escrevemos o estado evolu%
\'{\i}do na Eq. (\ref{EvolvedState}) deixa livre que $c_{n}\left( t\right) $
carregue qualquer informa\c{c}\~{a}o sobre fases geom\'{e}tricas
consequentes da evolu\c{c}\~{a}o. Se substituirmos a forma do estado evolu%
\'{\i}do da Eq. (\ref{EvolvedState}) na Eq. (\ref{SchrodingerEquation}) n%
\'{o}s podemos mostrar que para um coeficiente $c_{k}\left( t\right) $ n\'{o}%
s temos a seguinte din\^{a}mica%
\begin{equation}
\dot{c}_{k}\left( t\right) =-\sum_{n}c_{n}\left( t\right) \bra
E_{k}\left( t\right) |\dot{E}_{n}\left( t\right) \ket e^{-\frac{i}{%
\hbar }\int_{0}^{t}d\tau g_{nk}\left( \tau \right) }  \text{ \ \ ,} \label{dotC}
\end{equation}%
onde definimos $g_{nk}\left( t\right) =\varepsilon _{n}\left( t\right)
-\varepsilon _{k}\left( t\right) $. A quantidade $g_{nk}\left( t\right) $
representa o \textit{gap} de energia entre os n\'{\i}veis de energia $n$ e $%
k $ no instante de tempo $t$. N\'{o}s ainda podemos escrever%
\begin{equation}
\dot{c}_{k}\left( t\right) =-c_{k}\left( t\right) \bra E_{k}\left(
t\right) |\dot{E}_{k}\left( t\right) \ket -\sum_{n\neq
k}c_{n}\left( t\right) \bra E_{k}\left( t\right) |\dot{E}_{n}\left(
t\right) \ket e^{-\frac{i}{\hbar }\int_{0}^{t}d\tau g_{nk}\left(
\tau \right) }  \text{ \ \ .} \label{dotCopen}
\end{equation}

Escrevendo $\dot{c}_{k}\left( t\right) $ como na Eq. (\ref{dotCopen}) fica mais
claro que a soma sobre todo $n\neq k$ \'{e} respons\'{a}vel por acoplar o
sistema de equa\c{c}\~{o}es diferenciais. Note que ainda podemos escrever%
\begin{equation}
\dot{c}_{k}\left( t\right) =-c_{k}\left( t\right) \bra E_{k}\left(
t\right) |\dot{E}_{k}\left( t\right) \ket -\sum_{n\neq
k}c_{n}\left( t\right) \frac{\bra E_{k}\left( t\right) |\dot{H}%
\left( t\right) |E_{n}\left( t\right) \ket }{g_{nk}\left( t\right) }%
e^{-\frac{i}{\hbar }\int_{0}^{t}d\tau g_{nk}\left( \tau \right) } \text{ \ \ ,}
\label{dotC&dotH}
\end{equation}%
se usarmos a rela\c{c}\~{a}o 
\begin{equation}
\bra E_{k}\left( t\right) |\dot{E}_{n}\left( t\right) \ket =%
\frac{\bra E_{k}\left( t\right) |\dot{H}\left( t\right) |E_{n}\left(
t\right) \ket }{g_{nk}\left( t\right) }\text{ \ , \ para }n\neq k \text{ \ \ .}
\label{EnDotEk}
\end{equation}%

Note que esta rela\c{c}\~{a}o somente \'{e} v\'{a}lida para $n\neq k$ no
caso n\~{a}o-degenerado, j\'{a} no caso degenerado ela \'{e} v\'{a}lida
apenas quando $\varepsilon _{n}\left( t\right) \neq \varepsilon _{k}\left(
t\right) $, mas mesmo assim ainda podemos escrever para o caso degenerado 
\begin{equation}
\dot{c}_{k}\left( t\right) =-\sum_{g_{nk}=0}c_{k}\left( t\right)
\bra E_{n}\left( t\right) |\dot{E}_{k}\left( t\right) \ket
-\sum_{g_{nk}\neq 0}c_{n}\left( t\right) \frac{\bra E_{k}\left(
t\right) |\dot{H}\left( t\right) |E_{n}\left( t\right) \ket }{%
g_{nk}\left( t\right) }e^{-\frac{i}{\hbar }\int_{0}^{t}d\tau g_{nk}\left(
\tau \right) } \text{ \ \ ,}
\end{equation}%
onde $\sum_{g_{nk}=0}$ ($\sum_{g_{nk}\neq 0}$) representa uma soma sobre
todos os valores de $n$ tais que $g_{nk}\left( t\right) =0$ ($g_{nk}\left(
t\right) \neq 0$), que necessariamente implica em $\varepsilon _{n}\left(
t\right) =\varepsilon _{k}\left( t\right) $ ($\varepsilon _{n}\left(
t\right) \neq \varepsilon _{k}\left( t\right) $). Mas esse n\~{a}o \'{e} o
caso aqui considerado. Dando continuidade da Eq. (\ref{dotC&dotH}), n\'{o}s chegamos a uma condi\c{c}\~{a}o quantitativa para a adiabaticidade.

\begin{condition}[Condi\c{c}\~{a}o sobre o Hamiltoniano do sistema]

Seja um Hamiltoniano dependente do tempo $H\left( t\right) $. Ent\~{a}o se $\forall \, \left( k,n \right)$ vale%
\begin{equation}
\max_{t\in \left( 0,T\right) }\left\vert \frac{\bra E_{k}\left(
t\right) |\dot{H}\left( t\right) |E_{n}\left( t\right) \ket }{%
g_{nk}\left( t\right) }\right\vert <<1 \text{ \ \ ,} \label{Condition1}
\end{equation}%
onde $T$ \'{e} o tempo total de evolu\c{c}\~{a}o, a evolução do sistema pode ser descrita pela aproximação adiabática.
\end{condition}

Ainda existem certas divergências sobre os critérios de validade do teorema adiabático. Existem muitos estudos, teóricos e experimentais, na literatura sobre condições de validade do teorema adiabático e destacamos alguns que julgamos relevantes nas Refs. \cite{Tong:05,Suter:08,Ambainis:04,Amim:09-2,Jansen:07}. Diante disto, em nossa análise, adotaremos situações em que as condições desenvolvidas aqui são suficientes para garantir que o teorema adiabático é obedecido.

Dessa forma, da Eq. (\ref{dotC&dotH}) n\'{o}s ficamos com 
\begin{equation}
\dot{c}_{k}\left( t\right) \approx -c_{k}\left( t\right) \bra
E_{k}\left( t\right) |\dot{E}_{k}\left( t\right) \ket \text{ \ \ ,}
\label{dotCapproach}
\end{equation}%
e consequentemente obtemos como solu\c{c}\~{a}o $c_{k}\left( t\right)
\approx c_{k}\left( 0\right) e^{i\gamma _{k}\left( t\right) }$, onde%
\begin{equation}
\gamma _{k}\left( t\right) =i\int_{0}^{t}\bra E_{k}\left( \tau
\right) |\dot{E}_{k}\left( \tau \right) \ket d\tau 
\label{BerryPhase}
\end{equation}%
\'{e} a fase de Berry \cite{Berry:84}. Uma vez que encontramos que $%
c_{k}\left( t\right) \approx c_{k}\left( 0\right) e^{i\gamma _{k}\left(
t\right) }$ e que cada coeficiente evolui de forma independente, se
considerarmos que inicialmente o sistema se encontrava no estado de energia $%
\varepsilon _{n}\left( 0\right) $ de $H\left( t\right) $, ent\~{a}o $%
c_{n}\left( 0\right) =1$ e portanto devemos ter o estado do sistema
evoluindo aproximadamente segundo a equa\c{c}\~{a}o%
\begin{equation}
\vert \psi \left( t\right) \ket \approx 
e^{-\int_{0}^{t}\bra E_{k}\left( \tau \right) |\dot{E}_{k}\left(
\tau \right) \ket d\tau }e^{-\frac{i}{\hbar }\int_{0}^{t}d\tau
\varepsilon _{n}\left( \tau \right) }\vert E_{n}\left( t\right)\ket  \text{ \ \ ,} \label{SolEvolvedState}
\end{equation}%
que \'{e} a solu\c{c}\~{a}o para uma evolu\c{c}\~{a}o adiab\'{a}tica. A condi%
\c{c}\~{a}o sobre o Hamiltoniano do sistema nos diz basicamente que se o
Hamiltoniano varia muito lentamente durante a evolu\c{c}\~{a}o do sistema,
ent\~{a}o n\'{o}s temos uma boa aproxima\c{c}\~{a}o da evolu\c{c}\~{a}o adiab%
\'{a}tica.

\subsection{Teorema Adiab\'{a}tico: Condição sobre o tempo de evolu\c{c}\~{a}o}

Uma outra forma de analisar a consist\^{e}ncia do teorema adiab\'{a}tico 
\'{e} analisando o tempo total de evolu\c{c}\~{a}o do sistema. Para mostrar
como isso pode ser feito, deixe-nos retornar a Eq. (\ref{dotC&dotH}), mas primeiro note que%
\begin{eqnarray*}
\frac{d}{dt}\left[ c_{k}\left( t\right) e^{-i\gamma _{k}\left( t\right) }%
\right] &=&e^{-i\gamma _{k}\left( t\right) }\left[ \dot{c}_{k}\left(
t\right) -i\dot{\gamma}_{k}\left( t\right) c_{k}\left( t\right) \right] \text{ \ \ ,} \\
&=&e^{-i\gamma _{k}\left( t\right) }\left[ \dot{c}_{k}\left( t\right)
+\bra E_{k}\left( \tau \right) |\dot{E}_{k}\left( \tau \right)
\ket c_{k}\left( t\right) \right] \text{ \ \ ,}
\end{eqnarray*}%
e portanto podemos reescrever a Eq. (\ref{dotC&dotH}) como%
\begin{equation}
\frac{d}{dt}\left[ c_{k}\left( t\right) e^{-i\gamma _{k}\left( t\right) }%
\right] =-\sum_{n\neq k}c_{n}\left( t\right) e^{-i\gamma _{k}\left( t\right)
}\frac{\bra E_{k}\left( t\right) |\dot{H}\left( t\right)
|E_{n}\left( t\right) \ket }{g_{nk}\left( t\right) }e^{-\frac{i}{%
\hbar }\int_{0}^{t}d\tau g_{nk}\left( \tau \right) } \text{ \ \ .} \label{dotCExp&dotH}
\end{equation}

Agora n\'{o}s usamos um par\^{a}metro $s$ tal que $s=t/T$ para escrever%
\begin{equation}
\frac{d}{ds}\left[ c_{k}\left( s\right) e^{-i\gamma _{k}\left( s\right) }%
\right] =-\sum_{n\neq k}c_{n}\left( s\right) e^{-i\gamma _{k}\left( s\right)
}\frac{\bra E_{k}\left( s\right) |H^{\prime }\left( s\right)
|E_{n}\left( s\right) \ket }{g_{nk}\left( s\right) }e^{-\frac{i}{%
\hbar }T\int_{0}^{s}d\varsigma g_{nk}\left( \varsigma \right) } \text{ \ \ ,}
\end{equation}%
com $\gamma _{k}\left( s\right) =i\int_{0}^{s}\bra E_{k}\left(
\varsigma \right) |E_{k}^{\prime }\left( \varsigma \right) \ket
d\varsigma $, onde a derivada com rela\c{c}\~{a}o a vari\'{a}vel
independente de uma fun\c{c}\~{a}o $f$ \'{e} denotada por " $^{\prime }$ ",
isto \'{e}, $f^{\prime }\left( \xi \right) =df\left( \xi \right) /d\xi $.
Redefinindo a vari\'{a}vel independente acima com a mudan\c{c}a $s\mapsto
\xi $ e fazendo a integra\c{c}\~{a}o em ambos os lados da equa\c{c}\~{a}o
acima em $\xi \in I:\left[ 0,s\right] $, n\'{o}s obtemos%
\begin{equation}
c_{k}\left( s\right) e^{-i\gamma _{k}\left( s\right) }=c_{k}\left( 0\right)
-\sum_{n\neq k}\int_{0}^{s}d\xi c_{n}\left( \xi \right) e^{-i\gamma
_{k}\left( \xi \right) }\frac{\bra E_{k}\left( \xi \right)
|H^{\prime }\left( \xi \right) |E_{n}\left( \xi \right) \ket }{%
g_{nk}\left( \xi \right) }e^{-\frac{i}{\hbar }T\int_{0}^{\xi }d\varsigma
g_{nk}\left( \varsigma \right) } \text{ \ \ ,}
\end{equation}%
onde nós usamos, do lado esquerdo, o teorema fundamental do cálculo onde $\int_{x_1}^{x_2} f^{\prime }\left( x\right)dx=f\left( x_2\right)-f\left( x_1\right)$ \cite{Guidorizzi:book}. Por simplicidade, definimos%
\begin{equation}
F_{nk}\left( s\right) =c_{n}\left( s\right) e^{-i\gamma _{k}\left( s\right)
}\bra E_{k}\left( s\right) |H^{\prime }\left( s\right) |E_{n}\left(
s\right) \ket \text{ \ \ ,}
\end{equation}%
para escrever%
\begin{equation}
c_{k}\left( s\right) e^{-i\gamma _{k}\left( s\right) }=c_{k}\left( 0\right)
-\sum_{n\neq k}\int_{0}^{s}d\xi \frac{F_{nk}\left( \xi \right) }{%
g_{nk}\left( \xi \right) }e^{-\frac{i}{\hbar }T\int_{0}^{\xi }d\varsigma
g_{nk}\left( \varsigma \right) } \text{ \ \ .} \label{s1}
\end{equation}

O integrando na equa\c{c}\~{a}o acima pode ser tamb\'{e}m escrito como%
\begin{equation}
\frac{F_{nk}\left( \xi \right) }{g_{nk}\left( \xi \right) }e^{-\frac{i}{%
\hbar }T\int_{0}^{\xi }d\varsigma g_{nk}\left( \varsigma \right) }=\frac{i}{T%
}\left[ \frac{d}{d\xi }\left( \frac{F_{nk}\left( \xi \right) }{%
g_{nk}^{2}\left( \xi \right) }e^{-\frac{i}{\hbar }T\int_{0}^{\xi }d\varsigma
g_{nk}\left( \varsigma \right) }\right) -e^{-\frac{i}{\hbar }T\int_{0}^{\xi
}d\varsigma g_{nk}\left( \varsigma \right) }\frac{d}{d\xi }\left( \frac{%
F_{nk}\left( \xi \right) }{g_{nk}^{2}\left( \xi \right) }\right) \right] \text{ \ \ .}
\label{s2}
\end{equation}%

Substituindo a equa\c{c}\~{a}o acima na Eq. (\ref{s1}) n\'{o}s ficamos com%
\begin{equation}
c_{k}\left( s\right) e^{-i\gamma _{k}\left( s\right) }=c_{k}\left( 0\right) -%
\frac{i}{T}\sum_{n\neq k}\left[ \frac{F_{nk}\left( s\right) }{%
g_{nk}^{2}\left( s\right) }e^{-\frac{i}{\hbar }T\int_{0}^{\xi }d\varsigma
g_{nk}\left( \varsigma \right) }-\frac{F_{nk}\left( 0\right) }{%
g_{nk}^{2}\left( 0\right) }-\int_{0}^{s}d\xi e^{-\frac{i}{\hbar }%
T\int_{0}^{\xi }d\varsigma g_{nk}\left( \varsigma \right) }\frac{d}{d\xi }%
\left( \frac{F_{nk}\left( \xi \right) }{g_{nk}^{2}\left( \xi \right) }%
\right) \right] \text{ \ \ .}
\end{equation}%

Todo esse algebrismo matem\'{a}tico tem a finalidade de tentar encontrar
condi\c{c}\~{o}es sobre algum par\^{a}metro de modo que possamos ter uma
evolu\c{c}\~{a}o do coeficiente $c_{k}\left( s\right) $ independente dos
demais. Na equa\c{c}\~{a}o acima n\'{o}s podemos notar que esse objetivo 
\'{e} alcan\c{c}ado se a somat\'{o}ria puder ser ignorada com rela\c{c}\~{a}%
o ao coeficiente $c_{k}\left( 0\right) $. Um primeiro passo \'{e} fazer uso
do Lema de Riemann-Lebesgue para escrever que \cite{Brown:book}%
\begin{equation}
\lim_{T\rightarrow \infty }\int_{0}^{s}d\xi e^{-\frac{i}{\hbar }%
T\int_{0}^{\xi }d\varsigma g_{nk}\left( \varsigma \right) }\frac{d}{d\xi }%
\left( \frac{F_{nk}\left( \xi \right) }{g_{nk}^{2}\left( \xi \right) }%
\right) \rightarrow 0 \text{ \ \ .} \label{ApliLema}
\end{equation}%

Uma demonstra\c{c}\~{a}o simplificada do Lema de
Riemann-Lebesgue encontra-se no Apêndice \ref{lema-riemann-lebesgue}. Assim, nesse limite
n\'{o}s temos apenas%
\begin{equation}
c_{k}\left( s\right) e^{-i\gamma _{k}\left( s\right) }=c_{k}\left( 0\right) -%
\frac{i}{T}\sum_{n\neq k}\left( \frac{F_{nk}\left( s\right) }{%
g_{nk}^{2}\left( s\right) }e^{-\frac{i}{\hbar }T\int_{0}^{\xi }d\varsigma
g_{nk}\left( \varsigma \right) }-\frac{F_{nk}\left( 0\right) }{%
g_{nk}^{2}\left( 0\right) }\right) \text{ \ \ ,}
\end{equation}

O uso do Lema de Riemann-Lebesgue na Eq. (\ref{ApliLema}) e a relação acima mostram que se o tempo total de evolução for suficientemente grande, então podemos ter uma quase desacoplada evolução dos coeficientes $c_{n}\left( s\right)$. Mas quão grande deve ser o tempo total de evolução? A resposta para essa questão emerge diretamente da equação acima e chegamos a mais uma condição para adiabaticidade.

\begin{condition}[O tempo total de evolução adiabática]

O tempo total de evolução adiabática deve satisfazer

\begin{equation*}
T>>\max_{s\in \left[ 0,1\right] }\left\vert \frac{F_{nk}\left( s\right) }{%
g_{nk}^{2}\left( s\right) }\right\vert \text{ \ \ ,}
\end{equation*}%
onde essa estimativa deve ser feita sobre todos os valores de $k$ e $n$.
\end{condition}

Se a condição acima é satisfeita, então n\'{o}s temos $c_{n}\left( s\right)
=c_{n}\left( 0\right) e^{i\gamma _{n}\left( s\right) }$, que dá origem a solução adiabática determinada na Eq. (\ref{SolEvolvedState}). Assim, escrevemos que 
\begin{equation}
T>>\max_{s\in \left[ 0,1\right] }\left\vert \frac{
\bra E_{k}\left( s\right) |H^{\prime }\left( s\right) |E_{n}\left(
s\right) \ket }{g_{nk}^{2}\left( s\right) }\right\vert \text{ \ \ .}
\label{ConditionTime}
\end{equation}

Essa é condi\c{c}\~{a}o \textit{sobre o tempo total de evolu\c{c}\~{a}o do
sistema}, que denominaremos agora em diante de \textit{v\'{\i}nculo
temporal} imposto pelo teorema adiabático. Al\'{e}m dessas condi\c{c}\~{o}es mencionadas aqui, tem sido estudado \textit{bounds} para a aproxima\c{c}\~{a}o adiab\'{a}tica na inten\c{c}\~{a}o de dar mais
suporte para que possamos entender melhor os crit\'{e}rios suficientes para
uma boa aproxima\c{c}\~{a}o adiab\'{a}tica. Uma leitura sobre limites para
adiabaticidade pode ser encontrado na Ref. \cite{Wang:14}, bem como sua generaliza%
\c{c}\~{a}o na Ref. \cite{Wang:15}.

\newpage

\section{Computa\c{c}\~{a}o Qu\^{a}ntica Universal via Teleporte Quântico Adiab\'{a}tico} \label{CompViaTQ}

O teleporte quântico \cite{Bennett:93}, proposto em 1993 por Bennett \textit{et al.}, constitui um canal para enviar informação codificada em um estado quântico desconhecido. O principal resultado do TQ é que, além de não ser necessário conhecer o estado a ser teleportado, não há qualquer limite para a distância entre os agentes (exceto pelo canal clássico que deve ser estabelecido entre eles). Experimentos recentes para implementação de TQ atigiram a marca dos $100$ km com fibras ópticas \cite{Takesue:15} e $143$ km no espaço livre \cite{Ma:12}. No final do século passado, mais precisamente em 1999, Gottesman e Chuang \cite{Gottesman:99} construiram um modelo de circuito quântico em que o TQ pode ser usado como um primitivo para a CQ universal. Esse resultado abriu as portas para o trabalho do Bacon e Flammia \cite{Bacon:09}, onde eles simularam adiabaticamente o circuito proposto por Gottesman e Chuang, realizando assim o TQ adiabático. Além disso, Bacon e Flammia mostraram, a partir do TQ adiabático, o conceito de TQ adiabático de portas afim de realizar CQ universal.

Neste capítulo nós discutiremos de forma detalhada os resultados mencionados acima. A ideia de simular o circuito do Gottesman e Chuang via evoluções adiabáticas proposto por Bacon e Flammia será o foco principal dessa seção. O que faremos inicialmente é uma demonstra\c{c}\~{a}o alternativa de como o TQ adiabático acontece (para demonstração original usando q-bits lógicos, veja \cite{Bacon:09}). Com isso, n\'{o}s estendemos o modelo do Bacon e Flammia e mostramos que podemos usar o modelo para implementar qualquer unit\'{a}rio de $n$-q-bits.

\subsection{Teleportanto um estado qu\^{a}ntico desconhecido} \label{teleBennet}

A ideia principal do TQ de um q-bit (ou se preferir
do estado de um sistema quântico de um sistema de dois n\'{\i}veis) \'{e} que um \textit{agente emissor}, chamado Alice, seja capaz
de enviar a informa\c{c}\~{a}o codificada em um estado qu\^{a}ntico $%
\vert \psi \ket =a\vert 0\ket +b\vert
1\ket $ para um \textit{agente receptor}, chamado Bob. A dist%
\^{a}ncia entre Alice e Bob n\~{a}o \'{e} importante, desde que um canal
cl\'{a}ssico entre eles possa ser estabelecido (a raz\~{a}o disso ficar\'{a}
claro mais a frente). Em resumo, nosso esquema \'{e} formado por dois laborat%
\'{o}rios (da Alice e do Bob) separados espacialmente de forma que um
canal cl\'{a}ssico de troca de informa\c{c}\~{o}es possa ser estabelecido.

O recurso principal do TQ \'{e} um estado emaranhado do tipo estado
de Bell dado por%
\begin{equation}
\vert \beta _{nm}\ket =\frac{\vert 0n\ket
+\left( -1\right) ^{m}\vert 1\bar{n}\ket }{\sqrt{2}} \text{ \ \ ,}
\label{BellState}
\end{equation}%
onde $n,m$ assumem valores $0$ e $1$ e onde definimos $\bar{n}=1-n$. Esse
recurso \'{e} de fundamental import\^{a}ncia para a realiza\c{c}\~{a}o do
TQ e caracteriza o que chamamos de \textit{canal quântico} entre Alice e Bob. Assim, o canal qu\^{a}ntico entre Alice e Bob \'{e} estabelecido quando
cada um deles est\'{a} em posse de uma das part\'{\i}culas do par emaranhado %
(\ref{BellState}). Uma vez estabelecido o canal, digamos que o canal qu\^{a}%
ntico \'{e} o estado\footnote{Embora tenhamos escolhido tal estado, o protocolo aqui apresentado funciona para qualquer escolha de estado de Bell da Eq. (\ref{BellState}).} $\vert \beta _{11}\ket $ (estado singleto
que foi usado no trabalho original \cite{Bennett:93}), Alice deve estar de posse do estado $\vert \psi \ket $ a ser teleportado para Bob. O estado do sistema Alice-Bob \'{e}
\begin{eqnarray}
\vert \phi \ket  &=&\vert \psi \ket \vert
\beta _{11}\ket =\left( a\vert 0\ket +b\vert
1\ket \right) \left( \frac{\vert 01\ket -\vert
10\ket }{\sqrt{2}}\right)   \notag \\
&=&\frac{1}{\sqrt{2}}\left( a\vert 00\ket _{A}\vert
1\ket _{B}-a\vert 01\ket _{A}\vert
0\ket _{B}+b\vert 10\ket _{A}\vert
1\ket _{B}-b\vert 11\ket _{A}\vert
0\ket _{B}\right) \text{ \ \ ,} \label{SystemState}
\end{eqnarray}%
onde denotamos $\vert jk\ket _{A}$ ($\vert
m\ket _{B}$) como sendo um estado das part\'{\i}culas da Alice
(Bob). Devido a separa\c{c}\~{a}o entre Alice e Bob, Alice n\~{a}o pode
realizar nenhum tipo de medida na part\'{\i}cula do Bob, mas existe uma
medida especial que Alice pode fazer em suas part\'{\i}culas que possibilita
o TQ do estado $\vert \psi \ket $ para Bob. Em suas
particulas Alice deve realizar uma medida na base de Bell, um dos estados (\ref{BellState}). Assim, \'{e} conveniente escrever o
estado do sistema numa base onde as part\'{\i}culas da Alice estejam
escritas na base de Bell. Fazendo isso n\'{o}s encontramos%
\begin{eqnarray}
\vert \phi \ket  &=&\frac{1}{2}\left[ \vert \beta
_{11}\ket _{A}\left( -a\vert 0\ket -b\vert
1\ket \right) _{B}+\vert \beta _{10}\ket _{A}\left(
-a\vert 0\ket +b\vert 1\ket \right) _{B}\right] \notag
\\
&&+\frac{1}{2}\left[ \vert \beta _{01}\ket _{A}\left(
a\vert 1\ket +b\vert 0\ket \right)
_{B}+\vert \beta _{00}\ket _{A}\left( a\vert
1\ket -b\vert 0\ket \right) _{B}\right] \text{ \ \ ,}
\end{eqnarray}%
onde fica claro os poss\'{\i}veis resultados da Alice quando ela realizar
uma medida na base $\vert \beta _{nm}\ket $. Tamb\'{e}m \'{e} 
\'{o}bvio que o estado, para o qual a part\'{\i}cula do Bob ir\'{a} colapsar
depois da medida realizada pela Alice, depende exclusivamente do resultado
da medida da Alice. Exceto no caso onde o resultado da Alice fornece o
estado $\vert \beta _{11}\ket $ em suas part\'{\i}culas,
qualquer outro resultado n\~{a}o pode caracterizar o TQ, pois os poss%
\'{\i}veis estados de colapso s\~{a}o diferentes do estado original $%
\vert \psi \ket $.

\begin{table}
\centering
\caption{Correções do Bob para o TQ de um estado desconhecido}
\begin{tabular}{c|c}
\hline
Resultado da Alice & Corre\c{c}\~{a}o do Bob \\ \hline
$\vert \beta _{00}\ket $ & $\sigma _{z}\sigma _{x}$ \\ 
$\vert \beta _{10}\ket $ & $\sigma _{z}$ \\ 
$\vert \beta _{01}\ket $ & $\sigma _{x}$ \\ 
$\vert \beta _{11}\ket $ & $\1$ \\ \hline
\end{tabular}
\label{TabelaTQSimples}
\end{table}

Eis agora a import\^{a}ncia de possibilitar a troca de informa\c{c}\~{a}o entre
Alice e Bob por meio de um canal cl\'{a}ssico. Note que, para o Bob, depois
de uma medida da Alice o estado n\~{a}o est\'{a} bem definido e pode, ou n%
\~{a}o, ser o estado que Alice desejava teleportar. Mas se a Alice informar
o resultado de sua medida para o Bob, este sempre poder\'{a} agir sobre sua
part\'{\i}cula e "recuperar" a informa\c{c}\~{a}o que ficou embaralhada com
o TQ. A Tabela \ref{TabelaTQSimples} mostra as corre\c{c}\~{o}es que devem ser feitas
por Bob em sua part\'{\i}cula.

\subsection{Computa\c{c}\~{a}o Qu\^{a}ntica via TQ }\label{PrimiCQ}

Agora n\'{o}s mostraremos como usar o TQ para realizar
CQ universal por meio de simula\c{c}\~{o}es das
portas qu\^{a}nticas. A ideia b\'{a}sica \'{e},
primeiramente, mostrar como podemos simular o TQ
aplicando portas qu\^{a}nticas, ou seja, usando o modelo de circuitos.

\subsubsection{TQ via circuitos: O Protocolo}

Para diferenciar o que faremos aqui do TQ original
proposto em \cite{Bennett:93}, n\'{o}s vamos nos referir a este como \textit{%
TQ via circuitos qu\^{a}nticos}, ou simplesmente como \textit{%
TQ via circuitos}. A ideia b\'{a}sica para que possamos realizar o
TQ via circuitos \'{e} determinar um circuito qu\^{a}ntico que nos
permita reproduzir o processo feito em \cite{Bennett:93}.

Assim, novamente partimos de um estado input sendo dado por $\vert \phi
\ket =\vert \psi \ket \vert \beta
_{00}\ket $. Aqui n\'{o}s usaremos $\vert \beta
_{00}\ket $ para ilustrar que o TQ pode ocorrer usando
qualquer um dos estados de Bell. O circuito que nos permite realizar o TQ \cite{Gottesman:99,Brassard:96} \'{e} dado como na Fig. \ref{CircuitoTQSimples}.

\begin{figure}[!htb]
\centering
\includegraphics[width=10cm]{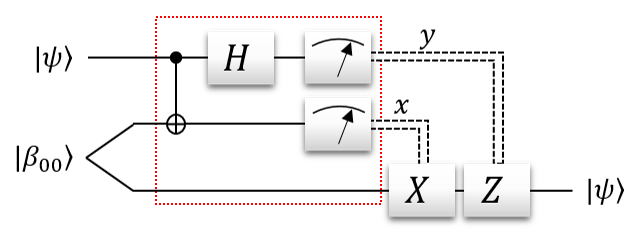}
\caption{Esquema do circuito que implementa o TQ de um estado quântico qualquer. A seção do circuito indicado com um retângulo tracejado é o responsável por simular uma medida de Bell.}
\label{CircuitoTQSimples}
\end{figure}

O\ primeiro est\'{a}gio do circuito, que \'{e} todo realizado pela Alice,
nada mais \'{e} do que simular uma medida na base de Bell usando um
computador qu\^{a}ntico. Isso é feito por meio da ação do circuito formado por uma CNOT e uma Hadamard dispostos como indicado na Fig. \ref{CircuitoTQSimples}, e em seguida uma medida na base computacional. Note
portanto que n\~{a}o \'{e} a medida na base computacional que, sozinha, faz
o TQ, mas sim a intera\c{c}\~{a}o entre os qbits da Alice junto com
a medida na base computacional que nos permite realizar o TQ via circuitos.

Ap\'{o}s aplicar o circuito, Alice leva o sistema do estado $\vert \phi
\ket $ para um estado computado $\vert \phi _{C}\ket $
dado por%
\begin{eqnarray}
\vert \phi _{C}\ket  &=&\frac{1}{2}\left[ \vert
00\ket _{A}\left( a\vert 0\ket +b\vert
1\ket \right) _{B}+\vert 01\ket _{A}\left(
a\vert 1\ket +b\vert 0\ket \right) _{B}\right] 
\notag \\
&&+\frac{1}{2}\left[ \vert 10\ket _{A}\left( a\vert
0\ket -b\vert 1\ket \right) _{B}+\vert
11\ket _{A}\left( a\vert 1\ket -b\vert
0\ket \right) _{B}\right] \text{ \ \ ,} \label{ComputatedState}
\end{eqnarray}%
e \'{e} nesse momento que a medida na base computacional deve ser feita
sobre os q-bits da Alice. Assim como no TQ original, o TQ por
circuitos tamb\'{e}m exige um canal cl\'{a}ssico entre Alice e Bob devido o
estado do sistema antes de uma medida da Alice ser dado pela Eq. (\ref{ComputatedState}). Na Eq. (\ref{ComputatedState}) n\'{o}s podemos ver claramente que para
diferentes resultado da medida da Alice, n\'{o}s teremos diferentes estados
de colapso para o q-bit do Bob. Portanto quando Alice realizar a medida
sobre suas part\'{\i}culas, ela dever\'{a} informar o resultado para o Bob e
este dever\'{a} "corrigir" o seu estado para obter exatamente o estado $%
\vert \psi \ket $ que Alice queria enviar. As Corre\c{c}\~{o}%
es do Bob para os respectivos resultados da medida de Alice s\~{a}o
mostradas na Tabela \ref{TabelaCircuitoTQSimples}.

\begin{table}
\centering
\caption{Correções do Bob para o TQ via circuitos}
\begin{tabular}{c|c}
\hline
Resultado da Alice & Corre\c{c}\~{a}o do Bob \\ \hline
$\vert 00\ket $  & $\1$ \\ 
$\vert 01\ket $  & $\sigma _{x}$ \\ 
$\vert 10\ket $  & $\sigma _{z}$ \\ 
$\vert 11\ket $  & $\sigma _{z}\sigma _{x}$
\\ \hline
\end{tabular}
\label{TabelaCircuitoTQSimples}
\end{table}

\subsubsection{TQ de portas de 1 q-bit}

Para que tenhamos um modelo que nos permita realizar computa\c{c}\~{a}o qu%
\^{a}ntica universal, \'{e} importante mostrar que o modelo permite
implementar qualquer porta de um q-bit. Considerando uma porta qualquer de
um q-bit n\'{o}s precisamos mostrar que, dado um estado $\vert \psi
\ket $, Alice seja capaz de enviar esse estado para o Bob com uma
porta $U$ aplicada.

Essa tarefa pode ser realizada usando o mesmo procedimento introduzido na seção anterior para o TQ. Porém, a porta que ao final deve estar atuando no estado do q-bit físico do Bob não pode ser implementada por ele, uma vez que este não possui poder computacional. A solução é transferir tal porta para o in\'{\i}cio do circuito. Aqui nós também vamos considerar que Alice não tem poder computacional e que suas operações estão restritas às operações do TQ via circuitos. Nessa situação faz-se necessário que uma terceira parte, chamada Charlie, forneça os recursos necessários para Alice e Bob realizarem o procedimento e este recurso é um \textit{estado de Bell rodado}. Esse recurso depende da porta $U$ que será implementada por Alice no q-bit do Bob e é dado por $\1 \otimes U \vert \beta_{00} \ket $, de modo que o esquema que implementa a porta $U$ de um q-bit \'{e} dado agora na Fig. \ref{CircuitoTQPortas}.

\begin{figure}[!htb]
\centering
\includegraphics[width=10cm]{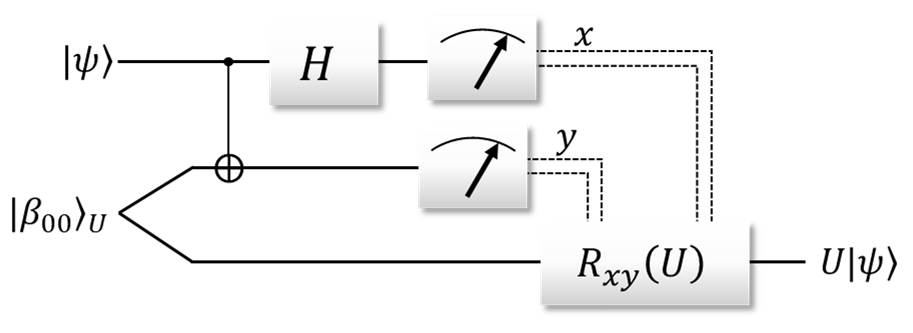}
\caption{Circuito para implementar portas via TQ. O recurso deve ser fornecido por Charlie para Alice e Bob. Tal recurso é um estado de Bell "rodado" $\vert \beta_{00} \ket_{U} = \1 \otimes U \vert \beta_{00} \ket$, onde $U$ é a porta que deve ser implementada ao final do processo na partícula do Bob.}
\label{CircuitoTQPortas}
\end{figure}

O circuito de corre\c{c}\~{a}o do Bob, agora dado por $R_{xy}\left( U\right) 
$, depende da porta a ser implementada pelo circuito. N\'{o}s podemos
construir um conjunto de estados que s\~{a}o os poss\'{\i}veis colapsos do
q-bit do Bob ap\'{o}s uma medida da Alice, onde o estado do sistema ap\'{o}s
aplicar o circuito e imediatamente antes da medida \'{e} dada por%
\begin{eqnarray}
\vert \phi _{C}\ket_{U}  &=&\frac{1}{2}\left[ \vert
00\ket _{A}\left( aU\vert 0\ket +bU\vert
1\ket \right) _{B}+\vert 01\ket _{A}\left(
aU\vert 1\ket +bU\vert 0\ket \right) _{B}\right]
\notag \\
&&+\frac{1}{2}\left[ \vert 10\ket _{A}\left( aU\vert
0\ket -bU\vert 1\ket \right) _{B}+\vert
11\ket _{A}\left( aU\vert 1\ket -bU\vert
0\ket \right) _{B}\right] \text{ \ \ .}
\end{eqnarray}

Novamente n\'{o}s podemos notar a necessidade do canal cl\'{a}ssico entre
Alice e Bob. Para cada medida de Alice teremos um estado de colapco
correspondente nas part\'{\i}culas do Bob, mas dessa vez a novidade \'{e}
que esse estado vem acompanhado de uma porta $U$ aplicada. O estado do Bob
para cada medida de Alice \'{e} como mostra a Tabela \ref{TabelaCircuitoTQPortas}. Assim Bob poder\'{a} realizar as corre\c{c}\~{o}es necess\'{a}rias e obter
o estado $U\vert \psi \ket $ em seu q-bit.

\begin{table}
\centering
\caption{Correções do Bob para o TQ de portas}
\begin{tabular}{c|c}
\hline
Resultado da Alice & Estado do Bob \\ \hline
$\vert 00\ket $ & $aU\vert
0\ket +bU\vert 1\ket $ \\ 
$\vert 01\ket $ & $aU\vert
1\ket +bU\vert 0\ket $ \\ 
$\vert 10\ket $ & $aU\vert
0\ket -bU\vert 1\ket $ \\ 
$\vert 11\ket $ & $aU\vert
1\ket -bU\vert 0\ket $ \\ \hline
\end{tabular}
\label{TabelaCircuitoTQPortas}
\end{table}

\subsubsection{TQ de portas controladas}

Para que possamos realizar CQ universal com o
modelo proposto anteriormente, agora devemos ser capazes de mostrar que
podemos implementar portas controladas de 2 q-bits com este modelo. Primeiro note que se fizermos $N$ r\'{e}plicas do circuito disposto na
Fig. \ref{CircuitoTQSimples}, ent\~{a}o n\'{o}s podemos implementar o TQ de $N$ estados
qu\^{a}nticos. Como a ideia \'{e} implementar portas controladas de 2 q-bits
em quaisquer estados de 2 q-bits, n\'{o}s temos que implementar o TQ
de estados de 2 q-bits e, junto com essa tarefa, implementar portas de 2
q-bits nesses q-bits. Aqui n\'{o}s
mostramos o circuito que deve ser capaz de implementar a porta CNOT no
estado de 2 q-bits a ser teleportado.

Considere os estados quaisquer $\vert \psi _{k}\ket
=a_{k}\vert 0\ket +b_{k}\vert 1\ket $, onde $%
k=\left\{ 1,2\right\} $, e dois pares de Bell que ser\~{a}o usados como
recurso. Ent\~{a}o o circuito que deve implementar o TQ da porta CNOT 
\'{e} dado na Fig. \ref{CircuitoTQDuploPortas}. Note que a corre\c{c}\~{a}o que deve ser feita por Bob (ap\'{o}s a medida
da Alice) presente no circuito da Fig. \ref{CircuitoTQDuploPortas}, \'{e} ligeiramente diferente
das corre\c{c}\~{o}es que deveriam ser feitas se quis\'{e}ssemos apenas
realizar o duplo TQ. Essa diferen\c{c}a \'{e} devido usarmos o
protocolo para implementar a porta CNOT, assim como vimos que no TQ
de portas de 1 q-bit (Fig. \ref{CircuitoTQPortas}) dever\'{\i}amos ter corre\c{c}\~{o}es
diferentes das corre\c{c}\~{o}es no TQ de estados qu\^{a}nticos
(Fig. \ref{CircuitoTQSimples}).

\begin{figure}[!htb]
\centering
\includegraphics[width=10cm]{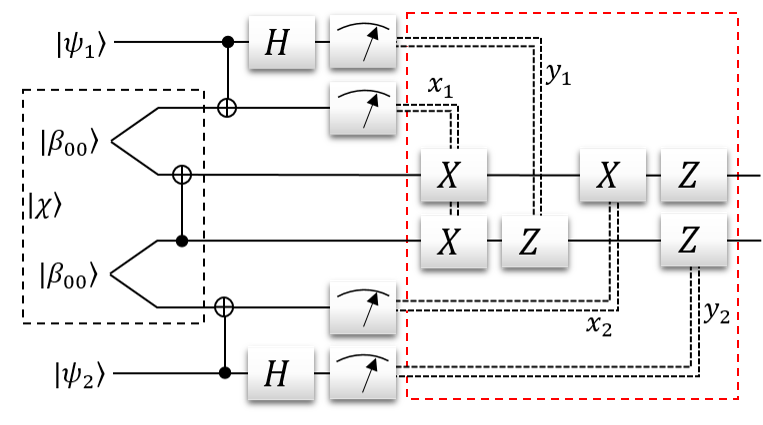}
\caption{Circuito para implementar a porta CNOT via TQ. Em analogia com o TQ de portas de 1 q-bit, a porta CNOT que atuaria ao final do processo, deve atuar no início como mostra o cirtuito. As linhas tracejadas são informações clássicas (um bit) que implicam na atuação da correspondente porta se o bit for $1$. As correções feitas pelo Bob dependem do conjunto de informações {$x_{1},x_{2},y_{1},y_{2},$} e estão indicadas pelo retângulo vermelho tracejado.}
\label{CircuitoTQDuploPortas}
\end{figure}

Digamos que Alice n\~{a}o poder implementar CNOT em suas part\'{\i}culas, mesmo assim ela poderia realizar o teleporte de portas. Para isso basta que Charlie possar dar os recursos necess\'{a}ros para
Alice e Bob, isto é, um canal quântico da forma $\vert \chi \ket =CNOT_{32}\vert \beta
_{00}\ket _{13}\vert \beta _{00}\ket _{24}$. Assim n\'{o}s conseguimos mostrar que se o estado inicial do
sistema \'{e} $\vert \psi _{1}\ket \vert \psi
_{2}\ket \vert \chi \ket $, ent\~{a}o ao final do
processo (j\'{a} com as devidas corre\c{c}\~{o}es feitas pelo Bob) teremos que o
estado das partículas do Bob é $CNOT\vert \psi _{1}\ket \vert
\psi _{2}\ket $.

\subsection{Computa\c{c}\~{a}o Qu\^{a}ntica Adiab\'{a}tica via TQ} \label{CQATele}

A ideia da CQ Adiabática \cite{Farhi:00} \'{e} construir um Hamiltoniano que seja
capaz de dirigir o sistema de um estado input que \'{e} autoestado fundamental de um Hamiltoniano $%
H_{\text{inicial}}$, at\'{e} um estado output que \'{e} autoestado fundamental de outro Hamiltoniano $H_{\text{final}}$. Os
Hamiltonianos devem ser constru\'{\i}dos de modo que o estado output seja a
resposta de um dado problema que desejamos solucionar.

Neste t\'{o}pico n\'{o}s mostraremos como combinar o TQ e o
teorema adiab\'{a}tico para implementar as portas qu\^{a}nticas de um
circuito qu\^{a}ntico. Vale mencionar que a complexidade na implementa\c{c}%
\~{a}o desse tipo de modelo computacional \'{e} equivalente \`{a}
complexidade de implementar o circuito qu\^{a}ntico, pois a proposta \'{e}
mostrar como as portas qu\^{a}nticas de um circuito podem ser implementadas
adiabaticamente via TQ. O grande mérito do tabalho do Bacon e Flammia \cite{Bacon:09} foi construir Hamiltonianos adiabáticos que nos permitam simular exatamente o que os circuitos apresentados nas Figs. \ref{CircuitoTQSimples}, \ref{CircuitoTQPortas} e \ref{CircuitoTQDuploPortas} reproduzem. 

Todo o estudo desse modelo ser\'{a} feito de uma forma diferente de como foi
originalmente proposto por Bacon e Flammia. A original
demonstra\c{c}\~{a}o da realiza\c{c}\~{a}o do TQ adiab\'{a}tico
de portas foi feito usando a definição de q-bits l\'{o}gicos e operadores lógicos. Al\'{e}m da nova forma de mostrar que o TQ adiab%
\'{a}tico de portas acontece, como uma contribui\c{c}\~{a}o original dessa
disserta\c{c}\~{a}o n\'{o}s estendemos o modelo do Bacon e Flammia.
Mostramos que este modelo pode ser usado para realizar computa\c{c}\~{a}o qu%
\^{a}ntica por meio de outros conjuntos de portas qu\^{a}nticas universais.

\subsubsection{TQ Adiab\'{a}tico} \label{TeleUmq-bit}

N\'{o}s sabemos que o TQ basicamente se resume em preparar o
sistema Alice-Bob em um estado $\left\vert \psi \right\rangle _{1}\left\vert
\beta _{kl}\right\rangle _{23}$, onde  $\left\vert \psi \right\rangle = a \vert 0 \ket + b \vert 1 \ket$ \'{e}
o estado desconhecido a ser teleportado e $\left\vert \beta _{kl}\right\rangle $ \'{e} um
estado de Bell. Em seguida Alice deve fazer opera\c{c}\~{o}es sobre suas
part\'{\i}culas (part\'{\i}culas 1 e 2) e enviar o resultado de uma medida
para o Bob, que por sua vez realizar\'{a} opera\c{c}\~{o}es no q-bit dele
(part\'{\i}cula 3) que dependem do resultado da medida de Alice. Ao final do processo o estado final do sistema é $%
\left\vert \beta _{mn}\right\rangle _{12}\left\vert \psi \right\rangle _{3}$. Um esquema é apresentado na Fig. \ref{FigSimpleTQ}.

\begin{figure}[!htb]
\centering
\includegraphics[width=8cm]{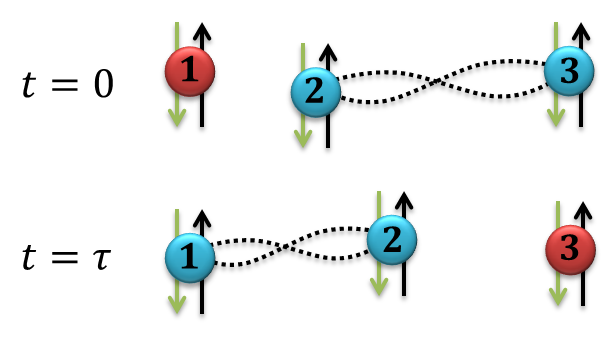}
\caption{Esquema do estado final e inicial do protocolo de TQ adiabático do estado de um q-bit. Inicialmente a partícula $1$ em posse da Alice devem conter o estado $\left\vert \psi
\right\rangle $ a ser teleportado, e ao final do processo a partícula $3$ em posse do Bob estará no estado $\left\vert \psi\right\rangle $.}
\label{FigSimpleTQ}
\end{figure}

Para realizar CQ adiabática n\'{o}s precisamos construir Hamiltonianos $%
H_{\text{ini}}$ e $H_{\text{fin}}$ de modo que $\left\vert \psi \right\rangle
_{1}\left\vert \beta _{kl}\right\rangle _{23}$ e $\left\vert \beta
_{mn}\right\rangle _{12}\left\vert \psi \right\rangle _{3}$ sejam autoestado
de $H_{\text{ini}}$ e $H_{\text{fin}}$, respectivamente, para todo $\left\vert \psi
\right\rangle $. Para facilitar a constru\c{c}\~{a}o de $H_{\text{ini}}$ e $H_{\text{fin}}$
n\'{o}s devemos escolher o estado recurso $\left\vert \beta
_{kl}\right\rangle $ que iremos usar. Escolhendo o estado $\left\vert \beta
_{00}\right\rangle =1/\sqrt{2}\left( \left\vert 00\right\rangle +\left\vert
11\right\rangle \right) $, n\'{o}s podemos mostrar que \footnote{A escolha dos operadores $Z_{n}Z_{m}$ e $X_{n}X_{m}$ para definir os Hamiltonianos (\ref{HamiTeleIni}) e (\ref{HamiTeleFin}) pode ser justificada devido estes serem \textit{estabilizadores} do estado de Bell $ \vert \beta_{00} \ket_{nm} $, aliás, esses são os únicos estabilizadores para tal estado \cite{Nielsen:book}.}
\begin{eqnarray}
H_{\text{ini}} &=&-\omega \hbar \1_{1}\left( Z_{2}Z_{3}+X_{2}X_{3}\right) \text{ \ \ ,}
\label{HamiTeleIni} \\
H_{\text{fin}} &=&-\omega \hbar \left( Z_{1}Z_{2}+X_{1}X_{2}\right) \1_{3} \text{ \ \ ,}
\label{HamiTeleFin}
\end{eqnarray}%
\'{e} uma boa boa escolha, pois mostra-se que $\left\vert \psi \right\rangle
_{1}\left\vert \beta _{00}\right\rangle _{23}$ e $\left\vert \beta
_{00}\right\rangle _{12}\left\vert \psi \right\rangle _{3}$ s\~{a}o
autoestados fundamentais de $H_{\text{ini}}$ e $H_{\text{fin}}$, respectivamente. Agora,
como n\'{o}s deveremos dirigir o sistema do estado $\left\vert \psi
\right\rangle _{1}\left\vert \beta _{00}\right\rangle _{23}$ para o $%
\left\vert \beta _{00}\right\rangle _{12}\left\vert \psi \right\rangle _{3}$
adiabaticamente, deixe-nos definir o Hamiltoniano adiab\'{a}tico%
\begin{equation}
H\left( s\right) =\eta _{i}\left( s\right) H_{\text{ini}}+\eta _{f}\left( s\right)
H_{\text{fin}} \text{ \ \ ,} \label{HamiTele}
\end{equation}%
onde as fun\c{c}\~{o}es $\eta _{i}\left( s\right) $ e $\eta _{f}\left(
s\right) $ s\~{a}o fun\c{c}\~{o}es cont\'{\i}nuas do par\^{a}metro $s$. A
informa\c{c}\~{a}o sobre a adiabaticidade de $H\left( s\right) $ est\'{a}
contida no par\^{a}metro $s=t/T$, onde $T$ \'{e} o tempo total de evolu\c{c}%
\~{a}o do sistema e deve ser tal que o sistema evolua lentamente. As fun\c{c}\~{o}es $\eta _{i}\left(
s\right) $ e $\eta _{f}\left( s\right) $ devem satisfazer $\eta _{i}\left(
1\right) =\eta _{f}\left( 0\right) =0$ e $\eta _{i}\left( 0\right) =\eta
_{f}\left( 1\right) =1$, para garantir que $\left\vert \psi \right\rangle
_{1}\left\vert \beta _{00}\right\rangle _{23}$ e $\left\vert \beta
_{00}\right\rangle _{12}\left\vert \psi \right\rangle _{3}$ sejam
autoestados fundamentais de $H\left( s\right) $ em $s=0$ e $s=1$,
respectivamente.

Um ponto importante que deve ser mencionado é que diferentemente dos protocolos estudados anteriormente, não teremos a necessidade de realizar uma medida ao final do processo. Por outro lado, agora teremos restrições sobre a distância entre Alice e Bob, já que o Hamiltoniano que fará a evolução local.

Calculando o espectro de $H\left( s\right) $, afim de determinar o gap m%
\'{\i}nimo, n\'{o}s vemos que $H\left( s\right) $ tem um espectro degenerado
e dado por%
\begin{eqnarray}
\varepsilon _{0}\left( s\right)  &=&-2\omega \hbar \sqrt{\eta _{i}^{2}\left(
s\right) +\eta _{f}^{2}\left( s\right) } \text{ \ \ ,} \label{e0tele} \\
\varepsilon _{1}\left( s\right)  &=&\varepsilon _{2}\left( s\right) =0
\label{e1tele} \text{ \ \ , e} \\
\varepsilon _{3}\left( s\right)  &=&2\omega \hbar \sqrt{\eta _{i}^{2}\left(
s\right) +\eta _{f}^{2}\left( s\right) } \text{ \ \ ,} \label{e2tele}
\end{eqnarray}%
onde cada n\'{\i}vel de energia $\varepsilon _{n}\left( s\right) $ \'{e}
duplamente degenerado. Assim n\'{o}s podemos determinar o gap em fun\c{c}%
\~{a}o do tempo dado por%
\begin{equation}
g\left( s\right) =\varepsilon _{1}\left( s\right) -\varepsilon _{0}\left(
s\right) =2\omega \hbar \sqrt{\eta _{i}^{2}\left( s\right) +\eta
_{f}^{2}\left( s\right) } \text{ \ \ .} \label{gaptele}
\end{equation}

Da Eq. $\left(\ref{HamiTele}\right)$ n\'{o}s devemos assegurar que a igualdade $%
\eta _{i}\left( s\right) =\eta _{f}\left( s\right) =0$ \textit{n\~{a}o pode}
ocorrer para nenhum $s\in \left[ 0,1\right] $. E isso garante que $g\left(
s\right) \neq 0$ $\forall s\in \left[ 0,1\right] $, pois $g\left( s\right)
=0 $ s\'{o} ocorre se $\eta _{i}\left( s\right) =\eta _{f}\left( s\right) =0$
ocorrer em algum $s\in \left[ 0,1\right] $. Considerado isso, consequentemente o gap m\'{\i}%
nimo definido como $g_{m\acute{\imath}n}=\min_{s\in \left[ 0,1\right]
}g\left( s\right) $ \'{e} n\~{a}o nulo.

Um problema que surge \'{e} que
devido a dupla degeneresc\^{e}ncia de $H\left( s\right) $ com rela\c{c}\~{a}%
o ao estado fundamental, o estado final pode n\~{a}o ser $\left\vert \beta
_{00}\right\rangle _{12}\left\vert \psi \right\rangle _{3}$. Assim, o teorema adiabático sozinho não é o suficiente para assegurar que o estado final do sistema será exatamente $\left\vert \beta
_{00}\right\rangle _{12}\left\vert \psi \right\rangle _{3}$. Para resolver
esse problema e mostrar que o estado final do sistema \'{e} exatamente $%
\left\vert \beta _{00}\right\rangle _{12}\left\vert \psi \right\rangle _{3}$, n\'{o}s vamos usar propriedades de simetria do Hamiltoniano.

\paragraph{As Simetrias do Hamiltoniano $H\left( s\right) $ e sua forma
matricial}

Deixe-nos considerar o Hamiltoniano da Eq. $\left(\ref{HamiTele}\right)$. Ent\~{a}o definindo os
operadores $\Pi _{z}=ZZZ$ e $\Pi _{x}=XXX$ n\'{o}s podemos mostrar que%
\begin{equation}
\left[ H\left( s\right) ,\Pi _{z}\right] =\left[ H\left( s\right) ,\Pi _{x}%
\right] =0 \text{ \ \ ,} \label{ComputationRelation}
\end{equation}%
o que nos mostra que $H\left( s\right) $ tem duas simetrias. Considerando um
estado da base computacional $\left\vert mnk\right\rangle $, ent\~{a}o temos
as seguintes equa\c{c}\~{o}es%
\begin{eqnarray}
\Pi _{z}\left\vert mnk\right\rangle &=&\left( -1\right) ^{m+n+k}\left\vert
mnk\right\rangle \text{ \ \ ,} \label{Piz} \\
\Pi _{x}\left\vert mnk\right\rangle &=& \vert \bar{m}\bar{n}\bar{k}%
\rangle \text{ \ \ ,} \label{Pix}
\end{eqnarray}%
onde $\bar{x}=1-x$. Devido as Eqs. $\left(\ref{Piz}\right)$ e $\left(\ref{Pix}\right)$, n\'{o}%
s denominamos $\Pi _{z}$ como \textit{operador de paridade} dos estados da
base computacional e $\Pi _{x}$ como \textit{operador troca de pariade}.
Mostrar que $\Pi _{x}$ troca a paridade \'{e} f\'{a}cil. Considere um estado 
$\left\vert mnk\right\rangle $ de paridade $p_{mnk}=\left( -1\right)
^{m+n+k} $, ent\~{a}o $\left\vert \bar{m}\bar{n}\bar{k}\right\rangle $ ter%
\'{a} paridade $p_{\bar{m}\bar{n}\bar{k}}=\left( -1\right) ^{\bar{m}+\bar{n}+%
\bar{k}}=\left( -1\right) ^{3}\left( -1\right) ^{-m-n-k}=-\left( -1\right)
^{m+n+k}=-p_{mnk}$, assim mostrando que $\vert \bar{m}\bar{n}\bar{k}%
\rangle $ e $\left\vert mnk\right\rangle $ possuem paridades opostas.

A simetria em $\Pi _{z}$ nos diz que n\'{o}s podemos escrever o Hamiltoniano 
$H\left( s\right) $ na base computacional $\left\{ \left\vert
mnk\right\rangle \right\} $ em uma forma bloco diagonal como segue%
\begin{equation*}
H\left( s\right) =\left[ 
\begin{array}{cc}
H_{4\times 4}^{+}\left( s\right) & \emptyset _{4\times 4} \\ 
\emptyset _{4\times 4} & H_{4\times 4}^{-}\left( s\right)%
\end{array}%
\right] \text{ \ \ ,}
\end{equation*}%
onde $\emptyset _{4\times 4}$ \'{e} uma matriz $4\times 4$ nula e os blocos $%
H_{4\times 4}^{\pm }\left( s\right) $ s\~{a}o formados por elementos de
matrizes de $H\left( s\right) $ escrito na base computacional. Aqui n\'{o}s n%
\~{a}o especificamos exatamente o ordenamento da base, mas n\'{o}s
consideramos que os quatro primeiros elementos da base s\~{a}o elementos de
paridade $+1$ e os quatro ultimos s\~{a}o elementos da base com paridade $-1$%
. Para melhorar a nota\c{c}\~{a}o, sempre escreveremos os elementos de base
com paridade $+1$ e $-1$ como $\left\vert mnk\right\rangle_{+} $ (com dual $_{+}\langle mnk \vert$) e $\vert 
\bar{m}\bar{n}\bar{k}\rangle_{-} $ (com dual $_{-}\langle \bar{m}\bar{n}\bar{k} \vert$), respectivamente. Por outro lado, n%
\'{o}s ainda temos a simetria em $\Pi _{x}$. O que ela nos diz?

Sabendo que o operador $\Pi _{x}$ atua sobre um estado
de paridade $\pm 1$ e nos fornece como resultado um estado de paridade $\mp 1$, ent%
\~{a}o poder\'{\i}amos nos perguntar se existe, ou n\~{a}o, uma rela\c{c}%
\~{a}o bem determinada entre os elementos de matrizes dos blocos $H_{4\times
4}^{\pm }\left( s\right) $ do Hamiltoniano $H\left( s\right) $. Para
verificar se h\'{a} ou n\~{a}o uma correspond\^{e}ncia, considere o elemento
de matriz do bloco formado pelos vetores de paridade $-1$ dado por%
\begin{equation}
h_{\bar{m}\bar{n}\bar{k}}^{\bar{m}^{\prime }\bar{n}^{\prime }\bar{k}^{\prime
}}\left( s\right) = \langle \bar{m}\bar{n}\bar{k}|H\left( s\right) |%
\bar{m}^{\prime }\bar{n}^{\prime }\bar{k}^{\prime } \rangle_{-} \text{ \ \ ,} \label{MatrixElementsmenos}
\end{equation}%
e seja $h_{mnk}^{m^{\prime }n^{\prime }k^{\prime }}\left( s\right) $
elementos do bloco formado pelos vetores de paridade $-1$, assim%
\begin{equation}
h_{mnk}^{m^{\prime }n^{\prime }k^{\prime }}\left( s\right) = \left\langle
mnk|H\left( s\right) |m^{\prime }n^{\prime }k^{\prime }\right\rangle_{+} \text{ \ \ ,}
\label{MatrixElements}
\end{equation}%
agora vamos usar a Eq. $\left(\ref{Pix}\right)$ para escrever a Eq. (\ref{MatrixElementsmenos}) como
\begin{equation*}
h_{\bar{m}\bar{n}\bar{k}}^{\bar{m}^{\prime }\bar{n}^{\prime }\bar{k}^{\prime
}}\left( s\right) = \left\langle mnk|\Pi _{x}H\left( s\right) \Pi
_{x}|m^{\prime }n^{\prime }k^{\prime }\right\rangle_{+} \text{ \ \ .}
\end{equation*}

Usando que $\left[ H\left( s\right) ,\Pi _{x}\right] =0$ e que $\Pi
_{x}^{2n}=\1$, temos portanto 
\begin{equation}
h_{\bar{m}\bar{n}\bar{k}}^{\bar{m}^{\prime }\bar{n}^{\prime }\bar{k}^{\prime
}}\left( s\right) =\left\langle mnk|H\left( s\right) |m^{\prime }n^{\prime
}k^{\prime }\right\rangle_{+} =h_{mnk}^{m^{\prime }n^{\prime }k^{\prime
}}\left( s\right) \text{ \ \ ,}
\end{equation}%
que nos mostra que para cada elemento de matriz do bloco de paridade $-1$, n%
\'{o}s temos um elemento igual no bloco de paridade $+1$ e,
consequentemente, se n\'{o}s ordenarmos adequadamente a base n\'{o}s podemos
ter%
\begin{equation}
H\left( s\right) =\left[ 
\begin{array}{cc}
H_{4\times 4}\left( s\right) & \emptyset _{4\times 4} \\ 
\emptyset _{4\times 4} & H_{4\times 4}\left( s\right)%
\end{array}%
\right] \text{ \ \ ,} \label{DiagonalFormH}
\end{equation}%
onde os elementos de matrizes de $H_{4\times 4}\left( s\right) $ s\~{a}o
determinados a partir da Eq. $\left(\ref{MatrixElements}\right)$. Calculando os elementos de matrizes $%
h_{mnk}^{m^{\prime }n^{\prime }k^{\prime }}\left( s\right) $ n\'{o}s
encontramos%
\begin{eqnarray*}
h_{mnk}^{m^{\prime }n^{\prime }k^{\prime }}\left( s\right) &=&\eta
_{i}\left( s\right) \langle mnk|H_{in}|m^{\prime }n^{\prime
}k^{\prime } \rangle +\eta _{f}\left( s\right) \langle
mnk|H_{\text{fin}}|m^{\prime }n^{\prime }k^{\prime } \rangle \\
&=& - \hbar \omega \eta _{i}\left( s\right) \delta _{mm^{\prime }}\left[ \left\langle
nk|XX|n^{\prime }k^{\prime }\right\rangle +\left\langle nk|ZZ|n^{\prime
}k^{\prime }\right\rangle \right] \\
&&- \hbar \omega \eta _{f}\left( s\right) \delta _{kk^{\prime }}\left[ \left\langle
mn|XX|m^{\prime }n^{\prime }\right\rangle +\left\langle mn|ZZ|m^{\prime
}n^{\prime }\right\rangle \right] \text{ \ \ ,}
\end{eqnarray*}%
usando que $\left\langle mn|ZZ|m^{\prime }n^{\prime }\right\rangle =\left(
-1\right) ^{m+n}\delta _{mm^{\prime }}\delta _{nn^{\prime }}$ e que $%
\left\langle mn|XX|m^{\prime }n^{\prime }\right\rangle =\left( 1-\delta
_{mm^{\prime }}\right) \left( 1-\delta _{nn^{\prime }}\right) $, temos
portanto que%
\begin{eqnarray}
h_{mnk}^{m^{\prime }n^{\prime }k^{\prime }}\left( s\right) &=& - \hbar \omega \eta
_{i}\left( s\right) \delta _{mm^{\prime }}\left[ \left( -1\right)
^{k+n}\delta _{nn^{\prime }}\delta _{kk^{\prime }}+\left( 1-\delta
_{nn^{\prime }}\right) \left( 1-\delta _{kk^{\prime }}\right) \right] \nonumber \\
&& - \hbar \omega \eta _{f}\left( s\right) \delta _{kk^{\prime }}\left[ \left( -1\right)
^{m+n}\delta _{mm^{\prime }}\delta _{nn^{\prime }}+\left( 1-\delta
_{mm^{\prime }}\right) \left( 1-\delta _{nn^{\prime }}\right) \right] 
\end{eqnarray}%
s\~{a}o os elementos de matrizes dos blocos da forma matricial do
Hamiltoniano $H\left( s\right) $. Para encontrar a forma matricial de cada
bloco $H_{4\times 4}\left( s\right) $ n\'{o}s consideramos aqui a sequencia
da base $\left\vert mnk\right\rangle_{+} $ como sendo $\left\{ \left\vert
000\right\rangle ,\left\vert 011\right\rangle ,\left\vert
101\right\rangle ,\left\vert 110\right\rangle \right\} $,
consequentemente da base $ \vert \bar{m}\bar{n}\bar{k} \rangle_{-} $
como $\left\{ \left\vert 111\right\rangle ,\left\vert 100\right\rangle
,\left\vert 010\right\rangle ,\left\vert 001\right\rangle \right\} $, e
assim podemos mostrar que%
\begin{equation}
H_{4\times 4}\left( s\right) = - \hbar \omega \left[ 
\begin{array}{cccc}
\eta _{i}\left( s\right) +\eta _{f}\left( s\right) & \eta _{i}\left( s\right)
& 0 & \eta _{f}\left( s\right) \\ 
\eta _{i}\left( s\right) & \eta _{i}\left( s\right) -\eta _{f}\left( s\right)
& \eta _{f}\left( s\right) & 0 \\ 
0 & \eta _{f}\left( s\right) & -\eta _{f}\left( s\right) -\eta _{i}\left(
s\right) & \eta _{i}\left( s\right) \\ 
\eta _{f}\left( s\right) & 0 & \eta _{i}\left( s\right) & \eta _{f}\left(
s\right) -\eta _{i}\left( s\right)%
\end{array}%
\right] \text{ \ .} \label{MatrixFormH4x4}
\end{equation}

Com essa escolha n\'{o}s podemos ver que para escrever a forma matricial do
Hamiltoniano $H\left( s\right) $ como na Eq. $\left(\ref{DiagonalFormH}\right)$, onde
cada bloco $H_{4\times 4}\left( s\right) $ \'{e} dado pela Eq. $\left(\ref{MatrixFormH4x4}\right)$%
, a base deve estar necessariamente na seguinte sequ\^{e}ncia $\{ \{ \left\vert
mnk\right\rangle_{+} \} ,\{ \vert \bar{m}\bar{n}\bar{k}%
\rangle_{-} \} \} $, onde $\left\{ \left\vert
mnk\right\rangle_{+} \right\} =\left\{ \left\vert 000\right\rangle
,\left\vert 011\right\rangle ,\left\vert 101\right\rangle ,\left\vert
110\right\rangle \right\} $ e $\{ \vert \bar{m}\bar{n}\bar{k}%
\rangle_{-} \} =\left\{ \left\vert 111\right\rangle ,\left\vert
100\right\rangle ,\left\vert 010\right\rangle ,\left\vert
001\right\rangle \right\} $.

\paragraph{A reprodu\c{c}\~{a}o do estado final: Sucesso no TQ}

Devido a degeneresc\^{e}ncia do n\'{\i}vel de energia fundamental do
Hamiltoniano $H\left( s\right) $, n\'{o}s n\~{a}o podemos assegurar, apenas
com o teorema adiab\'{a}tico, que o TQ ser\'{a} realizado com
sucesso. Essa degeneresc\^{e}ncia permite que os coeficientes $a$ e $b$ (que
carregam a informa\c{c}\~{a}o sobre o estado $\left\vert \psi \right\rangle $%
) possam se "misturar" de forma que o que teremos no final do processo n\~{a}%
o seja o estado $\left\vert \psi \right\rangle =a\left\vert 0\right\rangle
+b\left\vert 1\right\rangle $ no terceiro q-bit, e sim um estado do tipo $%
\left\vert \bar{\psi}\right\rangle =\alpha \left( a,b\right) \left\vert
0\right\rangle +\beta \left( a,b\right) \left\vert 1\right\rangle $, onde $%
\left\vert \alpha \left( a,b\right) \right\vert ^{2}+\left\vert \beta \left(
a,b\right) \right\vert ^{2}=1$. De fato, considerando que o teorema adiab%
\'{a}tico garante que o estado final ser\'{a} o autoestado fundamental de $H_{\text{fin}}$, podemos ver que ambos os estados $\left\vert \beta
_{00}\right\rangle _{12}\left\vert \psi \right\rangle _{3}$ e $\left\vert
\beta _{00}\right\rangle _{12}\left\vert \bar{\psi}\right\rangle _{3}$ s\~{a}o autoestados fundamentais de $H_{\text{fin}}$.

Para resolver esse problema n\'{o}s faremos uso dos resultados apresentados
anteriormente. Deixe-nos escrever o estado inicial $\left\vert \phi
\left( 0\right) \right\rangle $ e final $\left\vert \phi \left( 1\right)
\right\rangle $ como%
\begin{eqnarray}
\left\vert \phi \left( 0\right) \right\rangle &=&\left\vert \psi
\right\rangle _{1}\left\vert \beta _{00}\right\rangle _{23}=\frac{1}{\sqrt{2}%
}\left[ a\left( \left\vert 000\right\rangle +\left\vert 011\right\rangle
\right) +b\left( \left\vert 100\right\rangle +\left\vert 111\right\rangle
\right) \right] _{123}  \label{psi0} \text{ \ \ ,} \\
\left\vert \phi \left( 1\right) \right\rangle &=&\left\vert \beta
_{00}\right\rangle _{12}\left\vert \bar{\psi}\right\rangle _{3}=\frac{1}{%
\sqrt{2}}\left[ \alpha \left( a,b\right) \left( \left\vert 000\right\rangle
+\left\vert 110\right\rangle \right) +\beta \left( a,b\right) \left(
\left\vert 001\right\rangle +\left\vert 111\right\rangle \right) \right]
_{123} \text{ \ .} \label{psi1}
\end{eqnarray}

A simetria do Hamiltoniano $H\left( s\right) $ com rela\c{c}\~{a}o a
paridade em $\Pi _{z}$ nos diz que se iniciarmos o nosso sistema em um
estado qualquer de paridade $\pm 1$, o estado do sistema evolui para estados
instant\^{a}neos de paridade $\pm 1$. Portanto, se iniciamos o sistema em uma
superposi\c{c}\~{a}o de estados de paridades distintas, ent\~{a}o podemos
afirmar apenas que cada conjunto, formado por estados de paridades bem definidas e iguais, evoluir\'{a} independente um do
outro. Para fazer uso de tal resultado, note que os coeficientes $a$ e $b$ na Eq.
$\left(\ref{psi0}\right)$ multiplicam estados de paridade $+1$ e $-1$, respectivamente, e
na Eq. $\left(\ref{psi1}\right)$ os coeficientes $\alpha \left( a,b\right) $ e $%
\beta \left( a,b\right) $ multiplicam tamb\'{e}m estados de paridade $+1$ e $%
-1$, respectivamente. Ent\~{a}o devido a simetria em $\Pi _{z}$ n\'{o}s n%
\~{a}o devemos encontrar uma dependencia de $b$ no coeficiente $\alpha
\left( a,b\right) $, da mesma forma $\beta \left( a,b\right) $ n\~{a}o pode
ter depend\^{e}ncia em $a$. Em conclus\~{a}o, n\'{o}s teremos que $\alpha
\left( a,b\right) =\alpha \left( a\right) $ e $\beta \left( a,b\right)
=\beta \left( b\right) $.

Agora, n\'{o}s devemos usar a unitariedade da evolu\c{c}\~{a}o. Sabendo que
evolu\c{c}\~{o}es unit\'{a}rias mant\'{e}m invariante a norma de um estado 
\cite{Sakuray:book}, ent\~{a}o devemos esperar que $\left\vert \alpha \left(
a\right) \right\vert ^{2}+\left\vert \beta \left( b\right) \right\vert
^{2}=\left\vert a\right\vert ^{2}+\left\vert b\right\vert ^{2}$. Devido a
independ\^{e}ncia dos coeficientes $a$ e $b$, bem como de $\alpha \left(
a\right) $ e $\beta \left( b\right) $, a solu\c{c}\~{a}o da igualdade \'{e}
dada por $\left\vert \alpha \left( a\right) \right\vert ^{2}=\left\vert
a\right\vert ^{2}$ e $\left\vert \beta \left( b\right) \right\vert
^{2}=\left\vert b\right\vert ^{2}$, que por sua vez nos permite ainda
escrever $\alpha \left( a\right) =ae^{i\theta _{a}}$ e $\beta \left(
b\right) =be^{i\theta _{b}}$, para $\theta _{a}$ e $\theta _{b}$ reais. As
fases $\theta _{a}$ e $\theta _{b}$ surgem devido a evolu\c{c}\~{a}o unit%
\'{a}ria de cada estado independentemente, logo n\~{a}o podemos garantir,
ainda, que existe alguma rela\c{c}\~{a}o entre $\theta _{a}$ e $\theta _{b}$.

Ainda temos uma simetria a ser usada, a simetria em $\Pi _{x}$. Vimos que a
simetria em $\Pi _{x}$ nos permite mostrar que os blocos do Hamiltoniano $%
H\left( s\right) $ s\~{a}o id\^{e}nticos. Cada bloco do Hamtiltoniano $%
H\left( s\right) $ \'{e} respons\'{a}vel por evoluir um determinado conjunto
de estados, por exemplo, o Hamiltoniano $H_{4\times 4}\left( s\right) $ do
primeiro bloco e do segundo bloco dirigem todos os estados do conjunto $%
\left\{ \left\vert mnk\right\rangle_{+} \right\} $ e $\{ \vert \bar{m%
}\bar{n}\bar{k}\rangle_{-} \} $, respectivamente. Ent\~{a}o
podemos afirmar que, como os blocos s\~{a}o id\^{e}nticos, cada conjunto $%
\left\{ \left\vert mnk\right\rangle_{+} \right\} $ e $\{ \vert \bar{m%
}\bar{n}\bar{k} \rangle_{-} \} $ evolui de forma id\^{e}ntica.
Isso significa que qualquer fase que venha a surgir multiplicando estados de
paridade $+1$, tamb\'{e}m devem surgir (exatamente a mesma fase)
multiplicando estados de paridade $-1$. Em outras palavras, devido a
paridade do Hamiltoniano $H\left( s\right) $ em rela\c{c}\~{a}o a $\Pi _{x}$%
, estados de paridades distintas evoluem da mesma forma, consequentemente
podemos escrever para os coeficientes $\alpha \left( a\right) $ e $\beta
\left( b\right) $ que $\alpha \left( a\right) =ae^{i\theta }$ e $\beta
\left( b\right) =be^{i\theta }$, para algum real $\theta $.

Com isso n\'{o}s conseguimos escrever que o estado final $\left\vert \phi
\left( 1\right) \right\rangle $ \'{e} dado por%
\begin{equation}
\left\vert \phi \left( 1\right) \right\rangle =\frac{e^{i\theta }}{\sqrt{2}}%
\left[ a\left( \left\vert 000\right\rangle +\left\vert 110\right\rangle
\right) +b\left( \left\vert 001\right\rangle +\left\vert 111\right\rangle
\right) \right] _{123}=e^{i\theta }\left\vert \beta _{00}\right\rangle
_{12}\left\vert \psi \right\rangle _{3} \text{ \ \ ,}
\end{equation}%
portanto o estado final do sistema \'{e} exatamente o estado que caracteriza
o TQ a menos de uma fase global $\theta $.

\subsubsection{TQ Adiab\'{a}tico de $2$ q-bits} \label{duploTelep}

A fim de realizar o duplo teleporte, um estado de $2$ q-bits deve ser preparado e dado para Alice. N\'{o}s deixamos livre para que esse estado seja o mais geral poss\'{\i}vel, ou seja,
consideramos que%
\begin{equation}
\left\vert \psi _{2}\right\rangle =a_{1}\left\vert 00\right\rangle
+a_{2}\left\vert 01\right\rangle +a_{3}\left\vert 10\right\rangle
+a_{4}\left\vert 11\right\rangle \text{ \ \ ,}
\end{equation}%
onde a condi\c{c}\~{a}o de normaliza\c{c}\~{a}o de $\left\vert \psi
_{2}\right\rangle $ imp\~{o}e que $\sum_{n=1}^{4}\left\vert a_{n}\right\vert
^{2}=1$. A depender do conjunto de coeficientes $\left\{ a_{n}\right\} $, n%
\'{o}s poderemos ter um estado $\left\vert \psi _{2}\right\rangle $
emaranhado. O canal qu\^{a}ntico que deve ser estabelecido entre Alice e Bob 
\'{e} dado por dois pares de part\'{\i}culas emaranhadas que aqui n\'{o}s
consideramos como sendo o estado $\left\vert \beta _{00}\right\rangle $. N%
\'{o}s rotulamos as part\'{\i}culas com os n\'{u}meros de $1$ a $6$ e
dividimos o sistema em dois setores, \'{\i}mpar e par. N\'{o}s
preparamos o estado a ser teleportado nas part\'{\i}culas $1$ e $2$ e deixamos as part\'{\i}%
culas $1$ e $2$ em posse da Alice. O canal qu\^{a}ntico \'{e} formado pelas
part\'{\i}culas enumeradas de $3$ a $6$ onde os pares de particulas emaranhadas s\~{a}o formados pelas part\'{\i}culas $%
3 $ e $5$ e por $4$ e $6$, compondo o canal do setor \'{\i}mpar e par,
respectivamente. Um esquema pode ser visto na Fig. \ref{FigDuploTQ}.

\begin{figure}[!htb]
\centering
\includegraphics[width=12cm]{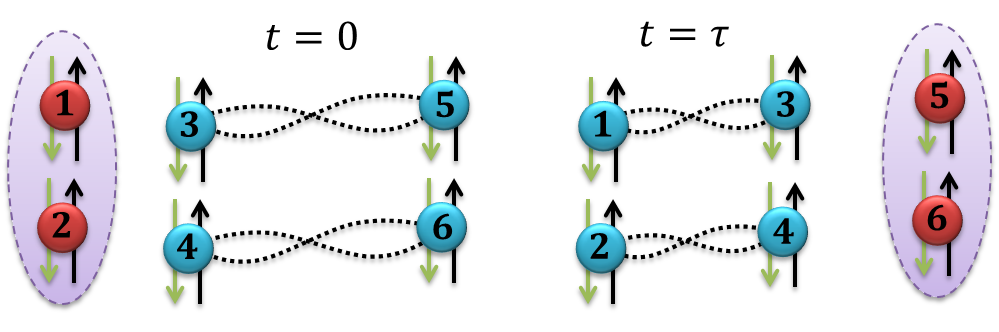}
\caption{Esquema do estado final e inicial do protocolo de TQ adiabático. Inicialmente as partículas $1$ e $2$ em posse da Alice devem conter o estado $\left\vert \psi
_{2}\right\rangle $ a ser teleportado, e ao final do processo as partículas $5$ e $6$ em posse do Bob estarão no estado $\left\vert \psi_{2}\right\rangle $.}
\label{FigDuploTQ}
\end{figure}

Pra realizar o duplo TQ adiabaticamente, n\'{o}s devemos
considerar o seguinte Hamiltoniano adiab\'{a}tico%
\begin{equation}
H_{D}\left( s\right) =H_{P}\left( s\right) \otimes \1_{I}+\1_{P}\otimes
H_{I}\left( s\right) \text{ \ \ ,} \label{DoubleTeleportation}
\end{equation}%
onde denotaremos apenas por $H_{D}\left( s\right) =H_{P}\left( s\right)
\1_{I}+\1_{P}H_{I}\left( s\right) $. N\'{o}s podemos notar que o Hamiltoniano $%
H_{D}\left( s\right) $ n\~{a}o permite nenhuma intera\c{c}\~{a}o entre os
setores par e \'{\i}mpar. A escolha do Hamiltoniano acima foi feita de modo que possamos reproduzir exatamente a tarefa realizada pelo duplo protocolo de TQ via circuitos qu\^{a}nticos,
onde n\'{o}s aplicamos dois circuitos \textit{independentes} para teleportar
o estado de $2$ q-bits. Cada Hamiltoniano $H_{S}\left( s\right) $, onde $S$
indica o setor $I$ (\'{\i}mpar) ou $P$ (par), \'{e} da forma dada na Eq. (\ref{HamiTele}) e atua apenas no setor $S$. De forma mais clara, o Hamiltoniano $%
H_{I}\left( s\right) $ ($H_{P}\left( s\right) $) atua sobre os q-bits $1,3$
e $5$ ($2,4$ e $6$) e \'{e} da forma%
\begin{equation}
H_{I\left( P\right) }\left( s\right) =\eta _{i}\left( s\right)
H_{\text{ini}}^{I\left( P\right) }+\eta _{f}\left( s\right) H_{\text{fin}}^{I\left(
P\right) } \text{ \ \ ,}
\end{equation}%
onde 
\begin{eqnarray}
H_{\text{ini}}^{I\left( P\right) } &=&-\omega \hbar \1_{1\left( 2\right) }\left(
Z_{3\left( 4\right) }Z_{5\left( 6\right) }+X_{3\left( 4\right) }X_{5\left(
6\right) }\right) \text{ \ \ ,} \\
H_{\text{fin}}^{I\left( P\right) } &=&-\omega \hbar \left( Z_{1\left( 2\right)
}Z_{3\left( 4\right) }+X_{1\left( 2\right) }X_{3\left( 4\right) }\right)
\1_{5\left( 6\right) } \text{ \ \ .}
\end{eqnarray}

Assim, cada Hamiltoniano $H_{S}\left( s\right) $ ser\'{a} respons\'{a}vel
por evoluir adiabaticamente o setor $S$. Devido as formas de $%
H_{\text{ini}}^{I\left( P\right) }$ e de $H_{D}\left( s\right) $, podemos mostrar
facilmente que o estado inicial dado por
\begin{equation}
\left\vert \phi _{2}\left( 0\right) \right\rangle =\left\vert \psi
_{2}\right\rangle _{12}\left\vert \beta _{00}\right\rangle _{46}\left\vert
\beta _{00}\right\rangle _{35} 
\end{equation}%
\'{e} o autoestado de $H_{D}\left( 0\right) $ e que%
\begin{equation}
\left\vert \phi _{2}\left( 1\right) \right\rangle =\left\vert \beta
_{00}\right\rangle _{24}\left\vert \beta _{00}\right\rangle _{13}\left\vert
\psi _{2}\right\rangle _{56}
\end{equation}%
\'{e} o autoestado de $H_{D}\left( 1\right) $. Para mostrar isso, usamos o seguinte resultado \cite{Horn:Book}: 
Sejam dois operadores $A_{m\times m}$ e $B_{n\times n}$ e suas respectivas
equa\c{c}\~{o}es de autovalores $A\left\vert a_{\mu }\right\rangle =a_{\mu
}\left\vert a_{\mu }\right\rangle $ e $B\left\vert b_{\nu }\right\rangle
=b_{\nu }\left\vert b_{\nu }\right\rangle $. Ent\~{a}o se um operador $C$
puder ser escrito como $C_{k\times k}=A_{m\times m}\otimes \1_{n\times n}+\1_{m\times m}\otimes
B_{n\times n}$, onde $k=mn$, a equa\c{c}\~{a}o de autovalor para $C_{k\times k}$ é $C\left\vert c_{\kappa }\right\rangle =c_{\kappa }\left\vert c_{\kappa
}\right\rangle$, onde os vetores $\left\vert c_{\kappa }\right\rangle =\left\vert a_{\mu
}\right\rangle \left\vert b_{\nu }\right\rangle $ formam o conjunto de
autoestados de $C$ com correspondentes autovalores $c_{\kappa }=a_{\mu
}+b_{\nu }$. Sabendo que $\left\vert \zeta _{2}\right\rangle _{2}\left\vert
\beta _{00}\right\rangle _{46}$ e $\left\vert \zeta _{1}\right\rangle
_{1}\left\vert \beta _{00}\right\rangle _{35}$ s\~{a}o autoestados
fundamentais de $H_{P}\left( 0\right) $ e $H_{I}\left( 0\right) $, do resultado acima n\'{o}s garantimos que $\left\vert \phi \left( 0\right)
\right\rangle $ \'{e} autoestado fundamental de $H_{D}\left( 0\right) $. A
mesma an\'{a}lise pode ser feita para mostrar que $\left\vert \phi \left(
1\right) \right\rangle $ \'{e} autoestado fundamental de $H_{D}\left(
1\right) $. Ainda com a ajuda de tal resultado, n\'{o}s podemos determinar o
gap de energia entre o estado fundamental de $H_{D}\left( s\right) $ e o
primeiro excitado. Como os Hamiltonianos $H_{I}\left( s\right) $ e $%
H_{P}\left( s\right) $ s\~{a}o dados pela Eq. (\ref{HamiTele}), o espectro de $%
H_{D}\left( s\right) $ \'{e} dado por%
\begin{equation}
\varepsilon _{mn}\left( s\right) =\varepsilon _{n}\left( s\right)
+\varepsilon _{m}\left( s\right) \text{ \ \ ,}
\end{equation}%
onde as quantidades $\varepsilon _{m}\left( s\right) $ s\~{a}o dadas pelas Eqs. (\ref{e0tele}), (\ref{e1tele}) e (\ref{e2tele}). Portanto a energia do n\'{\i}vel fundamental \'{e} $\varepsilon
_{00}\left( s\right) =2\varepsilon _{0}\left( s\right) $ e do primeiro
excitado \'{e} $\varepsilon _{01}\left( s\right) =\varepsilon _{0}\left(
s\right) $, logo encontramos o gap de energia como sendo%
\begin{equation}
g_{D}\left( s\right) =\varepsilon _{01}\left( s\right) -\varepsilon
_{00}\left( s\right) =2\omega \hbar \sqrt{\eta _{i}^{2}\left( s\right)
+\eta _{f}^{2}\left( s\right) } \text{ \ \ .}
\end{equation}

Como temos $g_{D}\left( s\right) =g\left( s\right) $ (onde $g\left(
s\right) $ \'{e} o gap do TQ de $1$ q-bit), isso mostra que no
duplo TQ temos um gap de energia não nulo. A degeneresc\^{e}ncia de cada Hamiltoniano $%
H_{P}\left( s\right) $ e $H_{I}\left( s\right) $ \'{e} contabilizada para $%
H_{D}\left( s\right) $ de modo que $H_{D}\left( s\right) $ \'{e}
quadruplamente degenerado, consequentemente o estado fundamental de $%
H_{D}\left( s\right) $ também o é, assim n\'{o}s teremos
novamente o problema encontrado no TQ de $1$ q-bit para
provar que o TQ acontece. Mas assim como para $1$ q-bit, n\'{o}s
tamb\'{e}m poderemos tentar resolver o problema via simetrias do
Hamiltoniano.

\paragraph{As simetrias do Hamiltoniano e sua forma matricial}

Mais uma vez a n\~{a}o suficiência do teorema adiab\'{a}tico em garantir que
o TQ acontece, nos obriga a analisar as simetrias do
Hamiltoniano que dirige o sistema e ver quais informa\c{c}\~{o}es podem ser
extra\'{\i}das delas. Devido a forma do Hamiltoniano $H_{D}\left( s\right) $
n\'{o}s podemos obter as suas simetrias facilmente a partir das simetrias
dos Hamiltonianos $H_{I}\left( s\right) $ e $H_{P}\left( s\right) $.
Definindo os operadores 
\begin{eqnarray}
\Pi _{z}^{P} &=&Z_{2}Z_{4}Z_{6}\text{ \ \ , \ \ }\Pi _{x}^{P}=X_{2}X_{4}X_{6} \text{ \ \ ,}
\label{SimetriasParDuplo} \\
\Pi _{z}^{I} &=&Z_{1}Z_{3}Z_{5}\text{ \ \ , \ \ }\Pi _{x}^{I}=X_{1}X_{3}X_{5} \text{ \ \ ,}
\label{SimetriasImparDuplo}
\end{eqnarray}%
n\'{o}s podemos mostrar que%
\begin{equation}
\left[ H_{D}\left( s\right) ,\Pi _{z}^{P}\1^{I}\right] =\left[ H_{D}\left(
s\right) ,\Pi _{x}^{P}\1^{I}\right] =\left[ H_{D}\left( s\right) ,\1^{P}\Pi
_{x}^{I}\right] =\left[ H_{D}\left( s\right) ,\1^{P}\Pi _{z}^{I}\right] =0 \text{ \ \ .}
\label{SimetriaSetoresDuplo}
\end{equation}

Em analogia ao caso do
TQ simples n\'{o}s
denominamos os operadores $\Pi _{z}^{S}$ e $\Pi _{x}^{S}$ como operadores de
paridade e invers\~{a}o de paridade, respectivamente, do setor $S$.
Consequentemente definimos os conjuntos de vetores $ \{  \{
\left\vert mnk\right\rangle _{+S} \} ,\{ \vert \bar{m}\bar{n}%
\bar{k}\rangle _{-S}\}  \} $, com $S=\left\{ I,P\right\} $%
, onde $\left\vert mnk\right\rangle _{P}=\left\vert mnk\right\rangle _{246}$
e $\left\vert mnk\right\rangle _{I}=\left\vert mnk\right\rangle _{135}$ e
onde $ \{ \left\vert mnk\right\rangle _{+S} \} $ e $ \{
\vert \bar{m}\bar{n}\bar{k}\rangle _{-S} \} $ s\~{a}o os
conjuntos de vetores da base computacional de paridade $+1$ e $-1$,
respectivamente, do setor $S$. N\~{a}o \'{e} dif\'{\i}cil notar que vale a
rela\c{c}\~{a}o $\Pi _{x}^{S}\left\vert mnk\right\rangle _{+S}=\vert 
\bar{m}\bar{n}\bar{k}\rangle _{-S}$ entre os conjuntos $\left\{
\left\vert mnk\right\rangle _{+S}\right\} $ e $\{ \vert \bar{m}%
\bar{n}\bar{k}\rangle _{-S}\} $.

Al\'{e}m dessas simetrias de cada setor, n\'{o}s ainda podemos encontrar
simetrias do sistema como um todo fazendo combina\c{c}\~{o}es das simetrias
de cada setor. Deixe-nos definir os operadores de \textit{paridade total} e 
\textit{invers\~{a}o de paridade total} como%
\begin{eqnarray}
\Pi _{z}^{D} &=&\Pi _{z}^{P}\Pi _{z}^{I}=Z_{2}Z_{4}Z_{6}Z_{1}Z_{3}Z_{5} \text{ \ \ ,}
\label{PizTotal} \\
\Pi _{x}^{D} &=&\Pi _{x}^{P}\Pi _{x}^{I}=X_{2}X_{4}X_{6}X_{1}X_{3}X_{5} \text{ \ \ ,}
\label{PixTotal}
\end{eqnarray}%
de modo que se definirmos o conjunto $\left\{ \left\vert
n_{2}n_{4}n_{6}\right\rangle \left\vert n_{1}n_{3}n_{5}\right\rangle
\right\} $ como a base computacional para o sistema total, a equa\c{c}\~{a}%
o de autovalor para o operador $\Pi _{z}^{D}$ sugere que a paridade
do estado $\left\vert n_{2}n_{4}n_{6}\right\rangle \left\vert
n_{1}n_{3}n_{5}\right\rangle $ \'{e} determinado pela paridade de cada setor
individualmente. De fato temos%
\begin{equation}
\Pi _{z}^{D}\left\vert n_{1}n_{3}n_{5}\right\rangle \left\vert
n_{2}n_{4}n_{6}\right\rangle =\left( -1\right) ^{n_{1}+n_{3}+n_{5}}\left(
-1\right) ^{n_{2}+n_{4}+n_{6}}\left\vert n_{1}n_{3}n_{5}\right\rangle
\left\vert n_{2}n_{4}n_{6}\right\rangle \text{ \ \ ,} \label{PizEigenvalueEquation}
\end{equation}%

Deixe-nos agora definir os
seguintes conjuntos $\left\{ \left\vert n_{2}n_{4}n_{6}\right\rangle_{+} \left\vert
n_{1}n_{3}n_{5}\right\rangle_{+} ,\left\vert \bar{n}_{2}\bar{n}_{4}\bar{n}%
_{6}\right\rangle_{-} \left\vert \bar{n}_{1}\bar{n}_{3}\bar{n}%
_{5}\right\rangle_{-} \right\} $ e $\left\{ \left\vert \bar{n}_{2}\bar{n}_{4}%
\bar{n}_{6}\right\rangle_{-} \left\vert n_{1}n_{3}n_{5}\right\rangle_{+}
,\left\vert n_{2}n_{4}n_{6}\right\rangle_{+} \left\vert \bar{n}_{1}\bar{n}_{3}%
\bar{n}_{5}\right\rangle_{-} \right\} $ de paridade $+1$ e $-1$,
respectivamente. Da forma como os operadores $\Pi _{z}^{D}$ e $\Pi _{x}^{D}$ foram definidos nas Eqs. $\left(\ref{PixTotal}\right)$ e $\left(\ref{PizTotal}\right)$
podemos chegar \`{a}s rela\c{c}\~{o}es de comuta\c{c}\~{a}o%
\begin{equation}
\left[ H_{D}\left( s\right) ,\Pi _{z}^{D}\right] =\left[ H_{D}\left(
s\right) ,\Pi _{x}^{D}\right] =0 \text{ \ \ ,}
\end{equation}%

Assim n\'{o}s faremos uso dessas simetrias para tentar determinar
a forma matricial de $H_{D}\left( s\right) $. Primeiramente devemos atentar
para a simetria total em $\Pi _{z}^{D}$ que nos mostra que podemos ordenar
adequadamente a base de modo que $H_{D}\left( s\right) $ seja composto por
dois blocos (bloco-diagonal) na base computacional. Ent\~{a}o n\'{o}s
escolhemos, primeiramente, ordenar a base de modo que os $32$ primeiros
vetores da base sejam vetores de paridade $+1$ e os ultimos $32$ sejam
vetores de paridade $-1$. O uso dessa primeira simetria nos permite escrever%
\begin{equation*}
H_{D}\left( s\right) =\left[ 
\begin{array}{cc}
H_{32\times 32}^{+}\left( s\right) & \emptyset _{32\times 32} \\ 
\emptyset _{32\times 32} & H_{32\times 32}^{-}\left( s\right)%
\end{array}%
\right] \text{ \ \ .}
\end{equation*}

Por outro lado, a simetria em $\Pi _{x}$ novamente nos permite ainda escrever que,
novamente ordenando convenientemente a base, a forma matricial de $%
H_{32\times 32}^{+}\left( s\right) $ \'{e} exatamente a mesma de $%
H_{32\times 32}^{-}\left( s\right) $. Como resultado do uso das simetrias em 
$\Pi _{z}^{D}$ e $\Pi _{x}^{D}$, temos%
\begin{equation*}
H_{D}\left( s\right) =\left[ 
\begin{array}{cc}
H_{32\times 32}\left( s\right) & \emptyset _{32\times 32} \\ 
\emptyset _{32\times 32} & H_{32\times 32}\left( s\right)%
\end{array}%
\right] \text{ \ \ .}
\end{equation*}

Agora n\'{o}s vamos usar as simetrias de cada setor em separado para obter
mais informa\c{c}\~{o}es sobre a forma matricial de cada bloco $H_{32\times
32}\left( s\right) $. Sabendo que cada bloco tem uma simetria em $\Pi
_{z}^{P}$ e $\Pi _{z}^{I}$, podemos (assim como no caso do TQ
simples) afirmar que estados de paridades distintas evoluem
independentemente em cada setor. Ou seja, o estado $\left\vert
n_{2}n_{4}n_{6}\right\rangle_{+} \left\vert n_{1}n_{3}n_{5}\right\rangle_{+} $
tem paridade $+1$ se olharmos apenas para o setor par, por outro lado o
estado $\left\vert \bar{n}_{2}\bar{n}_{4}\bar{n}_{6}\right\rangle_{-}
\left\vert \bar{n}_{1}\bar{n}_{3}\bar{n}_{5}\right\rangle_{-} $ tem paridade $%
-1$. Analogamente se olharmos para o setor de \'{\i}mpar n\'{o}s obtemos o
mesmo resultado. Ent\~{a}o a paridade em $\Pi _{z}^{P}\1^{I}$, bem como em $%
\1^{P}\Pi _{z}^{I}$, nos mostra que dentro do subconjunto $\left\{ \left\vert
n_{2}n_{4}n_{6}\right\rangle_{+} \left\vert n_{1}n_{3}n_{5}\right\rangle_{+}
,\left\vert \bar{n}_{2}\bar{n}_{4}\bar{n}_{6}\right\rangle_{-} \left\vert \bar{%
n}_{1}\bar{n}_{3}\bar{n}_{5}\right\rangle_{-} \right\} $ n\'{o}s podemos
ordenar a base de modo que $H_{32\times 32}\left( s\right) $ seja bloco
diagonal. De fato, ordenando os $16$ primeiros vetores da base como sendo o
conjunto $\left\{ \left\vert n_{2}n_{4}n_{6}\right\rangle_{+} \left\vert
n_{1}n_{3}n_{5}\right\rangle_{+} \right\} $ e os $16$ ultimos como $\left\{
\left\vert \bar{n}_{2}\bar{n}_{4}\bar{n}_{6}\right\rangle_{-} \left\vert \bar{n%
}_{1}\bar{n}_{3}\bar{n}_{5}\right\rangle_{-} \right\} $, podemos escrever%
\begin{equation*}
H_{32\times 32}\left( s\right) =\left[ 
\begin{array}{cc}
H_{16\times 16}\left( s\right) & \emptyset _{16\times 16} \\ 
\emptyset _{16\times 16} & H_{16\times 16}\left( s\right)%
\end{array}%
\right] \text{ \ \ ,}
\end{equation*}%
onde j\'{a} aproveitamos a simetria em $\Pi _{x}^{P}$ para escrever que os
blocos s\~{a}o id\^{e}nticos. Assim, n\'{o}s escrevemos o Hamiltoniano $%
H_{D}\left( s\right) $ em sua forma matricial como%
\begin{equation}
H_{D}\left( s\right) =\left[ 
\begin{array}{cccc}
H^{++}\left( s\right) & \emptyset _{16\times 16} & \emptyset _{16\times 16}
& \emptyset _{16\times 16} \\ 
\emptyset _{16\times 16} & H^{--}\left( s\right) & \emptyset _{16\times 16}
& \emptyset _{16\times 16} \\ 
\emptyset _{16\times 16} & \emptyset _{16\times 16} & H^{+-}\left( s\right)
& \emptyset _{16\times 16} \\ 
\emptyset _{16\times 16} & \emptyset _{16\times 16} & \emptyset _{16\times
16} & H^{-+}\left( s\right)%
\end{array}%
\right] \text{ \ \ ,} \label{HdoubleBloco}
\end{equation}%
logo, como resultado do ordenamento feito n\'{o}s ficamos
com a base ordenada da seguinte maneira $\left\{ \left\{ \left\vert +\right\rangle
_{P}\left\vert +\right\rangle _{I}\right\} ,\left\{ \left\vert
-\right\rangle _{P}\left\vert -\right\rangle _{I}\right\} ,\left\{
\left\vert +\right\rangle _{P}\left\vert -\right\rangle _{I}\right\}
,\left\{ \left\vert -\right\rangle _{P}\left\vert +\right\rangle
_{I}\right\} \right\} $, onde denotamos $\left\vert x\right\rangle
_{P}\left\vert y\right\rangle _{I}$ como sendo um estado que tem paridade $x$
e $y$ nos setores par e \'{\i}mpar, respectivamente.

\paragraph{Reprodu\c{c}\~{a}o do estado final}

Tendo em vista que a degeneresc\^{e}ncia de $H_{D}\left( s\right) $ pode
"misturar" os coeficientes $\left\{ a_{n}\right\} $ do estado $\left\vert
\psi _{2}\right\rangle $ durante o TQ, n\'{o}s poderemos ter um
estado inicial dado por $\left\vert \phi _{2}\left( 0\right) \right\rangle $%
, mas o estado final pode n\~{a}o ser $\left\vert \phi _{2}\left( 1\right)
\right\rangle $ onde o estado dos q-bits $5$ e $6$ \'{e} exatamente $%
\left\vert \psi _{2}\right\rangle $, mas poderemos ter um estado $\left\vert 
\bar{\phi}_{2}\left( 1\right) \right\rangle $ dado por%
\begin{equation}
\left\vert \bar{\phi}_{2}\left( 1\right) \right\rangle =\left\vert \beta
_{00}\right\rangle _{24}\left\vert \beta _{00}\right\rangle _{13}\left\vert 
\bar{\psi}_{2}\right\rangle _{56} \text{ \ \ ,}
\end{equation}%
onde 
\begin{equation}
\left\vert \bar{\psi}_{2}\right\rangle =\alpha _{1}\left( \left\{
a_{n}\right\} \right) \left\vert 00\right\rangle +\alpha _{2}\left( \left\{
a_{n}\right\} \right) \left\vert 01\right\rangle +\alpha _{3}\left( \left\{
a_{n}\right\} \right) \left\vert 10\right\rangle +\alpha _{4}\left( \left\{
a_{n}\right\} \right) \left\vert 11\right\rangle \text{ \ \ ,}
\end{equation}%
em que cada novo coeficiente $\alpha _{n}$ \'{e} fun\c{c}\~{a}o do conjunto
de coeficientes antigos $\left\{ a_{n}\right\} $. Agora deixe-nos
escrever explicitamente os estados $\left\vert \phi _{2}\left( 0\right)
\right\rangle $ e $\left\vert \bar{\phi}_{2}\left( 1\right) \right\rangle $
obtendo%
\begin{eqnarray}
\left\vert \phi _{2}\left( 0\right) \right\rangle &=&a_{1}\left\vert
00\right\rangle _{12}\left\vert \beta _{00}\right\rangle _{46}\left\vert
\beta _{00}\right\rangle _{35}+a_{2}\left\vert 01\right\rangle
_{12}\left\vert \beta _{00}\right\rangle _{46}\left\vert \beta
_{00}\right\rangle _{35} \nonumber \\
&&+a_{3}\left\vert 10\right\rangle _{12}\left\vert \beta _{00}\right\rangle
_{46}\left\vert \beta _{00}\right\rangle _{35}+a_{4}\left\vert
11\right\rangle _{12}\left\vert \beta _{00}\right\rangle _{46}\left\vert
\beta _{00}\right\rangle _{35} \text{ \ \ ,} \\
\left\vert \bar{\phi}_{2}\left( 1\right) \right\rangle &=&\alpha
_{1}\left\vert \beta _{00}\right\rangle _{24}\left\vert \beta
_{00}\right\rangle _{13}\left\vert 00\right\rangle _{56}+\alpha
_{2}\left\vert \beta _{00}\right\rangle _{24}\left\vert \beta
_{00}\right\rangle _{13}\left\vert 01\right\rangle _{56} \nonumber \\
&&+\alpha _{3}\left\vert \beta _{00}\right\rangle _{24}\left\vert \beta
_{00}\right\rangle _{13}\left\vert 10\right\rangle _{56}+\alpha
_{4}\left\vert \beta _{00}\right\rangle _{24}\left\vert \beta
_{00}\right\rangle _{13}\left\vert 11\right\rangle _{56} \text{ \ \ ,}
\end{eqnarray}%
onde denotamos $\alpha _{k}=$ $\alpha _{k}\left( \left\{ a_{n}\right\}
\right) $ e onde temos que%
\begin{eqnarray}
\left\vert \beta _{00}\right\rangle _{46}\left\vert \beta _{00}\right\rangle
_{35} &=&\frac{1}{2}\left( \left\vert 00\right\rangle \left\vert
00\right\rangle +\left\vert 00\right\rangle \left\vert 11\right\rangle
+\left\vert 11\right\rangle \left\vert 00\right\rangle +\left\vert
11\right\rangle \left\vert 11\right\rangle \right) _{4635} \text{ \ \ ,} \\
\left\vert \beta _{00}\right\rangle _{24}\left\vert \beta _{00}\right\rangle
_{13} &=&\frac{1}{2}\left( \left\vert 00\right\rangle \left\vert
00\right\rangle +\left\vert 00\right\rangle \left\vert 11\right\rangle
+\left\vert 11\right\rangle \left\vert 00\right\rangle +\left\vert
11\right\rangle \left\vert 11\right\rangle \right) _{2413} \text{ \ \ .}
\end{eqnarray}

O estado $\left\vert \beta _{00}\right\rangle _{46}\left\vert \beta
_{00}\right\rangle _{35}$, bem como $\left\vert \beta _{00}\right\rangle
_{24}\left\vert \beta _{00}\right\rangle _{13}$, combinados com os estados $%
\left\vert nm\right\rangle _{12}$ e $\left\vert nm\right\rangle _{56}$,
respectivamente, formam estados de paridades distintas (com rela\c{c}\~{a}o
a cada setor) de modo que podemos escrever%
\begin{eqnarray}
\left\vert \phi _{2}\left( 0\right) \right\rangle &=&a_{1}\left\vert
++\right\rangle _{PI}+a_{2}\left\vert -+\right\rangle _{PI}+a_{3}\left\vert
+-\right\rangle _{PI}+a_{4}\left\vert --\right\rangle _{PI} \text{ \ \ ,}  \label{psi20} \\
\left\vert \bar{\phi}_{2}\left( 1\right) \right\rangle &=&\alpha
_{1}\left\vert ++\right\rangle _{PI}+\alpha _{2}\left\vert -+\right\rangle
_{PI}+\alpha _{3}\left\vert +-\right\rangle _{PI}+\alpha _{4}\left\vert
--\right\rangle _{PI} \text{ \ \ ,} \label{psi21}
\end{eqnarray}%
onde n\'{o}s definimos $\left\vert xy\right\rangle _{PI}=\left\vert
x\right\rangle _{P}\left\vert y\right\rangle _{I}$, onde $\left\vert
x\right\rangle _{P}$ \'{e} uma superposi\c{c}\~{a}o de estados do setor par
de paridade $x$ e $\left\vert y\right\rangle _{I}$ uma superposi\c{c}\~{a}o
de estados do setor \'{\i}mpar de paridade $y$. N\'{o}s desejamos escrever
os estados dessa maneira para que possamos usar o fato de que estados que s%
\~{a}o combina\c{c}\~{o}es de estados da base computacional de paridades
distintas evoluem independentemente. Por exemplo, $\left\vert
++\right\rangle _{PI}$ e $\left\vert +-\right\rangle _{PI}$ s\~{a}o formados
por combina\c{c}\~{o}es de estados de mesma paridade do setor par, mas n\~{a}%
o s\~{a}o com rela\c{c}\~{a}o aos estados do setor \'{\i}mpar, logo $%
\left\vert ++\right\rangle _{PI}$ e $\left\vert +-\right\rangle _{PI}$
evoluem independentemente. O mesmo vale para os demais estados e portanto n%
\'{o}s podemos escrever que cada $\alpha _{k}$ n\~{a}o depende de todos os $%
a_{n}$, mas que $\alpha _{n}=\alpha _{n}\left( a_{n}\right) $. Outra forma
de enxergar isso \'{e} notando que cada bloco do Hamiltoniano na Eq. $\left(\ref{HdoubleBloco}\right)$ evolui um estado $\left\vert xy\right\rangle _{PI}$
diferente, portanto qualquer informa\c{c}\~{a}o dos estados $\left\vert +-\right\rangle
_{PI}$ e $\left\vert ++\right\rangle _{PI}$ n\~{a}o se misturam, bem como
nos demais e portanto $\alpha _{n}=\alpha
_{n}\left( a_{n}\right) $. Por\'{e}m, deve-se sempre ter em mente que o
resultado $\alpha _{n}=\alpha _{n}\left( a_{n}\right) $ \'{e} uma consequ%
\^{e}ncia das simetrias em $\Pi _{z}^{P}\1^{I}$ , $\1^{P}\Pi _{z}^{I}$ e $\Pi
_{z}^{D}$. Agora o pr\'{o}ximo passo \'{e} usar a unitariedade da evolu\c{c}\~{a}o
que nos permite escrever equa\c{c}\~{a}o%
\begin{equation}
\sum\limits_{n=1}^{4}\left\vert \alpha _{n}\left( a_{n}\right) \right\vert
^{2}=\sum\limits_{n=1}^{4}\left\vert a_{n}\right\vert ^{2} \text{ \ \ ,}
\end{equation}%
que tem como solu\c{c}\~{a}o as rela\c{c}\~{o}es entre $\alpha _{n}\left(
a_{n}\right) $ e $a_{n}$ dadas por $\alpha _{n}\left( a_{n}\right)
=a_{n}e^{i\theta _{n}}$, para algum $\theta _{n}$ real. Cada par\^{a}metro $%
\theta _{n}$ est\'{a} intimamente ligado com a evolu\c{c}\~{a}o independente
de um estado $\left\vert xy\right\rangle _{PI}$, para determinar alguma rela%
\c{c}\~{a}o entre os $\theta _{n}$'s n\'{o}s podemos usar as simetrias que s%
\~{a}o as simetrias em $\Pi _{x}^{P}\1^{I}$ , $\1^{P}\Pi _{x}^{I}$ e $\Pi
_{x}^{D}$. Com essas simetrias mostramos que os blocos $H^{xy}\left(
s\right) $ que constituem o Hamiltoniano $H_{D}\left( s\right) $ s\~{a}o id%
\^{e}nticos, ent\~{a}o independente de como os estados $\left\vert
xy\right\rangle _{PI}$ evoluem, eles evoluem da mesma forma e
consequentemente as fases $\theta _{n}$'s s\~{a}o todas iguais.
Substituindo, portanto, essas informa\c{c}\~{o}es na Eq. $\left(\ref{psi21}\right)$ n\'{o}s obtemos exatamente o $\left\vert \phi _{2}\left( 1\right)
\right\rangle =\left\vert \beta _{00}\right\rangle _{24}\left\vert \beta
_{00}\right\rangle _{13}\left\vert \psi _{2}\right\rangle _{56}$ a menos de
uma fase global $\theta $.

\subsubsection{TQ Adiab\'{a}tico de $n$ q-bits}

Vimos que o uso das simetrias do Hamiltoniano em conjunto com a unitariedade
da evolu\c{c}\~{a}o do sistema nos permite mostrar que tanto o
TQ de $1$ q-bit como o de $2$ q-bits pode ser feito
adiabaticamente. Agora nós mostraremos como generalizar o
modelo de TQ adiab\'{a}tico para teleportar um estado
qualquer de $n$ q-bits.

Inicialmente considere um sistema composto por $n$ q-bits e que o sistema
encontra-se no estado%
\begin{equation}
\left\vert \psi _{n}\right\rangle =\sum\limits_{k_{1}=\left\{ 0,1\right\}
}\cdots \sum\limits_{k_{n}=\left\{ 0,1\right\} }a_{k_{1}\cdots
k_{n}}\left\vert k_{1}\cdots k_{n}\right\rangle _{1\cdots n} \text{ \ \ .} \label{EstaN}
\end{equation}

N\'{o}s desejamos teleportar esse estado de $n$ q-bits adiabaticamente. De
forma an\'{a}loga ao que foi feito no duplo TQ n\'{o}s definimos 
$n$ setores onde para cada setor n\'{o}s necessitamos de um par de part\'{\i}%
culas emaranhadas. Nesse caso o canal qu\^{a}ntico composto de $n$ pares de
part\'{\i}culas emaranhadas \'{e} dado por%
\begin{equation}
\left\vert \Phi _{n}\right\rangle _{AB}=\left\vert \beta
_{00}\right\rangle _{c_{A}^{1}c_{B}^{1}}\cdots \left\vert \beta
_{00}\right\rangle _{c_{A}^{n}c_{B}^{n}} \text{ \ \ ,}
\end{equation}%
onde os \'{\i}ndices $c_{A}^{m}$ e $c_{B}^{m}$ rotulam as part\'{\i}culas do $m$-%
\'{e}simo par emaranhado que está em posse da Alice e Bob, respectivamente. O
esquema do sistema \'{e} representado na Fig. \ref{FigMultiTQ}.

\begin{figure}[!htb]
\centering
\includegraphics[width=13cm]{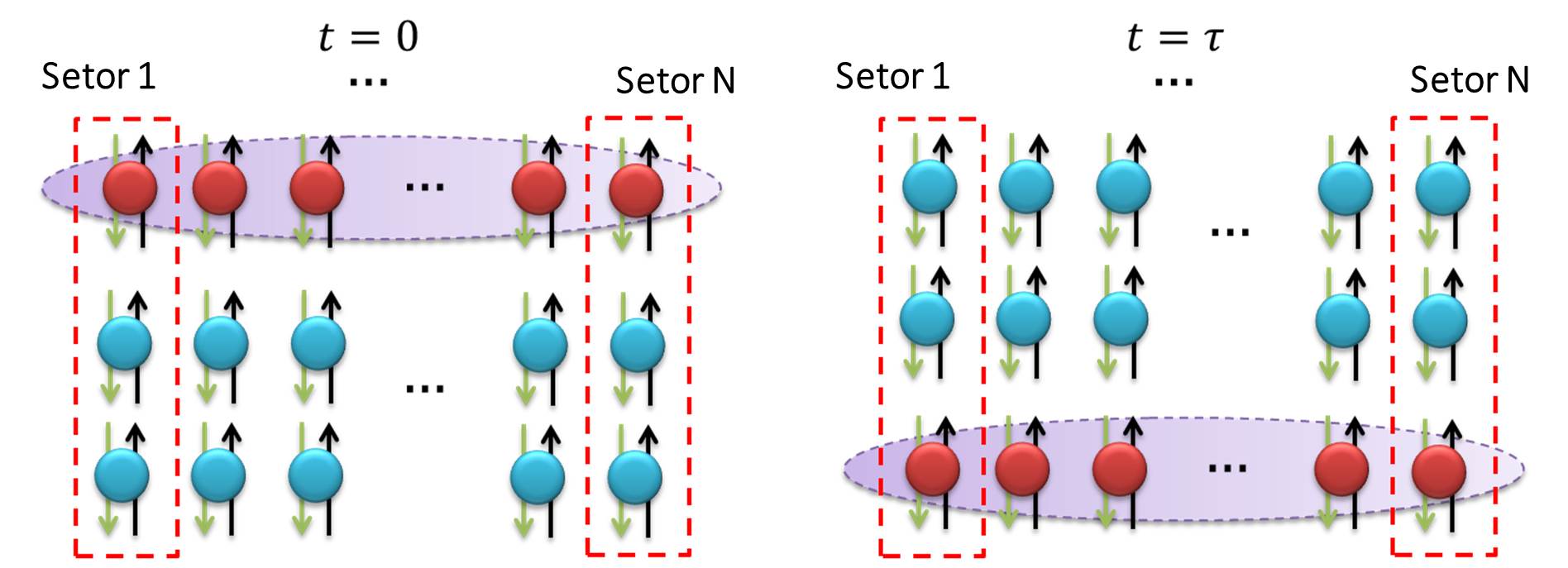}
\caption{Esquema ilustrando o estado final e inicial do TQ adiabático de $n$ q-bits. Cada setor é composto por $3$ q-bits. Inicialmente ($t=0$) o estado a ser teleportado deve ser preparado nos $n$ da primeira fila (sistema em vermelho) e o canal quântico recurso é dado por $n$ pares de Bell (sistema em azul). Ao final do processo ($t=\tau$) o estado $\left\vert \psi_{n}\right\rangle $ foi teleportado para as partículas do Bob.}
\label{FigMultiTQ}
\end{figure}

O estado inicial do sistema \'{e} portanto $\left\vert \phi _{n}\left(
0\right) \right\rangle =\left\vert \psi _{n}\right\rangle_{A} \left\vert \Phi
_{n}\right\rangle_{AB} $, e desejamos que o estado final do sistema seja $%
\left\vert \phi _{n}\left( 1\right) \right\rangle =\left\vert \Phi
_{n}\right\rangle _{A}\left\vert \psi _{n}\right\rangle _{B}$, onde 
\begin{equation}
\left\vert \Phi _{n}\right\rangle _{A}=\left\vert \beta
_{00}\right\rangle _{1c_{A}^{1}}\cdots \left\vert \beta
_{00}\right\rangle _{nc_{A}^{n}} \text{ \ \ ,}
\end{equation}%
e que representa o estado do canal "impresso" nas part\'{\i}culas da Alice
ao final da evolu\c{c}\~{a}o e 
\begin{equation}
\left\vert \psi _{n}\right\rangle _{B}=\sum_{k_{1}=\left\{ 0,1\right\}
}\cdots \sum_{k_{n}=\left\{ 0,1\right\} }a_{k_{1}\cdots k_{n}}\left\vert
k_{1}\cdots k_{n}\right\rangle _{c_{B}^{1}\cdots c_{B}^{n}} \text{ \ \ ,}
\end{equation}%
que \'{e} o estado que dever ser teletransportano reproduzido nos $n$ q-bits
do Bob. Para realizar essa tarefa n\'{o}s propomos o Hamiltoniano que dirigir%
\'{a} o sistema como sendo%
\begin{equation}
H_{\text{mult}}\left( s\right) =\sum\limits_{k=1} \mathcal{H}_{k}\left( s\right) \text{ \ \ ,} \label{HamAdMult}
\end{equation}%
onde n\'{o}s denotamos $\mathcal{H}_{j}\left( s\right) =\left( \otimes_{i=1} ^{i=j-1} \1_{i} \right) \otimes H_{j}\left(
s\right) \otimes \left( \otimes_{i=j+1} ^{i=n} \1_{i} \right)$, com $H_{j}\left( s\right) $ dado da forma%
\begin{equation}
H_{j}\left( s\right) =\eta _{i}\left( s\right) H_{\text{ini}}^{j}+\eta _{f}\left(
s\right) H_{\text{fin}}^{j} \text{ \ \ ,} \label{HamSetorj}
\end{equation}%
onde 
\begin{eqnarray}
H_{\text{ini}}^{j} &=&-\omega \hbar \1_{j}\left(
Z_{c_{A}^{j}}Z_{c_{B}^{j}}+X_{c_{A}^{j}}X_{c_{B}^{j}}\right) 
\label{HamIniSetorj} \\
H_{\text{fin}}^{j} &=&-\omega \hbar \left(
Z_{j}Z_{c_{A}^{j}}+X_{j}X_{c_{A}^{j}}\right) \1_{c_{B}^{j}}
\label{HamFinSetorj}
\end{eqnarray}

Isso indica que $H_{j}\left( s\right) $ \'{e} um Hamiltoniano que atua sobre
o $j$-\'{e}simo setor do sistema que esta esquematizado na Fig. \ref{FigMultiTQ}. Assim
como no duplo TQ, cada setor deve evoluir independentemente dos
demais e portanto o TQ deve acontecer, desde que o gap m\'{\i}%
nimo entre o estado fundamental de $H_{\text{mult}}\left( s\right) $ e o primeiro
excitado seja n\~{a}o nulo. Uma primeira observa\c{c}\~{a}o a ser feita \'{e}
que a forma como $H_{\text{mult}}\left( s\right) $ foi definido leva em conta que os estados 
$\left\vert \phi _{n}\left( 0\right) \right\rangle$ e $\left\vert \phi _{n}\left( 1\right) \right\rangle$ são estados fundamentais de $H_{\text{mult}}\left( 0\right) $ e $H_{\text{mult}}\left( 1\right) $, respectivamente. Para
determinar o gap m\'{\i}nimo, n\'{o}s usaremos um resultado um pouco mais geral do que o resultado que foi usado na seção anterior \cite{Horn:Book}. Sejam $n$ operadores $A_{m_{k}\times m_{k}}$, onde $k=1,\cdots ,n$ e onde
deixa-se livre que as dimens\~{o}es de cada $A_{m_{k}\times m_{k}}$ n\~{a}o
sejam necessariamente iguais, e suas respectivas equa\c{c}\~{o}es de
autovalores $A_{m_{k}\times m_{k}}\vert a_{\mu _{k}}^{k}\ket
=a_{\mu _{k}}^{k}\vert a_{\mu _{k}}^{k}\ket $. Ent\~{a}o se um
operador $C$ puder ser escrito como $C_{D\times D}=\sum\limits_{k}A_{m_{k}\times m_{k}}$, onde $D= \Pi_{k=1}^{n} m_{k}$, ent\~{a}o a equa\c{c}\~{a}o de autovalor para $C_{D\times D}$ escreve $C\left\vert c_{\kappa }\right\rangle =c_{\kappa }\left\vert c_{\kappa
}\right\rangle$, com os vetores $\left\vert c_{\kappa }\right\rangle =\vert a_{\mu
_{1}}^{1}\ket \vert a_{\mu _{2}}^{2}\ket \cdots
\vert a_{\mu _{n}}^{n}\ket $ formando o conjunto de autoestados
de $C$ com correspondentes autovalores $c_{\kappa }=\sum_{k}a_{\mu
_{k}}^{k}$.

N\'{o}s podemos, portanto, usar o teorema acima para calcular o espectro de $%
H_{\text{mult}}\left( s\right) $ que, a dependender do valor de $n$, pode ser um
trabalho \'{a}rduo. Mas como desejamos apenas o gap entre o n\'{\i}vel de
energia fundamental e o primeiro excitado, facilmente calculamos e
encontramos que o gap de energia \'{e} dado por $g_{T}\left( s\right)
=g\left( s\right) $, onde $g\left( s\right) $ \'{e} o gap de energia do
TQ de 1 q-bit. Uma consequ\^{e}ncia imediata de tal resultado 
\'{e} que $g_{Tm\acute{\imath}n}=\min_{s\in \left[ 0,1\right] }g_{T}\left(
s\right) \neq 0$.

Como nos casos anteriores, a degeneresc\^{e}ncia \'{e} a barreira que
encontramos ao afirmar que o TQ acontece. Mas o uso das
simetrias do Hamiltoniano $H_{\text{mult}}\left( s\right) $ pode ser usado para nos
auxiliar e mostrar que o TQ, de fato, acontece. Cada setor que 
\'{e} dirigido independentemente pelo Hamiltoniano $H_{k}\left( s\right) $
possui duas simetrias, e podemos combinar todas elas para definir os
operadores de paridade total e invers\~{a}o de paridade total dados,
respectivamente, por%
\begin{eqnarray}
\Pi _{zT} &=&\bigotimes_{i=1}^{n}\Pi _{zi} \text{ \ \ ,} \\
\Pi _{xT} &=&\bigotimes_{i=1}^{n}\Pi _{xi} \text{ \ \ ,}
\end{eqnarray}%
onde $\Pi _{zi}=Z_{i}Z_{c_{A}^{i}}Z_{c_{B}^{i}}$
e $\Pi _{xi}=X_{i}X_{c_{A}^{i}}X_{c_{B}^{i}}$ s\~{a}o os operadores paridade
e invers\~{a}o de paridade, respectivamente, do setor $i$. O procedimento
para mostrar que o TQ acontece \'{e} similar ao que foi feito
nos casos anteriores. Usam-se as simetrias do sistema como um todo e as
simetrias de cada setor individual para mostrar que existem conjuntos de paridade que
evoluem independente dos demais, assim os coeficientes $a_{k_{1}\cdots k_{n}}
$ n\~{a}o devem se misturar. Usamos, tamb\'{e}m, a unitariedade da evolu\c{c}%
\~{a}o e assim podemos mostrar que o TQ deve acontecer. Depois
de todo esse processo, n\'{o}s compararmos o estado inicial e final afim de
verificar como os estados e coeficientes se relacionam no in\'{\i}cio e
final da evolu\c{c}\~{a}o. Assim como nos casos anteriores, n\'{o}s obtemos
ao final do processo que o estado das part\'{\i}culas do Bob \'{e} o estado $%
\left\vert \psi _{n}\right\rangle $ a menos de uma fase global.

\subsubsection{Portas qu\^{a}nticas de $1$ q-bit via TQ} \label{TelePorUmq-bit}

Um primeiro passo para mostrar que o TQ adiab\'{a}%
tico pode ser usado para realizar CQ universal \'{e}
mostrar que portas de $1$ q-bit podem ser implementadas pelo modelo. Aqui n%
\'{o}s discutiremos detalhadamente como implementar portas de $1$ q-bit via
TQ adiab\'{a}tico. Relembrando que no TQ de portas nós consideramos que ambas as partes, Alice e Bob, n\~{a}o tem poder computacional, de modo que Charlie (um terceiro agente) deve fornecer os recursos necessários para que a tarefa possa ser executada. Assim, antes de tudo,
deixe-nos analisar o ponto inicial e final do processo para que possamos
construir melhor o aparato te\'{o}rico que nos permita realizar o que propomos.

Para que seja caracterizado o TQ de portas, Alice deve receber um estado $\left\vert \psi \right\rangle $ e ao final do
processo esse estado deve ser teleportado para a part\'{\i}cula do Bob, mas
com a condi\c{c}\~{a}o adicional de que o estado da part\'{\i}cula do Bob
deve ser $U\left\vert \psi \right\rangle $, para algum unit\'{a}rio $U$ de 1
q-bit. Desejando realizar esse procedimento adiabaticamente, o Hamiltoniano ao final do processo deve ser tal que $%
U\left\vert \psi \right\rangle $ seja um autoestado deste. O Hamiltoniano proposto para o TQ adiabático permite isto, porém o segredo está no Hamiltoniano adiabático inicial. Antes de discutirmos como fazer tal tarefa, deixe-nos mencionar a seguinte proposição \cite{Sakuray:book}.

\begin{proposition} \label{Rotation}

Seja $A$ um operador que satisfaz a rela\c{c}\~{a}o de autovalor $%
A\left\vert a_{n}\right\rangle =a_{n}\left\vert a_{n}\right\rangle $. Ent%
\~{a}o dado um operador $A\left( U\right) $ tal que $A\left( U\right)
=UAU^{\dag }$, para algum unit\'{a}rio $U$, n\'{o}s temos que%
\begin{equation}
A\left( U\right) \left\vert a_{n}\left( U\right) \right\rangle
=a_{n}\left\vert a_{n}\left( U\right) \right\rangle
\end{equation}%
\'{e} a rela\c{c}\~{a}o de autovalor para $A\left( U\right) $, onde $%
\left\vert a_{n}\left( U\right) \right\rangle =U\left\vert
a_{n}\right\rangle $.

\end{proposition}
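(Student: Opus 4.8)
The statement asserts that conjugating an operator $A$ by a unitary $U$ transforms its eigenvectors by $U$ while leaving the eigenvalues unchanged. This is a standard fact about similarity transformations, and the proof should be short and direct.

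My plan is to start from the defining eigenvalue relation $A\left\vert a_{n}\right\rangle = a_{n}\left\vert a_{n}\right\rangle$ and simply compute the action of $A(U) = UAU^{\dag}$ on the proposed eigenvector $\left\vert a_{n}(U)\right\rangle = U\left\vert a_{n}\right\rangle$. The key algebraic step is that $U^{\dag}U = \1$, so that $A(U)\left\vert a_{n}(U)\right\rangle = UAU^{\dag}U\left\vert a_{n}\right\rangle = UA\left\vert a_{n}\right\rangle$. Applying the original eigenvalue equation gives $UA\left\vert a_{n}\right\rangle = a_{n}U\left\vert a_{n}\right\rangle = a_{n}\left\vert a_{n}(U)\right\rangle$. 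This chain of equalities is precisely the claimed eigenvalue relation for $A(U)$, and the whole argument fits in a single line of display math.

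There is essentially no obstacle here: the only thing to keep in mind is to invoke unitarity ($U^{\dag}U = \1$) at the right moment, and to observe that the normalization of $\left\vert a_{n}(U)\right\rangle$ is preserved because $U$ is unitary (so $\left\vert a_{n}(U)\right\rangle$ is a legitimate normalized eigenvector whenever $\left\vert a_{n}\right\rangle$ is). I would also note, for completeness, that if $\{\left\vert a_{n}\right\rangle\}$ forms a complete orthonormal basis then so does $\{U\left\vert a_{n}\right\rangle\}$, since unitaries map orthonormal bases to orthonormal bases; this guarantees that the rotated eigenvectors capture the entire spectrum of $A(U)$ rather than just a subset of it.

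The relevance to the surrounding discussion is clear and worth flagging at the end of the proof: since the final adiabatic Hamiltonian for gate teleportation is obtained by rotating the state-teleportation Hamiltonian, this proposition certifies that the ground state (the relevant eigenstate for the adiabatic protocol) is carried to $U\left\vert\psi\right\rangle$ on Bob's qubit with the energy spectrum — and hence the gap structure established in Eqs.~(\ref{e0tele})--(\ref{e2tele}) and~(\ref{gaptele}) — left completely intact. Thus the adiabaticity guarantees derived for state teleportation transfer verbatim to gate teleportation.
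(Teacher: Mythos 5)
Your proof is correct and is the canonical one-line argument: insert $U^{\dag}U=\1$ to get $A(U)\,U\vert a_{n}\rangle = UAU^{\dag}U\vert a_{n}\rangle = UA\vert a_{n}\rangle = a_{n}U\vert a_{n}\rangle$, which is exactly the same unitarity-insertion device the paper itself uses in the Ap\^{e}ndice \ref{ProofRota} for the companion Proposi\c{c}\~{a}o \ref{comutation} (the paper merely cites \cite{Sakuray:book} for this one rather than writing it out). Your additional remarks on spectrum preservation and completeness of $\{U\vert a_{n}\rangle\}$ are accurate and consistent with how the proposition is applied in the text.
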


A proposição acima garante que se rodarmos, unitariamente por $U$, um Hamiltoniano $H\left( t \right)$ que governa a evolução do estado de um sistema dado $\vert \psi \left( t \right) \rangle$, então nessa nova base o sistema deve evoluir de forma que seu estado é $U\vert \psi \left( t \right) \rangle$. Assim, para implementar portas de um q-bit adiabaticamente nós definimos o Hamiltoniano
\begin{equation}
H\left( s,U\right) =UH\left( s\right) U^{\dag } \text{ \ \ ,}
\label{HsimplesRodado}
\end{equation}%
onde denotamos $U=\1_{12}U_{3}$, com $U^{\dag }U=\1_{123}$. Note que devido a forma de $H_{\text{fin}}$ n%
\'{o}s ainda podemos escrever%
\begin{equation}
H\left( s,U\right) =\eta _{i}\left( s\right) H_{\text{ini}}\left( U\right) +\eta
_{f}\left( s\right) H_{\text{fin}} \text{ \ \ ,}
\end{equation}%
onde $H_{\text{ini}}\left( U\right) =UH_{\text{ini}}U^{\dag }$, e com $H_{\text{ini}}$ e $H_{\text{fin}}$ sendo dados pelas Eqs. $\left(\ref{HamiTeleIni}\right)$ e $\left(\ref{HamiTeleFin}\right)$, respectivamente. Portanto, mostra-se que o Hamiltoniano que implementar\'{a} a porta $U$ no estado $%
\left\vert \psi \right\rangle $ ao final do processo \'{e} exatamente o
Hamiltoniano do TQ rodado pela porta $U$. 

O ponto crucial é que, devido a essa mudan\c{c}a, o estado inicial do sistema n\~{a}o mais ser\'{a} $\left\vert \psi
\right\rangle _{1}\left\vert \beta _{00}\right\rangle _{23}$, pois
claramente vemos que $\left\vert \psi \right\rangle _{1}\left\vert \beta
_{00}\right\rangle _{23}$ n\~{a}o \'{e} autoestado de $H_{\text{ini}}\left(
U\right) $, pois da Proposição \ref{Rotation} o autoestado fundamental de $H_{\text{ini}}\left( U\right) $ deve ser $\left\vert
\psi \right\rangle _{1} \left(\1_{2}U_{3}\right) \left\vert \beta _{00}\right\rangle _{23}$. Como
admitimos que o Bob n\~{a}o deve ter poder computacional, isso significa que o estado
recurso dado por Charlie para para Alice e Bob \'{e} um estado de Bell rodado da forma $\vert \beta _{00}\rangle_{U} = \left(\1_{2}U_{3}\right) \left\vert \beta _{00}\right\rangle$. Além disso, a Proposição \ref{Rotation} nos ajuda a concluir também que o
espectro do Hamiltoniano $H\left( s,U\right) $ \'{e} id\^{e}ntico
ao espectro de $H\left( s\right) $, consequentemente o gap m\'{\i}%
nimo de $H\left( s,U\right) $ \'{e} n\~{a}o nulo, pois o gap de $H\left(
s,U\right) $ \'{e} dado pela Eq. $\left(\ref{gaptele}\right)$. 

Considerando que o estado inicial do sistema \'{e} $\left\vert \psi \right\rangle
_{1} \left(\1_{2}U_{3}\right) \left\vert \beta _{00}\right\rangle _{23}$, devemos mostrar agora que o estado ao final da evolução será dado por $\left\vert \beta _{00}\right\rangle _{12} U_{3}\left\vert \psi
\right\rangle _{3}$. Como os Hamiltonianos $H\left( s,U\right) $ e $H\left( s\right) $
compartilham de espectros id\^{e}nticos, ent\~{a}o $H\left( s,U\right) $ tamb%
\'{e}m \'{e} duplamente degenerado, assim novamente n\'{o}s n\~{a}o podemos
assegurar que o TQ acontece apenas usando o teorema adiab\'{a}%
tico. Então deixe-nos repetir a análise das simetrias do Hamiltoniano adiabático.

\paragraph{As simetrias do Hamiltoniano e sua forma matricial}

Aqui as simetrias do Hamiltoniano e sua forma matricial ser\~{a}o obtidas de
forma mais direta fazendo, primeiramente, o uso da seguinte proposi\c{c}\~{a}%
o (veja Apêndice \ref{ProofRota}).

\begin{proposition} \label{comutation}
Sejam $A$ e $B$ operadores que satisfazem a rela\c{c}\~{a}o de comuta\c{c}%
\~{a}o $\left[ A,B\right] =0$. Ent\~{a}o dados os novos operadores $A\left(
U\right) $ e $B\left( U\right) $ tais que $A\left( U\right) =UAU^{\dag }$ e $%
B\left( U\right) =UBU^{\dag }$, para algum unit\'{a}rio $U$, n\'{o}s temos
que $\left[ A\left( U\right) ,B\left( U\right) \right] =0$.
\end{proposition}

Basicamente a proposi\c{c}\~{a}o acima nos diz que se conhecemos as
simetrias de um dado Hamiltoniano $H\left( s \right)$, ent\~{a}o n\'{o}s sempre podemos
conhecer facilmente as simetrias de qualquer outro Hamiltoniano $H\left(s,
U\right) =U H\left( s \right) U^{\dag }$, para qualquer unit\'{a}rio $U$. N\'{o}s j\'{a} sabemos que $H\left( s\right) $ possui as simetrias $\Pi _{z}=ZZZ$ e $\Pi
_{x}=XXX$, portanto n\'{o}s podemos usar a Proposi\c{c}\~{a}o \ref{comutation} para
assegurar que as simetrias de $H\left( s,U\right) $ s\~{a}o $\Pi _{z}\left(
U\right) =Z_{1}Z_{2} \left(U_{3}Z_{3}U_{3}^{\dag }\right)$ e $\Pi _{x}\left( U\right)
=X_{1}X_{2} \left(U_{3}X_{3}U_{3}^{\dag }\right)$. Considerando um estado da base computacional $%
\left\vert mnk\right\rangle $ que, como definimos, \'{e} autoestado do
operador paridade $\Pi _{z}$, agora n\'{o}s usamos a proposição \ref{Rotation}
para definir a base \textit{computacional rodada} dados por%
\begin{equation}
\left\vert mnk,U\right\rangle =\left\vert mn\right\rangle
_{12}U_{3}\left\vert k\right\rangle _{3} \text{ \ \ ,}
\end{equation}%
de modo que são v\'{a}lidas as rela\c{c}\~{o}es%
\begin{eqnarray}
\Pi _{z}\left( U\right) \left\vert mnk,U\right\rangle &=&\left(
-1\right) ^{m+n+k}\left\vert mnk,U\right\rangle \text{ \ \ ,} \\
\Pi _{x}\left( U\right) \left\vert mnk,U\right\rangle &=&\left\vert \bar{%
m}\bar{n}\bar{k},U\right\rangle \text{ \ \ ,}
\end{eqnarray}%
onde analogamente aos operadores $\Pi _{z}$ e $\Pi _{x}$, n\'{o}s temos $\Pi
_{z}\left( U\right) $ e $\Pi _{x}\left( U\right) $ definidos como operadores
de \textit{paridade} e \textit{invers\~{a}o de paridade} na nova \textit{base rodada}, respectivamente. Ainda mantemos a nota\c{c}\~{a}o $\vert
mnk,U \ket_{+} $ e $\vert \bar{m}\bar{n}\bar{k}%
,U \ket_{-} $ para estados de paridade $+1$ e $-1$, respectivamente, do operador $\Pi_{z}\left( U\right) $. A forma
matricial de $H\left( s,U\right) $ tamb\'{e}m pode ser estudada como fizemos
para obter a forma matricial de $H\left( s\right) $. Na nova base
computacional rodada o Hamiltoniano \'{e} bloco diagonal devido a simetria
em $\Pi _{z}\left( U\right) $, assim como $H\left( s\right) $ \'{e} na base
computacional n\~{a}o rodada devido a simetria em $\Pi _{z}$. A simetria em $%
\Pi _{x}\left( U\right) $ pode ser usada tamb\'{e}m para garantir que essess
blocos que comp\~{o}e $H\left( s,U\right) $ s\~{a}o id\^{e}nticos. Esses
resultados tamb\'{e}m podem serem obtidos fazendo uma outra forma de an\'{a}%
lise como segue.

Sejam $h_{mnk}^{m^{\prime }n^{\prime }k^{\prime }}\left( s\right) $ e $%
h_{mnk}^{m^{\prime }n^{\prime }k^{\prime }}\left( s,U\right) $ elementos de
matrizes de $H\left( s\right) $ na base computacional antiga e $H\left(
s,U\right) $ na base nova, respectivamente. Ent\~{a}o deixe-nos escrever os
elementos de matriz de $H\left( s,U\right) $ na nova base como%
\begin{equation*}
h_{mnk}^{m^{\prime }n^{\prime }k^{\prime }}\left( s,U\right) =\left\langle
mnk,U|H\left( s,U\right) |m^{\prime }n^{\prime }k^{\prime
},U\right\rangle \text{ \ \ .}
\end{equation*}

Usando que $H\left( s,U\right) =UH\left( s\right) U^{\dag }$, $%
\left\vert mnk,U\right\rangle =\left\vert mn\right\rangle
_{12}U_{3}\left\vert k\right\rangle _{3}$ e que $U$ \'{e} um unit\'{a}%
rio, obtemos portanto%
\begin{equation*}
h_{mnk}^{m^{\prime }n^{\prime }k^{\prime }}\left( s,U\right) =\left\langle
mnk|H\left( s\right) |m^{\prime }n^{\prime }k^{\prime }\right\rangle \text{ \ \ ,}
\end{equation*}%
que nada mais s\~{a}o do que os elementos de matriz do Hamiltoniano $H\left( s\right) $
na base computacional n\~{a}o rodada. Assim n\'{o}s provamos que, na
nova base rodada, o Hamiltoniano $H\left( s,U\right) $ n\~{a}o somente \'{e}
bloco diagonal (como sugerem as simetrias $\Pi _{z}\left( U\right) $ e $\Pi
_{x}\left( U\right) $) como tamb\'{e}m tem a mesma forma matricial que o
Hamiltoniano toma quando determinado na base n\~{a}o rodada.

\paragraph{O estado final}

Deixe-nos adotar o estado inicial do sistema como $\left\vert \psi
\right\rangle _{1} \left(\1_{2} U_{3} \right) \left\vert \beta _{00}\right\rangle _{23}$ e o estado
final escrito sob a forma $\left\vert \beta _{00}\right\rangle
_{12}U_{3} \vert \bar{\psi} \ket_{3}$ onde escrevemos $%
\vert \bar{\psi} \ket =\alpha \left( a,b\right) \left\vert
0\right\rangle +\beta \left( a,b\right) \left\vert 1\right\rangle $, devido
a degeneresc\^{e}ncia de $H\left( s,U\right) $. Ent\~{a}o, o estado inicial $%
\left\vert \phi \left( 0,U\right) \right\rangle $ e final $\left\vert \phi
\left( 1,U\right) \right\rangle $ s\~{a}o, respectivamente, dados por%
\begin{eqnarray}
\left\vert \phi \left( 0,U\right) \right\rangle &=&\frac{1}{\sqrt{2}}\left[
a\left( \left\vert 000,U\right\rangle +\left\vert 011,U\right\rangle
\right) +b\left( \left\vert 100,U\right\rangle +\left\vert
111,U_{3}\right\rangle \right) \right] _{123}  \label{psi0rod} \text{ \ \ ,} \\
\left\vert \phi \left( 1,U\right) \right\rangle &=&\frac{1}{\sqrt{2}}\left[
\alpha \left( a,b\right) \left( \left\vert 000,U\right\rangle
+\left\vert 110,U\right\rangle \right) +\beta \left( a,b\right) \left(
\left\vert 001,U\right\rangle +\left\vert 111,U\right\rangle \right) %
\right] _{123} . \label{psi1rod}
\end{eqnarray}

A simetria em $\Pi _{z}\left( U\right) $ \'{e} a respons\'{a}vel por
garantir que os coeficientes que multiplicam estados de paridade distintas n%
\~{a}o se "misturam". As Eqs. $\left(\ref{psi0rod}\right)$ e $\left(\ref{psi1rod}\right)$ nos
permite perceber que em $\left\vert \phi \left( 0,U\right) \right\rangle $ e 
$\left\vert \phi \left( 1,U\right) \right\rangle $ os coeficientes $a$ ($b$) e $\alpha \left( a,b\right) $ ($\beta \left(
a,b\right) $) multiplicam estados de paridade $+1$ ($-1$) em $\Pi _{z}\left(
U\right) $. Assim j\'{a} escrevemos que $\alpha \left( a,b\right) =\alpha
\left( a\right) $ e $\beta \left( a,b\right) =\beta \left( b\right) $. Como
rota\c{c}\~{o}es unit\'{a}rias n\~{a}o alteram a norma de um estado e como a
evolu\c{c}\~{a}o \'{e} unit\'{a}ria, n\'{o}s podemos escrever $\left\vert
\alpha \left( a\right) \right\vert ^{2}+\left\vert \beta \left( b\right)
\right\vert ^{2}=\left\vert a\right\vert ^{2}+\left\vert b\right\vert ^{2}$,
donde tiramos as reguintes igualdades $\alpha \left( a\right) =ae^{i\theta
_{a}}$ e $\beta \left( b\right) =be^{i\theta _{b}}$. Por fim usamos a
simetria em $\Pi _{x}\left( U\right) $ para garantir que os estados,
indepedente da paridade, evoluem de forma equivalente. Portanto n\'{o}s
escrevemos $\alpha \left( a\right) =ae^{i\theta }$ e $\beta \left( b\right)
=be^{i\theta }$, consequentemente o estado final fica da forma%
\begin{equation}
\left\vert \phi \left( 1,U\right) \right\rangle =\frac{e^{i\theta }}{\sqrt{2}%
}\left[ a\left( \left\vert 000,U\right\rangle +\left\vert
110,U\right\rangle \right) +b\left( \left\vert 001,U\right\rangle
+\left\vert 111,U\right\rangle \right) \right] _{123} \text{ \ ,}
\end{equation}%
onde ainda podemos usar que $\left\vert mnk,U\right\rangle =\left\vert
mn\right\rangle _{3}U_{3}\left\vert k\right\rangle _{3}$ para mostrar que o
estado final \'{e}, a menos de uma fase global $\theta $, dado por $%
\left\vert \beta _{00}\right\rangle _{12}U_{3}\left\vert \psi \right\rangle
_{3}$. Em conclus\~{a}o mostramos que portas de $1$ q-bit podem ser
implementadas apenas fazendo uma "rota\c{c}\~{a}o" (transformação unitária) no Hamiltoniano que depende
da porta a ser implementada, onde um estado de Bell $\1_{2}U_{3}\left\vert \beta
_{00}\right\rangle _{23}$ deve ser dado como recurso para Alice e Bob.

\subsubsection{Portas qu\^{a}nticas de $2$ q-bits via TQ} \label{Tele2Portas}

Para finalizar a demonstra\c{c}\~{a}o de que o TQ pode ser usado
para realizar computa\c{c}\~{a}o universal, n\'{o}s precisamos mostrar que
portas de $2$ q-bits tamb\'{e}m podem ser implementadas. Para isso n\'{o}s combinamos o duplo TQ, visto na se\c{c}\~{a}o \ref{duploTelep}, com as Proposições \ref{Rotation} e \ref{comutation}. O nosso sistema novamente ser\'{a} composto de $6$
q-bits e \'{e} id\^{e}ntico ao esquema apresentado no duplo TQ.

Primeiro deixe-nos definir o unit\'{a}rio $U_{\text{dup}}=\1_{1234}U_{56}$ que atuar%
\'{a} sobre os q-bit f\'{\i}sicos $5$ e $6$, que \'{e} onde desejamos
aplicar a porta ao final da computa\c{c}\~{a}o. Ent\~{a}o para que o sistema
termine no estado por $\left\vert \beta _{00}\right\rangle _{24}\left\vert
\beta _{00}\right\rangle _{13}U_{56}\left\vert \psi _{2}\right\rangle _{56}$%
, com a porta $U_{56}$ aplicada no estado $\left\vert \psi _{2}\right\rangle
_{56}$, o Hamiltoniano que dever\'{a} governar o sistema é
\begin{equation}
H_{D}\left( s,U\right) =U_{\text{dup}}H_{D}\left( s\right) U_{\text{dup}}^{\dag } \text{ \ ,}
\end{equation}%
onde inicialmente o sistema deve ser rodado pela porta $U_{\text{dup}}$ de modo que
o estado inicial seja $\left\vert \psi _{2}\right\rangle
_{12} \left(\1_{34}U_{56}\right) \left\vert \beta _{00}\right\rangle _{46}\left\vert \beta
_{00}\right\rangle _{35}$. Em geral a porta $U_{\text{dup}}$ pode ser qualquer,
desde uma porta controlada (como a porta CNOT) ou o produto de duas unit%
\'{a}rias que atuam nos q-bits f\'{\i}sicos $5$ e $6$ independentemente
(como uma porta qualquer $A_{5}B_{6}$, para unit\'{a}rios $A$ e $B$). Essa 
\'{e} a nossa primeira extens\~{a}o do modelo do Bacon e Flammia, onde eles
mostram a implementa\c{c}\~{a}o apenas da porta de fase-controlada
\cite{Bacon:09}. 

Considerando que o estado inicial do sistema \'{e} $%
\left\vert \psi _{2}\right\rangle _{12} \left(\1_{34}U_{56}\right) \left\vert \beta
_{00}\right\rangle _{46}\left\vert \beta _{00}\right\rangle _{35}$ e que o
sistema evolui adiabaticamente segundo o Hamiltoniano $H_{D}\left(
s,U\right) $, o estado final do sistema dever\'{a} ser $\left\vert \beta
_{00}\right\rangle _{24}\left\vert \beta _{00}\right\rangle
_{13}U_{56}\left\vert \psi _{2}\right\rangle _{56}$. Como o Hamiltoniano $%
H_{D}\left( s,U\right) $ difere de $H_{D}\left( s\right) $ por uma transforma%
\c{c}\~{a}o unit\'{a}ria, novamente nos deparamos com um Hamiltoniano
degenerado e devemos novamente analisar com cuidado a evolu\c{c}\~{a}o a fim
de ver como garantir que o estado final realmente seja $\left\vert \beta
_{00}\right\rangle _{24}\left\vert \beta _{00}\right\rangle
_{13}U_{56}\left\vert \psi _{2}\right\rangle _{56}$. Vale mencionar que o
gap m\'{\i}nimo de $H_{D}\left( s,U\right) $ \'{e} n\~{a}o nulo, uma vez que
o gap m\'{\i}nimo de $H_{D}\left( s\right) $ \'{e} n\~{a}o nulo.

\paragraph{As simetrias do Hamiltoniano e sua forma matricial}

Vimos que no duplo TQ n\'{o}s temos algumas simetrias que nos
ajudaram a contornar o obst\'{a}culo imposto pela degeneresc\^{e}ncia do
Hamiltoniano. De forma an\'{a}loga ao que foi feito no TQ de unit%
\'{a}rios de $1$ q-bit, n\'{o}s tamb\'{e}m faremos uso da Proposi\c{c}\~{a}o
\ref{comutation} para encontrar as simetrias do Hamiltoniano $H_{D}\left( s,U\right) $.

Das Eqs. $\left(\ref{SimetriasParDuplo}\right)$, $\left(\ref{SimetriasImparDuplo}\right)$, $\left(\ref{PizTotal}\right)$ e $\left(\ref{PixTotal}\right)$ e usando a Proposi\c{c}\~{a}o \ref{comutation}, conclui-se
facilmente que as simetrias de $H_{D}\left( s,U\right) $ s\~{a}o%
\begin{eqnarray}
\Pi _{z}^{P}\left( U\right)\1^{I} &=&Z_{2}Z_{4}U_{56} \left(\1_{5} Z_{6}\right) U_{56}^{\dag }\text{ \ \
, \ \ }\Pi _{x}^{P}\left( U\right)\1^{I} =X_{2}X_{4}U_{56}\left(\1_{5} X_{6}\right)U_{56}^{\dag }  \text{ \ ,} \\
\1^{P}\Pi _{z}^{I}\left( U\right) &=&Z_{1}Z_{3}U_{56}\left(Z_{5} \1_{6}\right)U_{56}^{\dag }\text{ \ \
, \ \ } \1^{P}\Pi _{z}^{I}\left( U\right) =X_{1}X_{3}U_{56}\left(X_{5} \1_{6}\right)U_{56}^{\dag }  \text{ \ ,}
\end{eqnarray}%
para cada setor individualmente, e%
\begin{eqnarray}
\Pi _{z}^{D}\left( U\right) &=&\Pi _{z}^{P}\left( U\right) \Pi
_{z}^{I}\left( U\right) =U_{dup}Z_{2}Z_{4}Z_{6}Z_{1}Z_{3}Z_{5}U_{dup}^{\dag }  \text{ \ ,}
\\
\Pi _{x}^{D}\left( U\right) &=&\Pi _{x}^{P}\left( U\right) \Pi
_{z}^{I}\left( U\right) =U_{dup}X_{2}X_{4}X_{6}X_{1}X_{3}X_{5}U_{dup}^{\dag }  \text{ \ ,}
\end{eqnarray}%
para o sistema como um todo. Ao mudarmos de base nós não teremos mais os estados da base computacional como autoestados dos novos operadores $\Pi _{z}^{D}\left( U\right) $ e $\Pi _{x}^{D}\left( U\right) $. Motivados por essa mudança, nós definimos os novos estados da base computacional no sistema "rodado" como denotados por $\left\vert n_{2}n_{4}\mathbf{n}_{6}\right\rangle
\left\vert n_{1}n_{3}\mathbf{n}_{5}\right\rangle =U_{\text{dup}}\left\vert
n_{2}n_{4}n_{6}\right\rangle \left\vert n_{1}n_{3}n_{5}\right\rangle $, de
forma que ainda mant\'{e}m-se os dois conjuntos de autoestados de paridades $%
+1$ e $-1$ do operador $\Pi _{z}^{D}\left( U\right) $ denotados,
respectivamente, por $\left\{ \left\vert n_{2}n_{4}\mathbf{n}_{6}
\right\rangle_{\pm} \left\vert n_{1}n_{3}\mathbf{n}_{5} \right\rangle_{\pm} \right\} 
$ e $\left\{ \left\vert n_{2}n_{4}\mathbf{n}_{6} \right\rangle_{\pm}
\left\vert n_{1}n_{3}\mathbf{n}_{5} \right\rangle_{\mp} \right\} $, na nova
base. Os elementos em \textit{negrito} indicam que, na nova base, os estados dos q-bits $5$ e $6$ são modificados de modo que podem ser superposições dos estados da base original (não rodada). Devido a esse conjunto de simetrias com $\Pi _{z}^{P}\left( U\right) $%
, $\Pi _{z}^{I}\left( U\right) $ e $\Pi _{z}^{D}\left( U\right) $ o
Hamiltoniano $H_{D}\left( s,U\right) $ pode ser expresso na base
computacional rodada na forma bloco diagonal. Por outro lado as simetrias em 
$\Pi _{x}^{P}\left( U\right) $, $\Pi _{x}^{I}\left( U\right) $ e $\Pi
_{x}^{D}\left( U\right) $ sugerem que, nesta nova base rodada, os blocos que
comp\~{o}em a diagonal do Hamiltoniano $H_{D}\left( s,U\right) $ sejam id%
\^{e}nticos.

Mais uma vez n\'{o}s podemos determinar os elementos de matrizes de $%
H_{D}\left( s,U\right) $ na base computacional rodada $\left\{ \left\vert
n_{2}n_{4}\mathbf{n}_{6}\right\rangle \left\vert n_{1}n_{3}\mathbf{n}%
_{5}\right\rangle \right\} $ e mostrar que este tem a mesma forma matricial
que $H_{D}\left( s\right) $ quando escrito na base n\~{a}o rodada $\left\{
\left\vert n_{2}n_{4}n_{6}\right\rangle \left\vert
n_{1}n_{3}n_{5}\right\rangle \right\} $. Para isso, devemos usar que $U$ 
\'{e} um unit\'{a}rio que atua apenas sobre os q-bits $5$ e $6$. Portanto,
desde que n\'{o}s ordenemos adequadamente a base, a forma matricial de $%
H_{D}\left( s,U\right) $ na nova base rodada \'{e} dada pela Eq. $\left(\ref{HdoubleBloco}\right)$.

\paragraph{O estado final}

Deixe-nos escrever o estado inicial do sistema como%
\begin{eqnarray*}
\left\vert \phi _{2}\left( 0,U\right) \right\rangle  &=&a_{1}\left\vert
00\right\rangle _{12}\left\vert \beta _{0\mathbf{0}}\right\rangle
_{46}\left\vert \beta _{0\mathbf{0}}\right\rangle _{35}+a_{2}\left\vert
01\right\rangle _{12}\left\vert \beta _{0\mathbf{0}}\right\rangle
_{46}\left\vert \beta _{0\mathbf{0}}\right\rangle _{35} \text{ \ ,} \\
&&+a_{3}\left\vert 10\right\rangle _{12}\left\vert \beta _{0\mathbf{0}%
}\right\rangle _{46}\left\vert \beta _{0\mathbf{0}}\right\rangle
_{35}+a_{4}\left\vert 11\right\rangle _{12}\left\vert \beta _{0\mathbf{0}%
}\right\rangle _{46}\left\vert \beta _{0\mathbf{0}}\right\rangle _{35} \text{ \ ,}
\end{eqnarray*}%
onde $\left\vert \beta _{0\mathbf{0}}\right\rangle _{46}\left\vert \beta _{0%
\mathbf{0}}\right\rangle _{35}=U_{56}\left\vert \beta _{00}\right\rangle
_{46}\left\vert \beta _{00}\right\rangle _{35}$ e consequentemente 
\begin{equation*}
\left\vert \beta _{0\mathbf{0}}\right\rangle _{46}\left\vert \beta _{0%
\mathbf{0}}\right\rangle _{35}=\frac{1}{2}\left( \left\vert 0\mathbf{0}%
\right\rangle \left\vert 0\mathbf{0}\right\rangle +\left\vert 0\mathbf{0}%
\right\rangle \left\vert 1\mathbf{1}\right\rangle +\left\vert 1\mathbf{1}%
\right\rangle \left\vert 0\mathbf{0}\right\rangle +\left\vert 1\mathbf{1}%
\right\rangle \left\vert 1\mathbf{1}\right\rangle \right) _{4635} \text{ \ .}
\end{equation*}%

Por outro lado, o estado final \'{e} dado
por 
\begin{eqnarray*}
\left\vert \bar{\phi}_{2}\left( 1,U\right) \right\rangle  &=&\alpha
_{1}\left\vert \beta _{00}\right\rangle _{24}\left\vert \beta
_{00}\right\rangle _{13}\left\vert \mathbf{00}\right\rangle _{56}+\alpha
_{2}\left\vert \beta _{00}\right\rangle _{24}\left\vert \beta
_{00}\right\rangle _{13}\left\vert \mathbf{01}\right\rangle _{56}  \text{ \ ,}\\
&&+\alpha _{3}\left\vert \beta _{00}\right\rangle _{24}\left\vert \beta
_{00}\right\rangle _{13}\left\vert \mathbf{10}\right\rangle _{56}+\alpha
_{4}\left\vert \beta _{00}\right\rangle _{24}\left\vert \beta
_{00}\right\rangle _{13}\left\vert \mathbf{11}\right\rangle _{56} \text{ \ ,}
\end{eqnarray*}

Como j\'{a} sabemos, o estado $\left\vert n_{2}n_{4}\mathbf{n}%
_{6}\right\rangle \left\vert n_{1}n_{3}\mathbf{n}_{5}\right\rangle $ tem a
mesma paridade que $\left\vert n_{2}n_{4}n_{6}\right\rangle
\left\vert n_{1}n_{3}n_{5}\right\rangle $, pois $\left\vert n_{2}n_{4}%
\mathbf{n}_{6}\right\rangle \left\vert n_{1}n_{3}\mathbf{n}_{5}\right\rangle
=U_{\text{dup}}\left\vert n_{2}n_{4}n_{6}\right\rangle \left\vert
n_{1}n_{3}n_{5}\right\rangle $. Assim, n\'{o}s podemos novamente usar a
compara\c{c}\~{a}o entre $\left\vert \phi _{2}\left( 0,U\right)
\right\rangle $ e $\left\vert \bar{\phi}_{2}\left( 1,U\right) \right\rangle $
para analisar como evoluem os coeficientes que multiplicam estados de mesma paridade.
Ainda podemos reescrever $\left\vert \phi _{2}\left( 0,U\right)
\right\rangle $ e $\left\vert \bar{\phi}_{2}\left( 1,U\right) \right\rangle $
da seguinte forma%
\begin{eqnarray}
\left\vert \phi _{2}\left( 0,U\right) \right\rangle &=&a_{1}\left\vert
++,U\right\rangle _{PI}+a_{2}\left\vert -+,U\right\rangle
_{PI}+a_{3}\left\vert +-,U\right\rangle _{PI}+a_{4}\left\vert
--,U\right\rangle _{PI}  \text{ \ ,} \\
\left\vert \bar{\phi}_{2}\left( 1,U\right) \right\rangle &=&\alpha
_{1}\left\vert ++,U\right\rangle _{PI}+\alpha _{2}\left\vert
-+,U\right\rangle _{PI}+\alpha _{3}\left\vert +-,U\right\rangle
_{PI}+\alpha _{4}\left\vert --,U\right\rangle _{PI} \text{ \ ,}
\end{eqnarray}%
onde $\left\vert xy,U\right\rangle _{PI}=U_{\text{dup}}\left\vert
xy\right\rangle _{PI}$, com $\left\vert xy\right\rangle _{PI}=\left\vert
x\right\rangle _{P}\left\vert y\right\rangle _{I}$, onde $\left\vert
x\right\rangle _{P}$ \'{e} uma superposi\c{c}\~{a}o de estados do setor par
de paridade $x$ e $\left\vert y\right\rangle _{I}$ uma superposi\c{c}\~{a}o
de estados do setor \'{\i}mpar de paridade $y$. Note que cada um dos coeficientes $a_{i}$ de $\left\vert
\phi _{2}\left( 0,U\right) \right\rangle $ n\~{a}o se misturar durante
a evolu\c{c}\~{a}o, pois nessa nova base as simetrias em $\Pi _{z}^{P}\left(
U\right) $, $\Pi _{z}^{I}\left( U\right) $ e $\Pi _{z}^{D}\left( U\right) $
nos garantem que isso n\~{a}o acontece. Assim, cada coeficiente $\left\vert \bar{\phi}_{2}\left(
1,U\right) \right\rangle $ pode ser escrito apenas como $\alpha _{n}=\alpha
_{n}\left( a_{n}\right) $. A unitariedade da evolu\c{c}\~{a}o que nos
permite escrever $
\Sigma_{n=1}^{4}\left\vert \alpha _{n}\left( a_{n}\right) \right\vert
^{2}=\Sigma_{n=1}^{4}\left\vert a_{n}\right\vert ^{2}$, cuja solução nos
fornece as rela\c{c}\~{o}es entre $\alpha _{n}\left( a_{n}\right) $ e $a_{n}$
dadas por $\alpha _{n}\left( a_{n}\right) =a_{n}e^{i\theta _{n}}$, para
algum $\theta _{n}$ real. Já que cada par\^{a}metro $\theta
_{n}$ carrega a informa\c{c}\~{a}o de como cada bloco de estados evoluem
independentemente, novamente entra o uso das simetrias que
s\~{a}o as simetrias em $\Pi _{x}^{P} \left( U \right) \1^{I}$ , $\1^{P}\Pi _{x}^{I}\left( U \right)$ e $\Pi
_{x}^{D}\left( U \right)$. Essas simetrias nessa nova base tamb\'{e}m nos diz que
independente de como os estados $\left\vert xy\right\rangle _{PI}$ evoluem,
eles evoluem da mesma forma e consequentemente as fases $\theta _{n}$'s s%
\~{a}o todas iguais. Isso \'{e} o
bastante para nos assegurarmos que o estado final do sistema \'{e}
exatamente dado por $\left\vert \phi _{2}\left( 1,U\right) \right\rangle
=\left\vert \beta _{00}\right\rangle _{24}\left\vert \beta
_{00}\right\rangle _{13}U_{56}\left\vert \psi _{2}\right\rangle _{56}$.

\subsubsection{Portas qu\^{a}nticas de $n$ q-bits via TQ} \label{TelenPortas}

Com o objetivo de construir um modelo que nos permita realizar diferentes designers
para a CQ universal, n\'{o}s mostraremos que o
TQ adiab\'{a}tico tamb\'{e}m nos permite
implementar portas de $n$ q-bits. Os casos com $n=1$ e $n=2$ j\'{a} foram
mostrados, aqui generalizaremos esses resultados.

N\'{o}s desejamos iniciar o sistema da Alice como sendo o estado $\left\vert
\psi _{n}\right\rangle $ dado na Eq. (\ref{EstaN}) e que ao final do processo esse estado seja enviado para o
Bob com uma porta $U_{n}$ de $n$-qbits aplicada. Isso é possível se n\'{o}s
iniciarmos nosso sistema no estado%
\begin{equation}
\left\vert \phi _{n}\left( 0,U\right) \right\rangle =\left\vert \psi
_{n}\right\rangle U_{n}\left\vert \beta _{00}\right\rangle
_{c_{1}^{1}c_{2}^{1}}\cdots \left\vert \beta _{00}\right\rangle
_{c_{1}^{n}c_{2}^{n}}  \text{ \ ,}
\end{equation}%
e deixar o sistema evoluir segundo o Hamiltoniano
\begin{equation}
H_{\text{mult}}\left( s,U\right) =U_{n}\sum\limits_{k=1}\mathcal{H}_{k}\left( s\right)
U_{n}^{\dag }  \text{ \ ,} \label{HAPortNTQ}
\end{equation}%
onde $\mathcal{H}_{j}\left( s\right) =[\otimes_{i=1}^{j-1} \1_{i}]\otimes H_{j}\left( s\right)\otimes [ \otimes_{i=j+1}^{n} \1_{i}]$, com $H_{j}\left( s\right) $
sendo dado pela Eq. $\left(\ref{HamSetorj}\right)$. Novamente n\'{o}s deixamos livre que a porta 
$U_{n}$ seja qualquer.

Como os Hamiltonianos $H_{\text{mult}}\left( s\right) $ e $H_{\text{mult}}\left(
s,U\right) $ diferem por uma transforma\c{c}\~{a}o unit\'{a}ria, isso deve manter invariante o espectro de $H_{\text{mult}}\left( s\right) $ e, portanto, $%
H_{\text{mult}}\left( s,U\right) $ \'{e} degenerado. O gap de energia de $H_{\text{mult}}\left(
s,U\right) $ \'{e} dado por $g_{T}\left( s\right) =g\left( s\right) $, onde $%
g\left( s\right) $ \'{e} o gap de energia do TQ de 1 q-bit.

As simetrias nessa nova base s%
\~{a}o tamb\'{e}m rodadas de modo que as novas simetrias no caso de portas
de $n$ q-bit s\~{a}o dadas por 
\begin{eqnarray}
\Pi _{zT}\left( U\right)  &=&U_{n}\Pi _{zT}U_{n}^{\dag
}=U_{n}\bigotimes_{i=1}^{n}\Pi _{zi}U_{n}^{\dag } \text{ \ ,} \\
\Pi _{xT}\left( U\right)  &=&U_{n}\Pi _{xT}U_{n}^{\dag
}=U_{n}\bigotimes_{i=1}^{n}\Pi _{xi}U_{n}^{\dag } \text{ \ ,}
\end{eqnarray}%
onde $\Pi _{zi}=Z_{i}Z_{c_{A}^{i}}Z_{c_{B}^{i}}$ e $\Pi
_{xi}=X_{i}X_{c_{A}^{i}}X_{c_{B}^{i}}$ s\~{a}o os operadores paridade e
invers\~{a}o de paridade, respectivamente, do setor $i$. Al\'{e}m das
simetrias mostradas acima n\'{o}s temos tamb\'{e}m as simetrias de
cada setor que s\~{a}o modificadas e dadas por $\Pi _{zi}\left( U\right)
=U_{n}\Pi _{zi}U_{n}^{\dag }$ e $\Pi _{xi}\left( U\right) =U_{n}\Pi
_{xi}U_{n}^{\dag }$. Essas simetrias podem ser usadas para mostrar que o
estado final do sistema ser\'{a} dado por $\left\vert \phi _{n}\left(
1,U\right) \right\rangle =\left\vert \Phi _{n}\right\rangle
_{A}U_{n}\left\vert \psi _{n}\right\rangle _{B}$ a menos de uma fase
global.

Assim concluímos a nossa generalização do modelo apresentado por Bacon e Flammia. Como esse modelo foi
desenvolvido de modo a usar o TQ para implementar adiabaticamente as
portas, o número de q-bits requerido pelo modelo é de $3$ q-bits para cada setor. Manipular uma grande quantidade de q-bits
nem sempre \'{e} f\'{a}cil, mas desde que seja poss\'{\i}vel, n\'{o}s temos
um modelo de computa\c{c}\~{a}o adiab\'{a}tica para simular o modelo de
circuitos.

\newpage

\section{Computa\c{c}\~{a}o Qu\^{a}ntica Universal por Evolu\c{c}\~{o}es Adiab\'{a}ticas Controladas.} \label{secaoEAC}

Neste capítulo n\'{o}s mostraremos um modelo alternativo ao modelo
apesentado anteriormente para implementar portas qu\^{a}nticas
adiabaticamente. Na se\c{c}\~{a}o \ref{genEAC} n\'{o}s discutiremos sobre evolu\c{c}\~{o}es adiab\'{a}%
ticas controladas (EAC) de forma gen\'{e}rica. Em seguinda, na se\c{c}\~{a}o
\ref{ComputaEAC}, vamos apresentar o modelo de computa\c{c}\~{a}o que faz uso de EAC
proposto recentemente por Itay Hen \cite{Itay:15} para implementar portas de 
$1$ q-bit e portas controladas de $2$ q-bits.\ Por fim na se\c{c}\~{a}o \ref{subsecEAC} n%
\'{o}s apresentaremos uma extens\~{a}%
o do modelo proposto em \cite{Itay:15} mostrando como implementar portas $n$%
-controladas.

\subsection{Evolu\c{c}\~{o}es Adiab\'{a}ticas Controladas (EAC)} \label{genEAC}

Como em toda evolu\c{c}\~{a}o adiab\'{a}tica, consideremos um estado inicial que seja autoestado fundamental de um
Hamiltoniano $H^{\text{ini}}$ (independente do tempo). Usualmente, em evolu\c{c}%
\~{o}es adiab\'{a}ticas, deixa-se o sistema evoluir por um
Hamiltoniano dependente do tempo $H\left( s\right) $, que varia muito
lentamente, at\'{e} atingir o estado final que \'{e} autoestado fundamental de um
Hamiltoniano $H^{\text{fin}}$ (independente do tempo).

O sistema que usaremos para realizar EAC \'{e} um sistema bipartido $%
\left( \mathcal{SA}\right) $ composto por um subsistema \textit{alvo} $%
\mathcal{S}$ e um subsistema \textit{auxiliar} $\mathcal{A}$. Para descrever
a din\^{a}mica do sistema $\mathcal{SA}$, considere um Hamiltoniano $H^{\text{ini}}$
e o conjunto composto por $k$ Hamiltonianos $H_{k}^{\text{fin}}$, todos
independentes do tempo, que atuam sobre o subsistema $\mathcal{A}$. Ent\~{a}%
o definimos o seguinte Hamiltoniano adiab\'{a}tico%
\begin{equation}
H\left( s\right) =f\left( s\right) \1\otimes H^{\text{ini}}+g\left( s\right)
\sum_{k}P_{k}\otimes H_{k}^{\text{fin}} \text{ \ ,} \label{Hamiltoniano1EAC}
\end{equation}%
que governar\'{a} a din\^{a}mica do sistema $\mathcal{SA}$. As funções $f \left( s \right)$ e $g \left( s \right)$ são funções de interpolação que satisfazem $f \left( 0 \right)=g \left( 1 \right)=1$ e $f \left( 1 \right)=g \left( 0 \right)=0$. Os operadores $%
P_{k}$ formam um conjunto completo de projetores ortogonais do espa%
\c{c}o onde reside o subsistema $\mathcal{S}$ (isto \'{e}, vale a rela\c{c}%
\~{a}o $P_{k}P_{m}=\delta _{mk}P_{k}\,$). Outra forma de escrever o
Hamiltoniano $H\left( s\right) $ \'{e}%
\begin{equation}
H\left( s\right) =\sum_{k}P_{k}\otimes \left[ f\left( s\right)
H^{\text{ini}}+g\left( s\right) H_{k}^{\text{fin}}\right] \text{ \ ,} \label{Hamiltoniano2EAC}
\end{equation}%
onde usa-se que $\sum_{k}P_{k}=\1$. Vale notar que quando $s=0$ e $s=1$ o
Hamiltoniano que atua sobre o sistema \'{e} $H\left( 0\right) =\1\otimes
H^{\text{ini}}$ e $H\left( 1\right) =\sum_{k}P_{k}\otimes H_{k}^{\text{fin}}$,
respectivamente. Desde que o sistema evolua adiabaticamente o estado
do sistema evoluir\'{a} do estado $\left\vert \Psi \left( 0\right)
\right\rangle =\left\vert \psi \right\rangle \vert
a.f.^{\text{ini}}\rangle $ (com $\vert
a.f.^{\text{ini}}\rangle$ sendo autoestado fundamental de $H^{\text{ini}}$) para o estado final $\left\vert \Psi \left(
1\right) \right\rangle $ onde%
\begin{equation}
\left\vert \Psi \left( 1\right) \right\rangle =\sum_{k}P_{k}\left\vert \psi
\right\rangle \otimes \vert k_{\text{fin}} \rangle \text{ \ ,} \label{EstadofinalEAC}
\end{equation}%
com $\vert k_{\text{fin}} \rangle $ sendo o autoestado fundamental do $k$-\'{e}simo
Hamiltoniano $H_{k}^{\text{fin}}$. O estado inicial $\left\vert \psi \right\rangle $
do subsistema $\mathcal{S}$ \'{e} um estado desconhecido, ao passo que cada
autoestado $\vert k_{\text{fin}}\rangle $ do subsistema $\mathcal{A}$ 
\'{e} conhecido. Nota-se que o estado $\left\vert \Psi
\left( 1\right) \right\rangle $ caracteriza um estado emaranhado do sistema,
de modo que n\~{a}o podemos caracterizar totalmente o subsistema $\mathcal{A}
$\ em separado do subsistema $\mathcal{S}$.

Suponha, portanto, que ao final do processo realizamos uma medida sobre o
subsistema $\mathcal{A}$, ent\~{a}o \'{e} sabido que essa medida implicar%
\'{a} no colapso do sistema $\mathcal{SA}$ para um estado que depende
do resultado da medida realizada sobre $\mathcal{A}$. Dos postulados da Mec%
\^{a}nica qu\^{a}ntica podemos mostrar que realizando uma medida,
representada pelo conjunto de operadores de medida $M_{k}=\vert
k_{\text{fin}}\rangle \langle k_{\text{fin}}\vert $ associados aos
resultados $m_{k}$, sobre o subsistema $\mathcal{A}$ o estado do sistema
imediatamente colapsar\'{a} para o estado%
\begin{equation}
\left\vert \Psi \left( 1\right) _{k-medido}\right\rangle =\frac{\left(
\1 \otimes M_{k}\right) \left\vert \Psi \left( 1\right) \right\rangle }{\sqrt{%
p_{k}}}=\frac{P_{k}\left\vert \psi \right\rangle \otimes \vert
k_{\text{fin}}\rangle }{\sqrt{p_{k}}} \text{ \ ,} \label{EstadoFinalMedidoEAC}
\end{equation}%
onde $p_{k}=\left\langle \Psi \left( 1\right) |M_{k}|\Psi \left( 1\right)
\right\rangle =\left\langle \psi |P_{k}|\psi \right\rangle $ \'{e} a
probabilidade de obtermos $m_{k}$ como resultado da medida. Um resultado
imediato dessa an\'{a}lise \'{e} que se o estado $\left\vert \psi
\right\rangle $ do subsistema $\mathcal{S}$ vive no espa\c{c}o projetado por
algum projetor $P_{k}$, ent\~{a}o $P_{l}\left\vert \psi \right\rangle
=\delta _{kl}\left\vert \psi \right\rangle $ , como observado na Ref. \cite%
{Itay:15}.

\subsection{Portas de 1 e 2 q-bits via EAC} \label{ComputaEAC}

Nesta se\c{c}\~{a}o n\'{o}s usaremos os resultados analisados anteriormente
para mostrar como implementar rota\c{c}\~{o}es de 1 q-bit e rota\c{c}\~{o}es
controladas por 1 q-bit. Ao final daremos exemplos que ilustram a forma
como devemos proceder ao usar EAC para realizar CQ universal.

\subsubsection{Unit\'{a}rios de um 1 q-bit via EAC} \label{subsec1EAC}

Para mostrar como implementar portas (rota\c{c}%
\~{o}es) de 1 q-bit usando EAC, consideraremos o sistema $%
\mathcal{SA}$ composto do subsistema alvo ($\mathcal{S}$) e auxiliar ($%
\mathcal{A}$) que s\~{a}o compostos por 1 q-bit, cada. Como o nome sugere, o
sistema alvo ser\'{a} o sistema no qual realizaremos a computa\c{c}\~{a}o e
o sistema auxiliar \'{e} o sistema onde realizamos a medida. Assim o sistema 
$\mathcal{S}$ deve iniciar-se sempre em um estado qualquer $\left\vert \psi
\right\rangle =a\left\vert 0\right\rangle +b\left\vert 1\right\rangle $,
enquanto que o subsistema $\mathcal{A}$ pode iniciar no estado que
queiramos, onde usaremos que o estado inicial do sistema $\mathcal{A}$ \'{e} 
$\left\vert 0\right\rangle $. A raz\~{a}o pela qual o subsistema $\mathcal{S}
$ deve iniciar-se sempre em um estado qualquer, \'{e} que para realizar CQ universal n%
\'{o}s devemos ter um modelo capaz de implementar portas em estados desconhecidos.

Como uma porta quântica de 1 q-bit pode ser vista como uma rotação de um determinado angulo $\phi$ em torno de uma direção $\hat{n}$, precisamos apenas mostrar como realizar essa rotação. Assim, deixe-nos iniciar o sistema no estado input $\vert \psi
_{inp}\rangle =\left\vert \psi \right\rangle \left\vert 0\right\rangle 
$. O Hamiltoniano que deve agir sobre o sistema $\mathcal{SA}$ que nos
permite implementar rota\c{c}\~{o}es de um angulo $\phi $ em
torno de uma dire\c{c}\~{a}o arbitrária $\hat{n}$, na esfera de Bloch, sobre um q-bit \'{e} dado por%
\begin{equation}
H\left( s\right) =P_{\hat{n}_{+}}\otimes H_{0}\left( s\right) +P_{\hat{n}_{-}}\otimes
H_{\phi }\left( s\right) \text{ \ ,} \label{ComputationEAC1}
\end{equation}%
onde $H_{0}\left( s\right) $ e $H_{\phi }\left( s\right) $ podem ser obtidos
de%
\begin{equation}
H_{\xi }\left( s\right) =-\hbar \omega \left\{ \cos [ \theta \left( s\right) ]
\sigma _{z}+\sin [ \theta \left( s\right) ] \left[ \sigma _{x}\cos \xi +\sigma
_{y}\sin \xi \right] \right\} \text{ \ ,} \label{Hxi}
\end{equation}%
fazendo $\xi =0$ e $\xi =\phi $, respectivamente, onde $\theta \left(
s\right) =\theta _{0}s$, com $s=t/T$, $T$ sendo o tempo total de evolu\c{c}%
\~{a}o e $\theta _{0}$ um par\^{a}metro cujo significado f\'{\i}sico ficar%
\'{a} mais claro em seguida. Os projetores $P_{\hat{n}_{\pm}}$ s\~{a}o
projetores ortogonais sobre o espa\c{c}o de estados do sistema $\mathcal{S}$
e dados por $P_{\hat{n}_{\pm}}=\left\vert \hat{n}_{\pm}\right\rangle \left\langle \hat{n}_{\pm}%
\right\vert $ com $\left\vert \hat{n}_{+}%
\right\rangle $ sendo um estado na esfera de Bloch que aponta na dire\c{c}%
\~{a}o de um dado versor $\hat{n}$. Em termos das matrizes de Pauli $\sigma _{i}$,
com $i=\left\{ x,y,z\right\} $, n\'{o}s temos $\left\vert \hat{n}_{\pm}%
\right\rangle \left\langle \hat{n}_{\pm}\right\vert =1/2\left( \1\pm\hat{n}\cdot \vec{%
\sigma}\right) $,
onde $\vec{\sigma}= ( \sigma _{x},\sigma _{y},\sigma _{z}) $.

Para garantir que o Hamiltoniano da Eq. (\ref{ComputationEAC1}) permite-nos realizar uma evolução adiabática, é necessário mostrar que o mesmo possui um \textit{gap} não nulo de energia entre seus níveis fundamental e primeiro excitado. Calculando o espectro (energias) $E_{n}\left( s\right) $ de $H_{\xi }\left(
s\right) $ n\'{o}s obtemos $E_{\pm }\left( s\right) =E_{\pm }=\pm \hbar
\omega $, onde o estado $\left\vert \psi
\right\rangle \left\vert 0\right\rangle $ \'{e} autoestado de $H\left(
0\right) $ com energia $-\hbar \omega $. Logo temos garantido que o \textit{gap} é não nulo e independende dos parâmetros $\xi$ e $\theta_0$ e das componentes vetor $\hat{n}$.

Para estudarmos como se d\'{a} a evolu\c{c}\~{a}o do sistema, deixe-nos
escrever o estado inicial do sistema $\mathcal{S}$ na base $\left\{
\left\vert \hat{n}_{\pm}\right\rangle \right\} $ como $\left\vert \psi \right\rangle =\alpha \left\vert \hat{n}_{+}%
\right\rangle +\beta \left\vert \hat{n}_{-}\right\rangle $. Ao final da evolu\c{c}\~{a}o adiab\'{a}tica o
estado do sistema $\mathcal{SA}$ ser\'{a} um estado final $\left\vert \Psi
\left( 1\right) \right\rangle $ que \'{e} autoestado de $H\left( 1\right) $
com autovalor $-\hbar \omega $ e, segundo a Eq. $\left( \ref%
{EstadofinalEAC}\right) $, \'{e} dado por%
\begin{equation}
\left\vert \Psi \left( 1\right) \right\rangle =P_{\hat{n}_{+}}\left\vert \psi
\right\rangle \otimes \vert E_{0}^{0}\left( 1\right) \rangle
+P_{\hat{n}_{-}}\left\vert \psi \right\rangle \otimes \vert E_{\phi
}^{0}\left( 1\right) \rangle \text{ \ ,}
\end{equation}%
onde $\vert E_{\xi }^{0}\left( 1\right) \rangle =\cos \left(
\theta _{0}/2\right) \left\vert 0\right\rangle +e^{i\xi }\sin \left( \theta
_{0}/2\right) \left\vert 1\right\rangle $ \'{e} autoestado fundamental de $%
H_{\xi }\left( 1\right) $. Usando o fato de que $\left\vert \psi
\right\rangle $ pode escrito como uma combina\c{c}\~{a}o linear de estados
que residem no espa\c{c}o projetado pelos operadores $P_{\hat{n}_{\pm}}$,
n\'{o}s escrevemos 
\begin{equation}
\left\vert \Psi \left( 1\right) \right\rangle =\alpha \left\vert \hat{n}_{+}%
\right\rangle \otimes \vert E_{0}^{0}\left( 1\right) \rangle
+\beta \left\vert \hat{n}_{-}\right\rangle \otimes \vert E_{\phi
}^{0}\left( 1\right) \rangle \text{ \ ,}
\end{equation}%
ou equivalentemente%
\begin{equation}
\left\vert \Psi \left( 1\right) \right\rangle =\cos \left( \frac{\theta _{0}%
}{2}\right) \left( \alpha \left\vert \hat{n}_{+}\right\rangle +\beta \left\vert 
\hat{n}_{-}\right\rangle \right) \otimes \left\vert 0\right\rangle +\sin
\left( \frac{\theta _{0}}{2}\right) \left( \alpha \left\vert \hat{n}_{+}%
\right\rangle +e^{i\phi }\beta \left\vert \hat{n}_{-}\right\rangle
\right) \otimes \left\vert 1\right\rangle \text{ \ ,} \label{FinalStateEAC1q-bit}
\end{equation}%
onde vemos claramente que nos deparamos com um estado emaranhado. Note que
se realizarmos uma medida sobre o sistema $\mathcal{A}$ usando o conjunto de
medidas $M_{m}=\left\vert m\right\rangle \left\langle m\right\vert $, onde $%
\left\vert m\right\rangle $ \'{e} um estado da base computacional, n\'{o}s
podemos obter como resultado $\left\vert 0\right\rangle $ com probabilidade $%
\cos ^{2}\left( \theta _{0}/2\right) $ ou $\left\vert 1\right\rangle $ com
probabilidade $\sin ^{2}\left( \theta _{0}/2\right) $.

Se encontrarmos o estado $\left\vert 0\right\rangle $ como resultado da nossa medida sobre o subsistema $\mathcal{A}$, ent\~{a}o consequentemente o estado do
subsistema $\mathcal{S}$ ser\'{a} exatamente o estado inicial $\left\vert
\psi \right\rangle =\alpha \left\vert \hat{n}_{+}\right\rangle +\beta \left\vert 
\hat{n}_{-}\right\rangle $, assim o estado do sistema $\mathcal{SA}$
fica exatamente como no inicio da evolu\c{c}\~{a}o e portanto a computa\c{c}%
\~{a}o falha. Isso pode ser visto diretamente do fato que $\left\vert \psi
\right\rangle =\alpha \left\vert \hat{n}_{+}\right\rangle +\beta \left\vert \hat{%
n}_{-}\right\rangle $ nada mais \'{e} do que o estado $\left\vert \psi
\right\rangle =a\left\vert 0\right\rangle +b\left\vert 1\right\rangle $
escrito na base $\left\{ \left\vert \hat{n}_{\pm}\right\rangle \right\} $. Por outro
lado, se ap\'{o}s a medida o subsistema $\mathcal{A}$ colapsar para o estado 
$\left\vert 1\right\rangle $, ent\~{a}o o subsistema $\mathcal{S}$ colapsar%
\'{a} para o estado $\left\vert \psi _\text{{rod}}\right\rangle =\alpha \left\vert 
\hat{n}_{+}\right\rangle +e^{i\phi }\beta \left\vert \hat{n}_{-}\right\rangle $, que \'{e} exatamente o resultado de uma rota\c{c}\~{a}o
sobre o estado $\left\vert \psi \right\rangle =a\left\vert 0\right\rangle
+b\left\vert 1\right\rangle $ de $\phi $ em torno de uma dire\c{c}\~{a}o $%
\hat{n}$ na esfera de Bloch. Nesse caso temos o sucesso da computa\c{c}\~{a}%
o. No caso onde o processo falha devemos realizar novamente a
evolu\c{c}\~{a}o usando o mesmo Hamiltoniano (já que o estado que obtemos é exatamente o estado inicialmente preparado). Assim, a probabilidade de falha do
sistema, depois de $j$ repeti\c{c}\~{o}es, \'{e} $\cos ^{2j}\left( \theta
_{0}/2\right) $ o que deve decrescer a medida que $j$ cresce.

Com essa an\'{a}lise fica evidente o significado do par\^{a}metro $\theta _{0}$
como um \textit{par\^{a}metro de sucesso da computa\c{c}\~{a}o}. No limite $%
\theta _{0}\rightarrow \pi $ podemos ver que a probabilidade de sucesso da
computa\c{c}\~{a}o torna-se maior, pois $\sin ^{2}\left( \theta
_{0}/2\right) \rightarrow 1$. Considerar o valor $\theta _{0}=\pi $
torna a medida, ao final da evolu\c{c}\~{a}o, sobre o subsistema $\mathcal{A}
$ desnecess\'{a}ria, pois o estado final do sistema $\mathcal{SA}$ \'{e} $%
\left\vert \Psi \left( 1,\theta _{0}=\pi \right) \right\rangle =\left\vert
\psi _\text{{rod}}\right\rangle \otimes \left\vert 1\right\rangle $. Embora
tenhamos conhecimento desse resultado, n\'{o}s n\~{a}o tomaremos valores
para o par\^{a}metro $\theta _{0}$, exceto em casos excepcionais onde
devemos atribuir valores a este par\^{a}metro.

\subsubsection{Portas controladas por 1 q-bit via EAC} \label{subsecContEAC}

Para implementar portas controladas por 1 q-bit o esquema é ligeiramente diferente do que vimos anteriormente para o caso de unitários (rotações) de 1 q-bit. Sabendo que portas \textit{de} 1 q-bit controladas \textit{por} 1 q-bit atuam num espaço de Hilbert de 2 q-bits, o nosso subsistema $\mathcal{%
S}$ deverá ser composto por 2 q-bits f\'{\i}sicos, os q-bits \textit{controle} e \textit{alvo}. N%
\'{o}s vamos considerar que desejamos implementar uma rota\c{c}\~{a}o sobre
o q-bit \textit{alvo} quando o estado do \textit{controle} for $\left\vert 1\right\rangle $.
Deixe-nos escrever o estado inicial do subsistema $\mathcal{S%
}$ como sendo o estado mais geral de dois qbits dado por 
\begin{equation}
\left\vert \psi _{2}\right\rangle =a_{1}\left\vert 00\right\rangle
+a_{2}\left\vert 01\right\rangle +a_{3}\left\vert 10\right\rangle
+a_{4}\left\vert 11\right\rangle  \text{ \ ,} \label{EstadoInicialNR2EAC}
\end{equation}%
onde $\sum_{i}\left\vert a_{i}\right\vert ^{2}=1$ \'{e} nossa condi\c{c}\~{a}%
o de normaliza\c{c}\~{a}o e onde denotamos $\left\vert n\right\rangle
_{con}\left\vert m\right\rangle _{alv}=\left\vert n
m\right\rangle $, com $\left\vert n\right\rangle _{con}$ e $\left\vert
m\right\rangle _{alv}$ sendo os estados do q-bit de controle e alvo,
respectivamente. Considerando que a rota\c{c}\~{a}o ser\'{a} implementada no
q-bit alvo, \'{e} conveniente escrever $\left\vert \psi _{2}\right\rangle $ na base $\left\{ \left\vert m\hat{n}_{\pm}%
\right\rangle \right\} $, onde $%
m=\left\{ 0,1\right\} $. Fazendo isso, o estado $\left\vert \psi
_{2}\right\rangle $ fica escrito como%
\begin{equation}
\left\vert \psi _{2}\right\rangle =\alpha _{1}\left\vert 0\hat{n}_{+}%
\right\rangle +\alpha _{2}\left\vert 0\hat{n}_{- }\right\rangle +\alpha
_{3}\left\vert 1\hat{n}_{+}\right\rangle +\alpha _{4}\left\vert 1\hat{n}_{-
}\right\rangle \text{ \ .} \label{EstadoInicial2EAC}
\end{equation}

Novamente o subsistema $\mathcal{A}$ \'{e} inicializado no estado $%
\left\vert 0\right\rangle $, de modo que o estado inicial do sistema seja $%
\left\vert \psi _{2}\right\rangle \left\vert 0\right\rangle $. Para esse
novo prop\'{o}sito, o Hamiltoniano que dever\'{a} governar o sistema \'{e}
dado por%
\begin{equation}
H\left( s\right) =\left[ \left\vert 0\hat{n}_{-}\right\rangle
\left\langle 0\hat{n}_{-}\right\vert +\sum_{m}P_{m\hat{n}_{+}}\right]
\otimes H_{0}\left( s\right) +P_{1\hat{n}_{-} }\otimes H_{\phi }\left( s\right) \text{ \ ,} \label{ComputationEAC21}
\end{equation}%
com $P_{m\hat{n}_{\pm} }=\left\vert m\hat{n}_{\pm}\right\rangle \left\langle
m\hat{n}_{\pm}\right\vert $. Se notarmos que $%
\sum_{m}\sum_{\nu }P_{m\hat{n}_{\nu }}=\1$, onde $\nu =\left\{ \pm \right\} $, uma forma alternativa de escrever o Hamiltoniano acima \'{e} 
\begin{equation}
H\left( s\right) =\left[ \1-P_{1\hat{n}_{-} }\right] \otimes H_{0}\left( s\right)
+P_{1\hat{n}_{-}}\otimes H_{\phi }\left( s\right) \text{ \ .} \label{ComputationEAC22}
\end{equation}

Essa forma de escrever o Hamiltoniano ser\'{a} conveniente mais a frente.
Assim, desde que o sistema evolua adiabaticamente, garantimos que o estado
final do Hamiltoniano pode ser determinado com ajuda da Eq. $%
\left( \ref{EstadofinalEAC}\right) $ e \'{e} dado por%
\begin{equation}
\left\vert \Psi _{2}\left( 1\right) \right\rangle =\left( \alpha
_{1}\left\vert 0\hat{n}_{+}\right\rangle +\alpha _{2}\left\vert 0\hat{n}_{-
}\right\rangle +\alpha _{3}\left\vert 1\hat{n}_{+}\right\rangle \right) \otimes
\vert E_{0}^{0}\left( 1\right) \rangle +\alpha _{4}\left\vert 1%
\hat{n}_{-}\right\rangle \otimes \vert E_{\phi }^{0}\left( 1\right)
\rangle \text{ \ ,}
\end{equation}%
ou equivalentemente%
\begin{equation}
\left\vert \Psi _{2}\left( 1\right) \right\rangle =\cos \left( \frac{\theta
_{0}}{2}\right) \left\vert \psi _{2}\right\rangle \otimes \left\vert
0\right\rangle +\sin \left( \frac{\theta _{0}}{2}\right) \vert \psi
_{2}^\text{{rod}} \rangle \otimes \left\vert 1\right\rangle \text{ \ ,}
\label{Estadofinal2EAC}
\end{equation}%
onde $\vert \psi _{2}^\text{{rod}} \rangle =\alpha _{1}\left\vert 0\hat{n}_{+}%
\right\rangle +\alpha _{2}\left\vert 0\hat{n}_{-}\right\rangle +\alpha
_{3}\left\vert 1\hat{n}_{+}\right\rangle +e^{i\phi }\alpha _{4}\left\vert 1\hat{n%
}_{-}\right\rangle $ \'{e} exatamente o estado $\left\vert \psi
_{2}\right\rangle $ depois de uma rota\c{c}\~{a}o controlada de $\phi $ em
torno de uma dire\c{c}\~{a}o $\hat{n}$. Finalizada a evolução, ao realizarmos a medida novamente sobre o subsistema $\mathcal{S}$ teremos uma probabilidade $\sin ^{2}\left( \theta _{0}/2\right) $ de obtermos o estado computado.

Embora nossa discussão tenha sido feita considerando que o registro de ativação da porta controlada seja o estado $\vert 1 \ket$ do q-bit controle, isso não é uma exigência necessária. Qualquer porta controlada que atua quando o estado do q-bit controle for $\vert 0 \ket$ também é realizável pelo modelo. Para isso, basta fazer a troca $P_{1\hat{n}_{-}} \rightarrow P_{0\hat{n}_{-}}$ no Hamiltoniano da Eq. (\ref{ComputationEAC22}) e deixar o sistema evoluir segundo esse novo Hamiltoniano, de modo que nenhuma mudança no estado inicial do sistema é necessária.

\subsubsection{Portas Universais via EAC}

É sabido que existem conjuntos de portas qu\^{a}nticas que s%
\~{a}o universais para a CQ \cite{Barenco:95}. Exemplos s\~{a}o os conjuntos universais \{$%
CNOT$ + Rota\c{c}\~{o}es de 1 q-bit\} e \{$CNOT$ + $H$ + porta $\frac{\pi }{8%
}$\}. J\'{a} vimos que qualquer rota\c{c}\~{a}o de 1 q-bit pode ser
implementado pelo modelo. Assim, s\'{o} nos resta discutir como implementar a
porta $CNOT$ para que tenhamos um conjunto universal que pode ser simulado
pelo modelo.

O papel da porta $CNOT$ \'{e} "\textit{flipar}" (inverter) o estado do q-bit alvo de $%
\left\vert n\right\rangle \rightarrow \left\vert 1-n\right\rangle $ quando o
estado do q-bit controle for $\left\vert 1\right\rangle $. Notando-se que a a%
\c{c}\~{a}o $\left\vert n\right\rangle \rightarrow \left\vert
1-n\right\rangle $ pode ser vista como uma rota\c{c}\~{a}o de um angulo $\pi 
$ em torno da dire\c{c}\~{a}o $\left\vert +\right\rangle $ na esfera de
Bloch, ent\~{a}o identificamos o conjunto $\left\{ \left\vert \hat{n}_{\pm}%
\right\rangle,\phi \right\}
=\left\{ \left\vert \pm\right\rangle ,\pi \right\} $%
. 

Para mostrar que a escolha $\left\{ \left\vert \hat{n}_{\pm}%
\right\rangle,\phi \right\}
=\left\{ \left\vert \pm\right\rangle ,\pi \right\} $ nos permite implementar uma CNOT no subsistema $\mathcal{S}$ deixe-nos escrever o estado inicial do sistema, dado pela Eq. (\ref{EstadoInicialNR2EAC}), na base $\left\{ \left\vert +\right\rangle ,\left\vert -\right\rangle
\right\} $ como%
\begin{equation}
\left\vert \psi _{2}\right\rangle =\alpha _{1}\left\vert 0+\right\rangle
+\alpha _{2}\left\vert 0-\right\rangle +\alpha _{3}\left\vert
1+\right\rangle +\alpha _{4}\left\vert 1-\right\rangle \text{ \ ,}
\end{equation}%
onde definimos $\alpha _{1}= \left( a_{1}+a_{2} \right)/\sqrt{2}$, $\alpha _{2}=\left( a_{1}-a_{2} \right)/\sqrt{2}$, $\alpha _{3}=\left( a_{3}+a_{4} \right)/\sqrt{2}$ e $\alpha
_{4}=\left( a_{3}-a_{4} \right)/\sqrt{2}$. Realizando a computa\c{c}\~{a}o deixando o sistema evoluir adiabaticamente segundo o Hamiltoniano da Eq. (\ref{ComputationEAC22}) n\'{o}s obtemos, no limite $\theta
_{0}\rightarrow \pi $, um estado $\vert \psi _{2}^\text{{rod}} \rangle$ dado por%
\begin{equation}
\vert \psi _{2}^\text{{rod}} \rangle =\alpha _{1}\left\vert
0+\right\rangle +\alpha _{2}\left\vert 0-\right\rangle +\alpha
_{3}\left\vert 1+\right\rangle -\alpha _{4}\left\vert 1-\right\rangle \text{ \ ,}
\end{equation}%
que na base computacional pode ser escrito como%
\begin{equation}
\vert \psi _{2}^\text{{rod}} \rangle =a_{1}\left\vert 00\right\rangle
+a_{2}\left\vert 01\right\rangle +a_{3}\left\vert 11\right\rangle
+a_{4}\left\vert 10\right\rangle \text{ \ .}
\end{equation}

O que mostra que $\vert \psi _{2}^\text{{rod}} \rangle = CNOT \vert \psi _{2} \rangle$. Assim, temos ilustrado que a escolha do conjunto $\left\{ \left\vert
+\right\rangle ,\left\vert -\right\rangle ,\pi \right\} $ \'{e} conveniente
para implementarmos a porta $CNOT$ via EAC. Esse mesmo conjunto \'{e} usado
para implementar a porta $NOT$ de 1 q-bit. Outro exemplo de porta que pode
ser implementada adiab\'{a}ticamente por este modelo, tamb\'{e}m mencionado
em \cite{Itay:15}, \'{e} a porta Hadamard.\ Para implementar a porta
Hadamard a escolha do conjunto $\left\{ \left\vert \hat{n}_{\pm}\right\rangle ,\phi \right\} $ \'{e} diferente da escolha feita para a porta $CNOT$. Dessa vez o conjunto deve
ser $\left\{ \vert \hat{n}_{\pm
}\ket,\phi \right\} =\left\{ \vert \pm_{y}\ket
,\pi /2\right\} $, onde $\vert \pm
_{y}\ket=\left( \left\vert 0\right\rangle \pm i\left\vert
1\right\rangle \right) /\sqrt{2}$.

\subsection{Portas controladas por $n$-q-bits via EAC} \label{subsecEAC}

Nesta se\c{c}\~{a}o n\'{o}s mostramos uma generaliza\c{c}\~{a}o do modelo apresentado anteriormente. Vimos
que usando EAC nós podemos realizar CQ universal usando classes de conjuntos universais
de portas compostos por portas de $1$ q-bit e portas controladas. Mostraremos agora como estender esse modelo de modo que outras
classes de portas universais possam ser usadas para a computa\c{c}\~{a}o. A
ideia \'{e} mostrar que, usando EAC, podemos implementar portas controladas
por $n$ q-bits. Para isso, deixe-nos definir os novos subsistema $\mathcal{S}
$ e $\mathcal{A}$.

Análogo aos casos anteriores, o nosso subsistema $\mathcal{A}$ ser\'{a}
composto pelo \'{u}nico q-bit auxiliar que \'{e} inicializado no estado $%
\left\vert 0\right\rangle $ e, ao final do processo, a medida tamb\'{e}m \'{e}
feita sobre o subsistema $\mathcal{A}$. Por outro lado, o subsistema $%
\mathcal{S}$ \'{e} formado pelos $\left( n+1\right) $ q-bits sobre os quais
realizaremos a computa\c{c}\~{a}o. Desses $\left( n+1\right) $ q-bits n\'{o}%
s definiremos os $n$ primeiros q-bits como sendo os q-bits controle, e o $%
\left( n+1\right) $-\'{e}simo ser\'{a} o q-bit alvo. Nosso modelo n\~{a}o se
restrige a essa \'{u}nica configura\c{c}\~{a}o, permitindo assim qualquer
permuta\c{c}\~{a}o dos q-bits controle e alvo. Assumiremos, tamb\'{e}m, que
a porta controlada apenas atuar\'{a} quando o estado dos q-bits controle for 
$\left\vert 1\cdots 1\right\rangle $, nos demais casos nada \'{e} feito
sobre o estado do q-bit alvo. Assim como a escolha dos q-bits controle e
alvo, a escolha da condi\c{c}\~{a}o sobre os q-bits controle para que a
porta atue tamb\'{e}m \'{e} arbitr\'{a}ria, permitindo implementar qualquer
porta controlada que atue quando o estado dos q-bits controle for $%
\left\vert j_{1}\cdots j_{n}\right\rangle $, para algum conjunto de valores $%
\left\{ j_{n}\right\} $, por exemplo.

Iniciamos nosso processo considerando o estado mais geral poss\'{\i}vel de $%
\left( n+1\right) $ q-bits dado por%
\begin{equation}
\left\vert \psi _{n}\right\rangle =\sum_{m_{1}}\cdots
\sum_{m_{n+1}}a_{m_{1}\cdots m_{n+1}}\left\vert m_{1}\cdots
m_{n}\right\rangle \left\vert m_{n+1}\right\rangle \text{ \ ,} \label{EstadoGeralNqbits}
\end{equation}%
onde denotamos $\left\vert m_{1}\cdots m_{n}\right\rangle $ como sendo o
estado dos $n$ primeiros q-bits, $\left\vert m_{n+1}\right\rangle $
como o estado do $\left( n+1\right) $-\'{e}simo q-bit e cada $m_{k}=\left\{
0,1\right\} $. Por quest\~{a}o de nota\c{c}\~{a}o, \'{e} conveniente
escrever o estado dos $n$ primeiros q-bits na forma decimal usando que um n%
\'{u}mero inteiro $N=2^{n}-1$ pode ser expresso, na forma bin\'{a}ria, como $%
N_{\text{bin}}=a_{n}\cdots a_{0}$ onde os coeficientes $a_{n}$'s satisfazem $%
N=\sum_{k=0}^{n-1}a_{k}2^{n-1-k}$. Assim, podemos fazer a mudan\c{c}a%
\begin{equation}
\sum_{m_{1}}\cdots \sum_{m_{n+1}}a_{m_{1}\cdots m_{n+1}}\left\vert
m_{1}\cdots m_{n}\right\rangle \left\vert m_{n+1}\right\rangle \rightarrow
\sum_{k=0}^{N-1}\sum_{m_{n+1}}a_{km_{n+1}}\left\vert k\right\rangle
\left\vert m_{n+1}\right\rangle \text{ \ ,} \label{MudancaDecimalBinario}
\end{equation}%
portanto o estado inicial do sistema $\mathcal{SA}$ \'{e} dado por%
\begin{equation}
\left\vert \Psi _{n}\left( 0\right) \right\rangle
=\sum_{k=0}^{N-1}\sum_{m_{n+1}}a_{km_{n+1}}\left\vert k\right\rangle
\left\vert m_{n+1}\right\rangle \left\vert 0\right\rangle \text{ \ ,}
\label{EstadoInicialNq-bits}
\end{equation}%
onde o estado $\left\vert 0\right\rangle $ \'{e} o
estado inicial do subsistema $\mathcal{A}$. Novamente, antes de iniciarmos nossa an%
\'{a}lise da evolu\c{c}\~{a}o do sistema, faz-se conveniente escrever o
estado do q-bit alvo na base $\left\{ \left\vert \hat{n}_{\pm}\right\rangle \right\} $ de modo que obtemos%
\begin{equation}
\left\vert \Psi _{n}\left( 0\right) \right\rangle =\sum_{k=0}^{N-1}\sum_{\mu
}\alpha _{k\mu }\left\vert k\right\rangle \vert \hat{n}_{\mu} \rangle
\left\vert 0\right\rangle \text{ \ ,} \label{EstadoInicialNq-bitsBase}
\end{equation}%
com $\mu =\left\{ \pm\right\} $. Assim, n\'{o}s deixamos
o sistema evoluir segundo o Hamiltoniano 
\begin{equation}
H\left( s\right) =\left[ \1-P_{N-1,\hat{n}_{-} }\right] \otimes H_{0}\left( s\right)
+P_{N-1,\hat{n}_{-} }\otimes H_{\phi }\left( s\right) \text{ \ ,}
\label{HamiltonianoPortasNq-bits}
\end{equation}%
onde $P_{k,\hat{n}_{\pm} }=\left\vert k\hat{n}_{\pm} \right\rangle \left\langle k\hat{n}_{\pm} \right\vert $%
, com $k=\left\{1,\cdots,N-1\right\} $, \'{e} um projetor sobre o subsistema $\mathcal{S}$ e onde $%
\1=\sum_{k=0}^{N-1}\sum_{\mu }P_{k,\hat{n}_{\mu} }$. A forma do Hamiltoniano na Eq. $\left( \ref{HamiltonianoPortasNq-bits}\right) $ \'{e} devido a
escolha feita no in\'{\i}cio dessa se\c{c}\~{a}o acerca do subsistema
composto pelos q-bits controle e alvo e a forma como a porta controlada deve
atuar. 

Se desejamos mudar a configura\c{c}\~{a}o, onde trocamos o estado dos q-bits controle que ativa a porta controlada, então a forma como arranjamos os projetores $P_{k,\hat{n}_{\mu} }$ deve mudar. Por exemplo, se uma dada porta controlada atua quando o estado dos q-bits de controle for $\vert l \ket$, então devemos fazer a troca $P_{N-1,\hat{n}_{-}} \rightarrow P_{l,\hat{n}_{-}}$ no Hamiltoniano da Eq. (\ref{HamiltonianoPortasNq-bits}). Portanto, adotar o Hamiltoniano como na Eq. $\left( \ref{HamiltonianoPortasNq-bits}\right) $ para o desenvolvimento dos resultados n\~{a}o causa nenhuma perda de generalidade do nosso modelo.

Permitindo que o sistema evolua segundo o Hamiltoniano presente
na Eq. $\left( \ref{HamiltonianoPortasNq-bits}\right) $, o
estado final do sistema ser\'{a}, pela Eq. $\left( \ref%
{EstadofinalEAC}\right) $, dado por%
\begin{equation*}
\left\vert \Psi _{n}\left( 1\right) \right\rangle =\left[ \left(
\1 - P_{N-1,\hat{n}_{-} }\right) \left\vert \psi _{n}\right\rangle \right] \otimes
\vert E_{0}^{0}\left( 1\right) \rangle +P_{N-1,\hat{n}_{-} }\left\vert
\psi _{n}\right\rangle \otimes \vert E_{\phi }^{0}\left( 1\right)
 \rangle \text{ \ ,}
\end{equation*}%
desenvolvendo a equa\c{c}\~{a}o acima n\'{o}s ainda podemos escrever%
\begin{equation}
\left\vert \Psi _{n}\left( 1\right) \right\rangle =\cos \left( \frac{\theta
_{0}}{2}\right) \left\vert \psi _{n}\right\rangle \left\vert 0\right\rangle
+\sin \left( \frac{\theta _{0}}{2}\right) \vert \psi
_{n}^\text{{rod}} \rangle \left\vert 1\right\rangle \text{ \ ,} \label{EstadoFinalNq-bits}
\end{equation}%
onde $\left\vert \psi _{n}\right\rangle =\sum_{k=0}^{N-1}\sum_{\hat{n}_{\mu} }\alpha
_{k\mu }\left\vert k\right\rangle \vert \hat{n}_{\mu} \ket $ \'{e}
exatamente o estado de entrada e $\vert \psi _{n}^\text{{rod}} \rangle $ 
\'{e} dado por%
\begin{equation}
\vert \psi _{n}^\text{{rod}} \rangle =\sum_{k=0}^{N-2}\sum_{\mu }\alpha
_{k\mu }\left\vert k\right\rangle \vert \hat{n}_{\mu} \ket +\alpha _{N-1%
\hat{n}_{+}}\left\vert N-1\right\rangle \left\vert \hat{n}_{+}\right\rangle
+e^{i\phi }\alpha _{N-1\hat{n}_{-} }\left\vert N-1\right\rangle \left\vert 
\hat{n}_{- }\right\rangle \text{ \ ,}
\end{equation}%
onde os dois ultimos termos podem ser escritos como $\left\vert
N-1\right\rangle \left( \alpha _{N-1\hat{n}_{+}}\left\vert \hat{n}_{+}\right\rangle
+e^{i\phi }\alpha _{N-1\hat{n}_{-} }\left\vert \hat{n}_{-}\right\rangle
\right) $ que representa justamente a atua\c{c}\~{a}o da porta controlada
sobre o q-bit alvo, j\'{a} que o estado $\left\vert N-1\right\rangle $ na
representa\c{c}\~{a}o bin\'{a}ria corresponde exatamente ao estado $%
\left\vert 1\cdots 1\right\rangle $. O sucesso da computa\c{c}\~{a}o
novamente depende especialmente do par\^{a}metro $\theta _{0}$, onde o
limite $\theta _{0}\rightarrow \pi $ \'{e} mais uma vez o limite de m\'{a}%
xima probabilidade de sucesso.

Como mencionamos, nosso modelo pode ser usado para outras formas de realizar
computa\c{c}\~{a}o usando outros conjuntos de portas universais. Para
exemplificar a utilidade desse modelo. Primeiro, devemos lembrar que a porta
Toffoli juntamente com Hadamard s\~{a}o universais para computa\c{c}\~{a}o quântica \cite{Aharonov:03}.
Com esse intuito, a porta Toffoli pode ser implementada pelo nosso modelo
usando a mesma escolha do conjunto $\left\{ \left\vert \hat{n}_{+}\right\rangle
,\left\vert \hat{n}_{-}\right\rangle ,\phi \right\} $ que foi adotado
para implementar a porta $NOT$ e $CNOT$. Em adi\c{c}\~{a}o, para
simular CQ universal n\'{o}s ainda precisamos implementar a porta Hadamard, que pode
ser feito escolhendo o conjunto $\left\{ \left\vert \hat{n}_{\pm}\right\rangle
,\phi \right\} =\left\{ \vert
\pm_{y} \rangle ,\pi /2\right\} $, onde $%
 \vert \pm _{y} \rangle =\left( \left\vert 0\right\rangle \pm
i\left\vert 1\right\rangle \right) /\sqrt{2}$.

\subsection{Computação Adiabática Probabilística} \label{ComAdiPro}

Visto que o par\^{a}metro $\theta _{0}$ \'{e} respons\'{a}vel pelo grau de
fidelidade da computa\c{c}\~{a}o, surge a pergunta: \textit{Em m\'{e}dia,
abrir m\~{a}o da m\'{a}xima fidelidade de computa\c{c}\~{a}o seria
energeticamente mais vantajoso?} Aqui tentaremos responder a essa pergunta
analisando a realiza\c{c}\~{a}o de \textit{CQ Adiab\'{a}tica probabil\'{\i}%
stica} por evolu\c{c}\~{o}es controladas.

Para tal, deixe-nos definir o custo energ\'{e}tico em uma evolu\c{c}\~{a}o de um sistema qu\^{a}ntico como a medida de energia dada por \cite{Zheng:15,Kieferova:14}
\begin{equation}
\Sigma \left( \tau \right) =\frac{1}{\tau }\int_{0}^{\tau }\left\Vert {H%
}\left( t\right) \right\Vert dt\text{ \ .}  \label{cost1ADG.xx}
\end{equation}

Para determinar a quantidade $\left\Vert {H} \left( t\right) \right\Vert 
$ n\'{o}s usamos a norma de Hilbert-Schmidt $\left\Vert {A}\right\Vert =%
\sqrt{\text{Tr}[{A}^{\dag }{A}]}$. Calculando o custo energ\'{e}tico para o
Hamiltoniano que implementa rota\c{c}\~{o}es controladas por $n$ q-bits dado
na Eq. (\ref{HamiltonianoPortasNq-bits}) n\'{o}s obtemos que $%
\Sigma _{\text{Ad}}\left( n\right) $ $=\sqrt{2^{3\left( n-1\right) }n}\Sigma
_{\text{Ad}}^{sing}$, onde $\Sigma _{\text{Ad}}^{sing}=2\hbar \omega $ \'{e} o custo energ%
\'{e}tico para implementar portas de 1 q-bit.

A computa\c{c}\~{a}o probabil\'{\i}stica \'{e} consequ\^{e}ncia de
escolhermos diferentes valores para o par\^{a}metro $\theta _{0}$. Da Eq. (\ref{FinalStateEAC1q-bit}) percebe-se que a probabilidade de
sucesso da computa\c{c}\~{a}o \'{e} $\sin ^{2}\left( \theta _{0}/2\right) $,
logo em m\'{e}dia precisamos de $\left\langle N\right\rangle =1/\sin
^{2}\left( \theta _{0}/2\right) $ de repeti\c{c}\~{o}es do protocolo para
que tenhamos sucesso na computa\c{c}\~{a}o. Sem perda de generalidade, uma
vez que o custo energ\'{e}tico para implmentar portas de $n$ q-bits \'{e}
proporcional ao custo para implementar portas de 1 q-bit, deixe nos definir
a m\'{e}dia do custo energ\'{e}tico requerido para realizar a computa\c{c}%
\~{a}o probabil\'{\i}stica como%
\begin{equation}
\Sigma _{\text{Ad}}^{\text{prob}}\left(\theta _{0}\right) =\left\langle
N\right\rangle \Sigma _{\text{Ad}}^{sing}=\cos \sec ^{2}\left( \theta _{0}/2\right)
\Sigma _{\text{Ad}}^{sing}\text{ \ .}  \label{ProbCostAdia.1}
\end{equation}

Note que no limite $\theta _{0}\rightarrow \pi $ n\'{o}s temos exatamente $%
\Sigma _{\text{Ad}}^{\text{prob}}\left(\theta _{0}\right) =\Sigma _{\text{Ad}}^{sing}$,
mas no limite $\theta _{0}\rightarrow 0$ temos $\Sigma _{\text{Ad}}^{\text{prob}}\left(
\theta _{0}\right) \rightarrow \infty $. Esse \'{u}ltimo resultado 
\'{e} explicado pelo fato de que quando temos $\theta _{0}\rightarrow 0$ a
probabilidade de sucesso vai a zero. Mas o que queremos \'{e} saber para
quais valores de $\theta _{0}$ n\'{o}s temos o valor \'{o}timo de $\Sigma
_{\text{Ad}}^{\text{prob}}\left(\theta _{0}\right) $. Ser\'{a} que podemos obter um 
$\Sigma _{\text{Ad}}^{\text{prob}}\left(\theta _{0}\right) <\Sigma _{\text{Ad}}^{sing}$
para algum $\theta _{0}$? Fazendo o estudo da criticidade da fun\c{c}\~{a}o $%
\Sigma _{\text{Ad}}^{\text{prob}}\left(\theta _{0}\right) $ n\'{o}s obtemos que 
\begin{equation}
\frac{d \Sigma _{\text{Ad}}^{\text{prob}}\left(\theta _{0}\right) }{d \theta _{0}}%
=\Sigma _{\text{Ad}}^{sing}\frac{d \cos \sec ^{2}\left( \theta _{0}/2\right) }{%
d\theta _{0}}=-\frac{1}{2}\Sigma _{\text{Ad}}^{sing}\cos \sec ^{2}\left( \frac{%
\theta _{0}}{2}\right) \sin \left( \theta _{0}\right) \text{ \ .}
\end{equation}

Assim, os pontos cr\'{\i}ticos de $\Sigma _{\text{Ad}}^{\text{prob}}\left(\theta
_{0}\right) $ ocorrem quando%
\[
\frac{d\Sigma _{\text{Ad}}^{\text{prob}}\left(\theta _{0}\right) }{d\theta _{0}}=0 \text{ \ ,}
\]%
ou equivalentemente $\sin \left( \theta _{0}\right) =0$, j\'{a} que $\cos
\sec \left( \theta _{0}/2\right) \neq 0$ no intervalo $0<\theta _{0}\leq \pi 
$. Assim, o \'{u}nico ponto cr\'{\i}tico da fun\c{c}\~{a}o $\Sigma
_{\text{Ad}}^{\text{prob}}\left(\theta _{0}\right) $ \'{e} exatamente $\theta
_{0}=\pi $, mostrando que n\~{a}o existe nenhum valor de $\theta _{0}$ tal
que $\Sigma _{\text{Ad}}^{\text{prob}}\left(\theta _{0}\right) <\Sigma _{\text{Ad}}^{sing}$%
. A introdu\c{c}\~{a}o do conceito de computa\c{c}\~{a}o adiab\'{a}tica
probabil\'{\i}stica por EAC n\~{a}o nos d\'{a} nenhuma vantagem em termos de custo energético com rela\c{c}%
\~{a}o a computa\c{c}\~{a}o determin\'{\i}stica (onde consideramos $\theta
_{0}=\pi $), por\'{e}m o conceito de computa\c{c}\~{a}o probabil\'{\i}stica
se far\'{a} vantajoso mais a frente quando analisarmos a computa\c{c}\~{a}o
superadiab\'{a}tica probabil\'{\i}stica.

\newpage

\part{Computa\c{c}\~{a}o Qu\^{a}ntica Superadiab\'{a}tica}

Agora nesta parte nós passaremos a discutir detalhadamente o que vem a ser o coração da nossa pesquisa e, consequentemente, desta dissertação. Todos os resultados aqui apresentados foram discutidos e publicados de acordo com as referências \cite{Scirep,PRA}.

Incialmente nós faremos, no capítulo \ref{atalhogenerico}, um breve resumo acerca dos principais resultados sobre atalhos para adiabaticidade. Na seção \ref{atalhoHCD} nós derivaremos, de forma genérica, os chamados Hamiltonianos contra-diabáticos que são os elementos fundamentais quando desejamos remover o vínculo temporal de uma evolução adiabática. Já na seção \ref{ComplemEnerTem} nós nos dedicamos a responder uma pergunta cuja resposta seguia sem uma demonstração formal desde o desenvolvimento de atalhos para adiabaticidade via Hamiltonianos contra-diabáticos. Tal questão diz respeito ao tempo total de evolução em evoluções superadiabáticas: \textit{Como poderemos estimar o tempo de evolução superadiabática, uma vez que tal evolução não carrega nenhum vínculo temporal?} Para isso nós mostramos que essa questão é satisfatoriamente respondida fazendo uso de limites para o tempo de evolução em sistemas quânticos e a análise do custo energético, ambos para sistemas fechados, caracterizando um estudo da complementaridade energia-tempo em evoluções superadiabáticas.

No capítulo \ref{SGT} nós derivamos o atalho para o TQ adiabático e mostramos em seguida, com ajuda de dois teoremas, que o TQ superadiabático é um primitivo para CQ universal superadiabática. Para tanto, mostramos que é possível, com o TQ superadiabático, implementar qualquer conjunto universal de portas quânticas. Em seguida nós mostramos que o método pode ser estendendido para implementar o TQ superadiabático de portas de $n$ q-bits. Encerramos a seção com a análise do custo energético para implementar portas de $n$ q-bits que, onde nossa análise é discutida numericamente para algumas interpolações específicas.

No capítulo \ref{ESCandUQC} nós propomos o uso de atalhos para adiabaticidade via Hamiltonianos contra-diabáticos para realizar CQ universal. Para tal, derivamos um atalho para EAC de forma genérica na seção \ref{ESCGene} e nas seções seguintes aplicamos os resultados para mostrar como evoluções superadiabáticas controladas (ESC) podem ser usadas para realizar CQ universal, e assim obtendo um modelo híbrido de CQ universal superadiabática. Um dos resultados mais significantes da referida seção, é que mostramos que, independente da porta que desejamos implementar pelo modelo, isso sempre pode ser feito com a adição de um Hamiltoniano contra-diabático independente do tempo. O estudo da complementaridade energia-tempo para o modelo proposto será de grande utilidade para podermos analisar a performance de tal modelo. Na seção \ref{CQP} nós finalizamos com um estudo da QC probabilística, onde a computação ocorre apenas com certa probabilidade $p<1$, baseada em ESC. Como resultado mostramos que, em média, existem situações onde energeticamente é mais vantajoso realizarmos QC probabilística e não a CQ deterministica, onde a computação ocorre com fidelidade $1$.

\newpage

\section{Atalho para Adiabaticidade} \label{atalhogenerico}

Na \textit{Parte I} desta disserta\c{c}\~{a}o vimos que evolu\c{c}\~{o}es
adiab\'{a}ticas nos permite simular circuitos qu\^{a}nticos, desde que um
Hamiltoniano adiab\'{a}tico seja constru\'{\i}do para tal finalidade. Al\'{e}%
m disso, podemos mostrar que qualquer porta qu\^{a}ntica pode ser
implementada adiabaticamente, seja via TQ ou evolu\c{c}\~{o}es
controladas. Por\'{e}m, independente do modelo usado para realizar tal
tarefa, existe um v\'{\i}nculo que deve ser respeitado na implementa\c{c}%
\~{a}o individual de cada porta de um circuito. Como sabemos, o tempo necess%
\'{a}rio para implementar uma porta \'{e} estabelecido pelas condi\c{c}\~{o}%
es de validade do teorema adiab\'{a}tico. Assim poder\'{\i}amos nos
perguntar: \textit{Existiria alguma maneira de imitar uma evolu\c{c}\~{a}o
adiab\'{a}tica, mas de forma que o v\'{\i}nculo temporal possa ser removido?}

A resposta a essa pergunta \'{e} positiva e o m\'{e}todo de como fazer isso
foi proposto por Demirplak e Rice \cite{Demirplak:03,Demirplak:05}, posteriormente também estudado por Berry \cite{Berry:09}. Neste capítulo n\'{o}s introduziremos o método proposto por Demirplak e Rice que faz uso dos chamados \textit{Hamiltonianos contra-diab\'{a}ticos }.

\subsection{O Hamiltoniano contra-diab\'{a}tico} \label{atalhoHCD}

Como ponto de partida, considere a equa\c{c}\~{a}o de Schr\"{o}dinger%
\begin{equation*}
H\left( t\right) \left\vert \psi _{0}\left( t\right) \right\rangle
=i\hbar |\dot{\psi}_{0}\left( t\right) \rangle \text{ \ ,}
\end{equation*}%
para um Hamiltoniano dependente do tempo $H\left( t\right) $. O elemento
b\'{a}sico usado para derivar um atalho para adiabaticidade \'{e} um
operador $U\left( t\right) $ que \'{e} usado para rodar a equa\c{c}\~{a}o
acima de forma a obter uma boa aproxima\c{c}\~{a}o adiab\'{a}tica. Uma defini%
\c{c}\~{a}o conveniente para o operador $U\left( t\right) $ \'{e} 
\begin{equation}
U\left( t\right) =\sum_{n}e^{-\frac{i}{\hbar }\int_{0}^{t}\varepsilon
_{n}(t^{\prime })dt^{\prime }+\int_{0}^{t}\langle \dot{E}_{n}\left( t\right)
|E_{n}\left( t\right) \rangle dt^{\prime }}\left\vert E_{n}\left( t\right)
\right\rangle \left\langle E_{n}\left( 0\right) \right\vert \text{ \ ,}
\label{operadorU}
\end{equation}%
onde $\left\vert E_{n}\left( t\right) \right\rangle $ \'{e} o $n$-\'{e}simo\
autoestado do Hamiltoniano com autoenergia $\varepsilon _{n}\left( t\right) $%
. Essa defini\c{c}\~{a}o \'{e} boa, pois dessa forma o operador $U\left(
t\right) $ pode ser identificado como sendo um operador evolu\c{c}\~{a}o que
nos fornece uma evolu\c{c}\~{a}o adiab\'{a}tica. De fato, seja o estado
inicial do sistema $\left\vert \psi _{0}\left( 0\right) \right\rangle
=\left\vert E_{k}\left( 0\right) \right\rangle $, onde $\left\vert
E_{k}\left( 0\right) \right\rangle $ é autoestado de $H_{0}\left( 0\right) $, ent\~{a}o%
\begin{equation}
U\left( t\right) \left\vert \psi _{0}\left( 0\right) \right\rangle =e^{-%
\frac{i}{\hbar }\int_{0}^{t}\varepsilon _{k}(t^{\prime })dt^{\prime
}+\int_{0}^{t}\langle \dot{E}_{k}\left( t\right) |E_{k}\left( t\right)
\rangle dt^{\prime }}\left\vert E_{k}\left( t\right) \right\rangle \text{ \ ,%
} \label{adSoluSupe}
\end{equation}%
que nada mais \'{e} do que a solu\c{c}\~{a}o adiab\'{a}tica.

A maneira de determinar como realizar o atalho via Hamiltonianos contra-diab%
\'{a}ticos \'{e} supor que existe um termo $H_{\text{CD}}$ que, quando
adicionado ao Hamiltoniano $H\left( t\right) $, nos forne\c{c}a uma solu%
\c{c}\~{a}o $\left\vert \psi _{0}\left( t\right) \right\rangle =U\left(
t\right) \left\vert \psi _{0}\left( 0\right) \right\rangle $. Ent\~{a}o,
deixe-nos definir o novo Hamiltoniano%
\begin{equation}
H_{\text{SA}}\left( t\right) =H\left( t\right) +H_{\text{CD}}\left( t\right) \text{ \ ,%
}  \label{HSAgene}
\end{equation}%
chamado \textit{Hamiltoniano superadiab\'{a}tico}, onde o termo somado ao Hamiltoniano original é chamado de \textit{Hamiltoniano contra-diab\'{a}tico} ou \textit{termo contra-diab\'{a}tico}. Assim, deixamos o sistema
evoluir segundo a equa\c{c}\~{a}o%
\begin{equation}
H_{\text{SA}}\left( t\right) \left\vert \psi _{0}\left( t\right) \right\rangle
=i\hbar |\dot{\psi}_{0}\left( t\right) \rangle \text{ \ .}
\end{equation}

Rodando a equa\c{c}\~{a}o acima pelo operador unit\'{a}rio $U\left( t\right) 
$, obtemos%
\begin{equation*}
U^{\dag }\left( t\right) H_{\text{SA}}\left( t\right) U\left( t\right) U^{\dag
}\left( t\right) \left\vert \psi _{0}\left( t\right) \right\rangle =i\hbar
U^{\dag }\left( t\right) |\dot{\psi}_{0}\left( t\right) \rangle \text{ \ ,}
\end{equation*}%
onde usamos que $U\left( t\right) U^{\dag }\left( t\right) =1$. Agora note
que usando $U^{\dag }\left( t\right) |\dot{\psi}_{0}\left( t\right) \rangle
=|\dot{\psi}_{1}\left( t\right) \rangle -\dot{U}^{\dag }\left( t\right)
|\psi _{0}\left( t\right) \rangle $, onde $|\psi _{1}\left( t\right) \rangle
=U^{\dag }\left( t\right) \left\vert \psi _{0}\left( t\right) \right\rangle $%
, na equa\c{c}\~{a}o acima ficamos com%
\begin{equation*}
U^{\dag }\left( t\right) H_{\text{SA}}\left( t\right) U\left( t\right) U^{\dag
}\left( t\right) \left\vert \psi _{0}\left( t\right) \right\rangle =i\hbar
\lbrack |\dot{\psi}_{1}\left( t\right) \rangle -\dot{U}^{\dag }\left(
t\right) |\psi _{0}\left( t\right) \rangle ]  \text{ \ .}
\end{equation*}

A fim de obter uma equa\c{c}\~{a}o para $|\dot{\psi}_{1}\left( t\right)
\rangle $, usamos novamente a unitariedade $U\left( t\right) $ e escrevemos%
\begin{equation}
\left[ U^{\dag }\left( t\right) H_{\text{SA}}\left( t\right) U\left( t\right)
+i\hbar \dot{U}^{\dag }\left( t\right) U\left( t\right) \right] \left\vert
\psi _{1}\left( t\right) \right\rangle =i\hbar |\dot{\psi}_{1}\left(
t\right) \rangle  \text{ \ .} \label{inter}
\end{equation}

O objetivo \'{e} mostrar que existe um $H_{\text{CD}}\left( t\right) $ de modo que a
solu\c{c}\~{a}o para $\left\vert \psi _{0}\left( t\right) \right\rangle $
seja exatamente a solu\c{c}\~{a}o adiab\'{a}tica. Para isso, notamos que $%
U\left( 0\right) =\1$, consequentemente $|\psi _{1}\left( 0\right) \rangle
=\left\vert \psi _{0}\left( 0\right) \right\rangle $, de modo que a definição $|\psi _{1}\left(
t\right) \rangle =U^{\dag }\left( t\right) \left\vert \psi _{0}\left(
t\right) \right\rangle $ sugere que se $|%
\dot{\psi}_{1}\left( t\right) \rangle =0$, ent\~{a}o $|\psi _{1}\left(
t\right) \rangle =|\psi _{1}\left( 0\right) \rangle =|\psi _{0}\left(
0\right) \rangle $. Portanto temos que
\begin{equation}
\left\vert \psi _{0}\left( t\right) \right\rangle =U\left( t\right)
\left\vert \psi _{1}\left( t\right) \right\rangle =U\left( t\right)
\left\vert \psi _{0}\left( 0\right) \right\rangle  \text{ \ ,}
\end{equation}%
e considerando que o sistema inicialmente é preparado em um autoestado específico do Hamiltoniano $H\left( t \right)$, então temos Eq. (\ref{adSoluSupe}) como solução.

Assim, se fizermos tal imposi\c{c}\~{a}o sobre $|\psi _{1}\left( t\right) \rangle $, a
Eq. (\ref{inter}) nos fornece%
\begin{equation}
H_{\text{SA}}\left( t\right) =-i\hbar U\left( t\right) \dot{U}^{\dag }\left(
t\right)  \text{ \ .} \label{SA1}
\end{equation}

Derivando $U^{\dag }\left( t\right) $ encontramos%
\begin{eqnarray*}
\dot{U}^{\dag }\left( t\right) &=&\sum_{n}\left( \frac{i}{\hbar }\varepsilon
_{n}(t)+\langle E_{n}\left( t\right) |\dot{E}_{n}\left( t\right) \rangle
\right) e^{\frac{i}{\hbar }\int_{0}^{t}\varepsilon _{n}(t^{\prime
})dt^{\prime }+\int_{0}^{t}\langle E_{n}\left( t\right) |\dot{E}_{n}\left(
t\right) \rangle dt^{\prime }}\left\vert E_{n}\left( 0\right) \right\rangle
\left\langle E_{n}\left( t\right) \right\vert \\
&&+\sum_{n}e^{\frac{i}{\hbar }\int_{0}^{t}\varepsilon _{n}(t^{\prime
})dt^{\prime }+\int_{0}^{t}\langle E_{n}\left( t\right) |\dot{E}_{n}\left(
t\right) \rangle dt^{\prime }}\left\vert E_{n}\left( 0\right) \right\rangle
\langle \dot{E}_{n}\left( t\right) |\text{ \ ,}
\end{eqnarray*}%
portanto temos%
\begin{equation}
U\left( t\right) \dot{U}^{\dag }\left( t\right) =\sum_{n}\left\vert
E_{n}\left( t\right) \right\rangle \langle \dot{E}_{n}\left( t\right)
|+\sum_{n}\left( \frac{i}{\hbar }\varepsilon _{n}(t)+\langle E_{n}\left( t\right) |\dot{E}_{n}\left( t\right) \rangle \right) \left\vert
E_{n}\left( t\right) \right\rangle \left\langle E_{n}\left( t\right)
\right\vert \text{ \ .}
\end{equation}

Substituido o resultado acima na Eq. (\ref{SA1}) obtemos que%
\begin{equation}
H_{\text{SA}}\left( t\right) =H\left( t\right) +i\hbar \sum_{n}\left( |\dot{E}%
_{n}\left( t\right) \rangle \left\langle E_{n}\left( t\right) \right\vert
+\langle \dot{E}_{n}\left( t\right) |E_{n}\left( t\right) \rangle \left\vert
E_{n}\left( t\right) \right\rangle \left\langle E_{n}\left( t\right)
\right\vert \right)  \text{ \ ,} \label{SAfinal}
\end{equation}%
onde usamos que $\sum_{n}\varepsilon _{n}(t)\left\vert E_{n}\left( t\right)
\right\rangle \left\langle E_{n}\left( t\right) \right\vert =H\left(
t\right) $, que $\langle \dot{E}_{n}\left( t\right) |E_{n}\left( t\right)
\rangle =-\langle E_{n}\left( t\right) |\dot{E}_{n}\left( t\right) \rangle $
e que 
\begin{equation}
\sum_{n}|\dot{E}_{n}\left( t\right) \rangle \left\langle E_{n}\left(
t\right) \right\vert =-\sum_{n}\left\vert E_{n}\left( t\right) \right\rangle
\langle \dot{E}_{n}\left( t\right) | \text{ \ ,}
\end{equation}%
devido a $\sum_{n}\left\vert E_{n}\left( t\right) \right\rangle \left\langle
E_{n}\left( t\right) \right\vert =\1$. Portanto, comparando a Eq. %
(\ref{SAfinal}) com a Eq. (\ref{HSAgene}) conclu\'{\i}mos que%
\begin{equation}
H_{\text{CD}}\left( t\right) =i\hbar \sum_{n}\left( |\dot{E}_{n}\left( t\right)
\rangle \left\langle E_{n}\left( t\right) \right\vert +\langle \dot{E}%
_{n}\left( t\right) |E_{n}\left( t\right) \rangle \left\vert E_{n}\left(
t\right) \right\rangle \left\langle E_{n}\left( t\right) \right\vert \right)
\label{HCDgene}
\end{equation}%
deve ser o Hamiltoniano contra-diab\'{a}tico que ser\'{a} adicionado ao
Hamiltoniano $H_{0}\left( t\right) $ que nos permite imitar uma evolu\c{c}%
\~{a}o adiab\'{a}tica. Em momento algum precisamos submeter o tempo de evolu%
\c{c}\~{a}o do sistema, de modo que nessa teoria \textit{n\~{a}o h\'{a}} v%
\'{\i}nculo sobre o mesmo.

\subsection{Complementaridade Energia-Tempo em Evolu\c{c}\~{o}es Superadiab\'{a}ticas} \label{ComplemEnerTem}

Ao removermos o v\'{\i}nculo temporal em evolu\c{c}\~{o}es adiab\'{a}ticas
usando atalhos para adiabaticidade, deixamos que agora o sistema evolua sem qualquer restri\c{c}\~{a}o sobre o tempo total de evolução. Al\'{e}m disso,
acredita-se que usando atalhos para adiabaticidade n\'{o}s podemos imitar a
evolu\c{c}\~{a}o adiab\'{a}tica em intervalos de tempo arbitrariamente pequenos.

Mas como podemos garantir isso? Existiria um limite inferior para o qu\~{a}o r\'{a}%
pido a evolu\c{c}\~{a}o superadiab\'{a}tica pode ocorrer? Se sim, quem \'{e}
respons\'{a}vel por moderar o tempo de evolu\c{c}\~{a}o do sistema?

Diante dessas questões somos motivados a analisar a complementaridade energia-tempo em evolu\c{c}\~{o}es superadiab\'{a}ticas. Esse estudo \'{e} feito aqui mediante a análise de limites para o tempo de evolu\c{c}\~{a}o em sistemas qu\^{a}nticos em conjunto com uma defini\c{c}\~{a}o de custo energ\'{e}tico.

\subsubsection{O tempo total de evolu\c{c}\~{a}o}

Limites de velocidade quântica (QSL do inglês \textit{quantum
speed limit}) em evoluções de sistemas qu\^{a}nticos surgiram da investiga\c{c}\~{a}%
o do tempo mínimo requerido para um sistema qu\^{a}ntico, governado por um
Hamiltoniano $H$, evoluir de um estado $\left\vert \psi \right\rangle $
at\'{e} um estado $\vert \psi ^{\bot } \ket$ ortogonal
ao estado $\left\vert \psi \right\rangle $ ($\bra \psi
|\psi ^{\bot } \ket=0$) \cite{Mandelstam:45}. Outros
resultados mais gerais para sistemas governados por Hamiltonianos
independentes do tempo tamb\'{e}m foram propostos \cite%
{Margolus:98,Giovannetti:03}.

Aqui n\'{o}s adotaremos o \textit{bound} mais geral poss\'{\i}vel para sistemas qu%
\^{a}nticos fechados, pois em geral lidamos com Hamiltonianos dependentes do tempo $H\left( t_{1}\right)$ onde $\left[
H\left( t_{1}\right) ,H\left( t_{2}\right) \right] \neq 0$ para $t_{1}\neq
t_{2}$. A fim de determinar o QSL para Hamiltonianos desse tipo n\'{o}s devemos usar a express\~{a}o
do QSL para sistemas fechados, que foi derivada por Deffner e Lutz \cite{Deffner:13}, dada pela desigualdade (ver Apêndice \ref{ApQSL})
\begin{equation}
\tau \geq \frac{\left\vert \cos \left[ \mathcal{L}\left( \psi _{0},\psi
_{t}\right) \right] -1\right\vert }{E_{\tau }}\text{ \ ,}  \label{QSL}
\end{equation}%
onde tem-se definido a quantidade $E_{\tau }=\frac{1}{\tau }\int_{0}^{\tau
}dt\left\vert \left\langle \psi \left( 0\right) |H\left( t\right) |\psi
\left( t\right) \right\rangle \right\vert $, que n\~{a}o pode ser vista como
uma m\'{e}dia da energia do sistema em geral. De fato, a energia média do sistema leva em conta uma média sobre a quantidade $ \langle E \left(t \right) \rangle = \left\vert \left\langle \psi \left( t\right) |H\left( t\right) |\psi
\left( t\right) \right\rangle \right\vert$, mas o que temos é uma média sobre a quantidade $\left\vert \left\langle \psi \left( 0\right) |H\left( t\right) |\psi
\left( t\right) \right\rangle \right\vert$. Este não pode, em geral, ser visto como a energia do sistema. A quantidade $\cos \left[ 
\mathcal{L}\left( \psi _{0},\psi _{t}\right) \right] $ acima \'{e} a m\'{e}%
trica de Bures que para estados puros \'{e} dado por $\mathcal{L}\left(
\left\vert \psi _{1}\right\rangle ,\left\vert \psi _{2}\right\rangle \right)=
\arccos \left[ \left\vert \left\langle \psi _{1}|\psi _{2}\right\rangle
\right\vert \right] $ \cite{Nielsen:book}.

Para encontrar um limite para o tempo total de evolu\c{c}\~{a}o n\'{o}s
devemos identificar o estado evolu\'{\i}do presente na defini\c{c}\~{a}o de$%
\ E_{\tau }$ como $\left\vert \psi \left( t\right) \right\rangle
=e^{-\vartheta _{k}\left( t\right) }\left\vert E_{k}\left( t\right)
\right\rangle $, para alguma fase $\vartheta _{k}\left( t\right) $, e o
estado inicial como sendo $\left\vert \psi \left( 0\right) \right\rangle
=\left\vert E_{k}\left( 0\right) \right\rangle $. Como o Hamiltoniano que
realiza uma evolu\c{c}\~{a}o superadiabática é dado pela Eq. (\ref{SAfinal}), n\'{o}s temos que%
\begin{eqnarray}
E_{\tau } &=&\frac{1}{\tau }\int_{0}^{\tau }dt\left\vert \left\langle
E_{k}\left( 0\right) |H_{\text{SA}}\left( t\right) |E_{k}\left( t\right)
\right\rangle \right\vert  \notag \\
&=& \frac{1}{\tau }\int_{0}^{\tau }dt\left\vert \left\langle
E_{k}\left( 0\right) |H\left( t\right) + H_{\text{CD}}\left( t\right) |E_{k}\left( t\right)
\right\rangle \right\vert  \notag \\
&=&\frac{1}{\tau }\int_{0}^{\tau }dt\left\vert \varepsilon _{k}\left(
t\right) \left\langle E_{k}\left( 0\right) |E_{k}\left( t\right)
\right\rangle +\left\langle E_{k}\left( 0\right) |H_{\text{CD}}\left( t\right)
|E_{k}\left( t\right) \right\rangle \right\vert \text{ \ ,}
\end{eqnarray}%
onde $\varepsilon _{k}\left( t\right) $ \'{e} a energia associada ao
autoestado $\left\vert E_{k}\left( t\right) \right\rangle $ do Hamiltoniano
adiab\'{a}tico. Usando a desigualdade $\int_{0}^{\tau
}dt\left\vert f\left( t\right) +g\left( x\right) \right\vert \leq
\int_{0}^{\tau }dt\left\vert f\left( t\right) \right\vert +\int_{0}^{\tau
}dt\left\vert g\left( t\right) \right\vert $ n\'{o}s obtemos%
\begin{eqnarray}
E_{\tau } &\leq &\frac{1}{\tau }\int_{0}^{\tau }dt\left\vert E_{0}\left(
t\right) \left\langle E _{0}\left( 0\right) |E _{0}\left( t\right)
\right\rangle \right\vert +\frac{\hbar }{\tau }\int_{0}^{\tau }dt\left\vert
\left\langle E _{0}\left( 0\right) |\partial _{t}{E }_{0}\left(
t\right) \right\rangle \right\vert  \notag \\
&&+\frac{\hbar }{\tau }\int_{0}^{\tau }dt\left\vert \left\langle E
_{0}\left( t\right) |\partial _{t}{E }_{0}\left( t\right) \right\rangle
\left\langle E _{0}\left( 0\right) |E _{0}\left( t\right)
\right\rangle \right\vert \text{ \ .}
\end{eqnarray}

Agora, por simplicidade, fazemos a seguinte mudan\c{c}a de vari\'{a}vel\ $%
s=t/\tau $\ nas integrais para escrever%
\begin{equation}
E_{\tau }\leq \eta _{1}+\frac{\eta _{2}+\eta _{3}}{\tau }\text{ \ ,}
\label{Etau}
\end{equation}%
com as seguintes defini\c{c}\~{o}es $\eta _{1}=\int_{0}^{1}ds\left\vert
E_{0}\left( s\right) \left\langle E _{0}\left( 0\right) |E
_{0}\left( s\right) \right\rangle \right\vert ,$ $\eta _{2}=\hbar
\int_{0}^{1}ds\left\vert \left\langle E _{0}\left( 0\right) |\partial
_{s}E _{0}\left( s\right) \right\rangle \right\vert $ e $\eta
_{3}=\hbar \int_{0}^{1}ds\left\vert \left\langle E _{0}\left( s\right)
|\partial _{s}E _{0}\left( s\right) \right\rangle \left\langle E
_{0}\left( 0\right) |E _{0}\left( s\right) \right\rangle \right\vert $.
N\'{o}s sabemos ainda que podemos escrever $E_{0}\left( s\right) =\hbar \omega
g_{0}\left( s\right) $, para alguma fun\c{c}\~{a}o admensional $g_{0}\left(
s\right) $, assim n\'{o}s definimos as quantidades $\chi \equiv \eta
_{2}+\eta _{3}$\ e $\eta _{1}=\omega \eta $, com $\eta=\hbar
\int_{0}^{1}ds\left\vert g\left( s\right) \left\langle E _{0}\left( 0\right) |E
_{0}\left( s\right) \right\rangle \right\vert $. Usando a Eq. $%
\left( \ref{QSL}\right) $ e a quantidade $E_{\tau }$ determinada pela Eq. $\left( \ref{Etau}\right) $ para escrever%
\begin{equation}
\eta \,\omega \tau +\chi \geq \hbar \left\vert \cos \mathcal{L}\left( E
_{0}\left( 0\right) ,E _{0}\left( 1\right) \right) -1\right\vert \text{
\ ,}  \label{dqsl.2}
\end{equation}

Ent\~{a}o, para determinar um limite para o tempo $\tau $ n\'{o}s precisamos
analisar a equa\c{c}\~{a}o acima. Deixe-nos iniciar analisando com cuidado a
quantidade $\chi $. Primeiro note que, por defini\c{c}\~{a}o, n\'{o}s temos $%
\chi \geq \eta _{2}$, portanto%
\begin{eqnarray}
\chi \geq \eta _{2}\geq \hbar \int_{0}^{1}ds\left\vert d_{s}\left\vert
\left\langle E _{0}\left( 0\right) |E _{0}\left( s\right)
\right\rangle \right\vert \right\vert & \geq & \hbar \left\vert
\int_{0}^{1}ds\left( d_{s}\left\vert \left\langle E _{0}\left( 0\right)
|E _{0}\left( s\right) \right\rangle \right\vert \right) \right\vert \nonumber \\
&=&\hbar \left\vert \cos \mathcal{L}\left( E
_{0}\left( 0\right) ,E _{0}\left( 1\right) \right) -1\right\vert \text{ \ ,}
\end{eqnarray}
onde na primeira e segunda desigualdade usamos $\left\vert \left(
d_{t}\left\vert \left\langle \psi \left( 0\right) |\psi \left( t\right)
\right\rangle \right\vert \right) \right\vert \leq \left\vert d_{t}\left[
\left\langle \psi \left( 0\right) |\psi \left( t\right) \right\rangle \right]
\right\vert $ e $\int_{0}^{\tau }dt\left\vert f\left( t\right) \right\vert
\geq \left\vert \int_{0}^{\tau }dtf\left( t\right) \right\vert $,
respectivamente. Da defini\c{c}\~{a}o da m\'{e}trica de Bures n\'{o}s
ficamos com $\chi \geq \hbar \left\vert \cos \mathcal{L}\left( E
_{0}\left( 0\right) ,E _{0}\left( 1\right) \right) -1\right\vert $.
Esse resultado prop\~{o}e que qualquer quantidade $\omega \tau >0$ \'{e}
admiss\'{\i}vel e n\~{a}o viola a desigualdade na Eq. $\left( %
\ref{dqsl.2}\right) $, sendo compatível inclusive com o limite $\omega \tau \rightarrow 0$.

Nossa análise sugere, portanto, que evoluções contra-diabáticas podem ser realizadas em intervalos de tempo arbitrariamente curtos e que independem dos estados inicial e final da evolução. Em evolu\c{c}\~{o}es adiab\'{a}ticas tal limite n\~{a}o \'{e} verificado para \textit{gap}'s finitos,
uma vez que em tais evolu\c{c}\~{o}es o tempo de evolu\c{c}\~{a}o \'{e} da
ordem de $\tau _{\text{Ad}}\propto 1/\omega ^{n}$, onde $n\in \mathbb{N}^{+}$ \cite%
{Sarandy:04,Messia:Book,Teufel:Book,Jansen:07}. Em conclusão, se desejamos acelerar evoluções adiabáticas usando atalhos para adiabaticidade via Hamiltonianos contra-diabáticos o tempo de evolução é, segundo propõe a análise do QSL, arbitrariamente pequeno. No entanto, \'{e} de se esperar
que esse ganho sobre o tempo de evolu\c{c}\~{a}o n\~{a}o venha a custo zero,
alguma outra quantidade f\'{\i}sica deve estar sendo respons\'{a}vel por
moderar esse tempo de evolu\c{c}\~{a}o.

\subsubsection{Custo energ\'{e}tico} \label{SubEner}

Quem \'{e} o respons\'{a}vel por moderar o tempo de evolu\c{c}\~{a}o
superadiab\'{a}tica? Essa \'{e} a pergunta que est\'{a} intr\'{\i}nseca aos
resultados apresentados na se\c{c}\~{a}o anterior e que desejamos responder
agora.

Deixe-nos analisar o \textit{custo m\'{e}dio da energia} durante a evolu\c{c}%
\~{a}o do sistema definindo a quantidade $\Sigma \left( \tau \right) $ dada
por%
\begin{equation}
\Sigma _{\text{SA}}\left( \tau \right) =\frac{1}{\tau }\int_{0}^{\tau }\left\Vert {H%
}_{\text{SA}}\left( t\right) \right\Vert dt\text{ \ ,}  \label{cost1.xx}
\end{equation}%
para evolu\c{c}\~{o}es superadiab\'{a}ticas. Para calcular a quantidade $\left\Vert {H}_{\text{SA}}\left( t\right) \right\Vert $
n\'{o}s adotamos novamente a norma de Hilbert-Schmidt. Ent\~{a}o, ficamos com
\begin{equation}
\Sigma _{\text{SA}}\left( \tau \right) =\frac{1}{\tau }\int_{0}^{\tau }\sqrt{\text{Tr}%
\left[ {H}_{\text{SA}}^{2}\left( t\right) \right] }dt=\frac{1}{\tau }\int_{0}^{\tau
}\sqrt{\text{Tr}\{\left[ {H}\left( t\right) +{H}_{\text{CD}}\left( t\right) \right] ^{2}\}}%
dt\text{ \ .}  \label{cost1.2}
\end{equation}

Agora deixe-nos determinar o tra\c{c}o calculando-o na base de autoestados
do Hamiltoniano ${H}\left( t\right) $. Primeiramente, escrevemos%
\begin{equation}
\text{Tr}\{\left[ {H}\left( t\right) +{H}_{\text{CD}}\left( t\right) \right] ^{2}\}=\text{Tr}[{H}%
^{2}\left( t\right) +{H}_{\text{CD}}^{2}\left( t\right) ]+\text{Tr}[\left\{ {H}\left(
t\right) ,{H}_{\text{CD}}\left( t\right) \right\} ]\text{ \ ,}  \label{costa}
\end{equation}%
onde $\left\{ A,B\right\} =AB+BA$ denota o anti-comutador entre os
operadores $A$ e $B$. Devido a propriedade c\'{\i}clica do tra\c{c}o,
podemos escrever $\text{Tr}[\left\{ {H}\left( t\right) ,{H}_{\text{CD}}\left( t\right)
\right\} ]=2\text{Tr}[{H}\left( t\right) {H}_{\text{CD}}\left( t\right) ]$. Assim%
\begin{eqnarray}
\text{Tr}[{H}\left( t\right) {H}_{\text{CD}}\left( t\right) ] &=&\sum_{n}\langle
E_{n}\left( t\right) |{H}\left( t\right) {H}_{\text{CD}}\left( t\right)
|E_{n}\left( t\right) \rangle  \notag \\
&=&\sum_{n}\varepsilon _{n}(t)\langle E_{n}\left( t\right) |{H}_{\text{CD}}\left(
t\right) |E_{n}\left( t\right) \rangle \text{ \ ,}  \label{TrHCDH}
\end{eqnarray}%
onde usamos $\langle E_{n}\left( t\right) |{H}\left( t\right) =\langle
E_{n}\left( t\right) |\varepsilon _{n}(t)$. Agora escrevendo ${H}_{\text{CD}}\left(
t\right) $ a partir da Eq. $\left( \ref{HCDgene}\right) $, temos%
\begin{eqnarray*}
\langle E_{n}\left( t\right) |{H}_{\text{CD}}\left( t\right) |E_{n}\left( t\right)
\rangle &=&i\hbar \sum_{n}\langle \dot{E}_{k}\left( t\right) |E_{k}\left(
t\right) \rangle \left\langle E_{n}\left( t\right) |E_{k}\left( t\right)
\right\rangle \left\langle E_{k}\left( t\right) |E_{n}\left( t\right)
\right\rangle \\
&&+i\hbar \sum_{n}\langle E_{n}\left( t\right) |\dot{E}_{k}\left( t\right)
\rangle \left\langle E_{k}\left( t\right) |E_{n}\left( t\right) \right\rangle
\\
&=&i\hbar \left[ \langle \dot{E}_{n}\left( t\right) |E_{n}\left( t\right)
\rangle +\langle E_{n}\left( t\right) |\dot{E}_{n}\left( t\right) \rangle %
\right] \\
&=&i\hbar \frac{d}{dt}\left[ \langle E_{n}\left( t\right) |E_{n}\left(
t\right) \rangle \right] \text{ \ .}
\end{eqnarray*}

Usando que $\langle E_{n}\left( t\right) |E_{n}\left( t\right) \rangle =1$
para todo $t$, ent\~{a}o conclu\'{\i}mos que%
\begin{equation}
\langle E_{n}\left( t\right) |{H}_{\text{CD}}\left( t\right) |E_{n}\left( t\right)
\rangle =0\text{ \ }\forall \text{ }n\text{ \ ,}  \label{DiagHcd}
\end{equation}%
o que mostra que na base de autoestados de ${H}\left( t\right) $ o termo
contra-diabatico ${H}_{\text{CD}}\left( t\right) $ tem todos os elementos da
diagonal nulos. Usando o resultado da equação acima na Eq. (\ref{TrHCDH}), encontramos que $\text{Tr}[{H}\left( t\right) {H}_{\text{CD}}\left(
t\right) ]=0$, consequentemente $\text{Tr}[\left\{ {H}\left( t\right) ,{H}%
_{\text{CD}}\left( t\right) \right\} ]=0$. Portanto a Eq. $\left( \ref%
{costa}\right) $ se reduz \`{a}%
\begin{equation*}
\text{Tr}\{\left[ {H}\left( t\right) +{H}_{\text{CD}}\left( t\right) \right]
^{2}\}=\sum_{n}E_{n}^{2}\left( t\right) +\text{Tr}[{H}_{\text{CD}}^{2}\left( t\right) ]%
\text{ \ ,}
\end{equation*}%
onde usamos que $\text{Tr}[{H}^{2}\left( t\right) ]=\sum_{n}E_{n}^{2}\left(
t\right) $. Para determinar $\text{Tr}[{H}_{\text{CD}}^{2}\left( t\right) ]$, deixe-nos
escrever ${H}_{\text{CD}}^{2}\left( t\right) $ explicitamente como%
\begin{eqnarray*}
{H}_{\text{CD}}^{\dag }\left( t\right) {H}_{\text{CD}}\left( t\right) &=&\hbar
^{2}\sum_{n}|\langle \dot{E}_{n}\left( t\right) |E_{n}\left( t\right)
\rangle |^{2}|E_{n}\left( t\right) \rangle \left\langle E_{n}\left( t\right)
\right\vert \\
&&+\hbar ^{2}\sum_{n,m}\langle \dot{E}_{n}\left( t\right) |\dot{E}_{m}\left(
t\right) \rangle |E_{n}\left( t\right) \rangle \left\langle E_{m}\left(
t\right) \right\vert \\
&&+\hbar ^{2}\sum_{n,m}\langle \dot{E}_{n}\left( t\right) |E_{n}\left(
t\right) \rangle \langle \dot{E}_{n}\left( t\right) |E_{m}\left( t\right)
\rangle \left\vert E_{n}\left( t\right) \right\rangle \left\langle
E_{m}\left( t\right) \right\vert \\
&&+\hbar ^{2}\sum_{n,m}\langle E_{n}\left( t\right) |\dot{E}_{n}\left(
t\right) \rangle \langle E_{n}\left( t\right) |\dot{E}_{m}\left( t\right)
\rangle \left\vert E_{n}\left( t\right) \right\rangle \left\langle
E_{m}\left( t\right) \right\vert \text{ \ .}
\end{eqnarray*}%

Ent\~{a}o calculando os elementos diagonais da matriz ${H}_{\text{CD}}^{\dag
}\left( t\right) {H}_{\text{CD}}\left( t\right) $ na base de autoestados de ${H}%
\left( t\right) $ n\'{o}s obtemos%
\begin{equation*}
\langle E_{k}\left( t\right) |{H}_{\text{CD}}^{\dag }\left( t\right) {H}_{\text{CD}}\left(
t\right) |E_{k}\left( t\right) \rangle =\hbar ^{2}[\langle \dot{E}_{k}\left(
t\right) |\dot{E}_{k}\left( t\right) \rangle -|\langle E_{k}\left( t\right) |%
\dot{E}_{k}\left( t\right) \rangle |^{2}]\text{ \ .}
\end{equation*}%

Portanto, $\text{Tr}[{H}_{\text{CD}}^{2}\left( t\right) ]$ fica dado por 
\begin{equation}
\text{Tr}[{H}_{\text{CD}}^{2}\left( t\right) ]=\hbar ^{2}\sum_{n}[\langle \dot{E}%
_{n}\left( t\right) |\dot{E}_{n}\left( t\right) \rangle -|\langle
E_{n}\left( t\right) |\dot{E}_{n}\left( t\right) \rangle |^{2}]\text{ \ .}
\label{Mu}
\end{equation}

Em conclus\~{a}o, o custo energ\'{e}tico em evolu\c{c}\~{o}es superadiab\'{a}%
tics \'{e} calculado usando

\begin{equation}
\Sigma _{\text{SA}}\left( \tau \right) =\frac{1}{\tau }\int_{0}^{\tau }\sqrt{%
\sum_{m}\left[ E_{m}^{2}\left( t\right) +\hbar ^{2}\mu _{m}\left( t\right) %
\right] }dt\text{ \ ,}  \label{cost.2}
\end{equation}%
onde $\mu _{m}\left( t\right) =\left\langle \dot{m}\left( t\right) |\dot{m}%
\left( t\right) \right\rangle -\left\vert \left\langle m\left( t\right) |%
\dot{m}\left( t\right) \right\rangle \right\vert ^{2}$ \'{e} a \textit{%
contribui\c{c}\~{a}o superadiab\'{a}tica} e $E_{m}\left( t\right) $ s\~{a}o
os autovalores do Hamiltoniano adiab\'{a}tico. Por outro lado, o custo energ%
\'{e}tico na evolu\c{c}\~{a}o adiab\'{a}tica \'{e}%
\begin{equation}
\Sigma _{\text{Ad}}=\int_{0}^{1}\sqrt{\sum_{m}E_{m}^{2}\left( s\right) }ds\text{ \ ,%
}  \label{adcost}
\end{equation}%
onde n\'{o}s parametrizamos a integral usando $s=t/\tau $. Ent\~{a}o usando
a mesma parametriza\c{c}\~{a}o na Eq. $\left( \ref{cost.2}%
\right) $ n\'{o}s obtemos que

\begin{equation}
\Sigma _{\text{SA}}\left( \tau \right) =\int_{0}^{1}\sqrt{\sum_{m}\left[
E_{m}^{2}\left( s\right) +\hbar ^{2}\frac{\mu _{m}\left( s\right) }{\tau ^{2}%
}\right] }ds  \label{sadcost}
\end{equation}%
\'{e} o custo para implementar a evolu\c{c}\~{a}o superadiab\'{a}tica. Com
isso primeiro podemos verificar que $\Sigma _{\text{SA}}\left( \tau \right) >\Sigma
_{\text{Ad}}$, mostrando assim que \textit{sempre} teremos um custo adicional para
imitar uma evolu\c{c}\~{a}o adiab\'{a}tica via Hamiltonianos contra-diabáticos. Uma segunda
conclus\~{a}o \'{e} que, como esperado, no limite adiab\'{a}tico $\tau
\rightarrow \infty $ n\'{o}s recuperamos o custo energ\'{e}tico associado 
\`{a} evolu\c{c}\~{a}o adiab\'{a}tica, isto \'{e} $\lim_{\tau \rightarrow
\infty }\Sigma _{\text{SA}}\left( \tau \right) \rightarrow \Sigma _{\text{Ad}}$.

\newpage

\section{Computação Quântica Universal via TQ Superadiab\'{a}tico} \label{SGT}

Motivados pela proposta do TQ adiab\'{a}tico
de portas, n\'{o}s propomos o TQ superadiab\'{a}tico como um
primitivo para CQ. Neste capítulo nos
derivaremos um atalho via Hamiltonianos contra-diab\'{a}ticos para o
TQ adiabático, em seguida estenderemos os resultados afim de mostrar
como o modelo pode ser usado para implementar portas qu\^{a}nticas.

\subsection{TQ Superadiab\'{a}tico} \label{SuperTele}

O conhecimento do spectrum e autoestados do Hamiltoniano adiab\'{a}tico \'{e}
de crucial import\^{a}ncia para derivar um atalho via Hamiltonianos
contra-diab\'{a}ticos. Nesse caso precisamos conhecer o conjunto de oito
autoestados do Hamiltoniano $H\left( s\right) $ da Eq. $\left( %
\ref{HamiTele}\right) $. Na se\c{c}\~{a}o \ref{CompViaTQ} n\'{o}s fizemos uso das
simetrias do Hamiltoniano adiab\'{a}tico para escrev\^{e}-lo na forma bloco
diagonal e com isso facilitar a an\'{a}lise. Isso foi poss\'{\i}vel devido
conhecermos explicitamente tal Hamiltoniano, mas teria alguma maneira de
fazer a mesma an\'{a}lise na sua vers\~{a}o superadiab\'{a}tica? A resposta 
\'{e} sim, e isso \'{e} assegurado pelo seguinte teorema (veja demonstra\c{c}%
\~{a}o no Ap\^{e}ndice \ref{ProffTeoSime}).

\begin{theorem} \label{TeoSime}
Seja $H_{0}\left( t\right) $ um Hamiltoniano adiab\'{a}tico tal que $\left[
H_{0}\left( t\right) ,\Pi _{z}\right] =0$ e $\left[ H_{0}\left( t\right)
,\Pi _{x}\right] =0$. Ent\~{a}o o Hamiltoniano superadiab\'{a}tico $%
H_{\text{SA}}\left( t\right) $ associado a $H_{0}\left( t\right) $ tamb\'{e}m
satisfaz$\ \left[ H_{\text{SA}}\left( t\right) ,\Pi _{z}\right] =0$ e $\left[
H_{\text{SA}}\left( t\right) ,\Pi _{x}\right] =0$.
\end{theorem}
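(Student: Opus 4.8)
The plan is to exploit the additive structure $H_{SA}(t) = H_0(t) + H_{CD}(t)$ from (\ref{HSAgene}). Since the hypothesis already supplies $[H_0(t),\Pi_z] = [H_0(t),\Pi_x] = 0$, the whole statement reduces to proving that the counter-diabatic term commutes with both symmetries, i.e. $[H_{CD}(t),\Pi_z] = [H_{CD}(t),\Pi_x] = 0$. I would treat $\Pi_z$ and $\Pi_x$ on equal footing, writing $\Pi$ for either one and using that each is a \emph{time-independent}, unitary and Hermitian operator with $\Pi^2 = \1$, so that $[H_{CD},\Pi]=0$ is equivalent to $\Pi H_{CD}\Pi^\dagger = H_{CD}$.

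The key step is to recast $H_{CD}(t)$ of (\ref{HCDgene}) in a manifestly basis-independent form. Grouping the eigenprojectors $P_n(t) = \sum_a |E_n^a(t)\rangle\langle E_n^a(t)|$ onto the full eigenspace of energy $\varepsilon_n(t)$, one checks that the sum of dyadics $|\dot E_n\rangle\langle E_n|$ together with the diagonal correction in (\ref{HCDgene}) reorganizes into $H_{CD}(t) = i\hbar\sum_n \dot P_n(t)\,P_n(t)$, the sum running over distinct eigenvalues. First I would verify this identity, together with its Hermiticity via $\sum_n \dot P_n = 0$ (obtained by differentiating $\sum_n P_n = \1$); this is the natural parallel-transport form of the Demirplak--Rice term. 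Once $H_{CD}$ is written purely through the spectral projectors of $H_0(t)$, the symmetry argument is immediate: from $[\Pi,H_0(t)] = 0$ and finite dimensionality, each $P_n(t)$ is a polynomial in $H_0(t)$ (Lagrange interpolation on its finite spectrum), hence $[\Pi,P_n(t)] = 0$. Because $\Pi$ is constant in time, $\Pi P_n\Pi^\dagger = P_n$ and $\Pi\dot P_n\Pi^\dagger = \tfrac{d}{dt}(\Pi P_n\Pi^\dagger) = \dot P_n$, so that $\Pi H_{CD}\Pi^\dagger = i\hbar\sum_n\dot P_n P_n = H_{CD}$. Applying this with $\Pi = \Pi_z$ and with $\Pi = \Pi_x$, and restoring the $H_0$ piece, yields the claim.

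The main obstacle I anticipate is the degeneracy of the spectrum, which is in fact \emph{forced} here: since $\Pi_z = ZZZ$ and $\Pi_x = XXX$ anticommute while both commute with $H_0$, no simultaneous eigenbasis of $\Pi_z$ and $\Pi_x$ diagonalizing $H_0$ can exist, so every level is at least doubly degenerate. This is precisely why a naive argument treating $\Pi|E_n\rangle$ as a mere phase multiple of $|E_n\rangle$ breaks down, and why passing to the full-eigenspace projectors $P_n$ is essential: the within-subspace gauge ambiguity of the individual-state formula (\ref{HCDgene}) disappears once one sums over complete degenerate eigenspaces. I would therefore stress that the projector identity holds in the parallel-transport gauge used to build $U\left(t\right)$ in (\ref{operadorU}), and carry out the short verification that the diagonal correction term is exactly what collapses $\sum_n|\dot E_n\rangle\langle E_n|$ into $\sum_n \dot P_n P_n$. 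An equivalent route I could take is a covariance argument based on Proposition \ref{Rotation}: since a fixed unitary $W$ sends eigenstates $|E_n\rangle \mapsto W|E_n\rangle$ with unchanged phases, it sends $U \mapsto W U W^\dagger$ and hence $H_{SA}\mapsto W H_{SA} W^\dagger$; taking $W = \Pi$ with $\Pi H_0\Pi^\dagger = H_0$ then gives $\Pi H_{SA}\Pi^\dagger = H_{SA}$ directly.
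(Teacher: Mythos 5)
Your proposal is correct, but it follows a genuinely different route from the one in Ap\^{e}ndice \ref{ProffTeoSime}. The paper argues directly on matrix elements of $H_{\text{CD}}$ in an eigenbasis adapted to the symmetries: for $\Pi_z$ it uses that each $\vert n(t)\rangle$ carries a definite, time-independent parity so that $\Pi_z\vert \dot n(t)\rangle = (-1)^n\vert \dot n(t)\rangle$, and for $\Pi_x$ it uses that the basis can be chosen so that $\Pi_x$ \emph{permutes} eigenstates, $\Pi_x\vert n(t)\rangle = \vert n'(t)\rangle$ with $\Pi_x\vert\dot n(t)\rangle=\vert\dot n'(t)\rangle$, and then checks equality of the matrix elements $[\Pi_x H_{\text{CD}}]_{kl}$ and $[H_{\text{CD}}\Pi_x]_{kl}$. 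Your argument instead repackages the counter-diabatic term as $H_{\text{CD}}=i\hbar\sum_n \dot P_n P_n$ with $P_n$ the full spectral projectors, and gets the commutation for free from $[\Pi,P_n]=0$ (projectors as polynomials in $H_0$) plus time-independence of $\Pi$. What your approach buys is basis-independence and an explicit treatment of the degeneracy issue, which the paper's proof glosses over: since $\Pi_z$ and $\Pi_x$ anticommute, the adapted basis the paper uses for $\Pi_z$ (simultaneous eigenbasis) and the one it uses for $\Pi_x$ (permutation action on exact basis vectors rather than on linear combinations within a degenerate level) are nontrivial gauge choices that your projector formulation renders unnecessary. The one point you rightly flag, and which must be stated, is that $i\hbar\sum_n\dot P_nP_n$ reproduces Eq. (\ref{HCDgene}) literally only when the intra-level terms $\langle E_n^a\vert\dot E_n^b\rangle$ ($a\neq b$) vanish, i.e. in the parallel-transport gauge; the same caveat applies to your alternative covariance argument via $\Pi H_{\text{SA}}\Pi^\dagger=H_{\text{SA}}$, since uniqueness of the superadiabatic Hamiltonian in the degenerate case holds only modulo that gauge freedom. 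With that caveat made explicit, both your routes are sound and, if anything, tighter than the paper's.
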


Basicamente, o teorema acima assegura que as simetrias de paridade $\Pi _{z}$
e invers\~{a}o de paridade $\Pi _{x}$ se verificadas para o Hamiltoniano
adiab\'{a}tico, tamb\'{e}m o s\~{a}o na vers\~{a}o superadiab\'{a}tica. Ent%
\~{a}o, similarmente ao que foi feito na se\c{c}\~{a}o \ref{CompViaTQ} para o TQ adiabático do estado de um q-bit, a simetria em $\Pi
_{z}$ permite, desde que a base seja adequadamente ordenada, escrever o
Hamiltoniano superadiab\'{a}tico para o TQ na forma bloco
diagonal como%
\begin{equation}
H_{\text{SA}}\left( s\right) =\left[ 
\begin{array}{cc}
H_{\text{SA}}^{+}\left( s\right) & \emptyset \\ 
\emptyset & H_{\text{SA}}^{-}\left( s\right)%
\end{array}%
\right] \text{ \ ,}  \label{DiagonalFormHSA}
\end{equation}%
onde $H_{\text{SA}}^{\pm }\left( s\right) $ \'{e} o Hamiltoniano superadiab\'{a}%
tico associado ao Hamiltoniano adiab\'{a}tico $H_{4\times 4}^{\pm }\left(
s\right) $ dos blocos do Hamiltoniano adiab\'{a}tico da Eq. $%
\left( \ref{DiagonalFormH}\right) $. Por outro lado, a simetria em $\Pi _{x}$
garante que a base pode ser ordenada de forma que $H_{\text{SA}}^{+}\left( s\right)
=H_{\text{SA}}^{-}\left( s\right) $. Portanto o problema de determinar o
Hamiltoniano superadiab\'{a}tico se resume ao problema de determinar o termo
contra-diab\'{a}tico do Hamiltoniano adiab\'{a}tico $4\times 4$ dado pela
Eq. $\left( \ref{MatrixFormH4x4}\right) $.

Escrevendo o Hamiltoniano superadiab\'{a}tico $H_{\text{SA}}^{\pm }\left( s\right)
=H_{4\times 4}^{\pm }\left( s\right) +H_{\text{CD}}^{\pm }\left( s\right) $, onde $%
H_{4\times 4}^{\pm }\left( s\right) $ \'{e} dado pela Eq. $%
\left( \ref{MatrixFormH4x4}\right) $, o Hamitoniano contra-diab\'{a}tico $%
H_{\text{CD}}^{\pm }\left( s\right) $ \'{e} escrito como%
\begin{equation}
H_{\text{CD}}^{\pm }\left( s\right) =\frac{i\hbar }{\tau }%
\sum_{n=0}^{3}|d_{s}E_{n}^{\pm }\left( s\right) \rangle \langle E_{n}^{\pm
}\left( s\right) |\text{ \ ,}  \label{HCDTelepo1}
\end{equation}%
onde $|E_{n}^{\pm }\left( s\right) \rangle $\ s\~{a}o os autoestados
(normalizados) de $H_{4\times 4}^{\pm }\left( s\right) $ e dados por (n\~{a}%
o normalizados)%
\begin{eqnarray}
\left\vert E_{0}^{\pm }\left( s\right) \right\rangle  &=&\left( \frac{\eta
_{i}+\chi }{\eta _{f}},\frac{\left[ \chi -\eta _{f}\right] \left[ \chi +\eta
_{i}\right] }{\eta _{i}\eta _{f}},\frac{\chi - \eta _{f}}{\eta _{i}},1\right) ,
\label{Ca-1.1.a} \\
\left\vert E_{1}^{\pm }\left( s\right) \right\rangle  &=&\left( \frac{\eta
_{i}}{\eta _{f}}-1,-\frac{\eta _{i}}{\eta _{f}},0,1\right)  , \label{Ca-1.1.b}
\\
\left\vert E_{2}^{\pm }\left( s\right) \right\rangle  &=&\left( -\frac{\eta
_{i}}{\eta _{f}},\frac{\eta _{i}}{\eta _{f}}+1,1,0\right)  , \label{Ca-1.1.c}
\\
\left\vert E_{3}^{\pm }\left( s\right) \right\rangle  &=&\left( \frac{\eta
_{i}-\chi }{\eta _{f}},\frac{\eta _{f}-\eta _{i}+\chi }{\eta _{i}-\eta _{f}+\chi },-\frac{\eta _{f}+\chi }{\eta _{i}},1\right) ,
\label{Ca-1.1.d}
\end{eqnarray}
onde $\eta =\eta \left( s\right) $ e $\chi =\chi (s)=[\eta _{i}^{2}\left(
s\right) +\eta _{f}^{2}\left( s\right) ]^{1/2}$. Na Eq. $\left( %
\ref{HCDTelepo1}\right) $ j\'{a} usamos que $\langle d_{s}E_{n}^{\pm }\left(
s\right) |d_{s}E_{n}^{\pm }\left( s\right) \rangle =0$, pois como as fun\c{c}%
\~{o}es de interpola\c{c}\~{a}o $\eta _{i}\left( s\right) $ e $\eta
_{f}\left( s\right) $ s\~{a}o fun\c{c}\~{o}es reais, ent\~{a}o $|E_{0}^{\pm
}\left( s\right) \rangle $ tamb\'{e}m s\~{a}o. A realiza\c{c}\~{a}o do
TQ superadiab\'{a}tico \'{e} portanto poss\'{\i}vel usando o
Hamiltoniano superadiab\'{a}tico da Eq. $\left( \ref%
{DiagonalFormHSA}\right) $.\ Assumindo novamente o esquema apresentado na
Fig. \ref{FigSimpleTQ}, o estado inicial do sistema \'{e} preparado em $\left\vert \psi
\right\rangle _{1}\left\vert \beta _{00}\right\rangle _{23}$ e evolui at\'{e}
o estado final $\left\vert \beta _{00}\right\rangle _{12}\left\vert \psi
\right\rangle _{3}$, passando sempre por autoestados instant\^{a}neos
fundamentais do Hamiltoniano adiab\'{a}tico $H_{\text{SA}}\left( s\right) $ dado na
Eq. $\left( \ref{HamiTele}\right) $.

Para mostrarmos que o TQ superadiab\'{a}tico pode ser usado como um primitivo para computa\c{c}%
\~{a}o qu\^{a}ntica, devemos ser capazes primeiramente de estender os
resultados anteriores para um sistema ainda maior. Dado que \'{e} sempre poss\'{\i}vel realizar o TQ adiabático de um
estado de $n$ q-bits, desde que os recursos exigidos sejam disponibilizados,
agora pretendemos mostrar que tamb\'{e}m \'{e} poss\'{\i}vel teleportar
superadiabaticamente um estado qualquer de $n$ q-bits. A extens\~{a}o para o
TQ do estado de $n$ q-bits \'{e} imediato combinando o
Hamiltoniano da Eq. $\left( \ref{HamAdMult}\right) $ com a seguinte
proposi\c{c}\~{a}o (ver demonstra\c{c}\~{a}o no Ap\^{e}ndice \ref{Apbiparti}).

\begin{proposition} \label{biparti}
Dado um sistema $k$-partido onde o Hamiltoniano que evolui o sistema \'{e}
da forma%
\begin{equation}
H\left( t\right) =\sum_{n=1}^{k}\mathcal{H}_{n}\left( t\right) \text{ \ ,} \label{eqAbiparti}
\end{equation}%
onde $\mathcal{H}_{n}\left( t\right) =(\otimes _{i=1}^{n-1}1_{i})\otimes
H_{n}\left( t\right) \otimes (\otimes _{i=n+1}^{k}1_{i})$ \'{e} o
Hamiltoniano que dirige a $n$-\'{e}sima parti\c{c}\~{a}o do sistema, o
atalho via Hamiltonianos contra-diab\'{a}ticos \'{e} feito por meio do
Hamiltoniano superadiab\'{a}tico%
\begin{equation}
H_{\text{SA}}\left( t\right) =\sum_{n=1}^{k}\mathcal{H}_{n}^{\text{SA}}\left( t\right) 
\text{ \ ,} \label{eqSAbiparti}
\end{equation}
onde $\mathcal{H}_{n}^{\text{SA}}\left( t\right) =(\otimes
_{i=1}^{n-1}1_{i})\otimes H_{n}^{\text{SA}}\left( t\right) \otimes (\otimes
_{i=n+1}^{k}1_{i})$, com $H_{n}^{\text{SA}}\left( t\right) $ sendo o Hamiltoniano superadiab\'{a}tico associado ao $n$-\'{e}simo Hamiltoniano $H_{n}\left( t\right) $.
\end{proposition}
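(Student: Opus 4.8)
The plan is to exploit the non-interacting tensor-product structure of $H(t)=\sum_{n=1}^{k}\mathcal{H}_{n}(t)$ to show that the counter-diabatic term of Eq.~(\ref{HCDgene}), built for the whole system, splits into a sum of local counter-diabatic terms. First I would fix the eigenbasis of $H(t)$. Since the $\mathcal{H}_{n}(t)$ act on distinct tensor factors, the generalized sum result invoked earlier (attributed to Ref.~\cite{Horn:Book}) guarantees that the instantaneous eigenstates of $H(t)$ are products $|E_{\kappa}(t)\rangle=\bigotimes_{n=1}^{k}|E^{(n)}_{\mu_{n}}(t)\rangle$, where $|E^{(n)}_{\mu_{n}}(t)\rangle$ is an eigenstate of $H_{n}(t)$, with eigenvalue $\varepsilon_{\kappa}(t)=\sum_{n}\varepsilon^{(n)}_{\mu_{n}}(t)$. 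I would adopt this product basis as the adiabatic reference frame that defines $H_{\text{CD}}(t)$.

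The central computation is then to insert $|E_{\kappa}(t)\rangle$ into Eq.~(\ref{HCDgene}) and differentiate by the product rule, $|\dot{E}_{\kappa}\rangle=\sum_{m}|E^{(1)}_{\mu_{1}}\rangle\otimes\cdots\otimes|\dot{E}^{(m)}_{\mu_{m}}\rangle\otimes\cdots\otimes|E^{(k)}_{\mu_{k}}\rangle$, so that each derivative strikes exactly one factor. For the first piece $\sum_{\kappa}|\dot{E}_{\kappa}\rangle\langle E_{\kappa}|$, the multi-index sum $\kappa=(\mu_{1},\dots,\mu_{k})$ factorizes: for every factor $n\neq m$ the sum $\sum_{\mu_{n}}|E^{(n)}_{\mu_{n}}\rangle\langle E^{(n)}_{\mu_{n}}|$ collapses to the identity $\1_{n}$ by completeness, leaving only the local operator $\sum_{\mu_{m}}|\dot{E}^{(m)}_{\mu_{m}}\rangle\langle E^{(m)}_{\mu_{m}}|$ on factor $m$, dressed with identities. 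For the diagonal piece I would first note that $\langle\dot{E}_{\kappa}|E_{\kappa}\rangle=\sum_{m}\langle\dot{E}^{(m)}_{\mu_{m}}|E^{(m)}_{\mu_{m}}\rangle$, the cross terms vanishing because each factor stays normalized, and the same completeness argument again reduces every $n\neq m$ factor to $\1_{n}$.

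Collecting the two pieces, each value of $m$ reassembles precisely the bracketed combination $\sum_{\mu_{m}}\bigl(|\dot{E}^{(m)}_{\mu_{m}}\rangle\langle E^{(m)}_{\mu_{m}}|+\langle\dot{E}^{(m)}_{\mu_{m}}|E^{(m)}_{\mu_{m}}\rangle\,|E^{(m)}_{\mu_{m}}\rangle\langle E^{(m)}_{\mu_{m}}|\bigr)$, which is exactly $H_{m}^{\text{CD}}(t)/(i\hbar)$. Hence $H_{\text{CD}}(t)=\sum_{m}\mathcal{H}_{m}^{\text{CD}}(t)$, with $\mathcal{H}_{m}^{\text{CD}}$ the local counter-diabatic term tensored with identities. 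Adding $H(t)=\sum_{m}\mathcal{H}_{m}(t)$ finally gives $H_{\text{SA}}(t)=\sum_{m}\bigl(\mathcal{H}_{m}(t)+\mathcal{H}_{m}^{\text{CD}}(t)\bigr)=\sum_{m}\mathcal{H}_{m}^{\text{SA}}(t)$, since $H_{m}^{\text{SA}}(t)=H_{m}(t)+H_{m}^{\text{CD}}(t)$ is the superadiabatic Hamiltonian of the $m$-th partition, which is the claim.

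The main obstacle I anticipate is not the algebra but the consistency of the eigenbasis choice in the presence of degeneracy. The total Hamiltonian is generically highly degenerate even when each $H_{n}$ is not, exactly as happens for the teleportation Hamiltonians of Sec.~\ref{CQATele}, so Eq.~(\ref{HCDgene}) is well defined only once a smooth orthonormal eigenbasis is selected within each degenerate level. The crux is therefore to justify that the product basis furnishes such a smooth choice and that the completeness relations $\sum_{\mu_{n}}|E^{(n)}_{\mu_{n}}\rangle\langle E^{(n)}_{\mu_{n}}|=\1_{n}$ may be applied factor by factor; once this is granted, the factorization above is forced.
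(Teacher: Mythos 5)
Your proposal follows essentially the same route as the paper's proof in Apêndice \ref{Apbiparti}: adopt the product eigenbasis $|E_{\{n_k\}}(t)\rangle=\bigotimes_j|E^j_{n_j}(t)\rangle$, differentiate by the product rule, use the per-factor completeness relation to collapse the undifferentiated factors to identities, and reassemble each local counter-diabatic term $\mathcal{H}^{\text{CD}}_a(t)$. Your closing remark about selecting a smooth eigenbasis in the presence of degeneracy is a legitimate point of rigor that the paper's proof passes over silently, but it does not change the argument.
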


A partir do Hamiltoniano da Eq. $\left( \ref{HamAdMult}\right) $ e
da proposi\c{c}\~{a}o acima a transi\c{c}\~{a}o da vers\~{a}o adiab\'{a}tica
para a superadiab\'{a}tica do TQ do estado de $n$ q-bits
torna-se simples. Assim como no caso adiab\'{a}tico, n\'{o}s precisamos
conhecer apenas o Hamiltoniano contra-diab\'{a}tico no caso do
TQ do estado de um q-bit.

\subsection{TQ Superadiab\'{a}tico de portas} \label{SuperGateTele}

Para mostrar que o TQ superadiab\'{a}tico pode ser usado
para realizar CQ universal, precisamos mostrar que este \'{e} capaz de implementar
portas qu\^{a}nticas de $1$ e $2$ q-bits, ou at\'{e} portas de mais de $2$
q-bits. Primeiramente, deixe-nos enunciar mais um teorema, cuja demonstra\c{c}\~{a}o
pode ser verificada no Ap\^{e}ndice \ref{ProffTeoPorSup}.

\begin{theorem} \label{TeoPorSup}
Sejam dois Hamiltonianos $H\left( t\right) $ e $H\left( t,G\right) $, com $H\left( t,G\right) =GH\left( t\right) G^{\dag }$ para
algum unit\'{a}rio $G$. Se conhecemos o Hamiltoniano superadiab\'{a}tico $%
H_{\text{SA}}\left( t\right) $ associado a $H\left( t\right) $, ent\~{a}o%
\begin{equation}
H_{\text{SA}}\left( t,G\right) =GH_{\text{SA}}\left( t\right) G^{\dag }
\label{HSAUgenerico}
\end{equation}%
\'{e} o Hamiltoniano superadiab\'{a}tico associado ao Hamiltoniano $H\left(
t,G\right) $.
\end{theorem}

Independentemente da evolu\c{c}\~{a}o que nos proposmos a fazer com o
Hamiltoniano $H\left( t,G\right) $, o teorema acima afirma que se
conseguirmos escrever $H\left( t,G\right) $ como uma rota\c{c}\~{a}o unit%
\'{a}ria de um outro Hamiltoniano $H\left( t\right) $, onde conhecemos o
Hamiltoniano superadiab\'{a}tico associado a este, ent\~{a}o o conhecimento
de $H_{\text{SA}}\left( t,G\right) $ \'{e} imediato e segue da Eq. $%
\left( \ref{HSAUgenerico}\right) $. Em particular, o teorema acima tem grande utilidade para mostrar que podemos realizar o TQ superadiab\'{a}tico de portas. 

Para o TQ adiabático de portas de $n$ q-bits nós já sabemos que, dado um unit\'{a}rio qualquer $U_{n}$ de $n$ q-bits, o TQ adiabático da porta $U_{n}$ pode ser feito pelo Hamiltoniano adiabático $H\left( s,U_{n} \right)=U_{n} H_{\text{mult}} \left( s\right) U_{n}^{\dag }$, onde $H_{\text{mult}} \left( s\right)$ é dado pela Eq. (\ref{HamAdMult}). Assim, é possível usar o Teorema \ref{TeoPorSup} para mostrar que
\begin{equation}
H^{\text{mult}}_{\text{SA}}\left( s,U_{n}\right) =U_{n} H^{\text{mult}}_{\text{SA}}\left( t\right) U_{n}^{\dag }\text{ \ ,}
\end{equation}%
deve ser o Hamiltoniano superadiabático que implementará portas de $n$ q-bits superadiabaticamente. Onde $H^{\text{mult}}_{\text{SA}}\left( s\right)$ é o Hamiltoniano superadiabático que realiza o TQ do estado de $n$ q-bits discutido anteriormente, portanto conhecido. O teorema acima \'{e} v\'{a}lido para qualquer Hamiltoniano dependente do
tempo $H\left( t\right) $ e qualquer unit\'{a}rio $G$, sendo que a \'{u}nica
exig\^{e}ncia \'{e} feita sobre $G$ que deve ser independente do tempo e
satisfa\c{c}a $GG^{\dag }=1$. Em consequ\^{e}ncia disso, o protocolo
descrito aqui nos fornece um modelo universal de CQ.

\subsection{Complementaridade Energia-Tempo} \label{CompSuperTele}

Aqui n\'{o}s faremos a estudo do custo energ\'{e}tico para
realizar CQ via TQ superadiab\'{a}tico. A forma como o custo
energ\'{e}tico est\'{a} definido na Eq. (\ref{cost1ADG.xx}) revela que n\~{a}o
deve haver diferen\c{c}a entre o custo de usar o TQ superadiab%
\'{a}tico para teleportar apenas os estados ou as portas junto com os
estados. Portanto os resultados obtidos para o TQ simples
(apenas estados) podem ser aplicados ao caso do TQ de portas.

Para fazer a an\'{a}lise do custo energ\'{e}tico, deixe-nos inicialmente an%
\'{a}lisar o caso do TQ de um estado desconhecido de um q-bit.
Dado o Hamiltoniano superadiab\'{a}tico da Eq. $\left( \ref%
{DiagonalFormHSA}\right) $, devido sua forma bloco diagonal podemos escrever%
\begin{equation}
\text{Tr}\left[ H_{\text{SA}}^{2}\left( s\right) \right] =2\text{Tr}\left\{ \left[
H_{\text{SA}}^{+}\left( s\right) \right] ^{2}\right\} \text{ \ ,}
\end{equation}%
consequentemente $\left\Vert {H}_{\text{SA}}\left( t\right) \right\Vert =\sqrt{2}%
\left\Vert H_{\text{SA}}^{+}\left( s\right) \right\Vert $. Definindo o
custo energ\'{e}tico de um hipot\'{e}tico Hamiltoniano $H_{\text{SA}}^{+}\left(
s\right) $ como $\Sigma ^{+}\left( \tau \right) $, o custo energ\'{e}tico associado ao Hamiltoniano ${H}_{\text{SA}}\left(t\right) $ é dado por
\begin{equation}
\Sigma _{sing}\left( \tau \right) =\sqrt{2}\Sigma ^{+}\left( \tau \right) \text{ \ .} \label{SigmaSing}
\end{equation}

\begin{figure}[!bt]
\centering
\includegraphics[width=11cm]{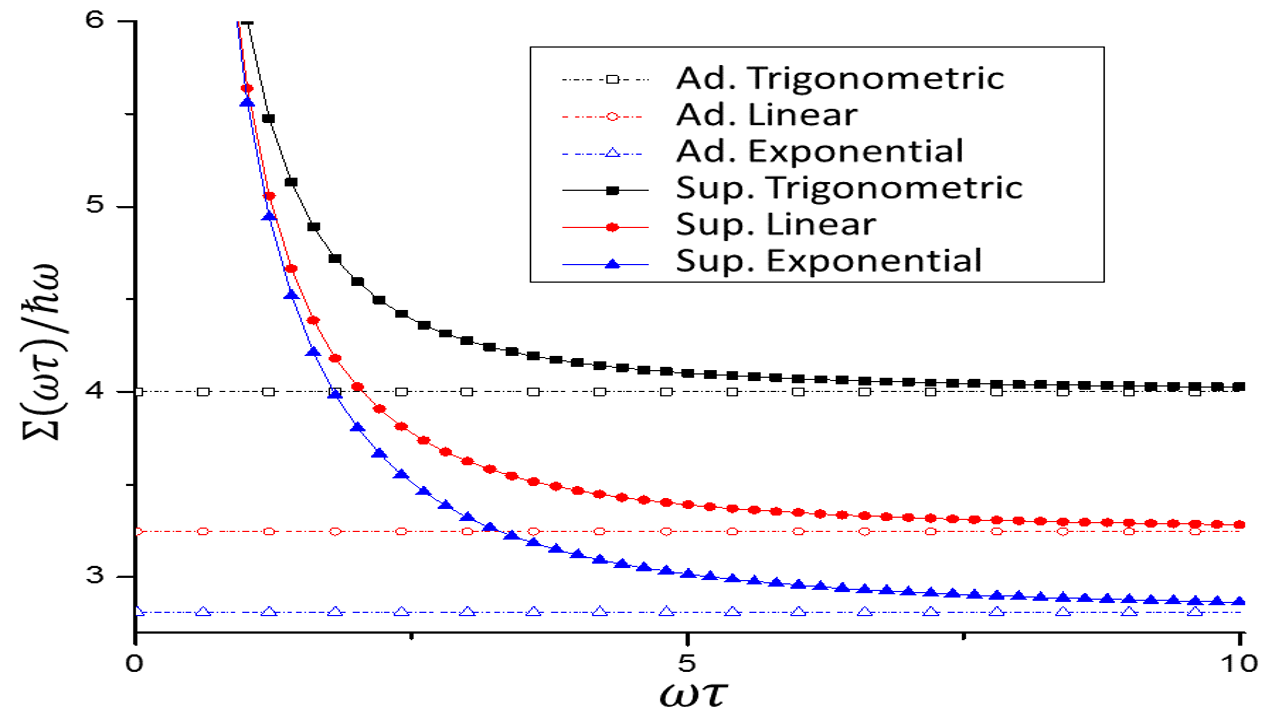}
\caption{Gr\'{a}fico da quantidade $\Sigma_{sing}\left( \tau \right)$ para algumas interpolações específicas (curvas com figuras geométricas preenchidas). As linhas horizontais estão indicando o custo energético no limite de tempo adiabático $\tau\rightarrow\infty$.}
\label{graph3}
\end{figure}

Portanto, o problema se resume ao problema de determinar o custo
energ\'{e}tico $\Sigma ^{+}\left( \tau \right) $. A partir da Eq. (\ref{sadcost}) e do conjunto de autoestados $\left\vert E_{n}^{+}\left( s\right)
\right\rangle $ e autovalores $\varepsilon _{m}\left( s\right) $ de $%
H^{+}\left( s\right) $, dados pelas Eqs. $\left( \ref{Ca-1.1.a}-%
\ref{Ca-1.1.d}\right) $ e $\left( \ref{e0tele}-\ref{e2tele}\right) $,
respectivamente,\ n\'{o}s temos
\begin{equation}
\Sigma ^{+}\left( \tau \right) =\int_{0}^{1}\sqrt{\sum_{m}\left\{
[\varepsilon _{m}\left( s\right) ]^{2}+\hbar ^{2}\frac{\mu _{m}^{+}\left(
s\right) }{\tau ^{2}}\right\} }ds\text{ \ ,}
\end{equation}%
onde $\mu _{m}^{+}\left( s\right) =\left\vert \langle d_{s}E_{n}^{+}\left(
s\right) |d_{s}E_{n}^{+}\left( s\right) \rangle \right\vert $. Devido a
forma complexa dos autoestados para determinadas escolhas das fun\c{c}\~{o}%
es $\eta _{i}\left( s\right) $ e $\eta _{f}\left( s\right) $ a integral
acima \'{e} dificilmente sol\'{u}vel, quando sol\'{u}vel analiticamente. N%
\'{o}s escolhemos tr\^{e}s diferentes interpola\c{c}\~{o}es espec\'{\i}%
ficas, a saber a interpola\c{c}\~{a}o linear com $\eta _{i}^{\text{lin}}\left(
s\right) =1-s$ e\ $\eta _{f}^{\text{lin}}\left( s\right) =s$, trigonom\'{e}trica
onde $\eta _{i}^{\text{tri}}\left( s\right) =\cos \left( \pi s/2\right) $ e\ $\eta
_{f}^{\text{tri}}\left( s\right) =\sin \left( \pi s/2\right) $ e a interpola\c{c}%
\~{a}o exponencial em que $\eta _{i}^{\text{exp}}\left( s\right) =(e^{1-s}-1)/(e-1)$
e $\eta _{f}^{\text{exp}}\left( s\right) =(e^{s}-1)/(e-1)$. 
N\'{o}s determinamos numericamente o
custo $\Sigma _{sing}\left( \tau \right) $ que foi obtido na Eq. (\ref{SigmaSing}) para as interpola%
\c{c}\~{o}es citadas e o resultado encontra-se no gr\'{a}fico da Fig. \ref{graph3}.

Como mencionado anteriormente, o gr\'{a}fico representado na Fig. \ref{graph3} \'{e}
exatamente o custo energ\'{e}tico para implementarmos unit\'{a}rios de $1$
q-bit. Além disso, n\'{o}s tamb\'{e}m podemos estipular
o custo energ\'{e}tico esperado para implementar portas de $n$ q-bits
analisando o custo para realizar o TQ de $n$ q-bits. No caso do
TQ de portas de $n$ q-bits n\'{o}s temos que (veja Apêndice \ref{ProffEquaCostN})
\begin{equation}
\Sigma _{n}\left( \tau \right) =\sqrt{2^{3\left( n-1\right) }n}\Sigma
_{sing}\left( \tau \right) \label{CostTeleN}
\end{equation}%
\'{e} o custo energ\'{e}tico requerido para tal a\c{c}\~{a}o. \'{E} evidente
o crescimento do custo energ\'{e}tico com o n\'{u}mero de q-bits do sistema
onde a porta $U$ dever\'{a} atuar bastante significativo. Em certos casos
isso n\~{a}o \'{e} um problema, pois sabe-se que se somos capazes de
implementar portas de $1$ q-bit e portas controladas por $1$ q-bit, ent\~{a}o
podemos realizar CQ universal. 
No caso do conjunto universal composto por rota\c{c}\~{o}es de 1 q-bit e CNOT, temos os
custos $\Sigma _{1}\left( \tau \right) =\Sigma _{sing}\left( \tau \right) $
e $\Sigma _{2}\left( \tau \right) =4\Sigma _{sing}\left( \tau \right) $,
respectivamente. Em outros desing n\'{o}s precisamos ir at\'{e} portas
controladas controladas por $2$ q-bits, como o conjunto composto por Hadamard e Toffoli,
onde tais portas podem ser implementadas a um custo $\Sigma _{1}\left( \tau
\right) $ e  $\Sigma _{n}\left( \tau \right) =8\sqrt{3}\Sigma _{sing}\left(
\tau \right) $, respectivamente.

\newpage

\section{Evolu\c{c}\~{o}es Superadiab\'{a}ticas Controladas e CQ Universal} \label{ESCandUQC}

Nós já discutimos anteriormente
sobre EAC e sua utilidade em nos permitir implementar qualquer porta
controlada por $n$-q-bits. Agora mostraremos como realizar CQ universal de forma
superadiab\'{a}tica. Assim como no uso de EAC para realizar CQ universal, aqui n\'{o}%
s propomos um modelo híbrido de CQ, onde n\'{o}%
s simulamos a funcionalidade de portas qu\^{a}nticas, mas usando evolu\c{c}%
\~{o}es superadiab\'{a}ticas.

Para este fim, primeiro discutiremos de forma geral como derivar um atalho
superadiab\'{a}tico para EAC. Em seguida aplicaremos os principais
resultados obtidos para propor um modelo universal de computa\c{c}%
\~{a}o qu\^{a}ntica superadiab\'{a}tica. Por fim n\'{o}s faremos a an\'{a}lise da
complementaridade energia-tempo, introduzida na se\c{c}\~{a}o \ref{ComplemEnerTem}, para a
implementa\c{c}\~{a}o de portas $n$-controladas.

\subsection{Evolu\c{c}\~{o}es Superadiab\'{a}ticas Controladas} \label{ESCGene}

Como sempre \'{e} feito quando tentamos derivar uma vers\~{a}o superadiab%
\'{a}tica de alguma evolu\c{c}\~{a}o adiab\'{a}tica, o ponto de partida 
\'{e} o Hamiltoniano adiab\'{a}tico que governa a evolu\c{c}\~{a}o do
sistema. Ent\~{a}o para estudar evolu\c{c}\~{o}es superadiab\'{a}ticas
controladas (ESC) n\'{o}s partimos do Hamiltoniano adiab\'{a}tico para EAC
da forma como est\'{a} escrita na Eq. $\left( \ref%
{Hamiltoniano2EAC}\right) $ e que novamente escrevemos aqui sob uma forma
mais compacta%
\begin{equation}
H\left( s\right) =\sum_{k}P_{k}\otimes H_{k}\left( s\right) \text{ \ ,}
\label{Hcompacto}
\end{equation}%
com $H_{k}\left( s\right) =f\left( s\right) H^{\text{ini}}+g\left( s\right)
H_{k}^{\text{fin}}$. Para construir o Hamiltoniano contra-diab\'{a}tico para $%
H\left( s\right) $ n\'{o}s devemos, portanto, conhecer o conjunto de
autoestados e autovalores de $H\left( s\right) $. N\'{o}s podemos mostrar
facilmente como obter o conjunto de autoestados e autovalores de $H\left(
s\right) $ se conhecermos os respectivos conjuntos para cada $H_{k}\left(
s\right) $. De fato, deixe que conhe\c{c}amos todos os autoestados $\{
\vert E_{k}^{n}\left( s\right) \rangle \} $ e autovalores $%
\{ \varepsilon _{k}^{n}\left( s\right) \} $ de todos os
Hamiltonianos $H_{k}\left( s\right) $, onde temos que a rela%
\c{c}\~{a}o de autovalor%
\begin{equation}
H_{k}\left( s\right) \vert E_{k}^{n}\left( s\right) \rangle
=\varepsilon _{k}^{n}\left( s\right) \vert E_{k}^{n}\left( s\right)
\rangle \label{EAVk}
\end{equation}%
\'{e} satisfeita para todo $H_{k}\left( s\right) $. Ent\~{a}o, a equa\c{c}%
\~{a}o de autovalor para $H\left( s\right) $ \'{e} dada por%
\begin{equation}
H\left( s\right) \vert E_{k,n}\left( s\right) \rangle
=\varepsilon _{k}^{n}\left( s\right) \vert E_{k,n}\left( s\right)
\rangle \text{ \ ,}  \label{EAVTotal}
\end{equation}%
onde%
\begin{equation}
\vert E_{k,n}\left( s\right) \rangle =\vert
p_{k}\rangle \vert E_{k}^{n}\left( s\right) \rangle \text{
\ ,}  \label{AutoestadoTotal}
\end{equation}%
com $\left\vert p_{k}\right\rangle $ sendo o autovetor associado ao \'{u}%
nico autovalor n\~{a}o nulo de $P_{k}$. A demonstra\c{c}\~{a}o da rela\c{c}\~{a}o de autovalor na Eq.
$\left(\ref{EAVTotal}\right) $ \'{e} imediata se substituirmos a Eq. $\left(\ref{AutoestadoTotal}\right) $ diretamente na Eq. $\left(\ref{EAVTotal}\right) $. Fazendo isso obtemos%
\begin{eqnarray*}
H\left( s\right) \vert E_{k,n}\left( s\right) \rangle
&=&\sum_{m}P_{m}\otimes H_{m}\left( s\right) \vert p_{k} \rangle
\vert E_{k}^{n}\left( s\right) \rangle \\
&=&\sum_{m}\left( P_{m} \vert p_{k} \rangle \right) \otimes \left(
H_{m}\left( s\right) \vert E_{k}^{n}\left( s\right) \rangle
\right) \text{ \ ,}
\end{eqnarray*}%
como os operadores $P_{k}$'s formam um conjunto de projetores ortogonais do
subespa\c{c}o $\mathcal{S}$ n\'{o}s podemos escrevemos $P_{k}$ em sua
decomposi\c{c}\~{a}o espectral como $P_{k}=\left\vert p_{k}\right\rangle
\left\langle p_{k}\right\vert $, onde os $\left\vert p_{k}\right\rangle $
formam uma base para o subespa\c{c}o $\mathcal{S}$ de modo que $\left\langle
p_{k}|p_{m}\right\rangle =\delta _{km}$, consequentemente observa-se que $%
P_{k}\left\vert p_{m}\right\rangle =\delta _{km}\left\vert
p_{k}\right\rangle $. Usando isso, temos%
\begin{eqnarray*}
H\left( s\right) \vert E_{k,n}\left( s\right) \rangle
&=& \vert p_{k} \rangle \otimes H_{k}\left( s\right) \vert
E_{k}^{n}\left( s\right) \rangle \\
&=&\varepsilon _{k}^{n}\left( s\right) \vert p_{k} \rangle
\vert E_{k}^{n}\left( s\right) \rangle =\varepsilon
_{k}^{n}\left( s\right) \vert E_{k,n}\left( s\right) \rangle 
\text{ \ ,}
\end{eqnarray*}%
onde usamos a Eq. $\left(\ref{EAVk}\right) $ na ultima passagem
da equação acima. Assim n\'{o}s temos verificado que a Eq. $\left( \ref%
{EAVTotal}\right) $ \'{e} satisfeita para $\vert E_{k,n}\left( s\right)
\rangle $ escrito como na Eq. $\left(\ref{AutoestadoTotal}%
\right) $. Sabendo que o espa\c{c}o $\mathcal{SA}$ tem dimens\~{a}o $\dim 
\mathcal{SA}=\dim \mathcal{S}\dim \mathcal{A}$, onde $\dim \mathcal{A}$ e $%
\dim \mathcal{S}$ s\~{a}o as dimens\~{o}es\ dos subespa\c{c}os $\mathcal{A}$
e $\mathcal{S}$, respectivamente, ent\~{a}o precisamos mostrar que o
conjunto $\{ \vert E_{k,n}\left( s\right) \rangle \} $ 
\'{e} composto por $\dim \mathcal{SA}$ vetores ortonormais. Facilmente
podemos mostrar que quaisquer dois vetores do conjunto $\{ \vert
E_{k,n}\left( s\right) \rangle \} $ satisfazem a condi\c{c}\~{a}%
o de ortonormaliza\c{c}\~{a}o $\left\langle E_{k,n}\left( s\right)
|E_{k^{\prime },n^{\prime }}\left( s\right) \right\rangle =\delta
_{kk^{\prime }}\delta _{nn^{\prime }}$, para isso basta fazer o produto
escalar e obtemos%
\begin{equation*}
\left\langle E_{k,n}\left( s\right) |E_{k^{\prime },n^{\prime }}\left(
s\right) \right\rangle =\left\langle p_{k}|p_{k^{\prime }}\right\rangle
\langle E_{k}^{n}\left( s\right) |E_{k^{\prime }}^{n^{\prime }}\left(
s\right) \rangle =\delta _{kk^{\prime }}\langle E_{k}^{n}\left( s\right)
|E_{k^{\prime }}^{n^{\prime }}\left( s\right) \rangle \text{ \ ,}
\end{equation*}%
agora note que para $k$ e $k^{\prime }$ diferentes n\~{a}o podemos garantir
que $\langle E_{k}^{n}\left( s\right) |E_{k^{\prime }}^{n^{\prime }}\left(
s\right) \rangle =\delta _{kk^{\prime }}\delta _{nn^{\prime }}$, mas podemos
garantir que para o mesmo $k=k^{\prime }$ n\'{o}s temos $\langle
E_{k}^{n}\left( s\right) |E_{k^{\prime }}^{n^{\prime }}\left( s\right)
\rangle =\delta _{nn^{\prime }}$. Portanto a equação acima fica
\begin{equation}
\left\langle E_{k,n}\left( s\right) |E_{k^{\prime },n^{\prime }}\left(
s\right) \right\rangle =\delta _{kk^{\prime }}\delta _{nn^{\prime }}\text{ \
,}
\end{equation}%
o que mostra que a condi\c{c}\~{a}o de ortonormaliza\c{c}\~{a}o e
consequentemente $\{ \vert E_{k,n}\left( s\right) \rangle
\} $ \'{e} composto por vetores ortonormais. A contagem da quantidade
de estados tem o conjunto $\{ \vert E_{k,n}\left( s\right)
\rangle \} $ \'{e} simples, para isto basta ver que para cada $k$
n\'{o}s temos $\dim \mathcal{A}$ autoestados $\vert E_{k}^{n}\left(
s\right) \rangle $, como temos $\dim \mathcal{S}$ poss\'{\i}veis $k$%
's, implica que a quantidade de elementos do conjunto $\{ \vert
E_{k,n}\left( s\right) \rangle \} $ \'{e} exatamente $\dim 
\mathcal{SA}$.

Para derivar um atalho superadiab\'{a}tico via Hamiltoniano contra-diab\'{a}%
tico para evolu\c{c}\~{o}es controladas, n\'{o}s usamos a formas dos
autoestados dado na Eq. $\left( \ref{AutoestadoTotal}\right) $
do Hamiltoniano $H\left( s\right) $ e a forma do Hamiltoniano contra-diab%
\'{a}tico dado na Eq. (\ref{HCDgene}). Fazendo isso, obtemos que%
\begin{equation}
H_{\text{CD}}\left( s\right) =\sum_{k}P_{k}\otimes H_{k}^{\text{CD}}\left( s\right) \label{HCDEAC}
\end{equation}%
\'{e} o Hamiltoniano contra-diab\'{a}tico que deve ser somado ao
Hamiltoniano adiab\'{a}tico $H\left( s\right) $ para obtermos a evolu\c{c}%
\~{a}o adiab\'{a}tica, com%
\begin{equation}
H_{k}^{\text{CD}}\left( s\right) =\frac{i\hslash }{\tau }\sum\limits_{n} \vert
\partial _{s}E_{k}^{n}\left( s\right)  \rangle  \langle
E_{k}^{n}\left( s\right)  \vert +\left\langle \partial
_{s}E_{k}^{n}\left( s\right) |E_{k}^{n}\left( s\right) \right\rangle
 \vert E_{k}^{n}\left( s\right) \rangle \langle
E_{k}^{n}\left( s\right) \vert \text{ \ ,}  \label{kHCDEAC}
\end{equation}%
sendo o Hamiltoniano contra-diab\'{a}tico associado ao $k$-\'{e}simo
Hamiltoniano que atua sobre o subespa\c{c}o $\mathcal{A}$. Deixe-nos
comprovar as Eqs. $\left( \ref{HCDEAC}\right) $ e $\left( \ref%
{kHCDEAC}\right) $.

Das Eqs. (\ref{HCDgene}) e (\ref{EAVTotal}) n\'{o}s podemos escrever
\begin{eqnarray*}
H_{\text{CD}}\left( s\right) &=&\frac{i\hslash }{\tau }\sum\limits_{k,n} \vert
\partial _{s}E_{k,n}\left( s\right) \rangle \langle E_{k,n}\left(
s\right) \vert + \langle \partial _{s}E_{k,n}\left( s\right)
|E_{k,n}\left( s\right) \rangle \vert E_{k,n}\left( s\right)
\rangle \langle E_{k,n}\left( s\right) \vert \\
&=&\frac{i\hslash }{\tau }\sum\limits_{k,n} \langle \partial
_{s}E_{n}\left( s\right) |E_{n}\left( s\right) \rangle \vert
p_{k} \rangle \vert E_{k}^{n}\left( s\right) \rangle
\langle E_{k}^{n}\left( s\right) \vert \langle
p_{k} \vert \\
&&+\frac{i\hslash }{\tau }\sum\limits_{k,n} \vert p_{k} \rangle
\vert \partial _{s}E_{k}^{n}\left( s\right) \rangle \langle
E_{k}^{n}\left( s\right) \vert \langle p_{k} \vert \text{ \ ,%
}
\end{eqnarray*}%
onde usamos que $\vert \partial _{s}E_{k,n}\left( s\right)
\rangle = \vert p_{k} \rangle \vert \partial
_{s}E_{k}^{n}\left( s\right) \rangle $ e $\langle \partial
_{s}E_{k,n}\left( s\right) |E_{k,n}\left( s\right) \rangle
= \langle \partial _{s}E_{n}\left( s\right) |E_{n}\left( s\right)
\rangle $ na \'{u}ltima igualdade. Como os projetores $\left\vert
p_{k}\right\rangle \left\langle p_{k}\right\vert $ n\~{a}o carregam informa%
\c{c}\~{o}es com respeito ao \'{\i}ndice $n$ da segunda soma, ainda podemos
escrever $H_{\text{CD}}\left( s\right) $ como%
\begin{equation*}
H_{\text{CD}}\left( s\right) =\sum\limits_{k}P_{k}\otimes \left[ \frac{i\hslash }{%
\tau }\sum\limits_{n}\left\vert \partial _{s}E_{k}^{n}\left( s\right)
\right\rangle \left\langle E_{k}^{n}\left( s\right) \right\vert
+\left\langle \partial _{s}E_{k}^{n}\left( s\right) |E_{k}^{n}\left(
s\right) \right\rangle \left\vert E_{k}^{n}\left( s\right) \right\rangle
\left\langle E_{k}^{n}\left( s\right) \right\vert \right] \text{ \ ,}
\end{equation*}%
onde usamos que $P_{k}=\left\vert p_{k}\right\rangle \left\langle
p_{k}\right\vert $. De modo que definindo a Eq. $\left( \ref%
{kHCDEAC}\right) $ podemos obter exatamente a Eq. $\left( \ref%
{HCDEAC}\right) $ como resultado do c\'{a}lculo acima.

Assim, conhecendo como o conjunto de projetores $\left\{ P_{k}\right\} $ se
disp\~{o}em na soma dada na Eq. $\left( \ref{Hcompacto}\right) $
e os respectivos Hamiltonianos contra-diab\'{a}ticos para cada Hamiltoniano
adiab\'{a}tico $H_{k}\left( s\right) $, ent\~{a}o o Hamiltoniano contra-diab%
\'{a}tico associado ao Hamiltoniano adiab\'{a}tico dado na Eq. $\left( %
\ref{Hcompacto}\right) $ \'{e} conhecido e facilmente obtido das Eqs. $\left( \ref{HCDEAC}\right) $ e $\left( \ref{kHCDEAC}\right) $.

Para encontrarmos o Hamiltoniano superadiab\'{a}tico, podemos usar as Eqs. $\left( \ref{HCDEAC}\right) $ e (\ref{HSAgene}) para mostrar que%
\begin{equation}
H_{\text{SA}}\left( s\right) =\sum_{k}P_{k}\otimes H_{k}^{\text{SA}}\left( s\right) \label{HSAEAC}
\end{equation}%
\'{e} o Hamiltoniano superadiab\'{a}tico que deve guiar o sistema, onde cada 
$H_{k}^{\text{SA}}\left( s\right) =H_{k}\left( s\right) +H_{k}^{\text{CD}}\left( s\right) $
\'{e} o Hamiltoniano superadiab\'{a}tico associado ao Hamiltoniano adiab\'{a}%
tico $H_{k}\left( s\right) $. Isso mostra que o custo para implementarmos
uma aproxima\c{c}\~{a}o superadiab\'{a}tica em qualquer evolu\c{c}\~{a}o
adiab\'{a}tica controlada \'{e} dado pelo custo de conhecermos o conjunto de
autoestados e energias de cada Hamiltoniano $H_{k}\left( s\right) $.

\subsection{Computa\c{c}\~{a}o Qu\^{a}ntica por Evolu\c{c}\~{o}es Superadiab\'{a}ticas Controladas} \label{SCEandQC}

A ideia da CQ superadiabática \'{e} imitar a CQ Adiabática sem o v\'{\i}nculo estabelecido pelo
teorema adiab\'{a}tico e para isso n\'{o}s usamos atalhos para
adiabaticidade via Hamiltonianos contra-diab\'{a}ticos. Em particular, aqui n%
\'{o}s propomos a CQ superadiabática via evoluções controladas. Primeiramente mostraremos como implementar
portas de 1 q-bit e em seguida
mostraremos que portas $n$-controladas podem ser implementadas facilmente
com uma pequena extens\~{a}o do subespa\c{c}o $\mathcal{S}$.

\subsubsection{Portas de 1 q-bit via ESC}

O atalho para implementar portas de 1 q-bit via ESC pode ser feito por meio da determinação do termo contra-diabático associado ao Hamiltoniano adiab\'{a}tico que nos permite implementar adiabaticamente
portas de 1 q-bit, já discutido na se\c{c}\~{a}o \ref{ComputaEAC}, que
\'{e} dado pela Eq. $\left( \ref{ComputationEAC1}\right) $. Combinando os resultados da seção \ref{ComputaEAC} com os resultados obtidos na se\c{c}\~{a}o anterior n\'{o}s podemos
mostrar que o Hamiltoniano superadiab\'{a}tico para implementar portas de 1
q-bit \'{e} dado por%
\begin{equation}
H_{\text{SA}}\left( s\right) =P_{\hat{n}_{+}}\otimes H_{0}^{\text{SA}}\left( s\right) +P_{%
\hat{n}_{-}}\otimes H_{\phi }^{\text{SA}}\left( s\right) \text{ \ ,}
\label{Hgate1ESC}
\end{equation}%
onde $H_{\xi }^{\text{SA}}\left( s\right) $ \'{e} o Hamiltoniano superadiab\'{a}%
tico associado ao Hamiltoniano adiab\'{a}tico gen\'{e}rico $H_{\xi }\left(
s\right) $ dado pela Eq. $\left( \ref{Hxi}\right) $. Para obter
o Hamiltoniano contra-diab\'{a}tico do $H_{\xi }^{\text{CD}}\left( s\right) $ n\'{o}%
s usamos que os autoestados de $H_{\xi }\left( s\right) $ s\~{a}o dados por 
\begin{eqnarray}
\vert E_{\xi }^{0}\left( s\right) \ket &=&\cos \left( \theta
_{0}s/2\right) \left\vert 0\right\rangle +e^{i\xi }\sin \left( \theta
_{0}s/2\right) \left\vert 1\right\rangle \text{ \ ,}  \label{AutoestadoXi0}
\\
\vert E_{\xi }^{1}\left( s\right) \ket &=&-\sin \left( \theta
_{0}s/2\right) \left\vert 0\right\rangle +e^{i\xi }\cos \left( \theta
_{0}s/2\right) \left\vert 1\right\rangle \text{ \ ,}  \label{AutoestadoXi1}
\end{eqnarray}%
associados ao n\'{\i}vel de energia $\varepsilon _{\xi }^{n}\left( s\right)
=\left( -1\right) ^{n}\hbar \omega $, com $n=\left\{ 0,1\right\} $. Antes de
escrevermos o Hamiltoniano $H_{\xi }^{\text{CD}}\left( s\right) $ primeiro note que 
$|\partial _{s}E_{\xi }^{n}\left( s\right) \rangle =\lambda _{1-n}|E_{\xi
}^{1-n}\left( s\right) \rangle $, para algum escalar $\lambda _{1-n}$, de
modo que fica claro que $\langle \partial _{s}E_{\xi }^{n}\left( s\right)
|E_{\xi }^{n}\left( s\right) \rangle =0$, assim ficamos com%
\begin{equation}
H_{\xi }^{\text{CD}}\left( s\right) =\frac{i\hbar }{\tau }\sum\limits_{n=\left\{
0,1\right\} }|\partial _{s}E_{\xi }^{n}\left( s\right) \rangle \langle
E_{\xi }^{n}\left( s\right) |\text{ \ ,}  \label{HCDgate1}
\end{equation}%
onde, usando as Eqs. $\left( \ref{AutoestadoXi0}\right) $ e $%
\left( \ref{AutoestadoXi1}\right) $ na rela\c{c}\~{a}o acima, n\'{o}s podemos
mostrar que $H_{\xi }^{\text{CD}}\left( s\right) $ \'{e} dado por%
\begin{equation}
H_{\xi }^{\text{CD}}\left( s\right) =\hbar \frac{\theta _{0}}{2\tau }\left( \sigma
_{y}\cos \xi -\sigma _{x}\sin \xi \right) \text{ \ ,}  \label{CDIndTempo}
\end{equation}

Com isso o Hamiltoniano superadiab\'{a}tico \'{e} dado pela Eq. $\left( \ref%
{Hgate1ESC}\right) $ onde cada Hamiltoniano $H_{\xi }^{\text{SA}}\left( s\right) $
(com $\xi =\left\{ 0,\phi \right\} $) \'{e} da forma $H_{\xi }^{\text{SA}}\left(
s\right) =H_{\xi }\left( s\right) +H_{\xi }^{\text{CD}}\left( s\right) $, onde o
termo contradiab\'{a}tico $H_{\xi }^{\text{CD}}\left( s\right) $ \'{e} dado pela Eq. $%
\left( \ref{CDIndTempo}\right) $. A informa\c{c}\~{a}o sobre a porta a ser
implementada pelo Hamiltoniano $H_{\xi }^{\text{SA}}\left( s\right) $ est\'{a}
contida no conjunto de projetores $\left\{ P_{\hat{n}_{+}},P_{\hat{n}%
_{-}}\right\} $ e no valor do par\^{a}metro $\phi $ que est\'{a} impresso no
Hamiltoniano superadiab\'{a}tico $H_{\phi }^{\text{SA}}\left( s\right) =H_{\phi
}\left( s\right) +H_{\phi }^{\text{CD}}\left( s\right) $.

O resultado mais significante aqui \'{e} a forma simples do termo contra-diab%
\'{a}tico $H_{\xi }^{\text{CD}}\left( s\right) $ que precisamos implementar.
Notamos que para implementar qualquer porta superadiabaticamente n\'{o}s
apenas precisamos adicionar um Hamiltoniano \textit{independente do tempo}
ao Hamiltoniano adiab\'{a}tico do sistema. Isso elimina o problema
de simular experimentalmente o Hamiltoniano contra-diab\'{a}tico necess\'{a}%
rio para realizar o atalho, uma vez que \textit{a priori} este pode depender
do tempo e ter a forma mais n\~{a}o trivial que imaginarmos.

\subsubsection{Portas $n$-controladas via ESC}

Mostramos agora que o Hamiltoniano adiab\'{a}tico para implementar portas $n$%
-controladas pode ser derivado facilmente fazendo apenas uma extens\~{a}o no
subespa\c{c}o alvo $\mathcal{S}$ e ajustanto adequadamente o conjunto de
projetores sobre $\mathcal{S}$.

Aqui usamos novamente que o atalho superadiab\'{a}tico \'{e} feito apenas
determinando os Hamiltonianos superadiab\'{a}ticos que atuam sobre o
subsistema auxiliar $\mathcal{A}$. Sabendo que Hamiltoniano adiab\'{a}tico
usado para implementar rota\c{c}\~{o}es controladas quaisquer sobre $1$
q-bit \'{e} dado pela Eq. $\left( \ref{HamiltonianoPortasNq-bits}\right) $, ent%
\~{a}o da Eq. $\left( \ref{HSAEAC}\right) $ temos 
\begin{equation}
H_{\text{SA}}\left( s\right) =\left[ 1-P_{N-1,\hat{n}_{-}}\right] \otimes
H_{0}^{\text{SA}}\left( s\right) +P_{N-1,\hat{n}_{-}}\otimes H_{\phi }^{\text{SA}}\left(
s\right) \text{ \ ,}  \label{HSAN}
\end{equation}%
que \'{e} o Hamiltoniano superadiab\'{a}tico usado para realizar essa tarefa,
onde os hamiltonianos $H_{\xi }^{\text{SA}}\left( s\right) $ s\~{a}o os mesmos
determinados na se\c{c}\~{a}o anterior com o Hamiltoniano contra-diab\'{a}%
tico dado pela Eq. $\left( \ref{CDIndTempo}\right) $.

O papel do par\^{a}metro $\theta _{0}$ na vers\~{a}o superadiab\'{a}tica 
\'{e} o mesmo papel exercido por este na vers\~{a}o adiab\'{a}tica
apresentada anteriormente. De fato, para o caso de portas $n$-controladas o
sistema inicia sua evolu\c{c}\~{a}o em um estado $\left\vert \Psi _{n}\left(
0\right) \right\rangle $ dado pela Eq. $\left( \ref%
{EstadoInicialNq-bits}\right) $ e evolui superadiabaticamente, governado
pelo Hamiltoniano $H_{\text{SA}}\left( s\right) $ dado pela Eq. $\left( %
\ref{HSAN}\right) $, at\'{e} o estado final $\left\vert \Psi _{n}\left(
0\right) \right\rangle $ dado pela Eq. $\left( \ref%
{EstadoFinalNq-bits}\right) $. Assim, fica claro que precisamos realizar uma
medida ao final do processo, onde $\theta _{0}$ ter\'{a} o papel de definir
a probabilidade de sucesso da computa\c{c}\~{a}o, onde tomando $\theta
_{0}\rightarrow \pi $ n\'{o}s teremos probabilidade de sucesso $p\rightarrow
1$. Como mencionado na se\c{c}\~{a}o \ref{ComputaEAC} essa medida pode ser evitada
adotando $\theta _{0}=\pi $.

\subsection{A complementaridade energia-tempo} \label{CETESC}

Em evolu\c{c}\~{o}es superadiab\'{a}ticas n\'{o}s podemos definir um tempo total de evolu\c{c}\~{a}o
inferior ao tempo adiab\'{a}tico, de tal forma que esse tempo pode ser
pré-definido na implementação física da evolução superadiabática em laboratório. O moderador do tempo ser\'{a}, portanto, o custo
energ\'{e}tico para realizar a evolu\c{c}\~{a}o do sistema e que \'{e}, em
geral, definido pela Eq. $\left( \ref{cost1ADG.xx}\right) $. Agora n%
\'{o}s discutiremos esse custo para implementar portas controladas por $n$
q-bits em ESC.

O ponto de partida \'{e} a Eq. $\left( \ref{cost1ADG.xx}\right) $
para o Hamiltoniano dado na Eq. $\left( \ref{HSAN}\right) $.
Assim, definimos%
\begin{equation}
\Sigma _{SCE}\left( \tau ,n\right) =\int_{0}^{1}\left\Vert {H}_{\text{SA}}\left(
s\right) \right\Vert ds\text{ \ ,}  \label{cost1.1}
\end{equation}%
onde ${H}_{\text{SA}}\left( t\right) $ \'{e} dado na Eq. $\left( \ref%
{HSAN}\right) $. Estamos denotando o custo como uma fun\c{c}\~{a}o de $\tau $
devido o resultado expresso na Eq. $\left( \ref{sadcost}\right) $%
, onde para cada escolha de $\tau $ teremos um custo diferente. Assim,
escrevemos%
\begin{eqnarray*}
\left\Vert {H}_{\text{SA}}\left( s\right) \right\Vert &=&\sqrt{\text{Tr}\left[ {H}%
_{\text{SA}}^{2}\left( s\right) \right] } \\
&=&\sqrt{\text{Tr}\{\left( 1-P_{N-1,\hat{n}_{-}}\right) \otimes \lbrack
H_{0}^{\text{SA}}\left( s\right) ]^{2}\}+\text{Tr}\{P_{N-1,\hat{n}_{-}}\otimes \lbrack
H_{\phi }^{\text{SA}}\left( s\right) ]^{2}\}} \\
&=&\sqrt{\text{Tr}\{\left( 1-P_{N-1,\hat{n}_{-}}\right) \}\text{Tr}\{[H_{0}^{\text{SA}}\left(
s\right) ]^{2}\}+\text{Tr}\{P_{N-1,\hat{n}_{-}}\}\text{Tr}\{[H_{\phi }^{\text{SA}}\left( s\right)
]^{2}\}}\text{ \ ,}
\end{eqnarray*}%
onde na ultima passagem usamos que $\text{Tr}\{A\otimes B\}=\text{Tr}\{A\}\text{Tr}\{B\}$. Como
na base $\{|m,\hat{n}_{\mu }\rangle \}$ o termo $1-P_{N-1,\hat{n}_{-}}$ \'{e}
uma matriz diagonal com um elemento nulo e $2N-1$ elementos iguais a $1$ e $%
\text{Tr}\{P_{N-1,\hat{n}_{-}}\}$ tem apenas um elemento n\~{a}o nulo e igual a $1$%
, temos%
\begin{equation}
\left\Vert {H}_{\text{SA}}\left( s\right) \right\Vert =\sqrt{\left( 2N-1\right)
\text{Tr}\{[H_{0}^{\text{SA}}\left( s\right) ]^{2}\}+\text{Tr}\{[H_{\phi }^{\text{SA}}\left( s\right)
]^{2}\}}\text{ \ .}  \label{Cos1}
\end{equation}

Prosseguindo, 
\begin{eqnarray*}
\text{Tr}\{[H_{\xi }^{\text{SA}}\left( s\right) ]^{2}\} &=&\text{Tr}\{[H_{\xi }\left( s\right)
+H_{\xi }^{\text{CD}}\left( s\right) ]^{2}\} \\
&=&\text{Tr}[\{H_{\xi }\left( s\right) ,H_{\xi }^{\text{CD}}\left( s\right) \}]+\text{Tr}\{H_{\xi
}^{2}\left( s\right) \}+\text{Tr}\{[H_{\xi }^{\text{CD}}\left( s\right) ]^{2}\}\text{ \ .}
\end{eqnarray*}

Na subse\c{c}\~{a}o \ref{SubEner} vimos que em evoluçoes superadiabáticas temos $\text{Tr}[\{H\left(
s\right) ,H_{\text{CD}}\left( s\right) \}]=0$, onde $H_{\text{CD}}\left( s\right)$ é o termo contra-diabático associado ao Hamiltoniano adiabático $H\left( s\right)$, então podemos escrever que $\text{Tr}[\{H_{\xi }\left(
s\right) ,H_{\xi }^{\text{CD}}\left( s\right) \}]=0$, portanto 
\begin{equation}
\text{Tr}\{[H_{\xi }^{\text{SA}}\left( s\right) ]^{2}\}=2\hbar ^{2}\omega ^{2}+\text{Tr}\{[H_{\xi
}^{\text{CD}}\left( s\right) ]^{2}\}\text{ \ ,}  \label{TrHSAxi}
\end{equation}%
onde j\'{a} usamos que $\text{Tr}\{H_{\xi }^{2}\left( s\right) \}=\sum_{k=\left\{
0,1\right\} }[\varepsilon _{\xi }^{k}\left( s\right) ]^{2}=2\hbar ^{2}\omega
^{2}$. Por outro lado, da Eq. $\left( \ref{Mu}\right) $ tem-se
que%
\begin{equation}
\text{Tr}\{[H_{\xi }^{\text{CD}}\left( s\right) ]^{2}\}=\frac{\hbar ^{2}}{\tau ^{2}}%
\sum_{k=\left\{ 0,1\right\} }\langle d_{s}E_{\xi }^{k}\left( s\right)
|d_{s}E_{\xi }^{k}\left( s\right) \rangle =\frac{\hbar ^{2}\theta _{0}^{2}}{%
2\tau ^{2}}\text{ \ .}  \label{TrHCDxi}
\end{equation}

Note que as quantidades $\text{Tr}\{[H_{\xi }^{\text{CD}}\left( s\right) ]^{2}\}$ e $%
\text{Tr}\{H_{\xi }^{2}\left( s\right) \}$ independem de $\xi $, ent\~{a}o da Eq. $\left( \ref{Cos1}\right) $ temos%
\begin{equation}
\left\Vert {H}_{\text{SA}}\left( s\right) \right\Vert =\sqrt{2N}\sqrt{\text{Tr}\{[H_{\xi
}^{\text{SA}}\left( s\right) ]^{2}\}}\text{ \ .}
\end{equation}

Portanto substituindo as Eqs. $\left( \ref{TrHSAxi}\right) $, $%
\left( \ref{TrHCDxi}\right) $ na equação acima e escrevendo $N=2^{n}$%
, a Eq. $\left( \ref{cost1.1}\right) $ nos fornece%
\begin{equation}
\Sigma _{SCE}\left( \tau ,n\right) =\hbar \omega \sqrt{2^{n+2}}\sqrt{%
1+\left( \frac{\theta _{0}}{2\tau \omega }\right) ^{2}}\text{ \ ,}
\label{custoN}
\end{equation}%
que \'{e} o custo para implementar portas de um q-bit controladas por $n$
quits. Da Eq. acima podemos ver que o custo tamb\'{e}m depende
diretamente do par\^{a}metro de sucesso na computa\c{c}\~{a}o $\theta _{0}$,
assim escrevemos $\Sigma _{SCE}\left( \tau ,n\right) \rightarrow \Sigma
_{SCE}\left( \tau ,n,\theta _{0}\right) $. Al\'{e}m dissso, o custo para
implementar portas controladas por $n$ q-bits pode ser escrita como%
\begin{equation}
\Sigma _{SCE}\left( \tau ,n,\theta _{0}\right) =\sqrt{2^{n}}\Sigma
_{SCE}^{sing}\left( \tau ,\theta _{0}\right) \text{ \ ,}  \label{custoN1}
\end{equation}%
onde 
\begin{equation}
\Sigma _{SCE}^{sing}\left( \tau ,\theta _{0}\right) =2\hbar \omega \sqrt{%
1+\left( \frac{\theta _{0}}{2\tau \omega }\right) ^{2}}\text{ \ ,}
\label{Custo11}
\end{equation}%
\'{e} o custo para implementar portas de um q-bit. A Fig. \ref{graph1} mostra o
comportamento do custo para implementar portas de um q-bit $\Sigma
_{SCE}^{sing}\left( \tau ,\theta _{0}\right) $ para alguns valores espec%
\'{\i}ficos do par\^{a}metro $\theta _{0}$.

\begin{figure}[!htb]
\centering
\includegraphics[width=11cm]{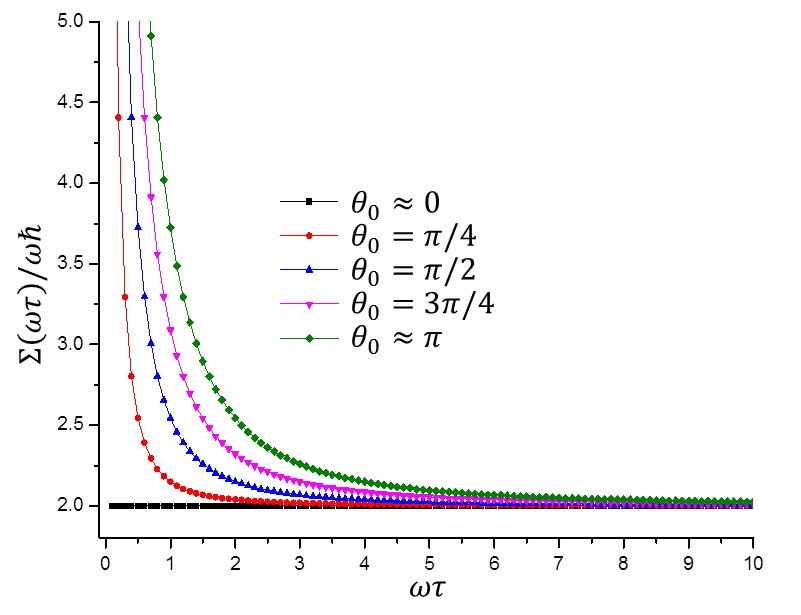}
\caption{Custo energético em unidades de $\hbar \omega$ como uma função de $\omega \tau$ para diferentes valores do parâmetro $\theta _{0}$. }
\label{graph1}
\end{figure}

Da Fig. \ref{graph1} n\'{o}s vemos que o maior custo energ\'{e}tico \'{e} aquele
associado \`{a} evolu\c{c}\~{a}o onde o estado final é tal que $\theta _{0} = \pi$. Esse custo pode ser explicado pelo fato de que, visto na esfera de Bloch, para diferentes valores de $\theta _{0}$ o estado final encontra-se a diferentes distâncias do estado inicial.

\subsection{Computa\c{c}\~{a}o Superadiab\'{a}tica Probabil\'{\i}stica} \label{CQP}

Na subseção \ref{ComAdiPro} discutimos o impacto sobre a média do custo energético ao assumirmos diferentes valores para o par\^{a}metro $\theta _{0}$, onde introduzimos a definição de computação probabilística adiabática. Aqui nós tentaremos responder a mesma pergunta feita na subseção \ref{ComAdiPro}, mas
analisando a realiza\c{c}\~{a}o de computa\c{c}\~{a}o probabil\'{\i}stica
superadiabaticamente. Por simplicidade, vamos analisar o caso da implementa\c{c}\~{a}o de
portas simples de 1 q-bit j\'{a} que a Eq. $\left( \ref{custoN1}%
\right) $ nos garante que n\~{a}o haver\'{a} perda de generalidade em nossa
an\'{a}lise.

O estado final da evolu\c{c}\~{a}o na implementa\c{c}\~{a}o de portas de 1
q-bit \'{e} dada pela Eq. (\ref{FinalStateEAC1q-bit}), onde temos uma probabilidade de
sucesso de $\sin ^{2}\left( \theta _{0}/2\right) $, o custo energ\'{e}tico $%
\Sigma _{SCE}^{sing}\left( \tau ,\theta _{0}\right) $ \'{e} dado pela Eq. $\left( \ref{Custo11}\right) $. No caso de falha do protocolo,
devemos repetir o processo usando o mesmo procedimento e para isso
gastaremos mais uma quantidade $\Sigma _{SCE}^{sing}\left( \tau ,\theta
_{0}\right) $ de energia (considerando que $\theta _{0}$ n\~{a}o muda). A
depender do valor de $\theta _{0}$, esse processo pode se repetir v\'{a}rias
vezes, assim, depois de $k$ evolu\c{c}\~{o}es, n\'{o}s teremos um custo dado
por $k\Sigma _{SCE}^{sing}\left( \tau ,\theta _{0}\right) $. Sendo $\sin
^{2}\left( \theta _{0}/2\right) $ a probabilidade de sucesso a cada medida,
em m\'{e}dia precisaremos de $\left\langle N\right\rangle = 1 / \sin ^{2}\left( \theta _{0}/2\right)$ 
repeti\c{c}\~{o}es para que tenhamos sucesso na computa\c{c}\~{a}o.
Portanto, o custo energ\'{e}tico m\'{e}dio $\bar{\Sigma}_{SCE}^{sing}\left(
\tau ,\theta _{0}\right) $ da computa\c{c}\~{a}o probabil\'{\i}stica \'{e}%
\begin{equation}
\bar{\Sigma}_{SCE}^{sing}\left( \tau ,\theta _{0}\right) =\left\langle
N\right\rangle \Sigma _{SCE}^{sing}\left( \tau ,\theta _{0}\right) =2\hbar
\omega \cos \sec ^{2}\left( \frac{\theta _{0}}{2}\right) \sqrt{1+\frac{%
\theta _{0}^{2}}{4\left( \tau \omega \right) ^{2}}}\text{ \ .}
\end{equation}

Ent\~{a}o no intervalo $\mathcal{I}_{\theta _{0}}:(0,\pi ]$, certamente
existe um $\theta _{0}$ que minimiza $\bar{\Sigma}_{SCE}^{sing}\left( \tau
,\theta _{0}\right) $ e nosso intuito aqui \'{e} determinar seu valor e se h%
\'{a} alguma depend\^{e}ncia com a escolha de $\tau $. O intervalo $\mathcal{%
I}_{\theta _{0}}$ n\~{a}o leva em considera\c{c}\~{a}o o valor $\theta
_{0}=0 $. De fato, note que no limite $%
\theta _{0}\rightarrow 0$ n\'{o}s teremos $\bar{\Sigma}_{SCE}^{sing}\left(
\tau ,\theta _{0}\right) \rightarrow \infty $, o que significa que mesmo se
fizermos a evolu\c{c}\~{a}o tendo certeza da falha do processo ($\theta
_{0}=0$) n\'{o}s ainda temos um gasto de $\Sigma \left( 0\right) =2\hbar \omega$, e se
insistirmos na tentativa de sucesso n\'{o}s iremos repetir o processo
infinitamente, o que implica no fato que $\bar{\Sigma}_{SCE}^{sing}\left(
\tau ,\theta _{0}\right) \rightarrow \infty $.

Sabemos que a criticalidade (pontos de m\'{\i}nimo e m\'{a}ximo) de uma fun%
\c{c}\~{a}o pode ser estudado a partir de simples deriva\c{c}\~{o}es.
Faremos o estudo anal\'{\i}tico da criticalidade da fun\c{c}\~{a}o $\bar{%
\Sigma}_{SCE}^{sing}\left( \tau ,\theta _{0}\right) $ a ponto de obter uma
rela\c{c}\~{a}o para o valor de $\theta _{0}$ que criticaliza $\bar{\Sigma}%
_{SCE}^{sing}\left( \tau ,\theta _{0}\right) $.

Usando regras simples de deriva\c{c}\~{a}o n\'{o}s podemos mostrar que para
a fun\c{c}\~{a}o $\bar{\Sigma}_{SCE}^{sing}\left( \tau ,\theta _{0}\right) $
n\'{o}s temos%
\begin{equation}
\frac{\partial}{\partial\theta _{0}}\bar{\Sigma}_{SCE}^{sing}\left( \tau ,\theta
_{0}\right) =\frac{\cos \sec ^{2}\left( \theta _{0}/2\right) }{2\left(
\omega \tau \right) ^{2}\sqrt{1+\theta _{0}^{2}/4\left( \omega \tau \right)
^{2}}}\left\{ \theta _{0}-\left[ 4\left( \omega \tau \right) ^{2}+\theta
_{0}^{2}\right] \cot \frac{\theta _{0}}{2}\right\} \text{ \ .} \label{derivada}
\end{equation}

Para que n\'{o}s tenhamos os pontos cr\'{\i}ticos de $\bar{\Sigma}%
_{SCE}^{sing}\left( \tau ,\theta _{0}\right) $ precisamos encontrar o(s)
valor(es) de $\theta _{0}$ tal(is) que $d_{\theta _{0}}\bar{\Sigma}%
_{SCE}^{sing}\left( \tau ,\theta _{0}\right) =0$. Primeiro devemos notar que
isso s\'{o} acontece quando%
\begin{equation}
\theta _{0}^{cri}-\left[ 4\left( \omega \tau \right) ^{2}+\theta _{0}^{cri2}%
\right] \cot \frac{\theta _{0}^{cri}}{2}=0 \text{ \ ,}  \label{tetamin}
\end{equation}%
pois $\cos \sec \left( \theta _{0}/2\right) \neq 0$ $\forall $ $0<\theta
_{0}\leq \pi $.

Da equação acima j\'{a} podemos ver a depend\^{e}ncia do valor de $%
\theta _{0}^{cri}$ (valor de $\theta _{0}$ que criticaliza $\bar{\Sigma}%
_{SCE}^{sing}\left( \tau ,\theta _{0}\right) $) com a quantidade $\omega
\tau $. Al\'{e}m disso, o teste da segunda derivada, que nos fornece a
concavidade da curva formada pelo gr\'{a}fico de $\bar{\Sigma}%
_{SCE}^{sing}\left( \tau ,\theta _{0}\right) $, acusa que $d_{\theta _{0}}^{2}%
\bar{\Sigma}_{SCE}^{sing}\left( \tau ,\theta _{0}\right) >0$ para todo $%
0<\theta _{0}\leq \pi $, consequentemente $\theta _{0}^{cri}$ \'{e} o pr\'{o}%
prio $\theta _{0}^{\min }$ que minimiza $\bar{\Sigma}_{SCE}^{sing}\left(
\tau ,\theta _{0}\right) $. Agora deixe-nos escrever a Eq. $%
\left( \ref{tetamin}\right) $ na forma%
\begin{equation*}
\omega \tau =\frac{\sqrt{\theta _{0}^{\min }}}{2}\sqrt{\tan \left( \frac{%
\theta _{0}^{\min }}{2}\right) -\theta _{0}^{\min }} \text{ \ ,} 
\end{equation*}%
que representa a rela\c{c}\~{a}o entre a quantidade $\omega \tau $ e $\theta
_{0}^{\min }$. Note, portanto, que existe uma condi\c{c}\~{a}o sobre os
valores de $\theta _{0}^{\min }$ e essa condi\c{c}\~{a}o diz que $\theta
_{0}^{\min }$ \'{e} tal que%
\begin{equation}
\tan \left( \frac{\theta _{0}^{\min }}{2}\right) \geq \theta _{0}^{\min } \text{ \ ,} 
\label{condition}
\end{equation}%
j\'{a} que $\omega \tau $ \'{e} real. Com isso n\'{o}s encontramos que
existem valores de $\theta _{0}^{\min }$ que n\~{a}o s\~{a}o valores cr\'{\i}%
ticos da fun\c{c}\~{a}o $\bar{\Sigma}_{SCE}^{sing}\left( \tau ,\theta
_{0}\right) $. O gr\'{a}fico da Fig. \ref{graph2} mostra como varia o valor do par%
\^{a}metro $\theta _{0}^{\min }$ para alguns valores da quantidade $\omega
\tau $.

\begin{figure}[!htb]
\centering
\includegraphics[width=11cm]{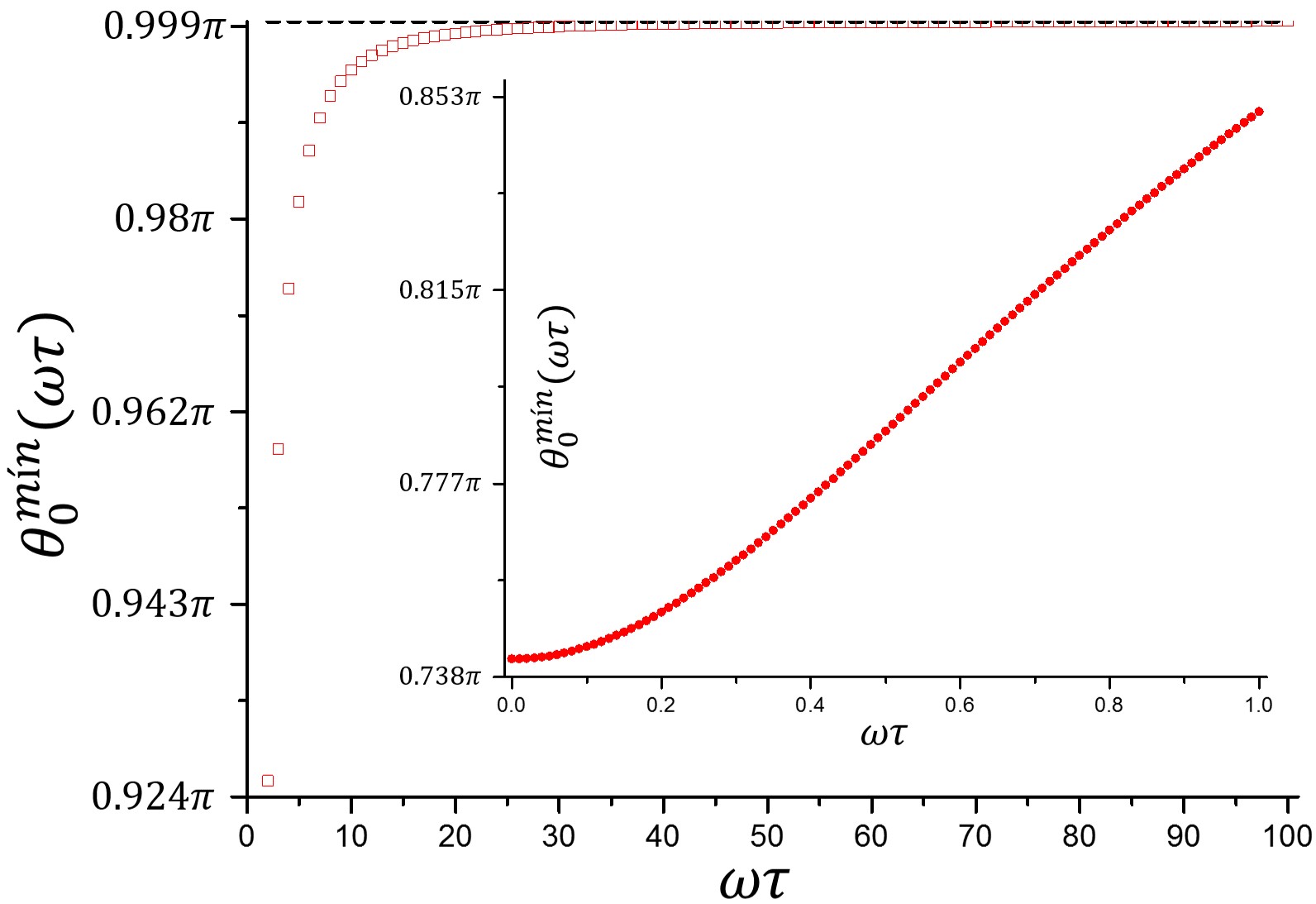}
\caption{Gr\'{a}fico de $\theta _{0}^{\min }$ em fun\c{c}\~{a}o da quantidade $\omega
\tau $ determinado numericamente com precis\~{a}o de $10^{-5}$ onde variamos
a quantidade $\omega \tau $ em intervalos de $10^{-6}$. Nós mostramos $\theta _{0}^{\min }$ em fun\c{c}\~{a}o da quantidade $\omega \tau $ para valores de $\omega \tau >1$. Podemos ver que a medida que a quantidade $\omega \tau$ cresce, o valor de $\theta _{0}^{\min }$ se aproxima de $\pi$ (linha horizontal tracejada em preto). No inset n\'{o}s mostramos o gráfico das mesmas quantidades onde o intervalo de valores de $\omega \tau$ são tais que $0 \leqslant \omega \tau \leqslant 1$.}
\label{graph2}
\end{figure}

Diferentemente do caso adiabático, nós podemos ver que a resposta para nossa pergunta, um tanto n\~{a}o trivial, 
\'{e} que sim. H\'{a} valores espec\'{\i}ficos de $\theta _{0}$ que, em m%
\'{e}dia, poderia nos fornecer um ganho energ\'{e}tico fazendo a computa\c{c}%
\~{a}o probabil\'{\i}stica superadiabática. Além disso, nossa liberdade na escolha do tempo total de evolução tem um impacto direto no $\theta _{0}$ ótimo. Obviamente, custos relacionados ao
processo de medida n\~{a}o est\~{a}o sendo contabilizados nessa an\'{a}lise.

\newpage

\section{Conclusões e Perspectivas Futuras} \label{Conclu}

Nessa dissertação nós nos propomos a estudar modelos de CQ universal usando evoluções adiabáticas e superadiabáticas (via Hamiltonianos contra-diabáticos). No cenário de CQ Adiabática nós estendemos os resultados obtidos por Bacon e Flammia \cite{Bacon:09}, onde mostramos como usar o TQ para implementar portas arbitrárias de $n$ q-bits, de modo que nós estendemos a classe de portas quânticas que podem ser implementadas quando usamos o TQ como um primitivo para CQ. Para isso nós analisamos o TQ adiabático do estado desconhecido de um q-bit fazendo uso das simetrias do Hamiltoniano adiabático que realiza tal tarefa. Os resultados puderam, portanto, ser estendidos para portas de $n$ q-bits usando o TQ adiabático de estados de $n$ q-bits juntamente com a Proposições \ref{Rotation} e \ref{comutation}. Além disso, ainda dentro do cenário adiabático, mostramos como implementar rotações controladas por $n$ q-bits (que caracterizam uma porta controlada por $n$ q-bits) usando EAC. Esse resultado é uma extensão do modelo proposto por Hen \cite{Itay:15}.

Derivando atalhos para adiabaticidade de modelos adiabáticos de CQ nós propomos aqui dois modelos de CQ superadiabática. O primeiro modelo proposto aqui é o de CQ via TQ superadiabático \cite{PRA}. Iniciamos o estudo mostrando como realizar o TQ superadiabático de um estado quântico desconhecido de $1$ q-bit. A partir dessa proposta nós mostramos como realizar o TQ de um estado quântico de $n$ q-bits. Esses resultados formam o alicerce que usamos para mostrar como usar o TQ superadiabático para implementar portas quânticas (controladas ou não) de $n$ q-bits. Com a ajuda de dois teoremas que foram introduzidos durante o desenvolvimento do modelo, nós mostramos que toda a teoria se constrói em cima do TQ de $1$ q-bit e que diferentes implementações de CQ universal podem ser realizadas combinando simples transformações unitárias (que dependem da porta a ser implementada pelo modelo) e a extensão do número de q-bits do sistema que usamos para codificar o estado onde a porta atuará.

No segundo modelo proposto nós mostramos como ESC pode ser usado para realizar CQ universal usando diferentes conjuntos universais de portas quânticas \cite{Scirep}. O modelo sustenta-se na introdução de um Hamiltoniano contra-diabático independente do tempo que deve ser somado ao Hamiltoniano adiabático do sistema. Além disso, diferentes conjuntos universais de portas podem ser implementadas tomando como base o mesmo Hamiltoniano contra-diabático.

A performance da CQ superadiabática foi caracterizada aqui pelo estudo da complementaridade energia-tempo em uma evolução superadiabática via Hamiltonianos contra-diabáticos. Nós usamos limites para o tempo total de evolução em sistemas quânticos (QSL) que nos sugerem que o tempo requerido para uma evolução superadiabática é compatível com tempos arbitrariamente pequenos, e além disso mostramos que essa arbitrariedade no tempo de evolução está apenas vinculado pelo custo energético necessário para realizar tal evolução. Nós mostramos que, em geral, uma evolução superadiabática tem um custo energético maior do que a sua análoga evolução adiabática e que este custo se reduz ao custo adiabático no limite de longos tempos de evolução ($\tau\omega \rightarrow \infty$). Nesse caso, a performance da CQ superadiabática, independentemente de como ela é implementada, é vantajosa quando queremos um menor tempo de processamento de informação quântica.

Uma vez que em sistemas quânticos abertos existe uma competição entre o tempo requerido por uma evolução adiabática (tempos longos) e a escala de tempo de decoerência (tempos curtos), cabe ressaltar que a formulação superadiabática é promissora no cenário de evoluções sob decoerência, em que as interações inevitáveis do sistema com o seu ambiente são levadas em consideração. Assim, a formulação superadiabática para sistemas abertos \cite{Vacanti:14} tem o potencial de fornecer um tempo de execução ótimo para a realização do circuito mas de modo a manter ainda alguma proteção do sistema contra decoerência. Desse modo, o tempo total de evolução superadiabática nesse novo cenário deve ser estudando usando extensões dos limites para o tempo total de evolução em sistemas quânticos abertos \cite{Deffner:PRL13,Adolfo:PRL13}. Esse tópico é deixado como desafio para pesquisa futura.

\newpage

\appendix

\section{Apêndice A: Lema de Riemann-Lebesgue} \label{lema-riemann-lebesgue}

Aqui n\'{o}s faremos uma simples demonstra\c{c}\~{a}o
do Lema de Riemann-Lebesgue, que \'{e} enunciado como

\begin{lemma}[Lema de Riemann-Lebesgue]
Seja $f\left( x\right) $ uma fun\c{c}\~{a}o integr\'{a}vel cuja integral
existe em qualquer intervalo da vari\'{a}vel $x$ e que $\int_{a}^{b}f\left(
x\right) dx=M_{ab}$ para algum $M_{ab}$ finito. Ent\~{a}o%
\begin{equation*}
\lim_{L\rightarrow \infty }\int_{a}^{b}f\left( x\right) e^{iLx}dx\rightarrow
0\text{ \ .}
\end{equation*}
\end{lemma}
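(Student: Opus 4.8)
The plan is to prove the result first for the simplest possible integrands, where the oscillatory integral can be evaluated in closed form, and then to bootstrap to an arbitrary integrable $f$ by an approximation argument. The structural fact I would lean on throughout is that $|e^{iLx}|=1$, so the exponential factor never amplifies the integrand; all the decay must come from cancellation due to rapid oscillation, and this cancellation is most transparent on intervals.

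First I would treat the base case $f=\chi_{[c,d]}$, the characteristic function of a subinterval $[c,d]\subseteq[a,b]$. Here the integral is elementary,
\begin{equation*}
\int_{c}^{d}e^{iLx}\,dx=\frac{e^{iLd}-e^{iLc}}{iL}\text{ \ ,}
\end{equation*}
so that $\left|\int_{c}^{d}e^{iLx}\,dx\right|\leq 2/L\to 0$ as $L\to\infty$. By linearity of the integral this extends at once to any step function $\phi=\sum_{j}c_{j}\chi_{[x_{j-1},x_{j}]}$ built from a finite partition of $[a,b]$, since a finite sum of terms each of order $1/L$ is again of order $1/L$.

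Next I would pass to a general integrable $f$ by approximation. Given $\varepsilon>0$, the hypothesis that the integral of $f$ exists on every subinterval guarantees a step function $\phi$ with $\int_{a}^{b}|f-\phi|\,dx<\varepsilon$; such a $\phi$ can be read off from the lower/upper Darboux sums. Splitting with the triangle inequality and using $|e^{iLx}|=1$,
\begin{equation*}
\left|\int_{a}^{b}f(x)e^{iLx}\,dx\right|\leq\int_{a}^{b}|f-\phi|\,dx+\left|\int_{a}^{b}\phi(x)e^{iLx}\,dx\right|<\varepsilon+\left|\int_{a}^{b}\phi(x)e^{iLx}\,dx\right|\text{ \ .}
\end{equation*}
By the previous step the last term tends to $0$, hence $\limsup_{L\to\infty}\left|\int_{a}^{b}f\,e^{iLx}\,dx\right|\leq\varepsilon$, and since $\varepsilon$ is arbitrary the limit is $0$.

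The main obstacle I anticipate is not any of the explicit estimates, which are routine, but making the approximation step airtight: one must justify that an integrable $f$ is approximable in the $L^{1}$ sense by step functions, which is exactly where the assumed notion of integrability enters. For a Riemann-integrable $f$ this follows directly from the Darboux-sum definition; in the broader Lebesgue setting it is the density of simple functions. As an alternative that sidesteps the density lemma whenever $f$ happens to be $C^{1}$, I would instead integrate by parts once, obtaining $\int_{a}^{b}f\,e^{iLx}\,dx=O(1/L)$ immediately from the boundary term together with $\tfrac{1}{L}\int_{a}^{b}|f'|\,dx$; composing this with the same approximation argument then yields a second, equally elementary proof of the full statement.
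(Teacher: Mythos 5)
Your proof is correct, and it follows a genuinely different route from the one in Ap\^{e}ndice \ref{lema-riemann-lebesgue}. You use the classical density argument: explicit evaluation of $\int_c^d e^{iLx}\,dx = (e^{iLd}-e^{iLc})/(iL) = O(1/L)$ for characteristic functions of intervals, extension to step functions by linearity, and then $L^1$-approximation of a general integrable $f$ by step functions, with the triangle inequality and $|e^{iLx}|=1$ closing the $\limsup$ argument. The paper instead makes the substitution $k=Lx$ to rewrite the integral as $\frac{1}{L}\int_{La}^{Lb} f(k/L)\,e^{ik}\,dk$ and argues that the rescaled integral equals some finite $M^{k}_{ab}$, so that dividing by $L$ forces the limit to vanish. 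That shortcut is considerably less careful than what you have done: the domain $[La,Lb]$ expands with $L$, so the quantity the paper calls $M^{k}_{ab}$ implicitly depends on $L$ and, estimated crudely via $\int_{La}^{Lb}|f(k/L)|\,dk = L\int_a^b |f(x)|\,dx$, can grow linearly in $L$ — the cancellation coming from the oscillation of $e^{ik}$ over the long interval is exactly the point that needs proof and is asserted rather than established there. Your approach localizes that cancellation on fixed subintervals, where it is an elementary computation, and pays for it only with the density lemma (Darboux sums in the Riemann setting, simple functions in the Lebesgue setting), which you correctly identify as the one place the hypothesis of integrability is actually used. Your integration-by-parts variant for $C^1$ integrands is also a standard and valid alternative. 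In short, your argument is the more robust of the two and fills the gap the paper's condensed version leaves open.
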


\begin{proof}
Seja $f\left( x\right) $ como no enunciado, ent\~{a}o fa\c{c}amos uma mudan%
\c{c}a de vari\'{a}vel na integral de forma que $Lx=k$, portanto $%
dx\rightarrow dk/L$ e $\left[ a,b\right] \rightarrow \left[ aL,bL\right] $ e
portanto temos%
\begin{equation*}
\lim_{L\rightarrow \infty }\int_{a}^{b}f\left( x\right)
e^{iLx}dx=\lim_{L\rightarrow \infty }\frac{1}{L}\int_{La}^{Lb}f\left(
k\right) e^{ik}dk\text{ \ .}
\end{equation*}%
como estamos a considerar que $\int_{a}^{b}f\left( x\right) dx=M_{ab}$ para
algum $M_{ab}$ finito em todo intervalo de $x$, consequentemente $%
\int_{La}^{Lb}f\left( k\right) =M_{ab}^{k}$ para algum $M_{ab}^{k}$ tamb\'{e}%
m finito. O valor absoluto da fun\c{c}\~{a}o $f\left( k\right) e^{ik}$ \'{e}
exatamente o valor absoluto da fun\c{c}\~{a}o $f\left( k\right) $, pois $%
\left\vert f\left( k\right) e^{ik}\right\vert =\left\vert f\left( k\right)
\right\vert $, assim n\'{o}s temos garantido que a integral $%
\int_{La}^{Lb}f\left( k\right) e^{ik}dk$ n\~{a}o diverge. Assim n\'{o}s
podemos escrever%
\begin{equation*}
\lim_{L\rightarrow \infty }\frac{1}{L}\int_{La}^{Lb}f\left( k\right)
e^{ik}dk=\lim_{L\rightarrow \infty }\frac{M_{ab}^{k}}{L}\rightarrow 0 \text{ \ ,}
\end{equation*}%
pois $M_{ab}^{k}$ \'{e} um n\'{u}mero finito.
\end{proof}

\newpage

\section{Apêndice B: Prova da Proposição \ref{comutation}} \label{ProofRota}

A demonstra\c{c}\~{a}o da Proposi\c{c}\~{a}o \ref{comutation} \'{e} imediata
e \'{e} como segue. Considere dois operadores $A$ e $B$ de modo que existem
outros dois operadores $A\left( U\right) =UAU^{\dag }$ e $B\left( U\right)
=UBU^{\dag }$, para algum unit\'{a}rio $U$. Considerando que $\left[ A,B%
\right] =0$, ent\~{a}o%
\begin{eqnarray*}
\left[ A,B\right]  &=&U\left[ A,B\right] U^{\dag }=U\left( AB-BA\right)
U^{\dag } \\
&=&U\left( AU^{\dag }UB-BU^{\dag }UA\right) U^{\dag } \\
&=&A\left( U\right) B\left( U\right) -B\left( U\right) A\left( U\right) =
\left[ A\left( U\right) ,B\left( U\right) \right] 
\end{eqnarray*}%
onde foi usado que $U^{\dag }U=1$. Isso conclui a prova da Proposi\c{c}\~{a}%
o \ref{comutation}.

\newpage

\section{Apêndice C: Limite de Velocidade Qu\^{a}ntica} \label{ApQSL}

Aqui n\'{o}s deduziremos o resultado expresso na Eq. (\ref{QSL})
reproduzindo detalhadamente os c\'{a}lculos presentes na Ref. \cite{Deffner:13}.

Inicialmente considere um hamiltoniano $H\left( t\right) $ que faz o sistema
evoluir de um estado inicial puro $\left\vert \psi \left( 0\right)
\right\rangle $ at\'{e} um estado final desejado $\left\vert \psi \left(
t\right) \right\rangle $, n\~{a}o necessariamente ortogonal a $\left\vert
\psi \left( 0\right) \right\rangle $. Com isso a evolu\c{c}\~{a}o \'{e}
governada pela equação de Schr\"{o}dinger%
\begin{equation}
H\left( t\right) \left\vert \psi \left( t\right) \right\rangle =i\hbar
\vert \dot{\psi}\left( t\right) \ket \text{ \ ,} \label{1.1}
\end{equation}%
onde estamos considerando, a priori, que $H\left( t\right) $ \'{e} o mais
geral poss\'{\i}vel, ou seja, $H\left( t\right) \neq H$ e $\left[ H\left(
t_{1}\right) ,H\left( t_{2}\right) \right] \neq 0$, para algum $t_{1}\neq
t_{2}$. A ideia \'{e} definirmos a dist\^{a}ncia entre dois estados
quaisquer na esfera de Bloch, a partir da fidelidade $F\left( \left\vert
\psi _{1}\right\rangle ,\left\vert \psi _{2}\right\rangle \right) $, como%
\begin{equation}
\mathcal{L}\left( \left\vert \psi _{1}\right\rangle ,\left\vert \psi
_{2}\right\rangle \right) =\arccos \left[ F\left( \left\vert \psi
_{1}\right\rangle ,\left\vert \psi _{2}\right\rangle \right) \right]
=\arccos \left[ \left\vert \left\langle \psi _{1}|\psi _{2}\right\rangle
\right\vert \right] \text{ \ ,} \label{1.2}
\end{equation}%
para estados puros \cite{Nielsen:book}. A partir desta defini\c{c}\~{a}o fica
claro que $0\leq \mathcal{L}\left( \left\vert \psi _{1}\right\rangle
,\left\vert \psi _{2}\right\rangle \right) \leq \pi /2$. Essa escolha \'{e} t%
\~{a}o boa quanto a escolha da m\'{e}trica definida pela dist\^{a}ncia de tra%
\c{c}o como medida de dist\^{a}ncia entre $\left\vert \psi _{1}\right\rangle 
$ e $\left\vert \psi _{2}\right\rangle $, pois para estados puros elas s\~{a}%
o equivalentes \cite{Nielsen:book}. Derivando a Eq. $\left( \ref%
{1.2}\right) $ com rela\c{c}\~{a}o ao tempo e depois tomando apenas o m\'{o}%
dulo n\'{o}s ficamos com 
\begin{equation}
\left\vert d_{t}\mathcal{L}\left( \psi _{0},\psi _{t}\right) \right\vert =%
\frac{\left\vert \left( d_{t}\left\vert \left\langle \psi \left( 0\right)
|\psi \left( t\right) \right\rangle \right\vert \right) \right\vert }{\sin %
\left[ \mathcal{L}\left( \psi _{0},\psi _{t}\right) \right] } \text{ \ .} \label{1.3}
\end{equation}

Na equação acima podemos olhar para a quantidade $d_{t}\mathcal{L}%
\left( \psi _{0},\psi _{t}\right) $ como uma velocidade definida na esfera
de Bloch. Agora precisamos usar a desigualdade%
\begin{equation}
\left\vert \left( d_{t}\left\vert \left\langle \psi \left( 0\right) |\psi
\left( t\right) \right\rangle \right\vert \right) \right\vert \leq
\left\vert d_{t}\left[ \left\langle \psi \left( 0\right) |\psi \left(
t\right) \right\rangle \right] \right\vert \text{ \ ,} \label{1.4}
\end{equation}%
com a igualdade ocorrendo se Re$\left[ \bra \psi \left( 0\right)
|H\left( t\right) |\dot{\psi}\left( t\right) \ket \left\langle \psi
\left( t\right) |\psi \left( 0\right) \right\rangle \right] =0$, para obter%
\begin{equation}
\left\vert d_{t}\mathcal{L}\left( \psi _{0},\psi _{t}\right) \sin \left[ 
\mathcal{L}\left( \psi _{0},\psi _{t}\right) \right] \right\vert \leq \frac{1%
}{\hbar }\left\vert \left\langle \psi \left( 0\right) |H\left( t\right)
|\psi \left( t\right) \right\rangle \right\vert \text{ \ .} \label{1.6}
\end{equation}

Antes de prosseguirmos, deixe-nos demonstrar a desigualdade presente na Eq. $\left( \ref{1.4}\right) $. Considere $\left\vert \left\langle
\psi \left( 0\right) |\psi \left( t\right) \right\rangle \right\vert =\sqrt{%
\left\langle \psi \left( 0\right) |\psi \left( t\right) \right\rangle
\left\langle \psi \left( t\right) |\psi \left( 0\right) \right\rangle }$, ent%
\~{a}o derivando n\'{o}s podemos escrever%
\begin{eqnarray*}
d_{t}\left\vert \left\langle \psi \left( 0\right) |\psi \left( t\right)
\right\rangle \right\vert  &=&\frac{1}{2}\frac{ \bra \psi \left(
0\right) |\dot{\psi}\left( t\right) \ket \left\langle \psi \left(
t\right) |\psi \left( 0\right) \right\rangle +\left\langle \psi \left(
0\right) |\psi \left( t\right) \right\rangle \bra \dot{\psi}\left(
t\right) |\psi \left( 0\right) \ket }{\left\vert \left\langle \psi
\left( 0\right) |\psi \left( t\right) \right\rangle \right\vert } \\
&=&\frac{1}{2i\hbar }\frac{\left\langle \psi \left( 0\right) |H\left(
t\right) |\psi \left( t\right) \right\rangle \left\langle \psi \left(
t\right) |\psi \left( 0\right) \right\rangle -\left\langle \psi \left(
0\right) |\psi \left( t\right) \right\rangle \left\langle \psi \left(
t\right) |H\left( t\right) |\psi \left( 0\right) \right\rangle }{\left\vert
\left\langle \psi \left( 0\right) |\psi \left( t\right) \right\rangle
\right\vert } \text{ \ ,}
\end{eqnarray*}%
como em geral $\left\langle \psi \left( 0\right) |\psi \left( t\right)
\right\rangle \bra \dot{\psi}\left( t\right) |\psi \left( 0\right)
\ket $ \'{e} complexo, temos%
\[
d_{t}\left\vert \left\langle \psi \left( 0\right) |\psi \left( t\right)
\right\rangle \right\vert =\frac{1}{\hbar }\frac{Im\left[ \left\langle \psi
\left( 0\right) |H\left( t\right) |\psi \left( t\right) \right\rangle
\left\langle \psi \left( t\right) |\psi \left( 0\right) \right\rangle \right]
}{\left\vert \left\langle \psi \left( 0\right) |\psi \left( t\right)
\right\rangle \right\vert } \text{ \ .}
\]%
ent\~{a}o a quantidade $\left\vert \left( d_{t}\left\vert \left\langle \psi
\left( 0\right) |\psi \left( t\right) \right\rangle \right\vert \right)
\right\vert $ \'{e} obtida tomando o m\'{o}dulo, assim%
\[
\left\vert d_{t}\left\vert \left\langle \psi \left( 0\right) |\psi \left(
t\right) \right\rangle \right\vert \right\vert =\frac{1}{\hbar }\frac{%
\left\vert Im\left[ \left\langle \psi \left( 0\right) |H\left( t\right) |%
\psi \left( t\right) \right\rangle \left\langle \psi \left( t\right)
|\psi \left( 0\right) \right\rangle \right] \right\vert }{\left\vert
\left\langle \psi \left( 0\right) |\psi \left( t\right) \right\rangle
\right\vert } \text{ \ .}
\]

Agora usa-se a desigualdade $\left\vert a+ib\right\vert \geq \left\vert
b\right\vert $, onde a igualdade acontece quando $a=0$, ou seja $Re\left[
a+ib\right] =0$. Logo temos%
\[
\left\vert Im\left[ \left\langle \psi \left( 0\right) |H\left( t\right) |%
\psi \left( t\right) \right\rangle \left\langle \psi \left( t\right)
|\psi \left( 0\right) \right\rangle \right] \right\vert \leq \left\vert
\left\langle \psi \left( 0\right) |H\left( t\right) |\psi \left(
t\right) \right\rangle \left\langle \psi \left( t\right) |\psi \left(
0\right) \right\rangle \right\vert \text{ \ ,}
\]%
com a desigualdade para Re$\left[ \left\langle \psi \left( 0\right) |H\left(
t\right) |\psi \left( t\right) \right\rangle \left\langle \psi \left(
t\right) |\psi \left( 0\right) \right\rangle \right] =0$. Portanto%
\[
\left\vert d_{t}\left\vert \left\langle \psi \left( 0\right) |\psi \left(
t\right) \right\rangle \right\vert \right\vert \leq \frac{1}{\hbar }\frac{%
\left\vert \left\langle \psi \left( 0\right) |H\left( t\right) |\psi %
\left( t\right) \right\rangle \left\langle \psi \left( t\right) |\psi \left(
0\right) \right\rangle \right\vert }{\left\vert \left\langle \psi \left(
0\right) |\psi \left( t\right) \right\rangle \right\vert }=\frac{1}{\hbar }%
\left\vert \left\langle \psi \left( 0\right) |H\left( t\right) |\psi %
\left( t\right) \right\rangle \right\vert \text{ \ ,}
\]%
ent\~{a}o usando a Eq. de Schr\"{o}dinger, ficamos com%
\[
\left\vert d_{t}\left\vert \left\langle \psi \left( 0\right) |\psi \left(
t\right) \right\rangle \right\vert \right\vert \leq \left\vert
d_{t}\left\langle \psi \left( 0\right) |\psi \left( t\right) \right\rangle
\right\vert \text{ \ .}
\]

Como quer\'{\i}amos mostrar.

Agora podemos voltar ao desenvolvimento iniciado anteriormente. Para
facilitar o entendimento do pr\'{o}ximo passo, escrevamos%
\begin{equation}
\left\vert d_{t}\mathcal{L}\left( \psi _{0},\psi _{t}\right) \sin \left[ 
\mathcal{L}\left( \psi _{0},\psi _{t}\right) \right] \right\vert =d_{t}\cos %
\left[ \mathcal{L}\left( \psi _{0},\psi _{t}\right) \right] \text{ \ ,} \label{1.7}
\end{equation}%
de modo que a Eq. $\left( \ref{1.6}\right) $ fica escrita
como%
\[
\left\vert d_{t}\cos \left[ \mathcal{L}\left( \psi _{0},\psi _{t}\right) %
\right] \right\vert \leq \frac{1}{\hbar }\left\vert \left\langle \psi \left(
0\right) |H\left( t\right) |\psi \left( t\right) \right\rangle \right\vert \text{ \ .}
\]

Agora integramos ambos os lados da desigualdade obtemos 
\begin{equation}
\int_{0}^{\tau }dt\left\vert d_{t}\cos \left[ \mathcal{L}\left( \psi
_{0},\psi _{t}\right) \right] \right\vert \leq \frac{1}{\hbar }%
\int_{0}^{\tau }dt\left\vert \left\langle \psi \left( 0\right) |H\left(
t\right) |\psi \left( t\right) \right\rangle \right\vert \text{ \ ,} \label{aqui}
\end{equation}%
onde para o lado esquerdo da Eq. $\left( \ref{aqui}\right) $ podemos escrever%
\[
\int_{0}^{\tau }dt\left\vert d_{t}\cos \left[ \mathcal{L}\left( \psi
_{0},\psi _{t}\right) \right] \right\vert \geq \left\vert \int_{0}^{\mathcal{%
L}\left( \left\vert \psi \left( 0\right) \right\rangle ,\left\vert \psi
\left( \tau \right) \right\rangle \right) }d\cos \left[ \mathcal{L}\left(
\psi _{0},\psi _{t}\right) \right] \right\vert =\left\vert \cos \left[ 
\mathcal{L}\left( \psi _{0},\psi _{\tau }\right) \right] -1\right\vert 
\]%
e, por outro lado, usando o teorema do valor m\'{e}dio para o lado direito da Eq. $\left( \ref{aqui}\right) $ n\'{o}s encontramos%
\begin{equation}
\int_{0}^{\tau }dt\left\vert \left\langle \psi \left( 0\right) |H\left(
t\right) |\psi \left( t\right) \right\rangle \right\vert =\tau E\left( \tau
\right) \text{ \ ,} \label{1.9}
\end{equation}%
com $E\left( \tau \right) $ sendo o valor m\'{e}dio da fun\c{c}\~{a}o $%
\left\vert \left\langle \psi \left( 0\right) |H\left( t\right) |\psi \left(
t\right) \right\rangle \right\vert $ no intervalo $\mathcal{I}:\left[ 0,\tau %
\right] $. Em conclus\~{a}o n\'{o}s obtemos%
\begin{equation}
\tau E\left( \tau \right) \geq \left\vert \cos \left[ \mathcal{L}\left( \psi
_{0},\psi _{t}\right) \right] -1\right\vert \text{ \ ,} \label{1.}
\end{equation}%
com a quantidade $E\left( \tau \right) =\frac{1}{\tau }\int_{0}^{\tau
}dt\left\vert \left\langle \psi \left( 0\right) |H\left( t\right) |\psi
\left( t\right) \right\rangle \right\vert $.

\newpage

\section{Apêndice D: Prova do Teorema \ref{TeoSime}} \label{ProffTeoSime}

A demonstra\c{c}\~{a}o do Teorema \ref{TeoSime} \'{e} dada como segue. Inicialmente n%
\'{o}s consideramos um Hamiltoniano $H\left( t\right) $ tal que $\left[
H\left( t\right) ,\Pi _{z}\right] =0$, ent\~{a}o temos $\left[ H_{\text{SA}}\left(
t\right) ,\Pi _{z}\right] =\left[ H_{\text{CD}}\left( t\right) ,\Pi _{z}\right] $.
Escrevendo explicitamente o comutador, temos%
\begin{equation*}
\left[ H_{\text{CD}}\left( t\right) ,\Pi _{z}\right] =H_{\text{CD}}\left( t\right) \Pi
_{z}-\Pi _{z}H_{\text{CD}}\left( t\right) \text{ \ .}
\end{equation*}

Como $\Pi _{z}$ e $H\left( t\right) $ s\~{a}o diagonais na mesma base, ent%
\~{a}o um autoestado $\left\vert n\left( t\right) \right\rangle $ de $%
H_{0}\left( t\right) $ tem paridade em $\Pi _{z}$ bem definida e escrevemos $%
\Pi _{z}\left\vert n\left( t\right) \right\rangle =\left( -1\right)
^{n}\left\vert n\left( t\right) \right\rangle $. Usando que $\Pi
_{z}\left\vert \dot{n}\left( t\right) \right\rangle =\left( -1\right)
^{n}\left\vert \dot{n}\left( t\right) \right\rangle $ facilmente mostra-se
que $H_{\text{CD}}\left( t\right) \Pi _{z}=\Pi _{z}H_{\text{CD}}\left( t\right) $,
portanto temos $\left[ H_{\text{CD}}\left( t\right) ,\Pi _{z}\right] =0$,
consequetemente temos $\left[ H_{\text{SA}}\left( t\right) ,\Pi _{z}\right] =0$.
Por outro lado, se $\left[ H_{0}\left( t\right) ,\Pi _{x}\right] =0$, ent%
\~{a}o temos $\left[ H_{\text{SA}}\left( t\right) ,\Pi _{x}\right] =\left[
H_{\text{CD}}\left( t\right) ,\Pi _{x}\right] $.

Calculando um elemento de matriz de $\left[ H_{\text{CD}}\left( t\right) ,\Pi _{x}%
\right] $ na base de autoestados de $H_{0}\left( t\right) $ dado por 
\begin{eqnarray*}
\left[H_{\text{CD}}\left( t\right) ,\Pi _{x}\right] _{kl}&=&\left\langle k\left( t\right) |%
\left[ H_{\text{CD}}\left( t\right) ,\Pi _{x}\right] |l\left( t\right)\right\rangle \\
&=& \left[H_{\text{CD}}\left( t\right)\Pi _{x}\right] _{kl} - \left[\Pi _{x}H_{\text{CD}}\left( t\right)\right] _{kl}
\end{eqnarray*}

Usando que $\Pi_{x}\left\vert n\left( t\right) \right\rangle =\left\vert n%
^{\prime }\left( t\right) \right\rangle $, onde $\vert n \left( t \right)$ e $\vert n^{\prime} \left( t \right)$ tem paridades opostas, e que $\Pi_{x}\left\vert \dot{n}\left( t\right) \right\rangle =\left\vert \dot{n}^{\prime }\left( t\right) \right\rangle $, podemos escrever que

\begin{eqnarray*}
\left[\Pi _{x}H_{\text{CD}}\left( t\right)\right] _{kl} &=& i \hbar \bra k^{\prime}\left( t \right) | \dot{l} \ket + \bra k^{\prime}\left( t \right) | l\left( t \right) \ket \bra \dot{l}\left( t \right) | l\left( t \right) \ket \\
&=& i \hbar \bra k\left( t \right) | \dot{l}^{\prime}\left( t \right) \ket + \bra k\left( t \right) | l^{\prime}\left( t \right) \ket \bra \dot{l}^{\prime}\left( t \right) | l^{\prime}\left( t \right) \ket = \left[H_{\text{CD}}\left( t\right)\Pi _{x}\right] _{kl}
\end{eqnarray*}

Portanto $\left[ H_{\text{CD}}\left( t\right) ,\Pi _{x}\right] _{kl}=0$. Como quer\'{\i}amos demonstrar.

\newpage

\section{Apêndice E: Prova da Proposição \ref{biparti}} \label{Apbiparti}

Para demonstrar a validade da proposi\c{c}\~{a}o \ref{biparti}, considere um sistema $k$%
-partido onde o Hamiltoniano que evolui o sistema \'{e} que tem a forma da Eq. (\ref{eqAbiparti}) e dado por
\begin{equation}
H\left( t\right) =\sum_{n=1}^{k}\mathcal{H}_{n}\left( t\right) \text{ \ ,}
\label{Hkbla}
\end{equation}%
onde $\mathcal{H}_{n}\left( t\right) =(\otimes _{i=1}^{n-1}1_{i})\otimes
H_{n}\left( t\right) \otimes (\otimes _{i=n+1}^{k}1_{i})$ \'{e} o
Hamiltoniano que dirige a $n$-\'{e}sima parti\c{c}\~{a}o do sistema. O
Hamiltoniano $H\left( t\right) $ trata portanto de sistemas n\~{a}o interagentes. O Hamiltoniano contra-diab\'{a}tico associado \`{a} $%
H\left( t\right) $ \'{e} obtido usando o conjunto de autoestados de $H\left(
t\right) $. N\~{a}o \'{e} dif\'{\i}cil verificar que tais autoestados s\~{a}%
o dados por%
\begin{equation}
|E_{\{n_{k}\}}\left( t\right) \rangle =|E_{n_{1}\cdots n_{k}}\left( t\right)
\rangle =\bigotimes_{j=1}^{k}|E_{n_{j}}^{j}\left( t\right) \rangle \text{ \ ,%
}  \label{Ek}
\end{equation}%
onde $|E_{n_{j}}^{j}\left( t\right) \rangle $ denota o $n_{j}$-\'{e}simo
autoestado do Hamiltoniano $H_{j}\left( t\right) $. \'{E} importante
mencionar que a rela\c{c}\~{a}o de completitude sobre cada subsistema
estabelece que%
\begin{equation}
\sum_{n_{m}}|E_{n_{m}}\left( t\right) \rangle \langle E_{n_{m}}\left(
t\right) |=1_{n}\text{ \ .}  \label{CompleM}
\end{equation}

Assim, temos que%
\begin{equation}
H_{\text{CD}}\left( t\right) =i\hbar \sum_{\{n_{k}\}}|\dot{E}_{\{n_{k}\}}\left(
t\right) \rangle \langle E_{\{n_{k}\}}\left( t\right) |+\langle \dot{E}%
_{\{n_{k}\}}\left( t\right) |E_{\{n_{k}\}}\left( t\right) \rangle
|E_{\{n_{k}\}}\left( t\right) \rangle \langle E_{\{n_{k}\}}\left( t\right) |%
\text{ \ ,}  \label{aaaaa}
\end{equation}%
onde denotamos $\sum_{\{n_{k}\}}=\sum_{n_{1}}\cdots \sum_{n_{k}}$. Derivando a Eq. 
$\left( \ref{Ek}\right) $ n\'{o}s temos%
\begin{eqnarray}
|\dot{E}_{\{n_{k}\}}\left( t\right) \rangle  &=&|\dot{E}_{n_{1}}^{1}\left(
t\right) \rangle |E_{n_{2}}^{2}\left( t\right) \rangle \cdots
|E_{n_{k}}^{k}\left( t\right) \rangle +|E_{n_{1}}^{1}\left( t\right) \rangle
|\dot{E}_{n_{2}}^{2}\left( t\right) \rangle \cdots |E_{n_{k}}^{k}\left(
t\right) \rangle   \notag \\
&&+\cdots +|E_{n_{1}}^{1}\left( t\right) \rangle |E_{n_{2}}^{2}\left(
t\right) \rangle \cdots |\dot{E}_{n_{k}}^{k}\left( t\right) \rangle \text{ \
,}
\end{eqnarray}%
que pode ser reescrito sob a forma%
\begin{equation}
|\dot{E}_{\{n_{k}\}}\left( t\right) \rangle =\sum_{a}|\Lambda
_{n_{a}}^{a}\left( t\right) \rangle \text{ \ ,}
\end{equation}%
onde $|\Lambda _{n_{a}}^{a}\left( t\right) \rangle =(\otimes
_{i=1}^{a-1}|E_{n_{i}}^{i}\left( t\right) \rangle )\otimes |\dot{E}%
_{n_{a}}^{a}\left( t\right) \rangle \otimes (\otimes
_{i=a+1}^{k}|E_{n_{i}}^{i}\left( t\right) \rangle )$, com $a=\left\{
1,k\right\} $. Dessa forma n\'{o}s podemos notar que%
\begin{eqnarray}
\langle \dot{E}_{\{n_{k}\}}\left( t\right) |E_{\{n_{k}\}}\left( t\right)
\rangle  &=&\sum_{a}\langle \dot{E}_{n_{a}}^{a}\left( t\right)
|E_{n_{a}}^{a}\left( t\right) \rangle \text{ \ ,}  \label{proLamb} \\
|\dot{E}_{\{n_{k}\}}\left( t\right) \rangle \langle E_{\{n_{k}\}}\left(
t\right) | &=&\sum_{a}\Gamma _{\{n_{k}\}}^{a}\left( t\right) \text{ \ ,}
\label{Gam}
\end{eqnarray}%
onde usamos que $\langle \Lambda _{n_{a}}^{a}\left( t\right)
|E_{\{n_{k}\}}\left( t\right) \rangle =\langle \dot{E}_{n_{a}}^{a}\left(
t\right) |E_{n_{a}}^{a}\left( t\right) \rangle $ na primeira igualdade acima
e onde n\'{o}s definimos 
\begin{equation}
\Gamma _{\{n_{k}\}}^{a}\left( t\right) =(\otimes
_{i=1}^{a-1}|E_{n_{i}}^{i}\left( t\right) \rangle \langle
E_{n_{i}}^{i}\left( t\right) |)\otimes |\dot{E}_{n_{a}}^{a}\left( t\right)
\rangle \langle E_{n_{a}}^{a}\left( t\right) |\otimes (\otimes
_{i=a+1}^{k}|E_{n_{i}}^{i}\left( t\right) \rangle \langle
E_{n_{i}}^{i}\left( t\right) |)\text{ \ .}
\end{equation}

Ent\~{a}o, substituindo as rela\c{c}\~{o}es $\left( \ref{proLamb}\right) $ e 
$\left( \ref{Gam}\right) $ em $\left( \ref{aaaaa}\right) $ n\'{o}s ficamos
com%
\begin{equation}
H_{\text{CD}}\left( t\right) =i\hbar \sum_{\{n_{k}\}}\sum_{a}\Gamma
_{\{n_{k}\}}^{a}\left( t\right) +\langle \dot{E}_{n_{a}}^{a}\left( t\right)
|E_{n_{a}}^{a}\left( t\right) \rangle |E_{\{n_{k}\}}\left( t\right) \rangle
\langle E_{\{n_{k}\}}\left( t\right) |\text{ \ ,}  \label{Xaed}
\end{equation}

Fazendo uso da completitude $\left( \ref{CompleM}\right) $, n\'{o}s podemos
encontrar que%
\begin{equation}
\sum_{\{n_{k}\}}\Gamma _{\{n_{k}\}}^{a}\left( t\right) =(\otimes
_{i=1}^{a-1}1_{i})\otimes \sum_{n_{a}}|\dot{E}_{n_{a}}^{a}\left( t\right)
\rangle \langle E_{n_{a}}^{a}\left( t\right) |\otimes (\otimes
_{i=a+1}^{k}1_{i})\text{ \ ,}  \label{Gam2}
\end{equation}%
e que%
\begin{equation}
\sum_{\{n_{k}\}}\langle \dot{E}_{n_{a}}^{a}\left( t\right)
|E_{n_{a}}^{a}\left( t\right) \rangle |E_{\{n_{k}\}}\left( t\right) \rangle
\langle E_{\{n_{k}\}}\left( t\right) |=(\otimes _{i=1}^{a-1}1_{i})\otimes
\Phi ^{a}\left( t\right) \otimes (\otimes _{i=a+1}^{k}1_{i}) \text{ \ ,}
\label{proLamb2}
\end{equation}%
onde $\Phi ^{a}\left( t\right) =\sum_{n_{a}}\langle \dot{E}%
_{n_{a}}^{a}\left( t\right) |E_{n_{a}}^{a}\left( t\right) \rangle
|E_{n_{a}}^{a}\left( t\right) \rangle \langle E_{n_{a}}^{a}\left( t\right) |$%
. Portanto, das equa\c{c}\~{o}es $\left( \ref{Gam2}\right) $ e $\left( \ref%
{proLamb2}\right) $ podemos ver que%
\begin{equation}
i\hbar \sum_{\{n_{k}\}}\Gamma _{\{n_{k}\}}^{a}\left( t\right) +\langle \dot{E%
}_{n_{a}}^{a}\left( t\right) |E_{n_{a}}^{a}\left( t\right) \rangle
|E_{\{n_{k}\}}\left( t\right) \rangle \langle E_{\{n_{k}\}}\left( t\right) |=%
\mathcal{H}_{a}^{\text{CD}}\left( t\right) \text{ \ ,}
\end{equation}%
onde $\mathcal{H}_{a}^{\text{SA}}\left( t\right) =(\otimes
_{i=1}^{a-1}1_{i})\otimes H_{a}^{\text{SA}}\left( t\right) \otimes (\otimes
_{i=a+1}^{k}1_{i})$ \'{e} o Hamaltoniano contra-diab\'{a}tico que atua sobre
a $a$-\'{e}sima parti\c{c}\~{a}o do sistema. Substituindo o resultado acima
na Eq. $\left( \ref{Xaed}\right) $, a conclus\~{a}o \'{e} que o
atalho via Hamiltonianos contra-diab\'{a}ticos para um Hamiltoniano da
forma\ $\left( \ref{Hkbla}\right) $\ \'{e} feito por meio do hamiltoniano
superadiab\'{a}tico%
\begin{equation*}
H_{\text{SA}}\left( t\right) =\sum_{n=1}^{k}\mathcal{H}_{n}^{\text{SA}}\left( t\right) 
\text{ \ ,}
\end{equation*}%
que foi introduzido na Eq. (\ref{eqSAbiparti}). Isso conclui a demonstra\c{c}\~{a}%
o da proposi\c{c}\~{a}o \ref{biparti}.

\newpage

\section{Apêndice F: Prova do Teorema \ref{TeoPorSup}} \label{ProffTeoPorSup}

Para demonstrarmos o Teorema \ref{TeoPorSup}, considere dois Hamiltonianos $H\left(
t\right) $ e $H\left( t,G\right) $ tais que \'{e} valido a rela\c{c}\~{a}o $%
H\left( t,G\right) =GH\left( t\right) G^{\dag }$, com $GG^{\dag }=1$, e que n%
\'{o}s conhecemos o Hamiltoniano contra-diab\'{a}tico associado ao
Hamiltoniano $H\left( t\right) $. Para determinarmos o Hamiltoniano
contra-diab\'{a}tico associado ao Hamiltoniano $H\left( t,G\right) $,
primeiro notamos que 
\begin{equation}
\left\vert n\left( s\right) ,G\right\rangle =G\left\vert n\left( s\right)
\right\rangle   \label{CDn}
\end{equation}%
\'{e} o conjunto de autoestados $\left\vert n\left( s\right) ,G\right\rangle 
$\ do Hamiltoniano $H\left( t,G\right) $ que pode ser determinado a partir
do conhecimento do conjunto de autoestados $\left\vert n\left( s\right)
\right\rangle $ do Hamiltoniano $H\left( s\right) $. Assim
\begin{equation}
H_{\text{CD}}\left( s,G\right) =\frac{i\hbar }{\tau }\sum_{n}\left\vert \partial
_{s}n,G\right\rangle \left\langle n,G\right\vert +\left\langle \partial
_{s}n,G|n,G\right\rangle \left\vert n,G\right\rangle \left\langle
n,G\right\vert \text{ \ ,}  \label{sfa.1.2}
\end{equation}%
\'{e} o Hamiltoniano contra-diab\'{a}tico associado a $H\left( s,G\right) $.
Agora usamos a rela\c{c}\~{a}o indicada na Eq. $\left( \ref{CDn}%
\right) $ para mostrar que%
\begin{equation}
H_{\text{CD}}\left( s,G\right) =G\left[ \frac{i\hbar }{\tau }\sum_{n}\left\vert
\partial _{s}n\right\rangle \left\langle n\right\vert +\left\langle \partial
_{s}n|n\right\rangle \left\vert n\right\rangle \left\langle n\right\vert %
\right] G^{\dag }\text{ \ ,}
\end{equation}%
onde n\'{o}s usamos o fato de que o operador $G$ \'{e} independente do tempo
para escrever que $\left\vert \partial _{s}n,G\right\rangle =G\left\vert
\partial _{s}n\right\rangle $, e que $GG^{\dag }=1$ para escrever que $%
\left\langle \partial _{s}n,G|n,G\right\rangle =\left\langle \partial
_{s}n|n\right\rangle $. Consequentemente n\'{o}s conclu\'{\i}mos que%
\begin{equation}
H_{\text{CD}}\left( s,G\right) =GH_{\text{CD}}\left( s\right) G^{\dag }\text{ \ ,}
\end{equation}

Assim o termo contra-diab\'{a}tico associado ao Hamiltoniano $H\left(
t,G\right) $ pode facilmente ser encontrado a partir do conhecimento pr\'{e}%
vio do termo contra-diab\'{a}tico associado a $H\left( t\right) $. Isso
conclui a demonstra\c{c}\~{a}o do Teorema \ref{TeoPorSup}.

\newpage

\section{Apêndice G: Prova da Eq. (\ref{CostTeleN})} \label{ProffEquaCostN}

Com o intuito de demonstrar a validade da Eq. (\ref{CostTeleN}), deixe-nos
escrever o Hamiltoniano superadiab\'{a}tico que \'{e} usado para realizar o
TQ de um estado de $n$ q-bits como%
\begin{equation*}
H_{\text{SA}}\left( s\right) =\sum\limits_{m=1}^{n}\mathcal{H}_{m}^{\text{SA}}\left(
s\right) 
\end{equation*}%
onde $\mathcal{H}_{k}^{\text{SA}}\left( s\right) =\left( \otimes _{l=1}^{k-1}\1%
_{l}\right) \otimes H_{k}^{\text{SA}}\left( s\right) \otimes \left( \otimes
_{l=k+1}^{n}\1_{l}\right) $, com cada $H_{m}^{\text{SA}}\left( s\right) $ sendo um
Hamiltoniano de tr\^{e}s q-bits dado pela Eq. (\ref{DiagonalFormHSA}). Ent\~{a}o o custo
energ\'{e}tico \'{e} dado por%
\begin{equation}
\Sigma _{n}=\int_{0}^{1}ds\sqrt{\text{Tr}\left[ H_{\text{SA}}^{2}\left( s\right) \right] },
\label{CostN}
\end{equation}%
onde n\'{o}s podemos escrever%
\begin{equation}
H_{\text{SA}}^{2}\left( s\right) =\sum\limits_{k=1}^{n}\left[ \mathcal{H}%
_{k}^{\text{SA}}\left( s\right) \right] ^{2}+\sum\limits_{m\neq k}\left( \sum_{k}%
\mathcal{H}_{k}^{\text{SA}}\left( s\right) \mathcal{H}_{m}^{\text{SA}}\left( s\right)
\right) .
\end{equation}

Agora n\'{o}s usamos que, para $k\neq m$, n\'{o}s temos%
\begin{equation}
\text{Tr}\left[ \mathcal{H}_{k}^{\text{SA}}\left( s\right) \mathcal{H}_{m}^{\text{SA}}\left(
s\right) \right] =\left( \text{Tr}\left[ \mathbbm{1}\right] \right) ^{n-2}\,\text{Tr}\left[
{H}_{k}^{\text{SA}}\left( s\right) \right] \,\text{Tr}\left[ {H}_{m}^{\text{SA}}\left( s\right) %
\right] .
\end{equation}

Ent\~{a}o, n\'{o}s escrevemos $\text{Tr}\left[ {H}_{j}^{\text{SA}}\left( s\right) \right]
=\text{Tr}\left[ {H}_{j}\left( s\right) +{H}_{j}^{\text{CD}}\left( s\right) \right] $,
onde ${H}_{j}\left( s\right) $ \'{e} o Hamiltoniano adiab\'{a}tico para o
setor $j$ e ${H}_{j}^{\text{CD}}\left( s\right) $ \'{e} seu correspondente
Hamiltoniano contra-diab\'{a}tico. Mediante o c\'{a}lculo explicito do tra%
\c{c}o na base de autoestados de ${H}_{j}(s)$ e usando as Eqs.(\ref{e0tele})-(\ref{e2tele}) e
a Eq. (\ref{DiagHcd}), n\'{o}s obtemos que $\text{Tr}\left[ {H}_{j}^{\text{SA}}\left( s\right) \right]
=0$ ($\forall j\in \{1,\cdots ,N\}$), que ent\~{a}o implica em%
\begin{equation}
\text{Tr}\left[ \mathcal{H}_{k}^{\text{SA}}\left( s\right) \mathcal{H}_{m}^{\text{SA}}\left(
s\right) \right] =0\,\,\,\,\,(k\neq m)\,.
\end{equation}

Assim, o custo energ\'{e}tico para o TQ de um estado
desconhecido de $N$ q-bits fica escrito como%
\begin{eqnarray}
\text{Tr}\left[ H_{\text{SA}}^{2}\left( s\right) \right] 
&=&\sum\limits_{k=1}^{n}\text{Tr}\left\{ \left[ \mathcal{H}_{k}^{\text{SA}}\left( s\right) %
\right] ^{2}\right\} = \left( \text{Tr}\left[ \mathbbm{1}\right] \right)
^{n-1}\,\sum\limits_{k=1}^{n}\text{Tr}\left\{ \left[ H_{k}^{\text{SA}}\left( s\right) %
\right] ^{2}\right\}   \notag \\
&=&2^{3\left( n-1\right) }\,n\,\text{Tr}\left\{ \left[ H_{k}^{\text{SA}}\left( s\right) %
\right] ^{2}\right\} \,\,\,(\forall k)\,.  \label{Cfinal}
\end{eqnarray}

Consequentemente, substituindo a Eq. (\ref{Cfinal}) na Eq. (\ref{CostN}) obt%
\'{e}m-se que%
\begin{equation}
\Sigma _{n}=\sqrt{2^{3\left( n-1\right) }n}\Sigma _{single}\,.
\end{equation}%
que prova a validade da Eq. (\ref{CostTeleN}).

\newpage

\end{document}